


\documentclass[11pt]{article}
\usepackage{times}
\usepackage{hyperref}

\hypersetup{colorlinks,citecolor=blue}
\usepackage[dvips, paper=letterpaper, top=1in, bottom=.75in, left=0.95in, right=0.95in, nohead, includefoot, footskip=.25in]{geometry}

\usepackage{color}
\usepackage{amsmath,amssymb,amsthm,amsfonts,latexsym,bbm,xspace,graphicx,float,mathtools,dsfont,bm}
\usepackage{mathrsfs}
\usepackage{mathtools}

\usepackage{comment}
\usepackage{tikz}
\usepackage{algorithm}
\usepackage{algorithmic}
\usepackage{verbatim}
\usepackage{float}
\usepackage[export]{adjustbox}
\usepackage{caption}
\usepackage{subcaption}

\usepackage{cleveref}
\usepackage{enumitem}
\usepackage{todonotes}

\newtheorem{theorem}{Theorem}
\newtheorem{corollary}{Corollary}
\newtheorem{lemma}{Lemma}

\newtheorem{claim}{Claim}

\newtheorem{definition}{Definition}

\newtheorem{observation}{Observation}
\newtheorem{example}{Example}
\newtheorem{remark}{Remark}

\newcommand{\opt}{\textsc{OPT}}
\newcommand{\optstar}{\textsc{OPT}^*}
\newcommand{\optfi}{\textsc{FRev}}
\newcommand{\optsi}{\textsc{SIRev}}
\newcommand{\rev}{\textsc{Rev}}

\newcommand{\yangnote}[1]{{\color{blue}{#1}}}
\newcommand{\mingfeinote}[1]{{\color{magenta}{#1}}}
\newcommand{\notshow}[1]{{}}

\newcommand{\argmax}{\mathop{\arg\max}}
\newcommand{\argmin}{\mathop{\arg\min}}
\DeclareMathOperator{\conv}{conv}
\DeclareMathOperator{\divg}{div}
\DeclareMathOperator{\eps}{\varepsilon}
\newcommand{\cuxd}{\succeq_{cux}}

\def\vol{\textsc{Vol}}

\DeclareMathOperator*{\E}{\mathbb{E}}

\newenvironment{prevproof}[2]{\noindent {\em {Proof of {#1}~\ref{#2}:}}}{$\Box$\vskip \belowdisplayskip}

\def \MM  {\mathcal{M}}
\def \cP  {\mathcal{P}}

\def \cU  {\mathcal{U}}

\def \cT  {\mathcal{T}}
\def \cH  {\mathcal{H}}
\def \cC  {\mathcal{C}}
\def \cE  {\mathcal{E}}
\def \cS  {\mathcal{S}}

\def\Bq{\ensuremath{\textbf{q}}}
\def\Bp{\ensuremath{\textbf{p}}}
\def\Bt{\ensuremath{\textbf{t}}}
\def\Bi{\ensuremath{\textbf{i}}}
\def\supp{\textsc{Supp}}

\newcommand{\ind}{\mathds{1}}

\def \eps  {\varepsilon}

\definecolor{MyGray}{rgb}{0.8,0.8,0.8}


\def \MenuTwoAction {\cC_2}
\def \MenuWithQ {\cC_2'}

\def \MonoConstr {(1)}
\def \MarginConstr {(2)}
\def \IntegralConstr {(3)}
\def \IRConstr {(4)}
\def \MarginConstrDis {(1)}
\def \istarConstrLeft {(3)}
\def \istarConstrRight {(4)}
\def \IntegralConstrDis {(2)}
\def \ikConstrLeft {(5)}
\def \ikConstrRight {(6)}
\def \IRConstrDis {(7)}

\def \responsiveIC {{responsive-IC}}
\def \ResponsiveIC {{Responsive-IC}}
\def \sigmaIC {{$\sigma$-IC}}

\def \semiinformative {{semi-informative}}

\def \optionsize {{3m-1}}

\def \ratio {\textsc{Ratio}}
\def \gap {\text{gap}}

\def \pcont {{\mathcal{P}_{cont}}}

\begin{document}

\title{Is Selling Complete Information (Approximately) Optimal?}

\author{Dirk Bergemann~\footnote{Supported by NSF Award SES-1948692.} \\Yale University \\dirk.bergemann@yale.edu
\and Yang Cai~\footnote{Supported by a Sloan Foundation Research Fellowship and the NSF Award CCF-1942583 (CAREER).} 
\\Yale University\\yang.cai@yale.edu
\and Grigoris Velegkas~\footnote{Supported by a Sloan Foundation Research Fellowship, the NSF Award CCF-1942583 (CAREER), a PhD Scholarship from the Onassis Foundation and a PhD Scholarship from the Bodossaki Foundation.}
\\Yale University \\grigoris.velegkas@yale.edu 
\and Mingfei Zhao~\footnote{Work done in part while the author was studying at Yale University, supported by the NSF Award CCF-1942583 (CAREER).}\\ Google Research\\mingfei@google.com}


\maketitle

\begin{abstract}
    We study the problem of selling information to a data-buyer who faces a decision problem under uncertainty. We consider the classic Bayesian decision-theoretic model pioneered by Blackwell~\cite{blackwell1951comparison,blackwell1953equivalent}. Initially, the data buyer has only partial information about the payoff-relevant state of the world. A data seller offers additional information about the state of the world. The information is revealed through signaling schemes, also referred to as \emph{experiments}. In the single-agent setting, any mechanism can be represented as a menu of experiments. A recent paper by Bergemann et al.~\cite{BergemannBS2018design} present a complete characterization of the revenue-optimal mechanism 
in a binary state and binary action environment. By contrast, no characterization is known for the case with more actions. 
In this paper, we consider more general environments and study arguably the simplest mechanism, which only sells the \emph{fully informative experiment}. 
In the environment with binary state and $m\geq 3$ actions, we provide an $O(m)$-approximation to the optimal revenue by selling only the fully informative experiment and show that the approximation ratio is tight up to an absolute constant factor. An important corollary of our lower bound is that the size of the optimal menu must grow at least linearly in the number of available actions, so no universal upper bound exists for the size of the optimal menu in the general single-dimensional setting. 
We also provide a sufficient condition under which selling only the fully informative experiment achieves the optimal revenue. 

For multi-dimensional environments, we prove that even in arguably the simplest matching utility environment with 3 states and 3 actions, the ratio between the optimal revenue and the revenue by selling only the fully informative experiment can grow immediately to a polynomial of the number of agent types. Nonetheless, if the distribution is uniform, we show that selling only the fully informative experiment is indeed the optimal mechanism.
\end{abstract}

\thispagestyle{empty}
\addtocounter{page}{-1}
\newpage


\section{Introduction}\label{sec:intro}
As large amounts of data become available and can be communicated more
easily and processed more effectively, information has come to play a
central role for economic activity and welfare in our age. In turn, markets
for information have become more prominent and significant in terms of
trading volume. Bergemann and Bonatti~\cite{bergemann2015selling} provide a recent introduction to markets for
information with a particular emphasis on data markets and data
intermediation in e-commerce.

Information is a valuable commodity for any decision-makers under
uncertainty. By acquiring more information, the decision-maker can refine
his initial estimates about the true state of the world and consequently
improve the expected utility of his decision. In many economically
significant situations, a decision-maker can acquire additional information
from a seller who either already has the relevant information or can
generate the relevant information at little or no cost. Beyond digital and
financial markets, the value of information is particularly important for
health economics where it appears as the expected value of sample
information, see \cite{ades2004expected} or more recently, \cite{ely2021optimal} and \cite{drakopoulos2020perfect} with applications to COVID-19 testing. The value of
additional information is also central in the literature on A/B testing, see 
\cite{azevedo2020b}.

For the seller of information, the question is therefore how much information
to sell and at what price to sell it. A natural starting point
is to make all the information available and sell the information at a price
that maximizes the revenue of the seller. The problem of selling information
at an optimal price is therefore closely related to the classic problem of
the monopolist selling a single unit of an object at an optimal price. Just as in the classic optimal monopoly problem, the seller faces a trade-off
in the choice of the price. A high price generates a high revenue for every
additional sale, but there might be few buyers who value the object higher
than the price asked. A lower price generates a larger volume of sales but
with low marginal revenue. The optimal policy finds the optimal balance
between the marginal revenue and the inframarginal revenue considerations. The sale of complete information faces similar trade-offs. A high price for
the complete information will be acceptable for decision-makers with diffuse
prior information, thus those who value the additional information most, but
 may not be acceptable for those buyers who already have some information.

We analyze these issues in the classic Bayesian decision-theoretic model
pioneered by Blackwell~\cite{blackwell1951comparison,blackwell1953equivalent}. 
Here we interpret the
decision-theoretic model as one where a data buyer faces a decision problem
under uncertainty. A data seller owns a database containing information
about a \textquotedblleft state\textquotedblright\ variable that is relevant
to the buyer's decision. Initially, the data buyer has only partial
information about the state. This information is private to the data buyer
and unknown to the data seller. The precision of the buyer's private
information determines his willingness to pay for any supplemental
information. Thus, from the perspective of the data seller, there are many
possible types of data buyer.

A recent contribution by Bergemann et al.~\cite{BergemannBS2018design} analyzes the optimal selling policy
with the tools of mechanism design. Their analysis is mostly focused on the
canonical decision-theoretic setting with a binary state and a binary action
space. In this setting, the type space is naturally one-dimensional and
given by the prior probability of one state, the probability of the other
state being simply the complementary probability. 

As in \cite{BergemannBS2018design}, we investigate the revenue-maximizing information
policy, i.e., how much information the data seller should provide and how
she should price access to the data. In order to screen the heterogeneous
data buyer types, the seller offers a menu of information products. In the
present context, these products are statistical experiments--- signals that
reveal information about the payoff-relevant state. Only the information
product itself is assumed to be contractible. By contrast, payments cannot
be made contingent on either the buyer's action or the realized state and
signal. Consequently, the value of an experiment to a buyer is determined by
the buyer's private belief and can be computed independently of the price of the experiment. The seller's problem is then to design and price different
versions of experiments, that is, different information products from the
same underlying database.

Indeed, Bergemann et al.~\cite{BergemannBS2018design} find that in the binary action and state setting it is
often optimal to simply sell the complete information to the buyers and
identify the optimal monopoly price. This is for example the case when the
distribution of the prior beliefs, the types of the buyers, is symmetric
around the uniform prior. Yet, they find that sometimes it can be
beneficial to offer two information goods for sale, one that indeed offers
complete information, but also an additional one that offers only partial information.
Thus, a menu of information goods sometimes dominates the sale of a single
object, namely the complete information.

Given the proximity of the problem of selling information to the problem of
selling a divisible good with a unit capacity, the superiority of a menu
over a single choice may appear to be surprising. After all, in the single
good problem, Riley and Zeckhauser~\cite{riley1983optimal} have famously established that selling the
entire unit at an optimal price is always an optimal policy. When we
consider the problem of selling information, we find that the utility of the
buyer is also linear in the posterior probability, just as it is linear in
the quantity in the aforementioned problem. Thus the emergence of a menu
rather than a single item appears puzzling. The difference is however that
the utility of the buyer is only piecewise linear. In particular, in the
binary state binary action setting, it displays exactly one kink in the interior of the
prior probability. The kink emerges at an interior point of the prior belief
where the decision-maker is indifferent between the two actions that are at
his disposal. This particular prior indeed identifies the type of the decision-maker who
has the highest willingness-to-pay for complete information. Moving away from this
point, the utility is then linear in the prior probability. The kink is
foremost an expression of the value of information for the decision-maker.
But following the utility descent away from the kink, the value of
information is decreasing linearly, and thus the seller faces not one, but
two endpoints at which the participation constraints of the buyer have to be
satisfied.

Naturally, with complete information about binary states, there will be two
actions, one for each state which will lead to the highest utility of the
decision-maker. Thus, we might expect that the cardinality of the optimal
menu remains at most binary when we allow the agents to have a larger choice
or action set but stay with binary state space that represents the
uncertainty. In this paper we pursue this question and find that \emph{the
cardinality of the optimal menu increases at least linearly with the number
of available actions}. Thus, selling information forces us to consider larger
menus even when the space of uncertainty remains binary, and thus the
utility as a function of the one-dimensional prior remains piecewise linear
everywhere. Moreover, the cardinality of a set of optimal actions will
always remain at two. Thus, the nature of selling information is
high-dimensional even when the underlying state and ex-post optimal action
space remains~binary~and~thus~small.

The main idea behind the revenue-maximizing mechanism for the information
seller is akin to offering \textquotedblleft damaged
goods\textquotedblright\ to low-value buyers. However, when selling
information goods (see \cite{shapiro1998information}), product versioning allows for richer
and more profitable distortions than with physical goods. This is due to a
peculiar property of information products: Because buyers value different
dimensions (i.e., information about specific state realizations), the buyers
with the lowest willingness to pay also have very specific preferences. For
example, in the context of credit markets, very aggressive lenders are
interested in very negative information only, and are willing to grant a
loan otherwise. The seller can thus leverage the key insight that
information is only valuable if it changes optimal actions---to screen the
buyer's private information.

\subsection{Our Results and Techniques}\label{subsec:result}

We first study the environment when the state is binary, investigating the ratio between the optimal revenue (denoted as $\opt$) and the revenue that can be attained with the sale of complete information (denoted as $\optfi$).
{
\cite{BergemannBS2018design} showed that in the environment with a binary state and a binary action space, $\frac{\opt}{\optfi}\leq 2$. What happens when there are more actions? 
Since there are only two states, the agent has a single dimensional preference. Conventional wisdom from the monopoly pricing problem suggests that the ratio should be no more than a fixed constant. Nonetheless, in the paper we show that the ratio $\frac{\opt}{\optfi}$ is $\Theta(m)$, which is neither a fixed constant nor a function that scales with the number of agent types. Here $m$ is the number of actions.       
}

Our first main result, \Cref{thm:lower_bound_full_info}, 
shows that the revenue obtainable with the sale of complete information is only a fraction $1/\Omega (m)$ of the optimal revenue. 
Using \Cref{thm:lower_bound_full_info} as a springboard, our second main result, \Cref{thm:lower_bound_menu_complexity}, shows that the cardinality of the optimal menu is at least $\Omega(m)$. The reason is that any menu with $\ell$ experiments can obtain no more than $\ell$ times the revenue attained with the sale of
complete information (\Cref{lem:1/k-approx for menus of size k}). To prove \Cref{thm:lower_bound_full_info}, we explicitly construct an environment with $m$ actions, where selling the complete information yields low revenue. However, characterizing the optimal menu appears to be challenging for this environment, and even providing any incentive compatible mechanism with high revenue seems to be non-trivial. We take an indirect approach by first constructing an approximately incentive compatible mechanism that has a revenue at least $\Omega(m)$ times larger than the revenue attained with the sale of complete information, then converting the mechanism to a menu with negligible revenue loss.


In \Cref{thm:1/k approximation of full information for $k$ actions }, 
we show that selling the complete information can always obtain at least a fraction $1/O(m)$ of the optimal revenue, matching the lower bound in~\Cref{thm:lower_bound_full_info} up to a constant factor. This result is established by considering a relaxed problem for the optimal revenue
maximization, where we only keep a subset of the original incentive constraints. We show that the cardinality of the optimal menu in the relaxed problem is always $O(m)$, thus \Cref{thm:1/k approximation of full information for $k$ actions } follows from~\Cref{lem:1/k-approx for menus of size k}.  Note that we only show that the relaxed problem under which the optimal revenue
can be obtained with $O(m)$ experiments. It remains an open question whether the
cardinality of the optimal menu in the original problem grows linearly in the number of available
actions, or perhaps exceeds linear growth in the feasible actions.

These results show that the optimal mechanism to sell information or data
are likely high-dimensional even when the underlying decision problem is
low-dimensional. By tailoring the information to different
decision problems the seller can extract substantially more revenue from the
decision-maker as if he were to rely on a simple mechanism. More specifically, for any constant $c>0$ and any fixed finite menu, there exists a binary-state environment such that the revenue attained by the menu is no more than a $c$-fraction of the optimal revenue (\Cref{cor:fix-menu-size}). Nonetheless, we provide in~\Cref{cor:special_distributions} conditions under which it is indeed optimal to
only sell the largest possible amount of information, akin to the single unit
monopoly problem. 
In Examples~\ref{ex:uniform},~\ref{ex:exponential}, and~\ref{ex:normal} we apply this condition to various parametrized classes of information problems.

Our results above provide a complete understanding of $\frac{\opt}{\optfi}$ in the binary-state environments. What happens if there are more than two states? In fact, we show that the ratio becomes substantially larger when the agent has a multi-dimensional preference. We consider arguably the simplest multi-dimensional environment, where there are 3 states and 3 actions, and the buyer receives payoff $1$ if they match their action $j$ with the state $i$ ($j=i$) and receives payoff $0$ otherwise. We refer to this as the three state \emph{matching utility} environment. We show in \Cref{thm:general-main} that in this multi-dimensional environment, the ratio $\frac{\opt}{\optfi}$ can scale polynomially with the number of agent types $N$. In particular, $\frac{\opt}{\optfi}=\Omega(N^{1/7})$. With \Cref{lem:1/k-approx for menus of size k}, the result also implies that the optimal menu contains at least $\Omega(N^{1/7})$ experiments. The proof is adapted from the approach in \cite{hart2019selling}, which was originally used in the multi-item auction problem. Their approach does not directly apply to our problem due to both the non-linearity of the buyer's value function and the more demanding incentive compatibility constraints in our setting. In the proof, we construct a sequence of types placed on a sequence of concentric thin circular sectors. \cite{hart2019selling} placed all types in a sequence of complete circles. We place the types in these carefully picked circular sectors, so that the constructed mechanism satisfies the more demanding incentive compatibility constraints. Then we construct a discrete distribution on those types, so that the ratio $\frac{\opt}{\optfi}$ is large.

An alternative interpretation of the result is that: There does not exist a universal finite upper bound on either the cardinality of the optimal menu or the ratio $\frac{\opt}{\optfi}$ that holds for all possible type distributions. To complement this result, in \Cref{Thm:multi-state-uniform}, we show that in the same environment -- matching utility with 3 states/actions, selling the complete information is indeed optimal if the distribution is uniform. To prove the result, we propose another relaxation of the problem optimizing over the agent's utility functions. We then construct a dual problem in the form of optimal transportation~\cite{villani2009optimal, daskalakis2017strong, rochet1998ironing}. To verify the optimality of selling complete information, we construct a feasible dual that satisfies the complementary slackness conditions. Although \Cref{Thm:multi-state-uniform} focuses on a special matching utility environment with 3 states, our relaxation provides a general approach to certify the optimality of any specific menu. 

\subsection{Additional Related Work}
Unlike selling a single unit of item, the monopolist pricing problem becomes much more involved in  multi-dimensional settings. Complete characterizations are known only in several special cases~\cite{GiannakopoulosK14,GiannakopoulosK15,haghpanah2021pure,daskalakis2017strong, fiat2016fedex, devanur2017optimal, devanur2020optimal}. A recent line of work provides simple and approximately-optimal mechanisms~\cite{ChawlaHK07,ChawlaHMS10,HartN12,BabaioffILW14,LiY13,Yao15,CaiDW16,CaiZ17,DuttingKL20}.                  Our results (\Cref{thm:lower_bound_full_info} and \Cref{thm:1/k approximation of full information for $k$ actions }) provide matching upper and lower bounds for the performance of selling the complete information, arguably the simplest mechanism for selling information, in single-dimensional settings. In sharp contrast to the monopolist pricing problem, our \Cref{thm:lower_bound_menu_complexity} indicates that the optimal solution for selling information is unlikely to be simple even in single-dimensional settings. 

Our analysis considers the direct sale of information. Here
contracting takes place at the ex ante stage: In this case, the buyer
purchases an information structure (i.e., a Blackwell experiment), as
opposed to paying for specific realizations of the seller's informative
signals.
By contrast, Babaioff et al.~\cite{babaioff2012optimal} and Chen et al.~\cite{chen2020selling} study a model of data lists (i.e.,
pricing conditional on signal realizations) when buyers are heterogeneous
and privately informed. Similarly, Bergemann and Bonatti~\cite{bergemann2015selling} consider the trade of
information bits (\textquotedblleft cookies\textquotedblright ) that are an
input to a decision problem. In these models, the price paid by the buyer
depends on the realization of the seller's information, thus it prices the
information ex interim. 

There are some recent works studying the revenue-optimal mechanism in different model of selling information. Liu et al.~\cite{liu2021optimal} characterizes the revenue-optimal mechanism when the buyer has a linear value function on the scalar-valued state and action. A recent paper by Li~\cite{li2021selling} studies the case where the agent has endogenous information, meaning that she can perform her own experiment at a certain cost after receiving the signal. Finally, Cai and Velegkas~\cite{cai2020sell} study the same problem as ours but focus on efficient algorithms to compute the optimal menu for discrete type distributions. We consider general type distributions and study the cardinality of the optimal menu as well as the performance of selling the complete information.


\notshow{
As large amounts of data become available and can be communicated more easily 
and processed more effectively, information has come to play a central role 
for economic activity and welfare in our age. In turn, markets for 
information have become more prominent and significant in terms of trading volume. 
Information is a valuable commodity for the decision-makers under
uncertainty. By requiring more information, the decision-maker can refine
his initial estimates about the true state of the world and consequently
improve the expected utility of his decision. In many economically
significant situations, a decision-maker can acquire additional information
from a seller who either already has the relevant information or can
generate the relevant information at little or no cost.

For the seller of information the question is therefore how much information
to sell and at what price to sell the information. A natural starting point
is to make all the information available and sell the information at a price
that maximizes the revenue of the seller. The problem of selling information
at an optimal price is therefore closely related to the classic problem of
the monopolist selling a single unit of an object at an optimal price. And
just as in the classic optimal monopoly problem the seller faces a trade-off
in the choice of the price. A high price generates a high revenue for every
additional sale, but there might be few buyers who value the object higher
than the price asked. A lower price generates a larger volume of sales but
with low marginal revenue. The optimal policy finds the optimal balance
between the marginal revenue and the inframarginal revenue considerations.

The sale of complete information faces similar consideration. A high price
for the complete information will be acceptable for decision-makers with
diffused prior information, thus those who value the additional information
most. But a high price, even for complete information may not acceptable for
those buyers who already have some information.

In a recent contribution \cite{BergemannBS2018design} analyze the optimal selling policy
with the tools of mechanism design. Their analysis is mostly focused on the
canonical setting with a binary state and a binary action space. In this
setting the type space is naturally one-dimensional and given by the prior
probability of one state, the probability of the other state is simple the
complementary probability. Indeed, they find that it is often optimal to
simply sell the complete information to the buyers and identify the optimal
monopoly price. This is for example the case when the distribution of the
prior beliefs, the types of the buyers, is symmetric around the uniform
prior. \ Yet, they find that sometimes it can be beneficial to offer two
information goods for sale, one that indeed offers complete information, but
then add one that offers only partial information. Thus, a menu of
information goods sometimes dominates the sale of a single object, namely
the complete information.

Given the proximity of the problem of selling information to the problem of
selling a divisible good with a unit capacity, the superiority of a menu
over a single choice may appear to be surprising. After all, in the single
good problem, \cite{riley1983optimal} have famously established that selling the
entire unit at an optimal price is always an optimal policy. When we
consider the problem of selling information, we find that the utility of the
buyer is also linear in the posterior probability, just as it is linear in
the quantity in the aforementioned problem. Thus the emergence of a menu
rather than a single item appears puzzling. The difference is however that
the utility of the buyer is only piecewise linear. In particular, in the
binary state setting, it displays exactly one kink in the interior of the
prior probability. The kink emerges at an interior point of the prior belief
where the decision-maker is indifferent between the two actions that are at
his disposal. This particular prior indeed identifies the decision-maker who
has the highest willingness-to-pay for information. Moving away from this
point, the utility is then linear in the prior probability. The kink is
foremost an expression of the value of information for the decision-maker.
But following the utility descent away from the kink, the value of
information is decreasing linearly, and thus the seller faces not one, but
two endpoints at which the participation constraints of the buyer has to be
satisfied.

Naturally, with complete information about binary states, there will be two
actions, one for each state which will lead to the highest utility of the
decision maker. Thus, we might expect that the cardinality of the optimal
menu remains at most binary when we allow the agents to have a larger choice
or action set but stay with binary state space that represents the
uncertainty. In this paper we pursue this question and find that the
cardinality of the optimal menu increases at least linearly with the number
of available actions. Thus, selling information forces us to consider larger
menus even when the space of uncertainty remains binary, and thus the
utility as a function of the one-dimensional prior remains piecewise linear
everywhere. Moreover, the cardinality of a set of optimal actions will
always remain at two. Thus, the nature of selling information is
high-dimensional even when the underlying state and ex-post optimal action
space remains binary and thus small. 

Our first main result, Theorem 1, shows that in environments with two states
and $m$ possible actions, the revenue that can be attained with the sale of
complete information is only a fraction $1/\Omega (m)$ of the optimal
revenue. We obtain this bound constructively with a payoff matrix where a
menu with $m$ items obtains a revenue that is $m$ times higher than the
revenue that \grigorisnote{can} be attained from selling complete information. Theorem 1 can
therefore be interpreted as establishing an upper bound on the revenue from
selling complete information. Theorem 2 then establishes a lower bound of $%
\Omega (m)$ on the number of items that are necessary to create an optimal
menu. In Theorem 3, we return to the revenue that can be obtained with the
simple mechanism of selling the complete information. We now provide a lower
bound relative to the optimal revenue. This third result is actually
established by considering a relaxed problem for the optimal revenue
maximization. Here we consider only a subset of incentive constraints. In
particular, as Theorem 3 considers a relaxed problem under which the revenue
can be obtained with $m$ items, it remains an open question whether the
cardinality of the optimal menu grows linearly in the number of available
actions, or perhaps exceeds linear growth in the feasible actions. 

These results show that the optimal mechanism to sell information or data
are\grigorisnote{is?} likely to \grigorisnote{be} high-dimensional even when the underlying decision problem is
low dimensional. By tailoring the information to the different
decision-problems the seller can extract substantially more revenue from the
decision-makers as if he were to rely on simple mechanism. Nonetheless we
provide in Theorem 4 and 5 conditions under which it is indeed optimal to
only sell the largest possible amount of information, akin to single unit
monopoly problem. These conditions are stated in abstract terms, and it
would be interesting to find parametrized class of information problems
under which these conditions could be verified.  
}
   
\section{Preliminaries}\label{sec:prelim}

\paragraph{Model and Notation.} A data buyer (also referred to as the agent) faces a decision problem under uncertainty. The state of the world $\omega$ is drawn from a state space $\Omega=\{\omega_1,\ldots,\omega_n\}$. 
For each $i\in [n]$, we refer to state $\omega_i$ as state $i$ for simplicity.\footnote{Throughout the paper, we denote $[k]=\{1,2,...,k\}$ for any integer $k\geq 1$.} The buyer chooses an action $a$ from a finite action space $A$. We use $m$ to denote the size of $A$ and let $A=[m]$. For every $i\in [n], j\in [m]$, the buyer's payoff for choosing action $j$ under state $i$ is defined to be $u_{ij}$. Denote $\mathcal{U}$ the $n\times m$ payoff matrix that contains all $u_{ij}$'s. The buyer has \emph{matching utility} payoff if $n=m$ and $\mathcal{U}$ is an identity matrix ($u_{ij}=1$ if $i=j$ and 0 otherwise). 

The buyer has some prior information about the state of the world, which is captured by a distribution that represents the probability that the buyer assigns to each of the states. 
We call this piece of prior information the \textit{type} of the buyer, denoted by $\theta=(\theta_1,\ldots,\theta_{n-1})$, where $\theta_i$ represents the probability that the buyer assigns to the state of the world $\omega_i$ for each $i\in [n-1]$ (and a probability of $1-\sum_{i=1}^{n-1}\theta_i$ for state $\omega_n$). 
Denote $\Theta=\{\theta=(\theta_1,\ldots,\theta_{n-1})\in [0,1]^{n-1}\mid \sum_{i=1}^{n-1}\theta_i\leq 1\}$ the space of $\theta$. When the state space is binary, $\theta$ is a scalar in $[0,1]$. For ease of notation, denote $\theta_n=1-\sum_{i=1}^{n-1}\theta_i$ throughout the paper.

{The type $\theta$ is distributed according to some known probability distribution function $F$. We use $\supp(F)$ to denote the support of distribution $F$. When $F$ is a continuous probability distribution (or discrete probability distribution), we use $f$ to denote its probability density function (or its probability mass function). }
Apart from the buyer, there is also a \textit{seller} who observes the state of the world and is willing to sell supplemental information to the buyer.~\footnote{It is not crucial to assume that the seller knows the state of the world, and it suffices to assume that the seller can send signals that are correlated with the state of the world.}
We refer to the buyer as he and to the seller as she.

\paragraph{Experiment.} The seller provides supplemental information to the buyer via a \textit{signaling scheme} which we call \textit{experiment}. A signaling scheme is a commitment to $n$ probability distributions over a finite set of signals $S$, such that when the state of the world is realized, the seller draws a signal from the corresponding distribution and sends it to the buyer. According to \cite{BergemannBS2018design}, we can without loss of generality restrict our attention to the experiments whose signal set is the same as the action space $[m]$ (see \Cref{lem:responsive menu} in \Cref{sec:more_prelim}). One can think of every signal $j$ as the seller recommending the buyer to choose action $j$. We denote such an experiment $E$ by a $n\times m$ matrix. For every $i\in [n], j\in [m]$, the $(i,j)$-th entry, denoted as $\pi_{ij}(E)$ or $\pi_{ij}$ if the experiment is clear from context, is the probability that experiment $E$ sends signal $j$ when the state of the world is $\omega_i$. It satisfies that $\sum_{j\in [m]}\pi_{ij}(E)=1,\forall i\in [n]$. 
\notshow{
\begin{table}[h]
\centering
\begin{tabular}{ c| c c c} \label{tab:experiment}
$E$ & $1$ & $\cdots$ & $m$ \\ 
  \hline
$\omega_1$ & $\pi_{11}$ & $\cdots$ & $0$ \\ 
 $\omega_2$ & $\pi_{21}$ &$\cdots$& $\pi_{2m}$ \\ 
\end{tabular}
\caption{Experiment}
\end{table}
}

We call an experiment $E$ \emph{fully informative} if the seller completely reveals the state of the world by sending signals according to $E$. This means that for every state of the world the experiment recommends the action that yields the highest payoff under this state, i.e. for every $i\in [n]$, $\pi_{ij}(E)=1$ if $j=\argmax_{k}u_{ik}$~\footnote{We break ties lexicographically.} and 0 otherwise.


\paragraph{The Value of an Experiment.} To understand the behavior of the buyer, we first explain how the buyer evaluates an experiment. 
Without receiving any additional information from the seller, the buyer's best action is 
$a(\theta)\in \argmax_{j\in [m]}\{\sum_{i\in [n]}\theta_iu_{ij}\}$, and his maximum expected payoff is $u(\theta) = \max_{j\in [m]}\{\sum_{i\in [n]}\theta_iu_{ij}\}$. 
If he receives extra information from the seller, he updates his beliefs and may choose a new action that induces higher expected payoff. After receiving signal $k\in [m]$ from experiment $E$ his posterior belief about the state of the world is: $\Pr[\omega_i| k, \theta] = \frac{\theta_i\pi_{ik}}{\sum_{\ell\in [n]}\theta_{\ell}\pi_{\ell k}}, \forall i\in [n]$.
The best action is
$a(k|\theta)= \argmax_{j\in [m]}\left\{\frac{\sum_{i\in [n]}\theta_i\pi_{ik}u_{ij}}{\sum_{i\in [n]}\theta_i\pi_{ik}}\right\}$
, which yields maximum expected payoff
$u(k|\theta)=\max_{j\in [m]}\left\{\frac{\sum_{i\in [n]}\theta_i\pi_{ik}u_{ij}}{\sum_{i\in [n]}\theta_i\pi_{ik}}\right\}$.
Taking the expectation over the signal the buyer will receive, we define the value of the experiment $E$ for type $\theta$ to be 
$$V_{\theta}(E) = \sum_{k\in [m]}\max_{j\in [m]}\left\{\sum_{i\in [n]}\theta_i\pi_{ik}u_{ij}\right\}$$

\paragraph{Mechanism.}
Any mechanism $\MM$ can be described as $\{(E(\theta),t(\theta))\}_{\theta\in \Theta}$, where $E(\theta)$ is the experiment type $\theta$ purchases and $t(\theta)$ is the payment. The interaction between the seller and the buyer in any mechanism works as follows:
\begin{enumerate}
    \item The seller commits to a mechanism $\MM=\{E(\theta),t(\theta)\}_{\theta\in\Theta}$.
    \item The state of the world $\omega_i$ ($i\in [n]$) and the type of the buyer $\theta$ are realized.
    \item The buyer reports his type $\theta$ to the mechanism.
    \item The seller sends the buyer a signal $k\in [m]$ with probability $\pi_{ik}(E(\theta))$.
    \item The buyer chooses an action $j\in [m]$, based on his type $\theta$ and the signal $k$. He receives payoff $u_{ij}$, and pays $t(\theta)$ to the seller.
\end{enumerate} 

\noindent In subsequent sections, sometimes we abuse the notation and denote the experiment $E(\theta)$ as $\MM(\theta)$. We assume that the buyer is quasilinear, i.e. he wants to maximize his utility -- the maximum expected payoff minus the payment. 
\paragraph{Incentive Compatibility} A mechanism is \textit{Incentive Compatible} (IC) if reporting his type $\theta$ truthfully maximizes his expected utility: $V_\theta(E(\theta)) - t(\theta) \geq V_{\theta}(E(\theta')) -t(\theta'),\forall \theta, \theta' \in \Theta$. For any $\eps>0$, a mechanism is $\eps$-IC if the inequality is violated by at most $\eps$. Given any mapping $\sigma:[m]\to [m]$, denote $V_\theta^{(\sigma)}(E)$ the value of the experiment $E$, if the buyer chooses action $\sigma(j)$ whenever he receives signal $j$. Formally, 
$$V_{\theta}^{(\sigma)}(E) = \sum_{i\in [n]}\sum_{j\in [m]}\theta_i\pi_{ij}(E)\cdot u_{i,\sigma(j)}.$$ 
\noindent Hence we have $V_{\theta}(E)=\max_{\sigma}V_{\theta}^{(\sigma)}(E)$. Denote $V_\theta^{*}(E)$ the value of the experiment $E$, if the buyer follows the recommendation of the seller, i.e., $\sigma$ is an identity mapping. According to \cite{BergemannBS2018design}, we can without loss of generality restrict our attention to the mechanisms such that if the buyer reports truthfully, following the recommendation from the seller maximizes her expected payoff (see \Cref{lem:responsive menu}). In those mechanisms, $V_{\theta}(E(\theta))=V_{\theta}^*(E(\theta))$.  
Then IC constraints are equivalent to:
\begin{equation}\label{equ:ic}
V_\theta^{*}(E(\theta)) - t(\theta) \geq V_{\theta}^{(\sigma)}(E(\theta')) -t(\theta'),\forall \theta, \theta' \in \Theta, \sigma:[m]\to [m]    
\end{equation}

\paragraph{Individually Rationality.}A mechanism is \textit{Individual Rational} (IR) if reporting his type $\theta$ truthfully induces expected utility at least $u(\theta)$: $V_\theta(E(\theta)) - t(\theta) \geq u(\theta),\forall \theta \in \Theta$. We remark that the agent has expected payoff $u(\theta)$ before receiving any additional information. Thus IR constraints guarantee that the agent has a non-negative utility surplus by participating in the mechanism. Any IC and IR mechanism can be described as a menu $\MM=\{(E,t(E))\}_{E\in \cE}$. The buyer with any type $\theta$ chooses the experiment $E$ that maximizes $V_\theta(E)-t(E)$. The pair $(E,t(E))$ consisting of an experiment $E$ and its price $t(E)$ is called an \emph{option}. The option enables the data-buyer to improves his information and consequently improve his decision. A menu is called fully informative if it only contains the fully informative experiment. We also refer to it as ``selling complete information''. 

\paragraph{Revenue.} An \emph{environment} is a particular choice of parameters of the model (i.e. the payoff matrix and type distribution). Fix an environment, we denote by $\rev(\MM)$ the revenue that mechanism $\MM$ generates and by $\opt$ the optimal revenue among all IC, IR mechanisms. 
We denote by $\optfi$ the maximum revenue achievable by any fully informative menu.


\paragraph{Payoff Matrix in Binary States}
When the state space is binary, we can without loss of generality make some assumptions on the payoff matrix. For any action $j$, we say the action is \emph{redundant}, if there exists a set of actions $S\subseteq [m]\backslash \{j\}$ and a distribution $d=\{d_k\}_{k\in S}$ such that: $u_{ij}\leq \sum_{k\in S}d_ku_{ik},\forall i\in \{1,2\}$. In this scenario, the buyer will never choose action $j$ as it is dominated by choosing an action according to distribution $d$ regardless of the underlying state. Without loss of generality, we assume that none of the actions is redundant, which clearly implies that there does not exist $k\not=j$ such that $u_{1k}\geq u_{1j},u_{2k}\geq u_{2j}$. Thus we can assume that $$u_{11}>u_{12}>...>u_{1m}=0 \quad\text{and}\quad 0=u_{21}<u_{22}<...<u_{2m},$$ otherwise we can change the action indexes, or modify every $u_{1j}$ (or $u_{2j}$) by the same amount. We further assume $u_{11}\geq u_{2m}=1$, otherwise we can scale all payoffs or swap the states. Under the assumption above, the fully informative experiment $E$ satisfies that $\pi_{11}(E)=\pi_{2m}(E)=1$.

\begin{table}[h]
\centering
\subfloat[Payoff Matrix]{\begin{tabular}{ c| c c c c c} \label{tab:payoff matrix}
$u$ & $1$ & $\cdots$ & $j$ & $\cdots$ & $m$ \\ 
  \hline
$\omega_1$ & {$u_{11}$} & $\cdots$ & $u_{1j}$ & $\cdots$ &$0$ \\ 
 $\omega_2$ & $0$ &$\cdots$& $u_{2j}$ & $\cdots$ & {$u_{2m}=1$} \\ 
\end{tabular}}
\qquad\quad
\subfloat[Experiment]{\begin{tabular}{ c| c c c c c} \label{tab:experiment}
$E$ & $1$ & $\cdots$ &  $j$ & $\cdots$ & $m$ \\ 
  \hline
$\omega_1$ & $\pi_{11}$ & $\cdots$ & $\pi_{1j}$ & $\cdots$ & $\pi_{1m}$ \\ 
 $\omega_2$ & $\pi_{21}$ &$\cdots$& $\pi_{2j}$ & $\cdots$ & $\pi_{2m}$ \\ 
\end{tabular}}
\end{table}












\section{Binary State: A Lower Bound for Selling Complete Information}\label{sec:lower_bound_full_info}
We focus on the binary-state case ($n=2$) in \Cref{sec:lower_bound_full_info} and \Cref{sec:upper_bound_full_info}. In this section, we show that for any $m$, there exists some environment with 2 states and $m$ actions such that selling the complete information is only an $\frac{1}{\Omega(m)}$-fraction of the optimal revenue $\opt$. 
Proofs in this section are postponed to \Cref{sec:proof_lower_bound}. 

In fact, we prove a stronger statement that in this environment, $\optfi$ is an $\frac{1}{\Omega(m)}$-fraction of the maximum revenue among a special class of menus called \emph{semi-informative menu}. 
In particular, an experiment $E$ is semi-informative if it satisfies: (i) It only recommends the fully informative actions (Action 1 and $m$), i.e., $\pi_{ij}(E)=0$, for all $i\in \{1,2\},2\leq j\leq m-1$; (ii) Either $\pi_{11}(E)=1$ or $\pi_{2m}(E)=1$. Since $\sum_{j\in [m]}\pi_{1j}(E)=\sum_{j\in [m]}\pi_{2j}(E)=1$, any experiment $E$ that satisfies both of the above properties has one of the following patterns: 

\vspace{-.1in}
\begin{table}[ht]
\centering
\subfloat[Pattern 1: $\pi_{2m}(E)=1$]{\begin{tabular}{ c| c c c c c}
$E$ & $1$ & $\cdots$ & $j$ & $\cdots$ & $m$ \\ 
  \hline
$\omega_1$ & {$\pi_{11}$} & $\cdots$ & $0$ & $\cdots$ &$1-\pi_{11}$ \\ 
 $\omega_2$ & $0$ &$\cdots$& $0$ & $\cdots$ & $1$ \\ 
\end{tabular}}
\qquad\quad
\subfloat[Pattern 2: $\pi_{11}(E)=1$]{\begin{tabular}{ c| c c c c c} 
$E$ & $1$ & $\cdots$ &  $j$ & $\cdots$ & $m$ \\ 
  \hline
$\omega_1$ & $1$ & $\cdots$ & $0$ & $\cdots$ & $0$ \\ 
 $\omega_2$ & $1-\pi_{2m}$ &$\cdots$& $0$ & $\cdots$ & $\pi_{2m}$ \\  
\end{tabular}}
\caption{Two Specific Patterns of the Semi-informative Experiment}\label{table:two-patterns}
\end{table}

\vspace{-.2in}
A mechanism (or menu) is called semi-informative if every experiment in the mechanism (or menu) is semi-informative. Denote $\optsi$ the optimal revenue achieved by any semi-informative menu. Clearly $\opt\geq\optsi\geq \optfi$ since the fully informative experiment is also semi-informative. The class of semi-informative menus is also useful in \Cref{sec:upper_bound_full_info}. 

\begin{theorem}\label{thm:lower_bound_full_info}
For every $m$, there exists a payoff matrix with 2 states and $m$ actions, together with a type distribution $F$ of the agent, such that $\optsi=\Omega(m)\cdot \optfi$, which implies that $\opt=\Omega(m)\cdot \optfi$. 
\end{theorem}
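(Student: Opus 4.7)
The plan is to exhibit a concrete binary-state, $m$-action environment and a type distribution for which $\optfi$ is small, while a carefully chosen semi-informative menu achieves revenue $\Omega(m)$ times larger. Following the strategy announced in the introduction, I would proceed in three stages: (i) design the environment so that the value of complete information is nearly uniform across the intended types; (ii) construct an \emph{approximately} IC semi-informative mechanism whose revenue exceeds $\optfi$ by a factor of $\Omega(m)$; and (iii) convert that $\varepsilon$-IC mechanism to an exactly IC semi-informative menu at negligible revenue loss.

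For the environment, I would pick $u_{1j},u_{2j}$ so that the $m-2$ interior indifference thresholds $\theta^{(1)}<\cdots<\theta^{(m-1)}$ (where $\theta^{(j)}$ is the type indifferent between actions $j$ and $j{+}1$) are well-separated on a log scale, for instance with the odds ratios $\theta^{(j)}/(1-\theta^{(j)})$ forming a geometric sequence. A clean implementation is to take $u_{1j}$ and $u_{2j}$ as suitably scaled and shifted geometric sequences, which ensures that every action $j$ is optimal on a non-trivial sub-interval of $(0,1)$ and, crucially, that the value of complete information above the no-information baseline, $V_\theta(E^{\mathrm{full}})-u(\theta)$, is $\Theta(1)$ for every type $\theta$ in the interior of the threshold range. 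The distribution $F$ will place comparable atoms $f_j$ at each $\theta^{(j)}$, normalised so that $\sum_j f_j=1$. Because the net value of complete information is essentially uniform across $\supp(F)$, any single posted price for $E^{\mathrm{full}}$ can either serve essentially all of the mass at a small price, or only an individual atom at a $\Theta(1)$ price; optimising gives $\optfi = O(1/m)$.

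Next, for each atom $\theta^{(j)}$ I would design a tailored semi-informative option $(E_j,p_j)$. When $a(\theta^{(j)})$ lies closer to the ``state 2'' end, $E_j$ is a Pattern~1 experiment, and its revelation probability $\pi_{11}$ is taken as large as possible subject to the posterior after the uninformative signal $m$ still making $a(\theta^{(j)})$ optimal for type $\theta^{(j)}$. This calibration ensures $V_{\theta^{(j)}}(E_j)-u(\theta^{(j)}) = \Theta(1)$ and permits the price $p_j = \Theta(1)$. Symmetrically, when $a(\theta^{(j)})$ lies closer to the ``state 1'' end, $E_j$ is a Pattern~2 experiment. Summing over the $m-1$ atoms yields menu revenue $\sum_j f_j p_j = \Theta(1)$, giving the desired $\Omega(m)$ multiplicative gap over $\optfi$.

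The main obstacle is verifying incentive compatibility across the $m-1$ tailored options. Since each $E_j$ is calibrated to keep the posterior of $\theta^{(j)}$ just inside $a(\theta^{(j)})$'s optimality region, a deviating type whose threshold is far from $\theta^{(j)}$ will find that the induced posterior falls outside his own preferred action's region, sharply reducing his realised value from $E_j$; the log-scale separation of the thresholds is what guarantees that this reduction is multiplicative in the gap between indices and so suppresses all non-local deviations. Local deviations are more delicate and drive the precise calibration of the revelation probabilities and of the prices. I anticipate the resulting mechanism to be only $\varepsilon$-IC for some $\varepsilon = O(1/\mathrm{poly}(m))$; I would then invoke the standard reduction that turns an $\varepsilon$-IC mechanism into an exactly IC menu by uniformly lowering every price by $\varepsilon$, losing only an additive $O(\varepsilon)$ per option and so preserving the $\Omega(m)$ gap. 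Together with $\opt\geq\optsi\geq \rev(\mathcal{M})$, this completes the proof.
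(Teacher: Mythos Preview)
Your high-level outline matches the paper's three-stage plan (environment, $\varepsilon$-IC semi-informative mechanism, conversion to an exact menu), but the concrete construction you sketch cannot deliver the $\Omega(m)$ gap, for a basic reason.

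You design the environment so that the net value of complete information, $V_\theta(E^{\mathrm{full}})-u(\theta)$, is ``essentially uniform'' $\Theta(1)$ across the support, and then conclude $\optfi=O(1/m)$. This is backwards. If the net value is $\approx c$ for every type, posting the single price $c$ (or just below) sells to everyone, so $\optfi=\Theta(1)$. Worse, since any semi-informative experiment is Blackwell-dominated by the fully informative one, each tailored price $p_j$ is at most the net full-info value at $\theta^{(j)}$, hence $\sum_j f_j p_j\le c=\optfi$; your menu cannot beat selling complete information at all. The entire leverage for the lower bound must come from the \emph{variation} of the net full-info value across types, not its uniformity. In the paper the heights $1-u(\theta)$ range over a factor $\Theta(2^m)$ across the support, and the distribution is the equal-revenue distribution of Lemma~\ref{lem:ER-dist} (so every posted price for full information yields the same revenue $\optfi=1-u(\theta_1)$); the surplus integral $\int f(\theta)(1-u(\theta))\,d\theta$ is then $\log\bigl((1-u(\theta_{m-1}))/(1-u(\theta_1))\bigr)\cdot\optfi=\Theta(m)\cdot\optfi$, and the mechanism extracts a constant fraction of that surplus.

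A second, more subtle issue: the IC analysis does not go through with merely geometric threshold spacing. The deviations you must rule out are of the form ``type in piece $i$ buys $E_j$ and maps signal $m$ to action $m{-}r$''; in the paper's parametrisation the sign of the resulting utility gap hinges on comparing $2^j+2^r$ with $2^i$, which is why the slopes are doubly exponential, $l_i=-\varepsilon^{2^i}$ (see the case analysis in Lemma~\ref{lem:mech-m}). With geometric spacing the cross-deviations are not suppressed by a multiplicative factor in the index gap, and one cannot get $\varepsilon$-IC with $\varepsilon$ small relative to $\optfi$. Finally, the conversion step is not just ``subtract $\varepsilon$ from every price'': one needs the multiplicative-then-additive discount of Lemma~\ref{lem:eps-ic}, which is what guarantees the exact menu still generates $\Omega(m)\cdot\optfi$.
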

 
To prove the theorem, we first introduce the concept of IR curve in the binary-state case.

\paragraph{IR Curve.} 
In binary-state case, $u(\cdot)$ can be viewed as a function in $[0,1]$, which we refer to as the \emph{IR curve}. By definition, $u(\cdot)$ is a maximum over $m$ linear functions. Thus, it is a continuous, convex and piecewise linear function. Since none of the actions is redundant, the IR curve $u(\cdot)$ contains exactly $m$ pieces. We show in \Cref{table:example} an example with $4$ actions and in \Cref{fig:IRcurve-example} the corresponding IR curve.

\begin{minipage}[b]{0.4\textwidth}
\centering
\begin{tabular}[b]{ c| c c c c}
$u$ & $1$ & $2$ & $3$ & $4$ \\ 
\hline
$\omega_1$ & $1$ & $0.8$ & $0.6$ & $0$\\
$\omega_2$ & $0$ & $0.5$ & $0.8$ & $1$
\end{tabular}
\captionof{table}{The payoff matrix of an example with $2$ states and $4$ actions}
\label{table:example}
\end{minipage}
\hfill
\begin{minipage}[b]{0.5\textwidth}
\centering
\includegraphics[width=0.55\textwidth]{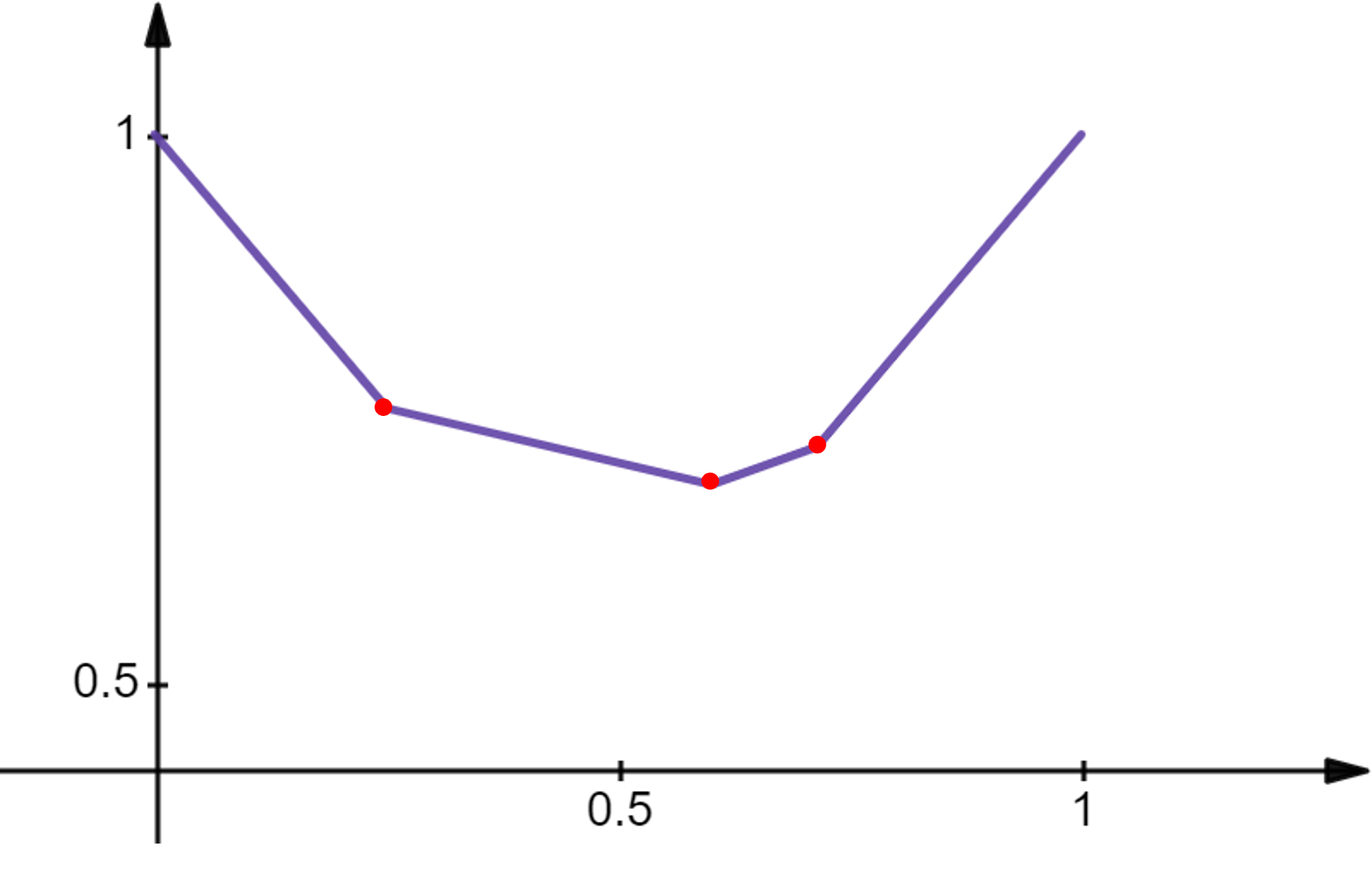}
\captionof{figure}{The IR curve $u(\cdot)$. The slopes of the $4$ pieces are $-1, -0.2, 0.3$ and $1$ respectively.}
\label{fig:IRcurve-example}
\end{minipage}

\begin{lemma}\label{lem:ER-dist}
Given any $a,b\in (0,1)$ such that $a<b$, let $h:[a,b]\to (0,1)$ be any piecewise linear function that is {strictly} decreasing and convex. 
Consider any environment with IR curve $h$.~\footnote{
When the type distribution has support $[a,b]$, we can restrict our attention to the IR curve at range $[a,b]$. In our construction, we construct a payoff matrix that implements the IR curve.} Then there exists a continuous distribution $F$ over support $[a,b]$, such that $\optfi=1-h(a)$, and

$$\frac{\int_a^bf(\theta)\cdot (1-h(\theta))d\theta}{\optfi}= \log\left(\frac{1-h(b)}{1-h(a)}\right),$$
where $f$ is the pdf of $F$. {When $u_{11}=u_{2m}=1$, the highest expected payoff an agent can achieve is $1$, so $\int_a^bf(\theta)\cdot (1-h(\theta))d\theta$ is also the expected full surplus under distribution $F$.}
\end{lemma}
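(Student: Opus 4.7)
The plan is to construct $F$ as an equal-revenue distribution with respect to the surplus values $v(\theta):=1-h(\theta)$. Under the footnote's normalization $u_{11}=u_{2m}=1$, the fully informative experiment has value $V_\theta(E_{\text{full}})=1$ for every $\theta$, so the full-information surplus to type $\theta$ is exactly $v(\theta)$. Selling $E_{\text{full}}$ at a posted price $p$ then reduces to a single-dimensional screening problem in which type $\theta$ buys iff $v(\theta)\geq p$, and I want to tailor $F$ so that the revenue curve $p\mapsto p\cdot\Pr[v(\theta)\geq p]$ is flat at value $1-h(a)$ on the range $[1-h(a),1-h(b)]$.

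Concretely, I will take the density
\[
f(\theta) = \frac{(1-h(a))\,(-h'(\theta))}{(1-h(\theta))^2},\qquad \theta\in(a,b),
\]
together with a point mass of weight $\tfrac{1-h(a)}{1-h(b)}$ at $\theta=b$ so that $F$ has total mass $1$. The change of variable $v=1-h(\theta)$ with $dv=-h'(\theta)\,d\theta$ reduces every integral in $\theta$ to an elementary integral in $v$ on $[1-h(a),1-h(b)]$; in particular, the continuous part integrates to $(1-h(a))\int_{1-h(a)}^{1-h(b)} v^{-2}\,dv = 1-\tfrac{1-h(a)}{1-h(b)}$, and the atom closes the gap to mass $1$. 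Since $h$ is piecewise linear, $h'$ is piecewise constant, so $f$ is well-defined and bounded a.e.

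To verify $\optfi=1-h(a)$: for any $p\in[1-h(a),1-h(b)]$, the same substitution shows that the mass of $\{\theta:v(\theta)\geq p\}$ equals $\bigl(\tfrac{1-h(a)}{p}-\tfrac{1-h(a)}{1-h(b)}\bigr) + \tfrac{1-h(a)}{1-h(b)} = \tfrac{1-h(a)}{p}$, where the two summands are the continuous and atomic contributions. Hence the revenue $p\cdot\tfrac{1-h(a)}{p}=1-h(a)$ is flat on this interval; any price $p<1-h(a)$ sells to everyone at a strict discount, and any price $p>1-h(b)$ sells to nobody, so $\optfi=1-h(a)$. Finally, the stated integral is computed via the same substitution as
\[
\int_a^b f(\theta)(1-h(\theta))\,d\theta = (1-h(a))\int_a^b \frac{-h'(\theta)}{1-h(\theta)}\,d\theta = (1-h(a))\log\!\left(\frac{1-h(b)}{1-h(a)}\right),
\]
and dividing by $\optfi$ gives the claimed ratio.

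The main subtlety I anticipate is reconciling the atom at $\theta=b$ with the phrase ``continuous distribution $F$'': in fact the equal-revenue construction \emph{must} carry an atom at the top endpoint for the revenue curve to be flat while $F$ has total mass $1$. One can check by the tail formula that any atomless probability measure on $[a,b]$ satisfying the equal-revenue inequality $\Pr[v(\theta)\geq p]\leq \tfrac{1-h(a)}{p}$ has expected surplus at most $(1-h(a))\bigl[1+\log\!\tfrac{1-h(b)}{1-h(a)}\bigr]$, strictly larger than the formula in the statement. The consistent reading is therefore that $f$ in the lemma denotes the density part of $F$ with respect to Lebesgue measure, so the integral $\int_a^b f(\theta)(1-h(\theta))\,d\theta$ records only the continuous component of the expected surplus (the atom's contribution of $1-h(a)$ is absorbed into $\optfi$ and not the integral).
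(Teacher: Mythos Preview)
Your construction is exactly the paper's: the paper defines $F(\theta)=1-\frac{c}{1-h(\theta)}$ on $[a,b)$ with $c=1-h(a)$ and sets $F(b)=1$, which gives precisely your density $f(\theta)=-c\,h'(\theta)/(1-h(\theta))^2$ together with the atom $\frac{1-h(a)}{1-h(b)}$ at $b$, and then computes $\optfi$ and the integral just as you do. Your observation about the atom is correct and is exactly what the paper does (silently)---the phrase ``continuous distribution'' in the statement is loose, and the integral $\int_a^b f(\theta)(1-h(\theta))\,d\theta$ indeed picks up only the absolutely continuous part.
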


We provide a sketch of the proof of \Cref{thm:lower_bound_full_info} here. In the first step, we construct an environment with 2 states and $m$ actions by constructing an IR curve then apply \Cref{lem:ER-dist} to create an distribution $F$ over the agent types to show that the ratio between the full expected surplus and $\optfi$ is $\Omega(m)$. In the second step, we indirectly construct a semi-informative menu whose revenue is a constant fraction of the total expected surplus, which implies that $\optsi=\Omega(m)\cdot \optfi$. In our construction, the payoffs satisfy $u_{11}=u_{2m}=1$.

\paragraph{Construction of the IR curve:}

\notshow{
\begin{table}[h!]
\centering
\subfloat[Payoff Matrix]{\begin{tabular}{ c| c c c c c} 
$u$ & $1$ & $\cdots$ & $j$ & $\cdots$ & $m$ \\ 
  \hline
$\omega_1$ & {$u_{11}$} & $\cdots$ & $u_{1j}$ & $\cdots$ &$0$ \\ 
 $\omega_2$ & $0$ &$\cdots$& $u_{2j}$ & $\cdots$ & {$u_{2m}=1$} \\ 
\end{tabular}}
\qquad\quad
\subfloat[Re-indexed Payoffs in Our Construction]{\begin{tabular}{ c| c c c c c} 
 & $a_{m-1}$ & $\cdots$ &  $a_{m-j}$ & $\cdots$ & $a_0$ \\ 
  \hline
$\omega_1$ & {$u_{m-1}=1$} & $\cdots$ & $u_{m-j}$ & $\cdots$ &$0$ \\ 
 $\omega_2$ & $0$ &$\cdots$& $v_{m-j}$ & $\cdots$ & {$v_0=1$} \\ 
\end{tabular}}
\caption{Re-indexing the Actions for the Lower Bound Construction}
\label{table-lower-bound-payoff}
\vspace{-.1in}
\end{table}
}

Given $m-1$ types $0<\theta_1<\theta_2<...<\theta_{m-1}<1$ which will be determined later, we consider the following IR curve $u(\cdot)$ on $[\theta_1,\theta_{m-1}]$. The IR curve is a piecewise linear function with $m-2$ pieces. For simplicity, for every $1\leq i\leq m-2$, we will refer to piece $L_i$ as the piece for the corresponding action $m-i$.\footnote{We choose the subscript this way so that the IR curve goes from a piece with smaller index to a piece with larger index as $\theta$ increases.} Fix any $\eps$ less than $\frac{1}{2^m}$.

Let $d_0:=\eps^{2^m}$. For every $1\leq i\leq m-2$, let $d_i:=2^i\cdot \eps^{2^m-2^i}$. For every $1\leq i\leq m-1$, let $\theta_i:=\sum_{j=0}^{i-1}d_j$. Then when $\eps<\frac{1}{2^m}$,
    \begin{equation}\label{equ:theta_m-1}
        \theta_i=\eps^{2^m}+\sum_{j=1}^{i-1}2^j\cdot \eps^{2^m-2^j}<\eps^{2^{m}-2^{i-1}-1}
    \end{equation}
For every $1\leq i\leq m-2$, piece $L_i$ is in the range $\theta\in [\theta_i,\theta_{i+1}]$. The slope of piece $L_i$ is $l_i:=-\eps^{2^i}$.
Thus the height of piece $L_i$ is $h_i:=u(\theta_{i+1})-u(\theta_i)=-l_id_i=2^i\cdot\eps^{2^m}$. Let $u(\theta_1):=1-\theta_1$ and $h_0:=\theta_1=d_0=\eps^{2^m}$. Then for every $1\leq i\leq m-1$, we have $u(\theta_i)=1-\sum_{j=0}^{i-1}h_j$. See \Cref{fig:IRcurve-construction} for an illustration of notations.

Now we compute the payoff matrix according to the constructed IR curve $u(\cdot)$. $u_{11}=u_{2m}=1$, $u_{1m}=u_{21}=0$. For every $1\leq i\leq m-2$, before receiving any additional information, choosing action $m-i$ induces expected payoff $\theta\cdot u_{1, m-i}+(1-\theta)\cdot u_{2, m-i}$. By the definition of the IR curve, it must go through the two points $(\theta_{i},u(\theta_i))$ and $(\theta_{i+1},u(\theta_{i+1}))$. Thus we have $\theta_i\cdot u_{1,m-i}+(1-\theta_i)\cdot u_{2,m-i}=1-\sum_{j=0}^{i-1}h_j$ and $\theta_{i+1}\cdot u_{1,m-i}+(1-\theta_{i+1})\cdot u_{2,m-i}=1-\sum_{j=0}^{i}h_j$. 
Thus $u_{2,m-i}=1-\sum_{j=0}^{i-1}h_j-\theta_il_i$, $u_{1,m-i}=1-\sum_{j=0}^{i-1}h_j-\theta_il_i+l_i$. 

By \Cref{lem:ER-dist}, there exists a distribution $F$ under support $[\theta_1,\theta_{m-1}]$, such that $\optfi=1-u(\theta_1)=\theta_1=\eps^{2^m}$, and
\begin{equation}\label{equ:lower bound integral}
\begin{aligned}
    &\int_{\theta_1}^{\theta_{m-1}}f(\theta)\cdot (1-u(\theta))d\theta\geq \log\left(\frac{1-u(\theta_{m-1})}{1-u(\theta_1)}\right)\cdot \optfi
    =\optfi\cdot \log\left(\frac{\sum_{i=0}^{m-2}h_i}{h_0}\right)\\
    =&\optfi\cdot \log \left(\frac{\sum_{i=0}^{m-2}2^i\cdot \eps^{2^m}}{\eps^{2^m}}\right)
    =\optfi\cdot \log \left(2^{m-1}-1\right)
    \geq \optfi\cdot\log(2)\cdot (m-2) 
\end{aligned}
\end{equation}

Next, we present a semi-informative mechanism $\MM$ whose revenue is comparable to the integral $\int_{\theta_1}^{\theta_{m-1}}f(\theta)\cdot (1-u(\theta))d\theta$.

\paragraph{Construction of the mechanism:} Consider the following $m-2$ semi-informative experiments. 
For every $1\leq i\leq m-2$, we design experiment $E_i$ and its price $p_i$ so that the buyer's utility after purchasing experiment $E_i$ coincides with the piece $L_i$ of $u(\cdot)$ for $\theta\in [\theta_i,\theta_{i+1}]$, when the buyer follows the recommendation. Formally, experiment $E_i$ is as follows:

\begin{minipage}[b]{0.5\textwidth}
\centering
\includegraphics[width=0.45\textwidth]{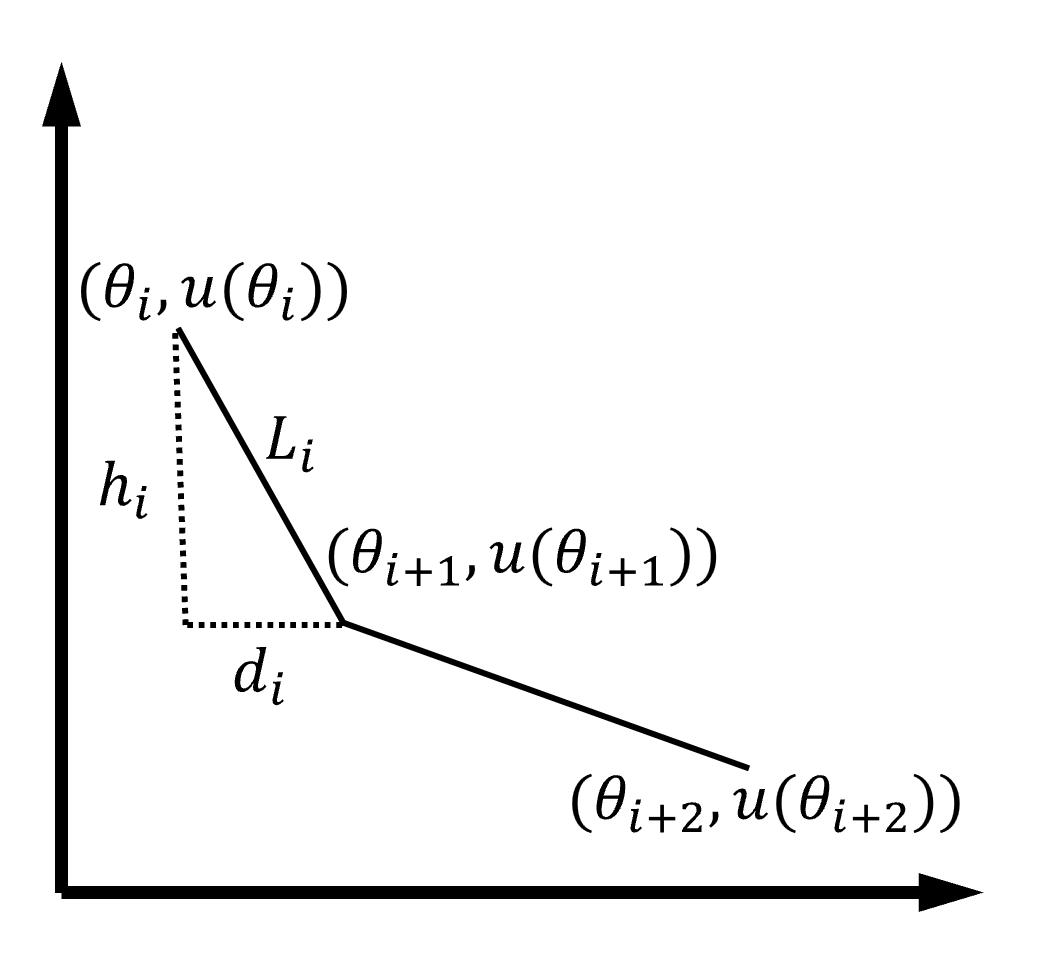}
\captionof{figure}{Notations in the construction of IR curve}
\label{fig:IRcurve-construction}
\end{minipage}
\hfill
\begin{minipage}[b]{0.45\textwidth}
\centering
\begin{tabular}{ c| c c c} \label{tab:experiment in construction}
$E_i$ & $1$ & $\cdots$ & $m$ \\ 
  \hline
$\omega_1$ & $1+l_i$ & $\mathbf{0}$ & $-l_i$ \\ 
 $\omega_2$ & $0$ &$\mathbf{0}$& $1$
\end{tabular}
\captionof{table}{Experiment $E_i$}
\label{table:construction}
\end{minipage}

\notshow{$\begin{pmatrix}
1+l_i & -l_i\\
0 & 1
\end{pmatrix}$} 
\noindent The price for experiment $E_i$ is $p_i=\sum_{j=0}^{i-1}h_j+\theta_il_i$. It is not hard to verify that the buyer's utility for  buying experiment $E_i$ is $u(\theta_i)$ at $\theta_i$ and $u(\theta_{i+1})$ at $\theta_{i+1}$.

Consider the following mechanism $\MM$. For every $\theta\in [\theta_1,\theta_{m-1}]$, let $i$ be the unique number such that $\theta_i<\theta\leq \theta_{i+1}$.\footnote{Choose $i=1$ when $\theta=\theta_1$. It's a point with 0 measure, so it won't affect the revenue of the mechanism.} The outcome and payment of $\MM$ under input $\theta$ are defined as follows: It computes the buyer's utilities for $(E_i,p_i)$, $(E_{i-1},p_{i-1})$ (when $i\geq 2$), and $(E_{i-2}, p_{i-2})$ (when $i\geq 3$) under type $\theta$, and chooses the option with the highest utility as the experiment and payment. 

We prove in \Cref{lem:mech-m} that $\MM$ is IR and $\delta$-IC for some $\delta=o(\eps^{2^m})$. Moreover, $\rev(\MM)$ is comparable to the integral $\int_{\theta_1}^{\theta_{m-1}}f(\theta)\cdot (1-u(\theta))d\theta$. $\MM$ can then be converted to a semi-informative menu by losing no more than a constant fraction of the total revenue (see \Cref{lem:eps-ic} in \Cref{sec:proof_lower_bound}). Since $\optfi=\eps^{2^m}$, \Cref{thm:lower_bound_full_info} then follows from Inequality~\eqref{equ:lower bound integral}. The proofs of \Cref{lem:mech-m} and \Cref{thm:lower_bound_full_info} are postponed to \Cref{sec:proof_lower_bound}.

\begin{lemma}\label{lem:mech-m}
{For any $\eps\in(0,2^{-m})$,} $\MM$ is IR and $\delta$-IC where $\delta=7\cdot\eps^{2^m+1}$. Moreover, $\rev(\MM)\geq \frac{1}{9}\cdot \int_{\theta_1}^{\theta_{m-1}}f(\theta)\cdot (1-u(\theta))d\theta$.
\end{lemma}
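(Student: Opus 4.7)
I would verify the three claims --- IR, $\delta$-IC with $\delta=7\varepsilon^{2^m+1}$, and the revenue lower bound --- in turn. For IR, the construction of $(E_i,p_i)$ makes the responsive value $V_\theta^*(E_i)-p_i=L_i(\theta)$ coincide with $u(\theta)$ on $[\theta_i,\theta_{i+1}]$; since $V_\theta(E_i)\ge V_\theta^*(E_i)$ always and $(E_i,p_i)$ is one of the three options that $\MM$ evaluates on this interval, the chosen option yields utility at least $u(\theta)$, which establishes IR.

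The substantive step is $\delta$-IC. A buyer at true type $\theta\in(\theta_i,\theta_{i+1}]$ can, by reporting a suitable $\theta'$, receive any $(E_k,p_k)$ with $k\in[1,m-2]$, so I would bound
\[
V_\theta(E_k)-p_k = L_k(\theta)+\Delta'_k(\theta),\qquad \Delta'_k(\theta):=V_\theta(E_k)-V_\theta^*(E_k)\ge 0,
\]
against the truthful utility $U(\theta)$. The key algebraic identity factors the deviation gain as
\[
\Delta'_k(\theta) = \beta_k\bigl[u(\alpha_k)-(1-\alpha_k)\bigr],\qquad \beta_k:=\theta\varepsilon^{2^k}+1-\theta,\ \ \alpha_k:=\theta\varepsilon^{2^k}/\beta_k,
\]
i.e.\ the scaled value-of-information at posterior $\alpha_k$ over the fall-back action $m$. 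When $\varepsilon<2^{-m}$ and $k\ge i+1$, a direct check gives $\alpha_k\le\theta_1$, so $\Delta'_k(\theta)=0$ and $V_\theta(E_k)-p_k = L_k(\theta)\le u(\theta)\le U(\theta)$. When $k\in\{i-2,i-1,i\}$, the option is already in the choice set of $\MM$ at $\theta$, so $U(\theta)\ge V_\theta(E_k)-p_k$ directly. The only nontrivial case is $k\le i-3$: there $\alpha_k$ falls into some piece $[\theta_{i'},\theta_{i'+1}]$ of the IR curve, and expanding $u(\alpha_k)-(1-\alpha_k)$ using the explicit values $h_{i'}=2^{i'}\varepsilon^{2^m}$, $d_{i'}=2^{i'}\varepsilon^{2^m-2^{i'}}$, $l_{i'}=-\varepsilon^{2^{i'}}$ shows that $\Delta'_k(\theta)$ and the convexity slack $u(\theta)-L_k(\theta)$ both equal $\alpha_k$ at leading order. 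Combined with the truthful-utility bonus $\Delta'_i(\theta)$ (of order $(2^{i''}-1)\varepsilon^{2^m}$, where $i''$ is the piece index of $\alpha_i$), the three leading contributions cancel, leaving only subleading terms of order $\varepsilon^{2^m+1}$; summing these across $k$ yields the claimed $\delta\le 7\varepsilon^{2^m+1}$.

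For the revenue bound, the chosen option at $\theta\in(\theta_i,\theta_{i+1}]$ has price at least $p_{\max(i-2,1)}$ because the three candidate prices are increasing in the index. Using $p_k=(2^k-1)\varepsilon^{2^m}-\theta_k\varepsilon^{2^k}\approx(2^k-1)\varepsilon^{2^m}$ and $1-u(\theta)\le 1-u(\theta_{i+1})=(2^{i+1}-1)\varepsilon^{2^m}$ on this interval, the ratio $p_{i-2}/(1-u(\theta))$ is at least $(2^{i-2}-1)/(2^{i+1}-1)$, which exceeds $\tfrac{1}{9}$ for every $i\ge 4$ once $\varepsilon$ is small enough; the intervals with $i\in\{1,2,3\}$ contribute only lower-order terms to $\int f(\theta)(1-u(\theta))\,d\theta$ and are absorbed into the constant. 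Integrating against $f$ yields $\rev(\MM)\ge\tfrac{1}{9}\int_{\theta_1}^{\theta_{m-1}} f(\theta)(1-u(\theta))\,d\theta$.

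The principal obstacle is the $\delta$-IC check in the case $k\le i-3$: individually, $\Delta'_k(\theta)$, the convexity slack $u(\theta)-L_k(\theta)$, and the truthful-utility bonus $\Delta'_i(\theta)$ can each be of order $\alpha_k$ or $2^i\varepsilon^{2^m}$, all much larger than $\delta$. Making the cancellation rigorous requires tracking all three terms simultaneously and exploiting the precise geometric choices $h_i=2^i\varepsilon^{2^m}$, $d_i=2^i\varepsilon^{2^m-2^i}$, $l_i=-\varepsilon^{2^i}$ so that the leading contributions annihilate and only $O(\varepsilon^{2^m+1})$ remainders survive.
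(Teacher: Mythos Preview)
Your $\delta$-IC argument has a real gap in the case $k\le i-3$. You bound the deviation $V_\theta(E_k)-p_k=L_k(\theta)+\Delta'_k(\theta)$ only against $u(\theta)+\Delta'_i(\theta)$, i.e.\ the truthful utility from option $j'=i$. But near the left endpoint $\theta=\theta_i$ one has $\alpha_i\approx\theta_i\varepsilon^{2^i}\approx 2^{i-1}\varepsilon^{2^m+2^{i-1}}<\theta_1$, so the bonus $\Delta'_i(\theta_i)=0$. Meanwhile, for $k=1$ the posterior $\alpha_1\approx\theta_i\varepsilon^{2}$ lands on piece $i'=i-1$ of the IR curve, and expanding both terms gives
\[
\bigl(V_{\theta_i}(E_1)-p_1\bigr)-u(\theta_i)\;=\;\Delta'_1(\theta_i)-\bigl[u(\theta_i)-L_1(\theta_i)\bigr]\;\approx\;(2^{i-1}-1)\,\varepsilon^{2^m}.
\]
The leading $\alpha_k$-terms in $\Delta'_k$ and in the convexity slack do cancel as you say, but the next-order residual is $\Theta(2^i\varepsilon^{2^m})$, not $O(\varepsilon^{2^m+1})$, and with $\Delta'_i=0$ there is nothing left to absorb it. The paper closes this gap by a different mechanism: it parametrizes every deviation by a pair $(j,r)$ (buy $E_j$, then play action $m-r$ on signal $m$) and proves $U_{j,r}(\theta)\approx 1-(2^j+2^r-2)\varepsilon^{2^m}-\theta\,\varepsilon^{2^j+2^r}$, which is \emph{symmetric} in $(j,r)$ up to $6\varepsilon^{2^m+1}$. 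A case split on $2^j+2^r$ versus $2^{i-1}$ and $2^i$ then shows that every $(j,r)$ is within $7\varepsilon^{2^m+1}$ of some $(j',r')$ with $j'\in\{i-2,i-1,i\}$. In the example above $U_{1,i-1}\approx U_{i-1,1}$, and $j'=i-1$ \emph{is} in the truthful choice set---this is exactly why $\MM$ offers three options rather than one. Your route can be repaired by replacing $\Delta'_i$ with $\max_{j'\in\{i-2,i-1,i\}}[L_{j'}(\theta)-u(\theta)+\Delta'_{j'}(\theta)]$, but working that out essentially reproduces the $(j,r)\leftrightarrow(r,j)$ symmetry.

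A second, smaller issue: in the revenue bound you assert that the intervals $i\in\{1,2,3\}$ contribute only lower-order terms to $\int_{\theta_1}^{\theta_{m-1}} f(\theta)(1-u(\theta))\,d\theta$. This is false under the equal-revenue density of Lemma~\ref{lem:ER-dist}: piece $i$ contributes $c\log\!\bigl((2^{i+1}-1)/(2^i-1)\bigr)\approx c\log 2$, the same order as every other piece, so those intervals cannot simply be absorbed into the constant. The paper instead checks the ratio $p_{\max(1,i-2)}/(1-u(\theta_{i+1}))$ directly for every $i$.
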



An important application of~\Cref{thm:lower_bound_full_info} is that, there is an $\Omega(m)$ lower bound of the size of the optimal menu. The main takeaway of this theorem is that, in our model where there is a single agent whose type is single-dimensional, the optimal menu, however, can be complex.   


\begin{theorem}\label{thm:lower_bound_menu_complexity}
For every $m$, there exists a payoff matrix with 2 states and $m$ actions, together with a type distribution $F$, such that any optimal menu $\MM^*$ consists of at least $\Omega(m)$ different experiments. 
\end{theorem}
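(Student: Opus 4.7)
}
The plan is to combine \Cref{thm:lower_bound_full_info} with the ``$\ell$-experiments give at most $\ell\cdot\optfi$ revenue'' principle (\Cref{lem:1/k-approx for menus of size k} from the introduction) in a single short step. Concretely, I would fix $m$ and instantiate the payoff matrix and distribution $F$ produced in the proof of \Cref{thm:lower_bound_full_info}, so that the environment satisfies $\opt\ge\optsi=\Omega(m)\cdot\optfi$. Let $\MM^*$ be any optimal (IC, IR) menu in this environment, and let $\ell$ denote the number of distinct experiments it offers.

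Next I would invoke \Cref{lem:1/k-approx for menus of size k} applied to $\MM^*$: since the menu contains $\ell$ experiments, its revenue is at most $\ell\cdot\optfi$, because each of the $\ell$ options sells some experiment whose revenue contribution is upper-bounded by the revenue achievable when that experiment is priced as the unique fully-informative-style option (more precisely, each single-experiment submenu raises at most $\optfi$ by definition of $\optfi$ being the sup over fully informative menus — this is exactly the statement of the cited lemma). Hence
\[
\Omega(m)\cdot\optfi \;\le\; \opt \;=\; \rev(\MM^*) \;\le\; \ell\cdot\optfi,
\]
and dividing by $\optfi>0$ (which is strictly positive in our construction, since $\optfi=\eps^{2^m}>0$) gives $\ell=\Omega(m)$, as desired.

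The only subtlety to address carefully is that \Cref{thm:lower_bound_full_info} is stated in terms of $\optsi$, while we want a lower bound on $\opt$; this is immediate from $\opt\ge\optsi$, which holds because a semi-informative menu is a legal IC, IR menu. A second minor point is to confirm that the statement is about the \emph{cardinality} of the menu (number of distinct experiments) rather than, say, the support of the induced allocation — but this is precisely how \Cref{lem:1/k-approx for menus of size k} is quantified. I would therefore close with a one-line remark that the optimum is actually attained (or, failing that, apply the argument to an $\eps$-optimal menu and let $\eps\to 0$) so that ``optimal menu $\MM^*$'' is well-defined.

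I do not foresee a technical obstacle: all the hard work lives inside \Cref{thm:lower_bound_full_info} and \Cref{lem:1/k-approx for menus of size k}. The ``obstacle,'' such as it is, is purely expositional — making sure the chain of inequalities $\Omega(m)\cdot\optfi\le\opt\le\ell\cdot\optfi$ is cleanly justified and that the environment witnessing \Cref{thm:lower_bound_full_info} is reused verbatim without any additional construction.
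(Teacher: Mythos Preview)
Your proposal is correct and essentially identical to the paper's own proof: it too fixes the environment from \Cref{thm:lower_bound_full_info} (so $\opt=\Omega(m)\cdot\optfi$) and then applies \Cref{lem:1/k-approx for menus of size k} to any optimal menu to conclude $\optfi\ge\opt/\ell$, forcing $\ell=\Omega(m)$. The only cosmetic difference is that the paper phrases it as a one-line contradiction, while you write the direct chain of inequalities; your parenthetical gloss on \emph{why} the lemma holds is slightly off (the actual reason is pigeonhole plus ``full information dominates any experiment''), but since you invoke the lemma as a black box this does not affect the argument.
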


To prove \Cref{thm:lower_bound_menu_complexity}, it suffices to show that: Given any menu $\MM$ with $\ell$ experiments, we can sell the complete information at an appropriate price to achieve revenue at least $\rev(\MM)/\ell$. In \Cref{lem:1/k-approx for menus of size k}, we prove a more general result that applies to any IR (not necessarily IC) mechanism, which is useful in \Cref{sec:upper_bound_full_info}. 

\begin{definition}\label{def:size_of_mechanism}
For any finite integer $\ell>0$, we say that a mechanism $\MM=\{(E(\theta),t(\theta))\}_{\theta\in[0,1]}$ (not necessarily IC nor IR) has \emph{option size} $\ell$, if there exists $\ell$ different options $\{(E_j,t_j)\}_{j\in [\ell]}$ such that: For every $\theta\in [0,1]$, there exists some $j\in [\ell]$ that satisfies $E(\theta)=E_j$ and $t(\theta)=t_j$. 
\end{definition}

\begin{lemma}\label{lem:1/k-approx for menus of size k}
For any positive integer $\ell$, let $\MM$ be any IR mechanism {of option size $\ell$} 
that generates revenue $\rev{(\MM)}$. 
Then there exists a menu $\MM'$ that contains only the fully informative experiment and generates revenue at least $\frac{\rev(\MM)}{\ell}$.
\end{lemma}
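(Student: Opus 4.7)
The plan is a pigeonhole argument at the level of options combined with the Blackwell-optimality of the fully informative experiment. Since $\MM$ has option size $\ell$, I would first partition $\Theta$ (up to a measure-zero set) into at most $\ell$ subsets $S_1, \ldots, S_\ell$ where $S_j = \{\theta \in \Theta : (E(\theta), t(\theta)) = (E_j, t_j)\}$. The revenue then decomposes as $\rev(\MM) = \sum_{j=1}^{\ell} t_j \cdot \Pr_\theta[\theta \in S_j]$, so by pigeonhole there is some $j^* \in [\ell]$ with $t_{j^*} \cdot \Pr_\theta[\theta \in S_{j^*}] \geq \rev(\MM)/\ell$.

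Next I would define $\MM'$ to be the menu consisting of the single option $(E^*, p)$ where $E^*$ is the fully informative experiment and $p = t_{j^*}$. The key fact I need is the standard Blackwell-type inequality that $V_\theta(E^*) \geq V_\theta(E)$ for every type $\theta$ and every experiment $E$. This follows from a one-line calculation: for any $E$ with entries $\pi_{ik}$,
$$V_\theta(E) = \sum_k \max_{j} \sum_i \theta_i \pi_{ik} u_{ij} \le \sum_k \sum_i \theta_i \pi_{ik} \max_{j'} u_{ij'} = \sum_i \theta_i \max_{j'} u_{ij'} = V_\theta(E^*).$$
Combined with the IR constraint of $\MM$, which gives $V_\theta(E(\theta)) - t(\theta) \geq u(\theta)$, this yields for every $\theta \in S_{j^*}$:
$$V_\theta(E^*) - p \;\geq\; V_\theta(E_{j^*}) - t_{j^*} \;\geq\; u(\theta),$$
so the buyer weakly prefers purchasing $E^*$ at price $p$ to her outside option. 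Consequently every type in $S_{j^*}$ buys in $\MM'$, and the collected revenue is at least $p \cdot \Pr_\theta[\theta \in S_{j^*}] = t_{j^*} \cdot \Pr_\theta[\theta \in S_{j^*}] \geq \rev(\MM)/\ell$.

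The only edge case concerns the sign of $p$: if $\rev(\MM) \leq 0$, then posting the fully informative experiment at price $0$ trivially yields revenue $0 \geq \rev(\MM)/\ell$, so we may assume $\rev(\MM) > 0$, in which case the pigeonhole $j^*$ automatically satisfies $t_{j^*} > 0$ and the above lower bound on $\MM'$'s revenue is genuine. There is no serious obstacle here; the content of the lemma is essentially the dominance $V_\theta(E^*) \geq V_\theta(E)$ plus pigeonhole, and the whole argument does not use IC of $\MM$, which is important for the use of this lemma in Section~\ref{sec:upper_bound_full_info} where $\MM$ may only be IR.
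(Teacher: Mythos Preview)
Your proposal is correct and follows essentially the same approach as the paper: pigeonhole over the $\ell$ options to find one contributing at least $\rev(\MM)/\ell$, then post the fully informative experiment at that option's price and use $V_\theta(E^*)\ge V_\theta(E_{j^*})$ together with IR to conclude that every type previously assigned that option still buys. Your version is slightly more explicit (you spell out the Blackwell inequality and handle the $\rev(\MM)\le 0$ edge case), but the argument is the same.
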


Another important implication of \Cref{thm:lower_bound_full_info} is shown in \Cref{cor:fix-menu-size}. It states that there is no menu with finite cardinality that achieves any finite approximation to the optimal revenue for all single-dimensional environments. The proof directly follows from \Cref{thm:lower_bound_full_info} and \Cref{lem:1/k-approx for menus of size k}.

\begin{corollary}\label{cor:fix-menu-size}
For any finite $\ell>0$, any menu $\MM$ with size at most $\ell$, and any $c>0$, there exists an finite integer $m$, a payoff matrix with 2 states and $m$ actions, together with a distribution $F$, such that $\rev(\MM)$ is at most a $c$-fraction of the optimal revenue in this environment.
\end{corollary}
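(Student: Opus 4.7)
The plan is to combine the $\Omega(m)$ gap from Theorem~\ref{thm:lower_bound_full_info} with the ``option-size versus fully-informative'' bound from Lemma~\ref{lem:1/k-approx for menus of size k}. Given $\ell$ and $c$, I would first choose a parameter $m$ (to be fixed at the end) and invoke Theorem~\ref{thm:lower_bound_full_info} to obtain a binary-state environment with $m$ actions and a type distribution $F$ such that
$$\opt \;\geq\; c_0 \cdot m \cdot \optfi$$
for some absolute constant $c_0 > 0$ independent of $m$.

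Next, I would bound $\rev(\MM)$ for an arbitrary menu $\MM$ of option size at most $\ell$ in that environment. Since any such menu induces an IR mechanism of option size at most $\ell$, Lemma~\ref{lem:1/k-approx for menus of size k} produces a fully informative menu achieving revenue at least $\rev(\MM)/\ell$; by definition this revenue is at most $\optfi$, so $\rev(\MM) \leq \ell \cdot \optfi$.

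Combining the two inequalities in the environment above yields
$$\frac{\rev(\MM)}{\opt} \;\leq\; \frac{\ell \cdot \optfi}{c_0 \cdot m \cdot \optfi} \;=\; \frac{\ell}{c_0\, m}.$$
To finish, I would choose any integer $m \geq \ell/(c_0\, c)$, which forces the right-hand side to be at most $c$, as required. There is essentially no real obstacle here: the only subtle point is getting the order of quantifiers right, namely that the environment (and in particular its action count $m$) must be chosen based on $\ell$ and $c$ first, after which the bound holds uniformly over all menus of option size at most $\ell$ in that environment. Both invoked results do all of the heavy lifting.
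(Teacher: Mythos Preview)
Your proposal is correct and matches the paper's own argument, which simply states that the corollary ``directly follows from \Cref{thm:lower_bound_full_info} and \Cref{lem:1/k-approx for menus of size k}.'' You have filled in the routine details of combining the two results and choosing $m$ appropriately, and your remark on the quantifier order is exactly the right observation.
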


\section{Binary State: An Upper Bound for Selling Complete Information}\label{sec:upper_bound_full_info}

In this section, we provide upper bounds for the gap between the optimal revenue $\opt$ and the revenue by selling complete information $\optfi$. In \Cref{subsec:upper_bound}, we prove an $O(m)$-approximation to the optimal menu using only the fully informative experiment, where $m$ is the number of actions. In \Cref{sec:characterization}, we provide conditions under which selling complete information is the optimal menu. Proofs in this section are postponed to \Cref{sec:proof_upper_bound} and \Cref{sec:proof_characterization}.

\subsection{An Upper Bound for Selling Complete Information}\label{subsec:upper_bound}

The main result of this section is stated in \Cref{thm:1/k approximation of full information for $k$ actions }. Together with \Cref{thm:lower_bound_full_info}, we know that the $O(m)$ approximation ratio is tight up to an absolute constant factor. 

\begin{theorem}\label{thm:1/k approximation of full information for $k$ actions }
For any environment with 2 states and $m$ actions, there is a menu that contains only the fully informative experiment, whose revenue is at least $\Omega(\frac{\opt}{m})$. In other words, $\opt=O(m)\cdot \optfi$.
\end{theorem}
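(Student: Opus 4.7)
The plan is to reduce to \Cref{lem:1/k-approx for menus of size k} by exhibiting an IR mechanism with revenue at least $\opt$ whose option size is $O(m)$; the lemma then converts any such mechanism into a fully informative menu that loses at most a factor $O(m)$, giving $\optfi \geq \Omega(\opt/m)$ and hence the theorem.

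To produce such a mechanism, I would consider the relaxation $\opt^{\mathrm{rel}}$ of the revenue-maximization program that retains the IR constraints $V_\theta(E(\theta)) - t(\theta) \geq u(\theta)$ but keeps only the ``obedience'' subset of the $\sigma$-IC constraints in~\eqref{equ:ic}, namely those with $\sigma$ equal to the identity map. Formally, the relaxed program enforces $V^*_\theta(E(\theta)) - t(\theta) \geq V^*_\theta(E(\theta')) - t(\theta')$ for all $\theta, \theta' \in [0,1]$. Since this is a strict subset of the original IC constraints, $\opt^{\mathrm{rel}} \geq \opt$. The structural payoff is that under obedience, $V^*_\theta(E) = \theta\sum_j \pi_{1j}(E)u_{1j} + (1-\theta)\sum_j \pi_{2j}(E)u_{2j}$ is affine in the scalar type $\theta$, so every option contributes a linear utility function.

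The central step is to show that the relaxed optimum is attained by a menu of option size $O(m)$. I would exploit that the buyer's envelope utility $U(\theta) = \max\bigl\{u(\theta),\, \max_{(E,t)\in \mathrm{menu}} [V^*_\theta(E) - t]\bigr\}$ is the upper envelope of the convex piecewise-linear IR curve $u(\theta)$ (which has exactly $m$ pieces, one per action) and a family of affine functions (one per option). Partitioning $[0,1]$ into the $m$ intervals $I_1,\dots,I_m$ on which a single no-information action attains $u(\theta)$, I would argue that at most $O(1)$ options can be simultaneously active on any single $I_k$ in the relaxed optimum. The mechanism is a local exchange argument: if three or more distinct options were active on the interior of $I_k$, two of them could be merged into a single convex-combination option, or a dominated option could be deleted, without violating obedience-IC, without violating IR, and without reducing revenue, contradicting optimality. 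Summing over the $m$ intervals then yields option size $O(m)$.

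The relaxed optimum $\MM^{\mathrm{rel}}$ therefore has $\rev(\MM^{\mathrm{rel}}) \geq \opt$, satisfies IR, and has option size $O(m)$, so \Cref{lem:1/k-approx for menus of size k} produces a fully informative menu $\MM'$ with $\rev(\MM') \geq \rev(\MM^{\mathrm{rel}})/O(m) \geq \Omega(\opt/m)$, as required. The main obstacle is the exchange argument bounding the number of active options per $I_k$ by a constant: one must verify that the merge and deletion operations preserve the obedience-IC constraints globally across $[0,1]$, not merely within $I_k$, since modifying options in one interval can tighten the constraints faced by mimicking types from other intervals. Handling this global bookkeeping is where the bulk of the technical work lies, and it is also the reason the argument only bounds the menu size in the \emph{relaxed} problem and does not directly resolve the open question about the cardinality of the original optimal menu.
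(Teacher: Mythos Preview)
Your high-level plan is exactly the paper's: drop all $\sigma$-IC constraints (keeping only what the paper calls \emph{responsive-IC}), show the resulting relaxed optimum $\opt^*\ge \opt$ is attained with option size $O(m)$, then invoke \Cref{lem:1/k-approx for menus of size k}. The divergence is entirely in how the option-size bound is obtained.

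The paper does \emph{not} argue locally on the intervals $I_k$. Instead it first reduces (\Cref{lem:opt without sigma-IC has mass on first two columns} and \Cref{lem:describe the menu with q}) to \emph{semi-informative} experiments, so that every option is parameterized by the single scalar $q=\pi_{11}u_{11}-\pi_{2m}u_{2m}\in[-u_{2m},u_{11}]$. Then \Cref{lem:ic in q space} gives a Myerson-style characterization, and---this is the key structural step---the continuum of IR constraints collapses to exactly $m-2$ scalar inequalities (constraint~(4) in \Cref{fig:continous-program-modified}), one per interior piece of the IR curve. Once one fixes the index $i^*$ where $q$ crosses $u_{11}-u_{2m}$ and the indices $i_k$ where $q$ crosses each slope $\ell_k$, the objective and all remaining constraints are linear in the increments $\hat q_i=q_{i+1}-q_i\ge 0$; the resulting LP has only $3m-1$ non-trivial constraints, so a basic feasible optimum has at most $3m-1$ positive increments, i.e., $q$ takes at most $3m-1$ distinct values (\Cref{lem:opt without sigma-IC has k pieces}). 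Continuous type distributions are handled by a discretization-and-compactness limit.

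Your local exchange has a real gap that you correctly flag but do not close. Merging or deleting options on $I_k$ alters the convex envelope $U$, and by the payment identity this perturbs payments (hence revenue) \emph{globally}; nothing in your sketch pins down why such a perturbation is revenue-non-decreasing. More fundamentally, the number of pieces of the optimal $U$ is not controlled by any local feature of a single $I_k$: nothing prevents $q$ from jumping many times inside one $I_k$ while still satisfying that interval's single affine IR lower bound. What actually caps the count is the global fact that the objective is (piecewise) linear in $q$ and only $O(m)$ linear constraints can bind---precisely the basic-feasible-solution counting the paper exploits, which becomes visible only after the semi-informative reduction.
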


To prove \Cref{thm:1/k approximation of full information for $k$ actions }, a natural idea is to first show that the optimal menu contains $O(m)$ experiments, then use  \Cref{lem:1/k-approx for menus of size k} to argue that we can sell the full information at an appropriate price to achieve revenue at least $\Omega\left(\frac{\rev(\MM)}{m}\right)$. Unfortunately, we are not aware of such an upper bound on the size of the optimal menu, and it is not clear if the size of the optimal menu is indeed linear in $m$. Instead, we drop some of the IC constraints and consider the maximum revenue of a relaxed problem. 

We first introduce the concept of {\responsiveIC} and {\sigmaIC} constraints.
Recall the IC constraint:
$V_\theta^{*}(E(\theta)) - t(\theta) \geq V_{\theta}^{(\sigma)}(E(\theta')) -t(\theta'), \forall \theta, \theta' \in \Theta, \sigma:[m]\to [m]$
where $V_\theta^{*}(E(\theta))$ is the agent's utility for receiving experiment $E(\theta)$ and following the recommendation. We distinguish the IC constraints by the mapping $\sigma$. When $\sigma$ is the identity mapping, we refer to the constraint as the \emph{\responsiveIC} constraint, and when $\sigma$ is any non-identity mapping, we refer to the constraint as the \emph{\sigmaIC} constraint. A mechanism is \emph{\responsiveIC} if it satisfies all {\responsiveIC} constraints.  

As \Cref{lem:1/k-approx for menus of size k} applies to any IR mechanism, we drop the {\sigmaIC} constraints and bound the number of experiments offered by the optimal {\responsiveIC} and IR mechanism by $O(m)$, which suffices to prove \Cref{thm:1/k approximation of full information for $k$ actions }. An important component of our proof is \Cref{lem:describe the menu with q}. Denote $\MenuWithQ$ the set of all semi-informative, {\responsiveIC} and IR mechanisms.
\notshow{
\begin{table}[ht]
\centering
\subfloat[Pattern 1: $\pi_{2m}(E)=1$]{\begin{tabular}{ c| c c c c c}
$E$ & $1$ & $\cdots$ & $j$ & $\cdots$ & $m$ \\ 
  \hline
$\omega_1$ & {$\pi_{11}$} & $\cdots$ & $0$ & $\cdots$ &$1-\pi_{11}$ \\ 
 $\omega_2$ & $0$ &$\cdots$& $0$ & $\cdots$ & $1$ \\ 
\end{tabular}}
\qquad\quad
\subfloat[Pattern 2: $\pi_{11}(E)=1$]{\begin{tabular}{ c| c c c c c} 
$E$ & $1$ & $\cdots$ &  $j$ & $\cdots$ & $m$ \\ 
  \hline
$\omega_1$ & $1$ & $\cdots$ & $0$ & $\cdots$ & $0$ \\ 
 $\omega_2$ & $1-\pi_{2m}$ &$\cdots$& $0$ & $\cdots$ & $\pi_{2m}$ \\  
\end{tabular}}
\caption{Two Specific Patterns of the Experiment}\label{table:two-patterns}
\vspace{-.15 in}
\end{table}
}
We prove in~\Cref{lem:describe the menu with q} that the maximum revenue achievable by any {\responsiveIC} and IR mechanism (denoted as $\optstar$) can be achieved by a semi-informative mechanism. 

\begin{lemma}\label{lem:describe the menu with q}
There exists $\MM^*\in \MenuWithQ$ such that $\rev(\MM^*)=\optstar$.

\end{lemma}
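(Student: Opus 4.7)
The plan is to show that any responsive-IC and IR mechanism $\MM = \{(E(\theta), t(\theta))\}$ attaining revenue $\optstar$ can be modified into a semi-informative responsive-IC, IR mechanism $\MM^*$ with the same revenue. The pivotal observation is that under responsive-IC the buyer always obeys the recommendation, so the value of an experiment $E$ to type $\theta$ simplifies to the linear function $V^*_\theta(E) = \theta A(E) + (1-\theta) B(E)$, where $A(E) := \sum_{j\in[m]} \pi_{1j}(E) u_{1j} \in [0, u_{11}]$ and $B(E) := \sum_{j\in[m]} \pi_{2j}(E) u_{2j} \in [0, 1]$. Consequently, the IC, IR, and revenue expressions depend on $E(\theta)$ only through the pair $(A(\theta), B(\theta))$, so the mechanism can be reformulated as $\theta \mapsto (A(\theta), B(\theta), t(\theta))$. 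Semi-informative Pattern 1 experiments realize exactly $\{(A,1) : A \in [0, u_{11}]\}$ and Pattern 2 experiments realize $\{(u_{11}, B) : B \in [0, 1]\}$, so collectively they parameterize precisely the upper-right boundary of $[0, u_{11}] \times [0, 1]$.

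The key structural step is to show that in any optimal mechanism, every experiment $E(\theta)$ must satisfy either $A(E(\theta)) = u_{11}$ or $B(E(\theta)) = 1$. Suppose for contradiction that on a positive-measure set $T$ of types one has both $A(\theta) < u_{11}$ and $B(\theta) < 1$. For each $\theta \in T$, I would swap the experiment with a semi-informative one preserving the subgradient $A^*(\theta) - B^*(\theta) = A(\theta) - B(\theta)$: use Pattern 1 with $(A^*, B^*) = (A(\theta) - B(\theta) + 1,\, 1)$ when $A(\theta) - B(\theta) \leq u_{11} - 1$, and Pattern 2 with $(A^*, B^*) = (u_{11},\, u_{11} - A(\theta) + B(\theta))$ otherwise. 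Adjusting the payment to $t^*(\theta) = t(\theta) + \theta(A^* - A(\theta)) + (1-\theta)(B^* - B(\theta))$ preserves the buyer's utility exactly, with the adjustment equal to $1-B(\theta) > 0$ in Pattern 1 and to $u_{11} - A(\theta) > 0$ in Pattern 2. Since the IC conditions in the $(A,B,t)$-reformulation are equivalent to convexity of $U(\cdot)$ with subgradient $A(\theta) - B(\theta)$---both of which are preserved---and IR depends only on $U$, the new mechanism remains responsive-IC and IR while strictly improving revenue, contradicting optimality of $\MM$.

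Once $A(E(\theta)) = u_{11}$ or $B(E(\theta)) = 1$ for almost every $\theta$, the experiment $E(\theta)$ itself is forced to be semi-informative by a direct structural argument. If $B(E) = 1$, then $\sum_j \pi_{2j}(1 - u_{2j}) = 0$ with $u_{2j} < 1$ strictly for $j < m$, forcing $\pi_{2m}(E) = 1$ and $\pi_{2j}(E) = 0$ for all $j < m$. For any intermediate $j \in \{2, \dots, m-1\}$ with $\pi_{1j} > 0$, signal $j$ would perfectly reveal state 1 and recommending action $j$ would be strictly dominated by recommending action $1$, contradicting responsive-IC; hence $\pi_{1j}(E) = 0$ for all intermediate $j$ and $E$ is of Pattern 1. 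A symmetric argument shows $A(E) = u_{11}$ forces $E$ to be Pattern 2. Therefore the optimal responsive-IC, IR mechanism is already in $\MenuWithQ$.

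The main obstacle is verifying that the Pattern 1 / Pattern 2 \emph{replacement} used in the contradiction step is itself responsive-IC for type $\theta$, i.e., that the posterior induced by its non-revealing signal falls in the region where the recommended action is optimal on the IR curve. This reduces to a posterior-threshold check depending only on $(\theta, A(\theta), B(\theta))$ and the payoff matrix; it can be handled via a short case analysis using the IR bound $V^*_\theta(E(\theta)) \geq u(\theta) \geq \max(\theta u_{11}, 1-\theta)$ together with the piecewise-linear structure of $u(\cdot)$, or alternatively by replacing the direct swap with an infinitesimal perturbation of $E(\theta)$ toward the fully informative experiment---which preserves responsive-IC by continuity and still yields a revenue improvement, delivering the same contradiction.
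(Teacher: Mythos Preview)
Your core approach is sound and closely parallels the paper's: both exploit that under responsive-IC the constraints and revenue depend on an experiment $E$ only through the pair $(A(E),B(E))=(\sum_j \pi_{1j}u_{1j},\,\sum_j \pi_{2j}u_{2j})$, so one may push $(A,B)$ to the upper-right boundary of $[0,u_{11}]\times[0,1]$ (where the semi-informative experiments live) while weakly increasing the price. The paper does this in two explicit mass-moving stages (first collapse onto columns $1$ and $m$ preserving $V^*_\theta$, then force $\pi_{11}=1$ or $\pi_{2m}=1$ and raise the price); your step~4 accomplishes the same thing in one jump.

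However, the proposal contains a genuine conceptual error: you repeatedly treat ``responsive-IC'' as though it required the buyer to \emph{follow} the recommendation. In the paper's terminology, responsive-IC is just the subset of IC constraints with $\sigma$ equal to the identity, i.e.\ $V^*_\theta(E(\theta))-t(\theta)\ge V^*_\theta(E(\theta'))-t(\theta')$; it imposes no obedience condition on any signal. This invalidates your step~5: an experiment with $\pi_{2m}=1$ but $\pi_{1j}>0$ for some intermediate $j$ is \emph{not} ruled out by responsive-IC, so ``signal $j$ reveals state~1 yet recommends action $j$'' yields no contradiction. The same confusion makes your ``main obstacle'' a non-issue: since responsive-IC, IR, and revenue depend only on $(A,B,t)$, your step-4 replacement automatically preserves all of them---there is no posterior-threshold check to perform.

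The fix is immediate once the definition is straightened out. After step~4 you know the optimal mechanism has $A(\theta)=u_{11}$ or $B(\theta)=1$ almost everywhere; now simply \emph{replace} each $E(\theta)$ by the semi-informative experiment with the same $(A,B)$ pair (Pattern~1 with $\pi_{11}=A/u_{11}$ when $B=1$, Pattern~2 with $\pi_{2m}=B$ when $A=u_{11}$), keeping $t(\theta)$ unchanged. All $(A,B,t)$ data---hence all responsive-IC and IR constraints and the revenue---are preserved exactly, and the resulting mechanism lies in $\MenuWithQ$. Step~5 and the obstacle paragraph should be deleted. (Equivalently, you could drop the contradiction framing altogether and apply the step-4 replacement to \emph{every} type: it always lands in $\MenuWithQ$ and weakly increases revenue, which is precisely the paper's direct-transformation argument.)
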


 
By \cite{BergemannBS2018design}, any {\semiinformative} experiment is determined by a single-dimensional variable $q(E)=\pi_{11}\cdot u_{11}-\pi_{2m}\cdot u_{2m}$ (see also \Cref{obs:q to exp}). 
Given a mechanism $\MM$ where every experiment is {\semiinformative}, for any $\theta\in [0,1]$, we slightly abuse the notation and let $q(\theta):=q(\MM(\theta))$. $\MM$ can also be described as the tuple $(\Bq=\{q(\theta)\}_{\theta\in [0,1]},\Bt=\{t(\theta)\}_{\theta\in [0,1]})$. In \Cref{lem:ic in q space}, we present a characterization of all $\{q(\theta)\}_{\theta\in [0,1]}$ that can be implemented with a {\responsiveIC} and IR mechanism. The proof uses the payment identity and it is similar to 
Lemma~1 in \cite{BergemannBS2018design}, where there are only two actions (Action 1 and $m$) in their setting. The main difference is that in our lemma, the IR constraints are not implied by the monotonicity of $q(\cdot)$. 

\begin{lemma}\label{lem:ic in q space}
Given $\Bq=\{q(\theta)\}_{\theta\in [0,1]}$, there exists non-negative payment rule $\Bt$ such that $\MM=(\Bq,\Bt)$ is a {\responsiveIC} and IR mechanism if and only if
\begin{enumerate}
\item $q(\theta)\in [-u_{2m},u_{11}]$ is non-decreasing in $\theta$.
\item $\int_{0}^1q(x)dx=u_{11}-u_{2m}$.
\item For every $\theta\in [0,1]$, $u_{2m}+\int_0^{\theta}q(x)dx\geq u(\theta)$. Recall that $u(\theta)$ is the value of the agent with type $\theta$ without receiving any experiment. 
\end{enumerate}
Moreover, the payment rule $\Bt$ must satisfy that for every $\theta$, 
\vspace{-.05in}
\begin{equation}\label{equ:payment}
\textstyle t(\theta) = \theta\cdot q(\theta) + \min\{u_{11} - u_{2m} - q(\theta), 0\} - \int_0^\theta q(x) dx
\end{equation}
\end{lemma}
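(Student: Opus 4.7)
The plan is to reduce this to standard single-dimensional mechanism design by leveraging one key structural observation about semi-informative experiments. A direct case analysis on the two patterns in Table~\ref{table:two-patterns} shows that for any semi-informative experiment parametrized by $q\in[-u_{2m},u_{11}]$, the value under truthful reporting admits the unified form
\[
V(\theta,q)\ =\ \theta\,q\ +\ \min\{u_{2m},\,u_{11}-q\},
\]
which is linear in $\theta$ with slope exactly $q$ (the two patterns agree at the overlap $q=u_{11}-u_{2m}$, the fully informative experiment). This reduces the problem to a canonical one-dimensional screening problem, where $q$ plays the role of ``allocation.''

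For the forward direction, assume $(\Bq,\Bt)\in\MenuWithQ$ is \responsiveIC, IR, with non-negative payments. The standard envelope/taxation-principle argument applied to $V(\theta,q)$ (which is linear in $\theta$ exactly as in the canonical setting) immediately yields that $q(\cdot)$ is non-decreasing, giving condition~1, and that the indirect utility $U(\theta):=V(\theta,q(\theta))-t(\theta)$ satisfies $U(\theta)=U(0)+\int_0^\theta q(x)\,dx$. I would then use the endpoint constraints to pin down the boundary: at $\theta=0$, non-negativity of $t(0)$ gives $U(0)\le\min\{u_{2m},u_{11}-q(0)\}\le u_{2m}$, while IR gives $U(0)\ge u(0)=u_{2m}$, forcing $U(0)=u_{2m}$ (and $q(0)\le u_{11}-u_{2m}$). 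Symmetrically at $\theta=1$, non-negativity of $t(1)$ and IR force $U(1)=u_{11}$, which combined with the integral representation yields $\int_0^1 q(x)\,dx=u_{11}-u_{2m}$, i.e.\ condition~2. Condition~3 is just $U(\theta)\ge u(\theta)$ rewritten, and the payment formula~\eqref{equ:payment} follows algebraically from $t(\theta)=V(\theta,q(\theta))-U(\theta)$ together with the identity $\min\{u_{2m},u_{11}-q\}-u_{2m}=\min\{0,u_{11}-u_{2m}-q\}$.

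For the converse, given $\Bq$ satisfying conditions~1--3, I would define $\Bt$ by~\eqref{equ:payment}, so that $U(\theta)=u_{2m}+\int_0^\theta q(x)\,dx$ by construction. \ResponsiveIC{} then follows from the usual two-sided integral check using monotonicity of $q$ and linearity of $V(\theta,q)$ in $\theta$, and IR is immediate from condition~3. The one remaining nontrivial step is verifying non-negativity of $t(\theta)$, which splits along the two patterns: if $q(\theta)\le u_{11}-u_{2m}$, then $t(\theta)=\theta q(\theta)-\int_0^\theta q(x)\,dx=\int_0^\theta(q(\theta)-q(x))\,dx\ge 0$ by monotonicity; if $q(\theta)>u_{11}-u_{2m}$, I would use condition~2 to rewrite $u_{11}-u_{2m}-q(\theta)-\int_0^\theta q(x)\,dx=\int_\theta^1 q(x)\,dx-(1-\theta)q(\theta)$, giving $t(\theta)=\int_\theta^1(q(x)-q(\theta))\,dx\ge 0$, again by monotonicity.

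The proof is essentially routine once the formula $V(\theta,q)=\theta q+\min\{u_{2m},u_{11}-q\}$ is extracted, so I do not anticipate a real obstacle. The main subtlety---and the reason the lemma genuinely differs from Lemma~1 of~\cite{BergemannBS2018design}---is condition~3: when $m=2$, the IR curve $u(\cdot)$ has only a single interior kink and interior IR is implied by the endpoint IR together with monotonicity; for $m\ge 3$, $u(\cdot)$ has kinks at every action-switching type, so IR must be imposed pointwise and cannot be derived from conditions~1,~2 alone. Keeping careful track of which endpoint uses IR versus non-negativity of payments (both are needed, at both endpoints, to pin down condition~2) is the one place where the bookkeeping warrants care.
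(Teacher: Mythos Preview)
Your proposal is correct and in fact cleaner than the paper's own argument. Both proofs rest on the same envelope identity, but you exploit the unified formula $V(\theta,q)=\theta q+\min\{u_{2m},u_{11}-q\}$ (which the paper records as Observation~\ref{obs:value-function}) to run the Myerson argument \emph{once} over all of $[0,1]$. The paper instead splits at $\theta^*=\sup\{\theta:q(\theta)\le u_{11}-u_{2m}\}$, applies Myerson separately on $[0,\theta^*]$ and $(\theta^*,1]$, and then proves condition~2 by a two-sided limiting argument using the IC constraints that straddle $\theta^*$. Your global approach obtains condition~2 directly from $U(1)-U(0)=\int_0^1 q$, which is shorter and avoids the case analysis in the paper's monotonicity proof (the $c(q)$ terms cancel, as you note). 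You also explicitly verify non-negativity of payments in the converse via the two monotonicity integrals, a step the paper's write-up leaves implicit. The paper's piecewise organization does make the two experiment patterns more visible and yields the side observations $q(0)\le u_{11}-u_{2m}\le q(1)$ along the way, but these are not needed for the lemma and your route gets there with less bookkeeping.
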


By \Cref{lem:ic in q space}, for any $\Bq$ that can be implemented with some {\responsiveIC} and IR mechanism $\MM$, the revenue of $\MM$ can be written as an integral of $q(\cdot)$ using \Cref{equ:payment} (see the objective of the program in \Cref{fig:continous-program-modified}). We notice that for continuous distribution $F$, Property 3 of \Cref{lem:ic in q space} corresponds to uncountably many inequality constraints. To bound the size of the optimal {\responsiveIC} and IR mechanism, another important component is to show that all (IR) constraints in Property 3 can be captured by $O(m)$ constraints (Constraint {\IRConstr} in \Cref{fig:continous-program-modified}). By \Cref{lem:ic in q space}, the optimal semi-informative, {\responsiveIC} and IR mechanism 
is captured by the optimization problem in \Cref{fig:continous-program-modified}. A formal argument is shown in \Cref{lem:modified-program-equivalent}.



\begin{figure}[H]
\colorbox{MyGray}{
\begin{minipage}{.98\textwidth}
$$\quad\textbf{{sup}  } \int_0^1 [(\theta f(\theta) + F(\theta))q(\theta) + \min\{(u_{11}-u_{2m} - q(\theta))f(\theta), 0\}]d\theta$$
  \begin{align*}
\text{s.t.} \qquad 
 &{\MonoConstr}\quad q(\theta) \text{ is non-decreasing in }\theta\in [0,1]\\
 & {\MarginConstr}\quad q(0)\geq-u_{2m},\quad q(1)\leq u_{11}\\
 &{\IntegralConstr}\quad \int_{0}^1 q(\theta) d\theta = u_{11}-u_{2m} \\
 & {\IRConstr}\quad \int_0^{1}(q(x)-\ell_k)\cdot \ind[q(x)\leq \ell_k]dx\geq u_{2,m+1-k}-u_{2m}, &\forall k\in \{2,3,...,m-1\}
\end{align*}
\end{minipage}}
\caption{Maximizing Revenue over {\ResponsiveIC} and IR Mechanisms}~\label{fig:continous-program-modified}
\end{figure}

\begin{lemma}\label{lem:modified-program-equivalent}
For any optimal solution $\Bq^*$ to the program in \Cref{fig:continous-program-modified}, the mechanism $\MM^*$ that implements $\Bq^*$ (\Cref{lem:ic in q space}) achieves the maximum revenue among all {\responsiveIC} and IR mechanisms.~\footnote{There is a feasible solution that achieves the supremum of the program. See \Cref{sup_equal_max}.} 
\end{lemma}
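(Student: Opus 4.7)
The plan is to reduce the revenue maximization over all {\responsiveIC} and IR mechanisms to the optimization in \Cref{fig:continous-program-modified}, using the two preceding lemmas together with one algebraic rewriting of the objective and one reduction of the IR constraints. By \Cref{lem:describe the menu with q} I may assume without loss of generality that the mechanism is {\semiinformative}; then by \Cref{lem:ic in q space} it is fully specified by a non-decreasing allocation $\Bq=\{q(\theta)\}_{\theta\in[0,1]}$ satisfying its three properties, with the payment pinned down by~\eqref{equ:payment}. Constraints {\MonoConstr}, {\MarginConstr}, and {\IntegralConstr} of the program are exactly Properties 1 and 2 of that lemma, so it remains to establish (i) that the program's objective equals expected revenue up to an additive constant, and (ii) that Property 3, a continuum of IR inequalities, collapses to the finite list {\IRConstr}.

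For step (i), I substitute~\eqref{equ:payment} into $\int_0^1 t(\theta) f(\theta)\, d\theta$ and swap the order of integration in the double integral, using $\int_0^1 f(\theta) \int_0^\theta q(x)dx\, d\theta = \int_0^1 q(x)(1-F(x))\, dx$. Collecting terms and then eliminating the leftover $-\int_0^1 q(\theta)d\theta$ via {\IntegralConstr}, the revenue equals the program's objective minus the constant $u_{11}-u_{2m}$, so the two problems have the same maximizers.

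The main obstacle is step (ii). Writing $u(\theta) = \max_{j\in[m]}\{u_{2j}+\theta\ell_j\}$ with $\ell_j := u_{1j}-u_{2j}$, Property 3 of \Cref{lem:ic in q space} is equivalent to requiring, for every action $j$,
\[
G_j(\theta) \;:=\; (u_{2m}-u_{2j}) + \int_0^\theta \bigl(q(x)-\ell_j\bigr)\, dx \;\geq\; 0 \qquad \text{for all } \theta\in[0,1].
\]
Because $q$ is non-decreasing, $G_j'(\theta)=q(\theta)-\ell_j$ is non-decreasing, so $G_j$ is convex in $\theta$ and its minimum over $[0,1]$ equals $(u_{2m}-u_{2j}) + \int_0^1 (q(x)-\ell_j)\cdot \ind[q(x)\leq \ell_j]\, dx$. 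Requiring this minimum to be non-negative is precisely the inequality {\IRConstr} after the reindexing $k=m+1-j$. The extreme cases $j=1$ and $j=m$ are automatic: for $j=m$, $\ell_m=-u_{2m}$ is the smallest feasible value of $q$ by {\MarginConstr}, so the indicator is supported where the integrand vanishes; for $j=1$, $\ell_1=u_{11}$ is the largest feasible value, so the inequality reduces to $\int_0^1 q(x)dx\geq u_{11}-u_{2m}$, already forced as equality by {\IntegralConstr}. Only the $m-2$ inequalities {\IRConstr} for $k\in\{2,\ldots,m-1\}$ remain. Attainability of the supremum, needed to pull the optimizer $\Bq^*$ back to an actual mechanism through \Cref{lem:ic in q space}, is the content of the claim cited in the footnote.
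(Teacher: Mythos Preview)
Your proposal is correct and follows essentially the same approach as the paper. The paper, like you, reduces to {\semiinformative} mechanisms via \Cref{lem:describe the menu with q}, uses the payment identity from \Cref{lem:ic in q space}, rewrites the revenue as the program's objective minus the constant $u_{11}-u_{2m}$ (via integration by parts rather than your Fubini swap), and then shows the continuum of IR constraints collapses to the finite list {\IRConstr} by the same convexity observation on $\theta\mapsto\int_0^\theta(q(x)-\ell_k)\,dx$, handling the two extreme actions exactly as you do.
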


In \Cref{lem:opt without sigma-IC has k pieces}, we show that there exists an optimal {\responsiveIC} and IR mechanism whose option size is $O(m)$. By \Cref{lem:modified-program-equivalent}, it's equivalent to prove that there is an optimal solution $\Bq^*$ that takes only $O(m)$ different values. When $F$ is a discrete distribution, we turn the program into a collection of LPs, so that the highest optimum among the collection of LPs correspond to the optimum of the program in \Cref{fig:continous-program-modified}. Each LP has variables that represent the difference of the $q$-value between two adjacent $\theta$s (as the types are discrete). The LP has $O(m)$ constraints, that is independent of $\supp(F)$, as constraint {\MonoConstr} corresponds to the variables being non-negative. Thus each LP has an optimal solution where at most $O(m)$ variables are strictly positive, which corresponds to a $\Bq$ that takes only $O(m)$ different values. For continuous distribution $F$, we prove the claim by approximating the continuous program with an infinite sequence of discrete programs. 
\Cref{thm:1/k approximation of full information for $k$ actions } then follows from Lemmas \ref{lem:1/k-approx for menus of size k}, \ref{lem:describe the menu with q} and \ref{lem:opt without sigma-IC has k pieces}.


\begin{lemma}\label{lem:opt without sigma-IC has k pieces}
There exists a semi-informative, {\responsiveIC} and IR mechanism 
that 
has option size at most $\optionsize$ (\Cref{def:size_of_mechanism}) and obtains revenue $\rev(\MM^*)=\optstar$. 
\end{lemma}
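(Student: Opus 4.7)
The plan is to leverage \Cref{lem:modified-program-equivalent}: it suffices to produce an optimal feasible solution $\Bq^*$ to the program in \Cref{fig:continous-program-modified} that takes at most $3m-1$ distinct values. Then \Cref{lem:ic in q space} converts $\Bq^*$ into a semi-informative, {\responsiveIC}, IR mechanism whose option size equals the number of distinct values of $\Bq^*$.

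First I would reduce to the case where $F$ is a discrete distribution on finitely many types $\theta_1<\cdots<\theta_N$. The program is not immediately an LP because of two sources of non-linearity: the term $\min\{(u_{11}-u_{2m}-q(\theta))f(\theta),0\}$ in the objective, and the indicator $\ind[q(x)\le\ell_k]$ inside constraint {\IRConstr}. However, since $q$ is non-decreasing by constraint {\MonoConstr}, the set $\{\theta : q(\theta)\le c\}$ is a prefix of $\supp(F)$ for every threshold $c$. Hence, after guessing for each of the $m-1$ thresholds $\{\ell_2,\ldots,\ell_{m-1},u_{11}-u_{2m}\}$ the index in $\supp(F)$ at which $q$ crosses it, both the objective and constraint {\IRConstr} become affine in $\Bq$, reducing the program to the maximum over finitely many linear programs.

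The central step is the extreme-point count applied to each resulting LP. Reparameterizing via $\Delta_i := q_{i+1}-q_i\ge 0$ for $i\in[N-1]$ together with $q_1$, constraint {\MonoConstr} becomes non-negativity of the $\Delta_i$'s. The remaining non-monotonicity constraints are the two boundary conditions from {\MarginConstr}, the single equality {\IntegralConstr}, the $m-2$ IR inequalities {\IRConstr}, and the $O(m)$ pattern constraints anchoring $q$ relative to each threshold --- a total of $O(m)$ constraints that is independent of $|\supp(F)|$. At a basic optimal solution, the number of strictly positive $\Delta_i$'s is bounded by the number of tight non-monotonicity constraints; a careful accounting yields at most $3m-2$ positive $\Delta_i$'s, so the corresponding $\Bq$ takes at most $3m-1$ distinct values.

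Finally, I would lift the bound from discrete to continuous $F$ by approximation. Since every semi-informative experiment is parameterized by a scalar in $[-u_{2m},u_{11}]$ and there are at most $3m-1$ options, each discrete-approximation mechanism lives in a compact finite-dimensional parameter space. Taking a sequence of discrete approximations $F_n\Rightarrow F$, extracting an optimal mechanism $\MM_n$ of option size at most $3m-1$ for each, and passing to a convergent subsequence should yield the desired $\MM^*$. The main obstacle will be ensuring continuity of the revenue functional in $F$ and that the limit mechanism attains $\optstar$ under $F$ --- the fact that the supremum in the program is attained (as cited via \Cref{sup_equal_max}) together with the uniform option-size bound are the key ingredients that let us pass to the limit while preserving the $3m-1$ bound.
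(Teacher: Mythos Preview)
Your proposal is correct and follows essentially the same route as the paper: parameterize by the crossing indices of $q$ relative to the $m-1$ thresholds to linearize the program, reparameterize via differences $\Delta_i=q_{i+1}-q_i$ so that monotonicity becomes non-negativity, apply the basic-feasible-solution bound with $O(m)$ remaining constraints, and then pass from discrete to continuous $F$ by a compactness argument on the $(3m-1)$-option parameter space. The paper carries out exactly this plan, with the continuous limit handled via uniform convergence of the objective and a compactness/subsequence extraction on the finite-dimensional parametrization of step functions with at most $3m-1$ pieces.
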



\subsection{When is Selling Complete Information Optimal?}\label{sec:characterization}

We have proved a tight approximation ratio of $\Theta(m)$ for selling complete information that applies to all binary-state environments. A natural follow-up question is whether the approximation ratio becomes significantly smaller for special environments. In this section, we provide a sufficient condition for the environment, under which selling complete information achieves revenue equals to $\opt^*$, the maximum revenue achievable by any {\responsiveIC} and IR mechanisms.\footnote{In the paper, we indeed prove a necessary and sufficient condition under which selling complete information is optimal among all {\responsiveIC} and IR mechanisms (\Cref{thm:characterization-full-info}). However, the conditions are in abstract terms and requires further definitions. For the purpose of presentation, we state here the sufficient condition that is easy to verify.} Note that it immediately implies that selling complete information is the optimal menu in this environment.

Throughout this section, we consider continuous distributions and assume that the pdf $f(\cdot)$ of the agent's type distribution is strictly positive on $[0,1]$ and differentiable on $(0,1)$. 




\begin{theorem}\label{cor:special_distributions}
For every $\theta\in [0,1]$, let $\varphi^-(\theta)=\theta f(\theta)+F(\theta)$ and $\varphi^+(\theta)=(\theta-1)f(\theta)+F(\theta)$. Suppose $\varphi^-(\cdot)$ and $\varphi^+(\cdot)$ are both monotonically non-decreasing. Suppose the payoff matrix satisfies that $\varphi^-(\frac{\widehat{p}}{u_{11}})\geq \varphi^+(1-\frac{\widehat{p}}{u_{2m}})$, where $\widehat{p}=\min\left\{\frac{(u_{11}-u_{12})u_{2m}}{u_{11}-u_{12}+u_{22}},\frac{(u_{2m}-u_{2,m-1})u_{11}}{u_{2m}+u_{1,m-1}-u_{2,m-1}}\right\}$.
Then 
selling complete information at any price $p$ such that $\varphi^-(\frac{p}{u_{11}})= \varphi^+(1-\frac{p}{u_{2m}})$ achieves the maximum revenue achievable by any {\responsiveIC} and IR mechanisms. 
\end{theorem}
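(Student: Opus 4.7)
My plan is to work in the one-dimensional $q$-representation of semi-informative, \responsiveIC, IR mechanisms given by \Cref{lem:ic in q space}, to identify the $q$-profile $q^*$ corresponding to selling complete information at the right price, verify it is feasible in the program of \Cref{fig:continous-program-modified}, and then prove its optimality by dualizing only the integral constraint.

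First I will choose the price. Since $\varphi^-$ and $\varphi^+$ are non-decreasing, the map $p\mapsto \varphi^-(p/u_{11})-\varphi^+(1-p/u_{2m})$ is non-decreasing; it equals $-1$ at $p=0$ and is $\geq 0$ at $p=\widehat{p}$ by hypothesis, so continuity of $f$ and the intermediate value theorem yield some $p^*\in(0,\widehat{p}]$ satisfying $\varphi^-(\theta_L^*)=\varphi^+(\theta_H^*)$, where $\theta_L^*:=p^*/u_{11}$ and $\theta_H^*:=1-p^*/u_{2m}$. The candidate is the three-step function
\[
q^*(\theta)=\begin{cases}-u_{2m}&\theta\in[0,\theta_L^*),\\ u_{11}-u_{2m}&\theta\in[\theta_L^*,\theta_H^*],\\ u_{11}&\theta\in(\theta_H^*,1],\end{cases}
\]
which, via the payment identity of \Cref{lem:ic in q space}, implements exactly the menu that offers the fully informative experiment at price $p^*$ (and opt-out otherwise). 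Feasibility in \Cref{fig:continous-program-modified} is then routine: monotonicity and the marginal bounds are immediate; $\int_0^1 q^*=u_{11}-u_{2m}$ reduces to $u_{11}\theta_L^*+u_{2m}\theta_H^*=u_{2m}$, which is just the definition of $p^*$; the IR constraint uses the bound $p^*\leq \widehat{p}\leq \frac{(u_{2m}-u_{2,m-1})u_{11}}{u_{2m}+u_{1,m-1}-u_{2,m-1}}$, which places $\theta_L^*$ below the indifference point between actions $m$ and $m-1$, so $u(\theta)=(1-\theta)u_{2m}=u_{2m}+\int_0^\theta q^*$ on $[0,\theta_L^*]$ (symmetrically on $[\theta_H^*,1]$), while on the middle interval $u_{2m}+\int_0^\theta q^*=V_\theta(E^*)-p^*\geq u(\theta)$ because this is exactly the buying region at price $p^*$.

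The heart of the proof is optimality. The integrand
\[
g(\theta,q):=\varphi^-(\theta)\,q+\min\{(u_{11}-u_{2m}-q)f(\theta),0\}
\]
is concave in $q$, with subgradient $\varphi^-(\theta)$ for $q<u_{11}-u_{2m}$ and $\varphi^+(\theta)=\varphi^-(\theta)-f(\theta)$ for $q>u_{11}-u_{2m}$. I will dualize only the integral constraint, using multiplier $\lambda^*:=-\varphi^-(\theta_L^*)=-\varphi^+(\theta_H^*)$. By the monotonicity of $\varphi^-$ and $\varphi^+$, we have $\varphi^-(\theta)+\lambda^*\leq 0$ on $[0,\theta_L^*)$, the subgradient interval $[\varphi^+(\theta)+\lambda^*,\,\varphi^-(\theta)+\lambda^*]$ contains $0$ on $[\theta_L^*,\theta_H^*]$, and $\varphi^+(\theta)+\lambda^*\geq 0$ on $(\theta_H^*,1]$. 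Hence for each $\theta$, $q^*(\theta)$ pointwise maximizes $g(\theta,q)+\lambda^*q$ over $q\in[-u_{2m},u_{11}]$. Integrating and using $\int q^*=u_{11}-u_{2m}=\int q$ for any feasible $q$ cancels the Lagrange term and gives $\int g(\theta,q^*)\,d\theta\geq \int g(\theta,q)\,d\theta$, so $q^*$ attains the supremum in \Cref{fig:continous-program-modified}; combined with \Cref{lem:describe the menu with q} and \Cref{lem:modified-program-equivalent}, this proves that selling complete information at $p^*$ attains $\optstar$.

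The main conceptual hurdle is recognizing this Lagrangian structure: the multiplier $\lambda^*$ is forced by the stated equality of the two virtual values at the boundary types, and the monotonicity of both $\varphi^-$ and $\varphi^+$ (not just one) is precisely what converts the choice of $\lambda^*$ into pointwise optimality of the three-step profile. The bound $p^*\leq\widehat{p}$ plays a narrower but essential role: it keeps the outside types on the linear tails of $u(\cdot)$, so that the $m-2$ interior IR constraints in \Cref{fig:continous-program-modified} are automatically slack or tight at the correct endpoints and never interfere with the relaxation to a single binding integral constraint.
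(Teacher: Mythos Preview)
Your proposal is correct and follows essentially the same approach as the paper: dualize the integral constraint with multiplier $\eta^*=\varphi^-(\theta_L^*)=\varphi^+(\theta_H^*)$ (your $-\lambda^*$), observe that monotonicity of $\varphi^-,\varphi^+$ makes the three-step $q^*$ a pointwise maximizer of the Lagrangian, and use $p^*\le\widehat p$ for feasibility of the IR constraints. The paper reaches the same endpoint via its general characterization \Cref{thm:characterization} and the intermediate \Cref{thm:characterization-full-info} (with all $\lambda_k^*=0$), whereas you bypass that machinery and argue the Lagrangian bound directly; this shortcut is available precisely because the monotonicity hypothesis makes ironing unnecessary, so your route is a legitimate streamlining of the paper's argument rather than a different idea.
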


Here is an interpretation of \Cref{cor:special_distributions}:
Both $\varphi^-(\cdot)$ and $\varphi^+(\cdot)$ are also considered in~\cite{BergemannBS2018design}. Intuitively, they can be viewed as the agent's ``virtual value'' when $q(\theta)\leq u_{11}-u_{2m}$ and when $q(\theta)>u_{11}-u_{2m}$ respectively, in a semi-informative mechanism. Both virtual value functions being non-decreasing is a standard regularity condition on the type distribution. The theorem applies to stardard distributions such as uniform, exponential and Gaussian distributions. See \Cref{subsec:characterization-full-info} for several examples.

    To understand the condition on the payoff matrix, we point out that if selling complete information at some price $p$ is the optimal {\responsiveIC} and IR mechanism, then the buyer's utility function must intersect with the IR curve at the first and last piece $(*)$. Otherwise we can add some extra experiments to strictly increase the revenue, while maintaining the same utility function and the {\responsiveIC} and IR property, contradicting with the optimality (see \Cref{lem:endpoints of the full info} in \Cref{subsec:characterization-full-info} for details). By some simple calculation, $\widehat{p}$ is exactly the largest price that satisfies $(*)$. The condition guarantees the existence of such a price $p\leq \widehat{p}$.
    
 Note that although the theorem applies to arbitrary number of actions, the condition itself only depends on the payoffs of the the first two and the last two actions. Thus if the condition is satisfied, selling complete information is always optimal for any choice of the payoffs for other actions.


To prove the theorem, we provide an exact characterization of the optimal semi-informative, {\responsiveIC} and IR mechanism, i.e., the optimal solution $\Bq^*=\{q^*(\theta)\}_{\theta\in [0,1]}$ of the program in \Cref{fig:continous-program-modified}, by Lagrangian duality (\Cref{thm:characterization}). It is a generalization of the characterization by Bergemann et al.~\cite{BergemannBS2018design} to $m\geq 3$ actions. 


\section{Selling Complete Information in Multiple ($n\geq 3$) States}\label{sec:multi_state}

In this section, we consider the environment with $n\geq 3$ states. In \Cref{sec:general_setting}, we prove that even in arguably the simplest environment -- matching utility environment with 3 states and 3 actions, the optimal revenue $\opt$ and the revenue by selling complete information $\optfi$ can have an arbitrarily large gap, i.e. there is no universal finite upper bound of the gap that holds for all type distributions.
Nonetheless, we prove in \Cref{sec:matching_utilities} that in the same environment, if the distribution is uniform, selling complete information is in fact the optimal mechanism. Proofs in this section are postponed to \Cref{sec:proof_multi_state}.

\subsection{Lower Bound Example for Matching Utilities}\label{sec:general_setting}

To avoid ambiguity, throughout this section we denote $\rev(\MM,D)$ the revenue of $\MM$ with respect to $D$, for any (not necessarily IC or IR) mechanism $\MM$, and any distribution $D$. Denote $\opt(D)$ and $\optfi(D)$ the optimal revenue and the maximum revenue by selling complete information, respectively. The main result of this section shows that in the matching utility environment with $n=3$ states and $m=3$ actions, the ratio $\frac{\opt(D)}{\optfi(D)}$ arbitrarily large for some distribution $D$. Recall that in a matching utility environment, $n=m$ and the payoffs satisfy $u_{ij}=\ind[i=j],\forall i,j$.

\begin{theorem}\label{thm:general-main}
Consider the matching utility environment with 3 states and 3 actions. For any integer $N$, there exists a distribution $D$ with support size $N$ such that
$$\frac{\opt(D)}{\optfi(D)}=\Omega(N^{1/7})$$
\end{theorem}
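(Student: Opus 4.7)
\textbf{Proof proposal for \Cref{thm:general-main}.} The plan is to adapt the ``concentric circles'' lower bound construction of Hart and Nisan~\cite{hart2019selling} to the information-selling setting with matching utilities, with the essential modification that the types are placed on thin circular \emph{sectors} rather than full circles so that the more demanding incentive compatibility constraints (in particular the $\sigma$-IC constraints that allow the agent to relabel actions after deviating) can be enforced.

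First, I would set up coordinates. Parametrize the type space by $\theta=(\theta_1,\theta_2)\in\Theta$ (with $\theta_3=1-\theta_1-\theta_2$), and work in a small ball around a reference point (for instance a slight perturbation of the uniform prior) so that the outside-option value $u(\theta)=\max_j\theta_j$ is attained by a fixed action on the whole region and the IR curve is smooth. Within this ball I will choose $K$ nested concentric circles of radii $0<r_1<r_2<\cdots<r_K$ and, on each circle, a thin angular sector of opening $\alpha$. On the $k$-th sector I would place $M$ type points equally spaced in angle, so that the total support size is $N=KM$. These circular sectors are the ``signatures'' that the separating menu will exploit.

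Next I would construct the menu $\mathcal{M}$. For every placed type $\theta^{(k,j)}$ at radius $r_k$ and angle $\phi_{k,j}$ I would design a partial experiment $E^{(k,j)}$ and price $p^{(k,j)}$ so that, among the options on its circle, the type $\theta^{(k,j)}$ uniquely maximizes $V_\theta(E)-t$ at its own option. The experiment $E^{(k,j)}$ should be a small perturbation of the fully-informative experiment whose ``direction of noise'' is aligned with the angle $\phi_{k,j}$, with prices increasing in $r_k$ so that larger-radius (more informative) options extract more revenue. The Hart--Nisan cone-shaped value-function argument applies directly in angular/radial coordinates once one checks that the value function is locally a piecewise-linear function of $(r,\phi)$ along each ray of deviation. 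The reason for thin sectors is exactly that a type in one sector, when considering deviating to the option designed for a type on a different circle, might be able to relabel action indices (a $\sigma$-IC deviation) to get a higher value; restricting all types to a narrow angular range bounds the gain from any such relabeling and keeps the option designed at the ``nearest'' radius/angle optimal. Verifying IC (both responsive and $\sigma$-IC) and IR on this sector geometry is the main obstacle, and I expect it to reduce to a monotonicity check in the radial direction combined with a Lipschitz-in-angle argument using the opening $\alpha$ as a tunable parameter.

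Then I would place a discrete distribution $D$ on these $N$ types, with mass on circle $k$ scaling like $r_k^{-c}$ for an appropriate constant $c$; this mimics the weighted construction of~\cite{hart2019selling} and ensures that the separating menu extracts $\Theta(p^{(k,\cdot)}\cdot \mathrm{mass}_k)=\Theta(1)$ revenue from each layer, giving $\rev(\mathcal{M},D)=\Omega(K)$. For the upper bound on $\optfi(D)$, I would note that $V_\theta(E^*)=1-\max_j\theta_j$ is close to a constant $v_0$ across all placed types (since they lie in a small ball), so the fully-informative experiment can only be sold at roughly one price $p^*\approx v_0$; by the Hart--Nisan-style argument, the chosen $D$ has a heavy tail in value so that no single price can extract more than $O(1/K^{\beta})$ of the total surplus, yielding $\optfi(D)=O(K^{-\beta})\cdot \Omega(K)$ at best.

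Finally, I would optimize the parameters $(K,M,\alpha,c)$ subject to $N=KM$ and to the IC/IR feasibility constraints imposed by the sector construction (the angular spacing $\alpha/M$ and the radial spacing $r_{k+1}-r_k$ must both be large enough for the incentive inequalities to hold). Balancing these constraints as in~\cite{hart2019selling} forces the number of ``usable'' layers to scale like a fractional power of $N$, and the exponent $1/7$ emerges from the resulting optimization. The main obstacle I anticipate is step~(2), i.e.\ showing the full set of $\sigma$-IC constraints is satisfied for a buyer whose value function involves the $\max$ over three actions after receiving a signal; the sector-instead-of-circle modification and a careful angular quantization are the key new ingredients needed beyond the Hart--Nisan argument.
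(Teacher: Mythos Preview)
Your high-level picture—adapt Hart--Nisan by placing types on nested shells inside a thin angular sector—matches the paper, but the mechanism by which full IC is obtained is quite different from what you propose, and this is where your proposal has a real gap.

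You frame the thin sector as the device that tames the $\sigma$-IC constraints via a ``Lipschitz-in-angle'' bound on the gain from relabeling. That is not how the paper does it, and it is unclear that such an argument would go through: the $\sigma$-IC constraints compare $V_\theta^*(E(\theta))$ to $V_\theta^{(\sigma)}(E(\theta'))$ for \emph{every} permutation $\sigma$, and a small angular window does not by itself control the best relabeling of an arbitrary partial experiment. The paper instead eliminates the $\sigma$-IC issue structurally. The experiment $\Pi^{(k)}$ assigned to $y_k$ has the form: under state $i\in\{1,2\}$ send signal $i$ with probability $y_{k,i}$ and signal $3$ otherwise; under state $3$ always send signal $3$. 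Thus signals $1$ and $2$ fully reveal the state, so following those recommendations is trivially optimal for \emph{any} type. For signal $3$, the paper does not keep the types near the uniform prior as you suggest; rather it scales $y_k$ down to $x_k=(\delta t_k/\gap_k)\,y_k$ with $\delta$ small, so every support point has $x_{k,1},x_{k,2}<\tfrac13<x_{k,3}$. For such types, after signal $3$ the posterior still puts the largest mass on state $3$, so action $3$ is best. Hence for every type in the support and every experiment in the menu, following the recommendation is optimal, and responsive-IC already implies full IC—no angular Lipschitz argument is needed. The thin sector (Property~2, $y_{k,1}/y_{k,2}\in[9/10,10/9]$) together with $\|y_k\|_2\in[0.3,0.4]$ is used to guarantee $y_{k,i}<\tfrac13$ and to make the $\gap_k$ computation work, not to bound relabeling gains.

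A second, smaller discrepancy: the paper does not use $K$ circles with $M$ points each. It uses shells indexed by $i$ with $i^{3/4}$ points on shell $i$ and radial gaps of order $i^{-3/2}$; this yields $\gap_k=\Theta(i^{-3/2})$ and $k=\Theta(i^{7/4})$, hence $\gap_k=\Theta(k^{-6/7})$ and $\sum_{k\le N}\gap_k=\Omega(N^{1/7})$. The ratio bound then comes from \Cref{lem:ratio-opt and fi-special} (itself built on the $\ratio$-characterization \Cref{thm:ratio_characterization}), which packages the distribution construction and the $\optfi$ upper bound; your direct $\Omega(K)$ versus $O(K^{-\beta})$ accounting would need to be replaced by this lemma.
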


The following corollary directly follows from \Cref{thm:general-main} and \Cref{lem:1/k-approx for menus of size k}.

\begin{corollary}
Consider the matching utility environment with 3 states and 3 actions. For any integer $N$, there exists a distribution $D$ with support size $N$ such that any optimal menu has option size $\Omega(N^{1/7})$.
\end{corollary}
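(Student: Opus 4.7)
The plan is to combine \Cref{thm:general-main} with \Cref{lem:1/k-approx for menus of size k} in a direct, essentially two-line argument, since the corollary is stated as a direct consequence of these two ingredients. No new construction is needed; the distribution witnessing the corollary is the very same $D$ on $N$ points produced by \Cref{thm:general-main}.

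First I would fix $N$ and take $D$ to be the distribution on a support of size $N$ guaranteed by \Cref{thm:general-main}, so that
\[
\frac{\opt(D)}{\optfi(D)} = \Omega(N^{1/7}).
\]
Let $\MM^*$ denote any optimal menu for the matching utility environment with 3 states and 3 actions under distribution $D$, and let $\ell$ be its option size in the sense of \Cref{def:size_of_mechanism}. Since $\MM^*$ is IC and IR, it is in particular IR, so \Cref{lem:1/k-approx for menus of size k} applies and yields a fully informative menu $\MM'$ with
\[
\rev(\MM', D) \;\geq\; \frac{\rev(\MM^*, D)}{\ell} \;=\; \frac{\opt(D)}{\ell}.
\]

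Since $\optfi(D)$ is by definition the maximum revenue over all fully informative menus, we obtain $\optfi(D) \geq \opt(D)/\ell$, which rearranges to
\[
\ell \;\geq\; \frac{\opt(D)}{\optfi(D)} \;=\; \Omega(N^{1/7}),
\]
completing the proof. I do not anticipate any real obstacle here: the only subtle point is to make sure that ``option size'' of $\MM^*$ (as defined in \Cref{def:size_of_mechanism}) is exactly the quantity that \Cref{lem:1/k-approx for menus of size k} uses as $\ell$, and that the lemma's hypothesis only requires IR (not full IC plus some special structure), both of which are already in place from the statements of the earlier results.
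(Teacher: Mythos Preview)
Your proposal is correct and is exactly the argument the paper intends: it explicitly states that the corollary ``directly follows from \Cref{thm:general-main} and \Cref{lem:1/k-approx for menus of size k}'' without giving further details, and your two-line derivation is precisely that.
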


The proof of \Cref{thm:general-main} is adapted from the approach in \cite{hart2019selling}, which was originally used in problem of multi-item auctions. Their approach does not directly apply to our problem due to the non-linearity of the buyer's value function and the existence of the extra {\sigmaIC} and more demanding IR constraints in our problem.   

For any mechanism $\MM$, denote $\ratio(\MM)$ the largest ratio between $\rev(\MM,D)$ and $\optfi(D)$ among all distributions, i.e.,
$$\ratio(\MM)=\sup_D \frac{\rev(\MM,D)}{\optfi(D)}$$

For every $\theta\in \Theta$, denote $U(\theta)$ the gain in value of a buyer with type $\theta$, after receiving the fully informative experiment. Formally,
$U(\theta)=\sum_{i=1}^n\theta_i\cdot \max_ju_{ij}-\max_j\left\{\sum_i\theta_iu_{ij}\right\}$. Here $\theta_n=1-\sum_{i=1}^{n-1}\theta_i$.
In \Cref{thm:ratio_characterization}, we prove that $\ratio(\MM)$ can be written as a concrete term using $U(\cdot)$ and the payment $t(\cdot)$ of $\MM$.

\begin{theorem}\label{thm:ratio_characterization}
(Adapted from Theorem 5.1 of \cite{hart2019selling}) For any mechanism $\MM=\{E(\theta),t(\theta)\}_{\theta\in \Theta}$ (not necessarily IC or IR), we have
$\ratio(\MM)=\int_0^\infty \frac{1}{r(x)}dx$, where $r(x)=\inf\{U(\theta):\theta\in \Theta \wedge t(\theta)\geq x\}, \forall x\geq 0$.
\end{theorem}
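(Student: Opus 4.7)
The plan is to establish both inequalities $\ratio(\MM)\le\int_0^\infty dx/r(x)$ and $\ratio(\MM)\ge\int_0^\infty dx/r(x)$ separately, following the strategy of Hart--Reny. The only structural feature of our model that enters either argument is the scalar statistic $U(\theta)$, so the piecewise-linearity of $V_\theta$ and the extra \sigmaIC{}/IR requirements present elsewhere in the paper impose no restriction here (note that $\MM$ is not required to be IC or IR; we simply read off $t(\cdot)$ from $\MM$).

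For the upper bound, I would start from the layer-cake identity
\[
\rev(\MM,D)=\E_{\theta\sim D}[t(\theta)] \;=\; \int_0^\infty \Pr_{\theta\sim D}[t(\theta)\ge x]\, dx.
\]
By the definition of $r$, $\{\theta: t(\theta)\ge x\}\subseteq \{\theta: U(\theta)\ge r(x)\}$. Since posting the fully informative experiment at price $r(x)$ is an IR menu whose revenue from $D$ is exactly $r(x)\cdot \Pr_{D}[U(\theta)\ge r(x)]$,
\[
\optfi(D)\;\ge\; r(x)\cdot\Pr_D[U(\theta)\ge r(x)]\;\ge\; r(x)\cdot\Pr_D[t(\theta)\ge x].
\]
Combining with the trivial bound $\Pr[t(\theta)\ge x]\le 1$ and dividing by $\optfi(D)$ yields
\[
\frac{\rev(\MM,D)}{\optfi(D)}\;\le\;\int_0^\infty \min\!\Bigl\{\tfrac{1}{\optfi(D)},\,\tfrac{1}{r(x)}\Bigr\}\,dx\;\le\;\int_0^\infty \frac{dx}{r(x)},
\]
and taking $\sup_D$ gives one direction.

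For the lower bound, I would discretize. Given $\delta>0$ and a truncation $M>0$, choose a fine partition $0=x_0<x_1<\cdots<x_N=M$. For each $i$, the infimum definition of $r(x_i)$ yields a type $\theta_i\in\Theta$ with $t(\theta_i)\ge x_i$ and $U(\theta_i)\le r(x_i)+\delta$. Define a discrete distribution $D^*$ placing mass $p_i\propto \tfrac{1}{r(x_i)}-\tfrac{1}{r(x_{i+1})}$ on $\theta_i$, with any residual probability placed on a dummy type of arbitrarily small $U$-value. A telescoping sum then shows that for every grid price $p=r(x_j)+\delta$ the full-information revenue equals a common scale $\alpha(1+o(1))$, and monotonicity of $r$ rules out any off-grid price, giving $\optfi(D^*)\le \alpha(1+o(1))$. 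Meanwhile summation by parts shows $\rev(\MM,D^*)\ge\sum_i p_i x_i$ is a Riemann sum for $\alpha\int_0^M dx/r(x)$, which converges as $\delta\to 0$ and the mesh shrinks. Letting $M\to\infty$ recovers the matching lower bound.

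\textbf{Main obstacle.} The technically delicate part is uniform control of $\optfi(D^*)$ across \emph{all} feasible posted prices, not merely those on our grid; monotonicity of $r$ reduces to the grid but requires careful telescoping and bookkeeping of the $\delta$-slack coming from the infimum definition of $r$. A secondary concern is the degenerate case $\int_0^\infty dx/r(x)=\infty$, which must be handled by showing that the finite-$M$ truncated ratios can be made arbitrarily large. Compared to Hart--Reny's original setting, where the buyer's value is linear in the type (additive valuations), our $V_\theta$ is piecewise linear, but this non-linearity is absorbed entirely into the definition of $U(\theta)$ and never reappears in the argument, which is why the same reasoning carries over essentially verbatim.
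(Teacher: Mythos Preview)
Your proposal is correct and follows essentially the same approach as the paper. The upper bound is identical to the paper's argument, and the lower bound uses the same discretization-plus-equal-revenue-distribution idea.

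One small difference worth noting: where you define the probability weights using $r(x_i)$ (i.e., $p_i\propto \tfrac{1}{r(x_i)}-\tfrac{1}{r(x_{i+1})}$) and then track a $\delta$-slack to control $\optfi(D^*)$, the paper instead sets $\xi_k:=U(\theta^{(k)})$ and takes $\Pr[\theta=\theta^{(k)}]=\xi_1\bigl(\tfrac{1}{\xi_k}-\tfrac{1}{\xi_{k+1}}\bigr)$ with $\xi_{N+1}=\infty$. This choice makes the weights sum to exactly $1$ (so no dummy type is needed) and, more importantly, makes $\optfi(D)$ equal to $\xi_1$ \emph{exactly} for every grid price, eliminating the $\delta$-bookkeeping you flagged as the main obstacle. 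To ensure the $\xi_k$ are strictly increasing, the paper first picks the partition so that $r(t_{k+1})>(1+\eps)r(t_k)$ and then chooses each $\theta^{(k)}$ with $r(t_k)\le U(\theta^{(k)})<(1+\eps)r(t_k)$. Your route would also work, but the paper's normalization is cleaner.
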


\notshow{
\begin{lemma}\label{lem:ratio-opt and fi}
(Adapted from \cite{hart2019selling}) Given any integer $N$, a sequence of types $\{y_k\}_{k=1}^N$ in $\Delta_n$ and a sequence of experiments $\{\Pi^{(k)}\}_{k=1}^N$ such that

\begin{enumerate}
    \item 
    $\displaystyle\gap_k=\min_{0\leq q<k}\left\{V_{y_k}(\Pi^{(k)})-V_{y_k}(\Pi^{(q)})\right\}>0$. Here $\Pi^{(0)}$ satisfies $\Pi_{ij}^{(0)}=0,\forall i,j$.
    \item 
    $V_{y_k}(\Pi^{(k)})\geq \max_j\left\{\sum_iy_{k,i}u_{ij}\right\}+\gap_k$, where $y_{k,i}$ is the $i$-th entry of vector $y_k$.
\end{enumerate}

Then for any $\eps>0$, there exists a distribution $D$ with support size $N$, together with an IC and IR mechanism $\MM$ such that
$$\frac{\rev(\MM, D)}{\optfi(D)}\geq (1-\eps)\cdot \sum_{k=1}^N\frac{\gap_k}{U(y_k)}$$

Moreover, if both of the above properties hold by replacing $V_{y_k}(\Pi^{(q)})$ with $V_{y_k}^*(\Pi^{(q)})=\sum_{i,j}\Pi_{ij}^{(q)}\cdot y_{k,i}u_{ij}$ for every $q, k$, then the same inequality hold for some {\responsiveIC} and IR mechanism $\MM$.
\end{lemma}


\begin{prevproof}{Lemma}{lem:ratio-opt and fi}

Let $\{t_k\}_{k=1}^N$ be a sequence of positive numbers that increases fast enough so that: (i) $(t_k/\gap_k)\cdot U(y_k)$ is increasing, and (ii) $t_{k+1}/t_k\geq 1/\eps$ for all $1\leq k<N$. Such a sequence must exist since after $t_1,\ldots,t_{k-1}$ are decided, we can choose a large enough $t_k$ to satisfy both properties. Let $\delta>0$ be any number such that $1/\delta>\max_{k\in [N]}\{t_k/\gap_k\}$. For every $k\in [N]$, let $x_k=(\delta t_k/\gap_k)\cdot y_k$. Let $\xi_k=U(x_k)=(\delta t_k/\gap_k)\cdot U(y_k)$, which is increasing due to (i).  


For every $k\in [N]$, $\delta t_k/\gap_k<1$, thus $x_k\in \Delta_n^*$ as $y_k\in \Delta_n^*$. Consider the distribution $D$ with support $\{x_1,\ldots, x_N\}$.~\footnote{For every $k\in [N]$, the $k$-th type has probability $x_{k,i}$ for every state $\omega_i$, and probability $1-\sum_ix_{k,i}$ for the dummy state $\omega_0$.} $\Pr[\theta=x_k]=\xi_1\cdot(\frac{1}{\xi_k}-\frac{1}{\xi_{k+1}})$ for all $k\in [N]$, where $\xi_{N+1}=\infty$. 

Consider the following mechanism $\MM$. For every buyer's type $\theta$ in the support, the buyer chooses the experiment $\Pi^{(k^*)}$ where 
$k^*=\argmax_k\left\{ V_{\theta}(\Pi^{(k)})-\delta\cdot t_k\right\}$, and pays $\delta\cdot t_{k^*}$. In other words, he chooses the experiment that obtains the highest utility, when he follows the recommendation. Then clearly $\MM$ is {\responsiveIC}. Moreover, for every buyer's type $\theta=x_q$, we have   
$$V_{x_q}(\Pi^{(k^*)})-\delta\cdot t_{k^*}\geq V_{x_q}(\Pi^{(q)})-\delta\cdot t_q\geq \max_j\left\{\sum_ix_{q,i}u_{ij}\right\}$$
where the second inequality follows from property 2 of the statement (by dividing both sides with $\delta t_q/\gap_q$).  

Thus $\MM$ is IR. Now we compute $\rev(\MM,D)$. Consider any buyer's type $x_k$. For any $1\leq q<k$, by the definition of $\gap_k$, we have
$$V_{x_k}(\Pi^{(k)})-V_{x_k}(\Pi^{(q)})=(V_{y_k}(\Pi^{(k)})-V_{y_k}(\Pi^{(q)}))\cdot \frac{\delta t_k}{\gap_k}\geq \delta t_k>\delta t_k-\delta t_q$$

We notice that the utility of the buyer (with type $x_k$) after purchasing $\Pi^{(q)}$ is $V_{x_k}(\Pi^{(q)})-\delta\cdot t_q$.
Thus the above inequality implies that in $\MM$, the buyer must purchase an experiment $\Pi^{(q)}$ where $q\geq k$. Since $\{t_k\}_{k=1}^N$ is an increasing sequence, the buyer's payment is at least $\delta\cdot t_k$. By \Cref{thm:ratio_characterization},






\begin{align*}
\rev(\MM,D)&\geq \sum_{k=1}^N\delta\cdot t_k\cdot \xi_1\cdot(\frac{1}{\xi_k}-\frac{1}{\xi_{k+1}})=\delta\xi_1\cdot \sum_{k=1}^N \frac{t_k-t_{k-1}}{\xi_k}\\
&\geq (1-\eps)\xi_1\cdot \sum_{k=1}^N \frac{\delta t_k}{\xi_k}=(1-\eps)\cdot \sum_{k=1}^N\frac{\gap_k}{U(y_k)}\cdot \optfi(D)
\end{align*}

Here the second inequality follows from $t_{k+1}/t_k\geq 1/\eps$. The third inequality follows from the definition of $\xi_k$ and the fact that $\optfi(D)=\xi_1$, which is proved in \Cref{thm:ratio_characterization}. We include the proof here for completeness. Recall that a buyer with type $x$ will purchase the fully informative experiment at price $p$ if and only if  $U(x)\geq p$. Thus under distribution $D$, it's sufficient to consider the price $p=U(x_{k})=\xi_k$ for some $k\in [N]$. Since $\{U(x_{k})\}_{k\in [N]}$ is a strictly increasing sequence, then the buyer will purchase at price $p=\xi_k$ whenever $x\in \{x_{k},\ldots,x_N\}$, which happens with probability $\sum_{j=k}^N\xi_1\cdot(\frac{1}{\xi_k}-\frac{1}{\xi_{k+1}})=\frac{\xi_1}{\xi_k}$. Hence, $\optfi(D)=\xi_1$. The second part of the statement follows from a similar argument by noticing that the value of the buyer (with type $\theta$) after purchasing any experiment $\Pi$ is $V_{\theta}^*(\Pi)$, if she follows the recommendation.
\end{prevproof}


\begin{corollary}\label{cor:special environment E}
Consider the environment $\cE$. Given any integer $N$ and a sequence of types $\{y_k\}_{k=1}^N$ in $\Delta_2^*$ such that
\begin{enumerate}
    \item $\displaystyle\gap_k=1.9\cdot \min_{0\leq j<k} (y_k-y_j)\cdot y_k\in (0,0.09)$. Here $y_0=(0,0)$.
    \item $\frac{y_{k,1}}{y_{k,2}}\in [\frac{9}{10},\frac{10}{9}]$.
    \item $||y_k||_2=\sqrt{y_{k,1}^2+y_{k,2}^2}\geq 0.5$.
\end{enumerate}

Then for any $\eps>0$, there exists a distribution $D$ with support size $N$ such that
$$\frac{\opt(D)}{\optfi(D)}\geq (1-\eps)\cdot \sum_{k=1}^N\frac{\gap_k}{\min\{y_{k,1},y_{k,2}\}}\geq 2(1-\eps)\cdot \sum_{k=1}^N\gap_k$$
\end{corollary}

\begin{proof}
For every $k\in [N]$, let $g_k=1.9\cdot y_k$. By property 2, $g_{k,1}=1.9y_{k,1}\leq y_{k,1}+y_{k,2}\leq 1$. Similarly, $g_{k,2}\leq 1$. Consider the experiment $\Pi^{(k)}$ as follows:
\begin{table}[h]
\centering
\begin{tabular}{ c| c c} \label{tab:payoff matrix}
$\Pi^{(k)}$ & $1$ & $2$\\ 
  \hline
$\omega_1$ & $g_{k,1}$ & $1-g_{k,1}$ \\ 
 $\omega_2$ & $1-g_{k,2}$ &$g_{k,2}$
\end{tabular}
\end{table}

We will verify that under our choice of $\{y_k\}_{k\in[N]}$ and $\{\Pi^{(k)}\}_{k\in[N]}$, both properties in \Cref{lem:ratio-opt and fi} hold by replacing every $V_{y_k}(\Pi^{(q)})$ by $V_{y_k}^*(\Pi^{(q)})$. And thus we can apply the ``Moreover'' part of \Cref{lem:ratio-opt and fi}. Property 1 in~\Cref{lem:ratio-opt and fi} is equivalent to $$\gap_k=\min_{0\leq j<k}\{(g_{k,1}-g_{j,1})\cdot y_{k,1}+(g_{k,2}-g_{j,2})\cdot y_{k,2}\}=1.9\cdot \min_{0\leq j<k} (y_k-y_j)\cdot y_k>0.$$ 

Property 2 of \Cref{lem:ratio-opt and fi} can be simplified to the following inequality: $$y_{k}\cdot g_k\geq \max\{y_{k,1},y_{k,2}\}+\gap_k.$$ To prove the inequality, without loss of generality assume that $y_{k,1}\geq y_{k,2}$. Let $\delta=\frac{y_{k,2}}{y_{k,1}}\in [0.9,1]$. Then we have $||y_k||_2= \sqrt{1+\delta^2}\cdot y_{k,1}\geq 0.5$ which implies that $y_{k,1}\geq 0.5/\sqrt{1+\delta^2}\geq 0.5/\sqrt{2}>0.35$. Thus,
$$y_k\cdot g_k=1.9||y_k||_2^2\geq 0.95 ||y_k||_2= 0.95\cdot \sqrt{1+\delta^2}y_{k,1}\geq 1.27y_{k,1}\geq y_{k,1}+0.09>\max\{y_{k,1},y_{k,2}\}+\gap_k$$
where the first inequality is due to the fact that $||y_k||_2\geq 0.5$, the second inequality follows from $\delta\geq 0.9$, and the third inequality follows from $y_{k,1}>0.35$. As both properties in the statement of \Cref{lem:ratio-opt and fi} are satisfied, there exists a distribution $D$ with support size $N$, together with a {\responsiveIC} and IR mechanism $\MM$ such that
$$\frac{\rev(\MM, D)}{\optfi(D)}\geq (1-\eps)\cdot \sum_{k=1}^N\frac{\gap_k}{U(y_k)}.$$

Under environment $\cE$, $U(y_k)=y_{k,1}+y_{k,2}-\max\{y_{k,1},y_{k,2}\}=\min\{y_{k,1},y_{k,2}\}$, which is at most $\frac{1}{2}$ since $y_{k,1}+y_{k,2}\leq 1$. Thus to complete the proof, it remains to prove that $\MM$ is also IC, which implies $\rev(\MM,D)\leq \opt(D)$. In fact, given any true type $\theta$ and misreported type $\theta'\not=\theta$. Suppose the buyer purchases experiment $E$ (or $E'$) at price $t$ (or $t'$) at type $\theta$ (or $\theta'$). Since there are only 2 actions, for any {\sigmaIC} constraint, the corresponding mapping $\sigma$ can only be $(1,1)$, $(2,2)$ or $(2,1)$. When $\sigma=(1,1)$, the {\sigmaIC} constraint
$$\theta_1\pi_{11}(E)+\theta_2\pi_{22}(E)-t\geq \theta_1(\pi_{11}(E')+\pi_{12}(E'))u_{11}+\theta_2(\pi_{21}(E')+\pi_{22}(E'))u_{21}-t'=\theta_1-t'$$
is satisfied since the mechanism is IR and $t'\geq 0$. Similarly, the {\sigmaIC} constraint is satisfied when $\sigma=(2,2)$. When $\sigma=(2,1)$, the {\sigmaIC} constraint can be written as
\begin{align*}
\theta_1\pi_{11}(E)+\theta_2\pi_{22}(E)-t&\geq \theta_1(\pi_{11}(E')u_{12}+\pi_{12}(E')u_{11})+\theta_2(\pi_{21}(E')u_{22}+\pi_{22}(E')u_{21})-t'\\
&=\theta_1\pi_{12}(E')+\theta_2\pi_{21}(E')-t'
\end{align*}

According to the construction of $\MM$, the purchased experiment $E'$ is $\Pi^{(k)}$ for some $k\in [N]$. Again without loss of generality assume $y_{k,1}\geq y_{k,2}$. From the above analysis we have $g_{k,1}=1.9y_{k,1}\geq 1.9\times 0.35=0.665>0.5$. $g_{k,2}=1.9y_{k,2}\geq 1.9\times 0.9y_{k,1}>0.59>0.5$. Thus 
$$\theta_1\pi_{12}(E')+\theta_2\pi_{21}(E')=\theta_1(1-g_{k,1})+\theta_2(1-g_{k,2})<\theta_1g_{k,1}+\theta_2g_{k,2}=\theta_1\pi_{11}(E')+\theta_2\pi_{22}(E')$$
The {\sigmaIC} constraint is thus implied by the {\responsiveIC} constraint. Hence $\MM$ is IC. 
\end{proof}

}

Here is a proof sketch of \Cref{thm:general-main}. We first show in \Cref{lem:ratio-opt and fi-special} that: Given any sequence $\{y_k\}_{k=1}^N$ of vectors in $\Theta$ that satisfy all properties in the statement, we can construct a distribution $D$ with support size $N$, together with an IC, IR mechanism $\MM$ such that the ratio $\frac{\rev(\MM, D)}{\optfi(D)}$ has a lower bound that only depends on the sequence $\{y_k\}_{k=1}^N$. Properties 2 and 3 in the statement are carefully designed to guarantee that the constructed mechanism is IC and IR. Property 1 ensures the mechanism to obtain enough revenue. With \Cref{lem:ratio-opt and fi-special}, we then complete the proof of \Cref{thm:general-main} by constructing a valid sequence $\{y_k\}_{k=1}^N$ which induces a lower bound of $\Omega(N^{1/7})$.

\begin{lemma}\label{lem:ratio-opt and fi-special}
Consider the matching utility environment with 3 states and 3 actions.
Given any integer $N$ and a sequence of types $\{y_k=(y_{k,1},y_{k,2})\}_{k=1}^N$ in $\Theta$ such that
\begin{enumerate}
    \item $\displaystyle\gap_k=\min_{0\leq j<k} \left\{(y_{k,1}-y_{j,1})\cdot y_{k,1}+(y_{k,2}-y_{j,2})\cdot y_{k,2}\right\}\in (0,0.09)$. Here $y_0=(0,0,1)$.
    \item $\frac{y_{k,1}}{y_{k,2}}\in [\frac{9}{10},\frac{10}{9}]$.
    \item $||y_k||_2=\sqrt{y_{k,1}^2+y_{k,2}^2}\in [0.3, 0.4]$.
\end{enumerate}

Then for any $\eps>0$, there exists a distribution $D$ with support size $N$ such that
$$\frac{\opt(D)}{\optfi(D)}\geq (1-\eps)\cdot \sum_{k=1}^N\frac{\gap_k}{y_{k,1}+y_{k,2}}\geq \frac{3}{2}(1-\eps)\cdot \sum_{k=1}^N\gap_k$$
\end{lemma}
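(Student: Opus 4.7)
The plan is to adapt the Hart--Nisan technique (used in the cited proof of \Cref{thm:ratio_characterization} and Theorem~5.1 of \cite{hart2019selling}) to the three-state matching utility setting. Given the sequence $\{y_k\}_{k=1}^N$, I would construct a family of experiments $\{\Pi^{(k)}\}_{k=1}^N$ and offer them through the menu $\MM = \{(\Pi^{(k)},\,\delta t_k)\}_{k=1}^N$, where $\{t_k\}$ is a sufficiently rapidly increasing sequence satisfying $t_{k+1}/t_k \geq 1/\eps$ and $\delta>0$ is a small constant (fixed below). The distribution $D$ is supported on the scaled types $x_k := \eta_k y_k$ for $\eta_k := \delta t_k/\gap_k$, with probability $\Pr[x_k] = \xi_1(1/\xi_k - 1/\xi_{k+1})$ and $\xi_k := \eta_k U(y_k)$. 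Properties~2 and~3 imply $y_{k,3} = 1-y_{k,1}-y_{k,2}$ is the unique largest coordinate of $y_k$ (in fact $y_{k,1},y_{k,2}\in[0.2,0.3]$), so $U(y_k) = y_{k,1}+y_{k,2}$, and \Cref{thm:ratio_characterization} then gives $\optfi(D)=\xi_1$ as in the proof of Theorem~5.1 of \cite{hart2019selling}.

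For the experiments I would set $\pi_{11}^{(k)} = 1.9\,y_{k,1}$, $\pi_{22}^{(k)} = 1.9\,y_{k,2}$, $\pi_{33}^{(k)} = 1$, with the residual mass in rows~$1$ and~$2$ placed on signal~$3$; property~3 ensures $1.9\,y_{k,i}\in(0,1)$. A direct calculation yields $V_{x_k}^*(\Pi^{(k)}) - V_{x_k}^*(\Pi^{(j)}) = 1.9\eta_k\bigl[y_{k,1}(y_{k,1}-y_{j,1}) + y_{k,2}(y_{k,2}-y_{j,2})\bigr] \geq 1.9\eta_k\gap_k = 1.9\delta t_k$ for every $j<k$, so type $x_k$ strictly prefers $(\Pi^{(k)},\delta t_k)$ over every cheaper option. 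IR holds because $V_{x_k}^*(\Pi^{(k)}) - \delta t_k - x_{k,3} = (1.9\|y_k\|_2^2 - \gap_k)\,\eta_k \geq 0$, using $\|y_k\|_2^2 \geq \gap_k$ (the $j=0$ term in property~1).

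The main obstacle is verifying the full $\sigma$-IC constraints, which I handle by splitting on whether $\sigma(3)=3$. If $\sigma(3)\neq 3$, the state-$3$ contribution $y_3\,\pi_{33}^{(j)}\,u_{3,\sigma(3)}$ vanishes (since $\pi_{33}^{(j)}=1$ is the only nonzero entry in row~$3$ of $\Pi^{(j)}$ and $u_{3,c}=0$ for $c\neq 3$), so $V_{x_k}^{(\sigma)}(\Pi^{(j)}) \leq x_{k,1}+x_{k,2} = \eta_k(y_{k,1}+y_{k,2})$, which is strictly less than the truthful utility lower bound $x_{k,3}$ once $\delta$ is chosen so that $\eta_k(y_{k,1}+y_{k,2}) < 1/2$ for all $k$. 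If $\sigma(3)=3$, then $u_{1,\sigma(3)}=u_{2,\sigma(3)}=0$ annihilates the signal-$3$ contributions from states~$1,2$, and $V_{x_k}^{(\sigma)}(\Pi^{(j)})$ collapses to $x_{k,3} + 1.9\eta_k\bigl[y_{k,1}y_{j,1}\,\ind[\sigma(1)=1] + y_{k,2}y_{j,2}\,\ind[\sigma(2)=2]\bigr]$; a short case analysis on the four indicator combinations, using the gap condition $y_{k,1}(y_{k,1}-y_{j,1})+y_{k,2}(y_{k,2}-y_{j,2})\geq \gap_k$ to absorb any deficit into the $0.9\delta t_k$ surplus from the IR computation, and the explicit bounds $y_{k,1},y_{k,2}\in[0.2,0.3]$ from properties~2--3 to handle the cases $j\geq k$, confirms $\sigma$-IC throughout.

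Once IC and IR are established, the standard Hart--Nisan telescoping computation gives
\[
\rev(\MM,D) \;\geq\; \xi_1 \sum_{k=1}^N \delta t_k\bigl(1/\xi_k - 1/\xi_{k+1}\bigr) \;\geq\; (1-\eps)\,\xi_1\sum_{k=1}^N \delta t_k/\xi_k \;=\; (1-\eps)\,\optfi(D)\sum_{k=1}^N \gap_k/U(y_k),
\]
which is the first inequality of the lemma. The second inequality follows from property~3: $y_{k,1}+y_{k,2}\leq \sqrt{2}\,\|y_k\|_2\leq 0.4\sqrt{2} < 2/3$, so $1/(y_{k,1}+y_{k,2}) \geq 3/2$.
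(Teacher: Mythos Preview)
Your proposal is correct and follows the same Hart--Nisan template as the paper (scaled support $x_k=\eta_k y_k$, the equal-revenue-style probabilities $\xi_1(1/\xi_k-1/\xi_{k+1})$, and the telescoping revenue bound). The only real differences are cosmetic and in how IC is established.

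On the experiments: the paper uses $\pi_{ii}^{(k)}=y_{k,i}$ rather than your $1.9\,y_{k,i}$; the extra factor does no harm but is not needed. On IC: the paper avoids your $\sigma$-case analysis entirely by observing that every type in the support is \emph{responsive} to every $\Pi^{(j)}$. Signals $1$ and $2$ fully reveal the corresponding state (rows $2,3$ put zero mass on signal $1$, etc.), so following the recommendation is trivially best there; upon signal $3$ the conditional expected payoffs of actions $1,2,3$ are $x_{q,1}(1-\pi_{11}^{(j)})$, $x_{q,2}(1-\pi_{22}^{(j)})$, and $x_{q,3}$, and since $x_{q,3}>1/3>x_{q,1},x_{q,2}$ action $3$ wins. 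Hence $V_{x_q}(\Pi^{(j)})=V_{x_q}^*(\Pi^{(j)})$ for all $q,j$, and full IC collapses to the responsive-IC gap inequality you already proved. Your $\sigma(3)=3$ computation is in fact a roundabout proof of this same responsiveness (the maximum over your four indicator combinations is attained at the identity), so the two arguments agree in content; the paper's is just more direct.

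One loose end in your writeup: you refer to handling ``the cases $j\geq k$'' in the $\sigma$-IC analysis. Since $\MM$ is a menu, IC is automatic; the only thing that needs checking is that type $x_k$ does not prefer a \emph{cheaper} option $j<k$, which is exactly your gap inequality. If type $x_k$ buys some $\Pi^{(j)}$ with $j\geq k$ the revenue bound only improves, so there is nothing to verify there. (You also need $\xi_k$ increasing for the probabilities to be nonnegative; the paper imposes this by additionally requiring $(t_k/\gap_k)U(y_k)$ to be increasing when choosing the $t_k$'s.)
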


To complete the proof of \Cref{thm:general-main}, for any integer $N$, we construct a sequence of types $\{y_k\}_{k=1}^N$ that satisfies all three properties in the statement of \Cref{lem:ratio-opt and fi-special}, and $\gap_k=\Theta(k^{-6/7})$. Then by \Cref{lem:ratio-opt and fi-special}, there exists a distribution $D$ with support size $N$ such that (by choosing $\eps=\frac{1}{2}$) 
$$\frac{\opt(D)}{\optfi(D)}\geq \sum_{k=1}^N\gap_k=\Omega(\sum_{k=1}^Nk^{-6/7})=\Omega(N^{1/7})$$
The construction is adapted from Proposition 7.5 of \cite{hart2019selling}: All points $\{y_k\}_{k=1}^N$ are placed in a sequence of shells centered at $(0,0)$ with radius within the range of $[0.3, 0.4]$ (for Property 3). All points are placed in a thin circular sector close to the $45^\circ$ angle so that Property 2 is satisfied. See \Cref{sec:appx_multi_lower_bound} for the complete construction and proof.

\subsection{Matching Utility Environment with Uniform Distribution}\label{sec:matching_utilities}


As the main result of this section, we complement the result in \Cref{sec:general_setting}, by showing that in the matching utility environment with 3 states and 3 actions, selling complete information is indeed optimal if the type distribution is uniform.

\begin{theorem}\label{Thm:multi-state-uniform}
Consider the matching utility environment with $n=3$ states and $3$ actions, where the buyer has uniform type distribution. 
Then selling only the fully informative experiment at price 
$p=\frac{1}{3}$ achieves the maximum revenue among all IC, IR mechanisms.
\end{theorem}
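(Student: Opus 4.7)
We plan to prove optimality by constructing an explicit dual witness for a transportation-type dual, in the spirit of the strong-duality framework of~\cite{daskalakis2017strong} adapted to this information-design setting.

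First, by the {\responsiveIC} reduction (\Cref{lem:responsive menu}) we restrict to mechanisms where the agent obeys the recommended action, and parametrize any such mechanism $\MM$ by the interim utility $W(\theta)=V_\theta^{*}(\MM(\theta))-t(\theta)$. Incentive compatibility forces $W$ to be convex, and under matching utility
$V_\theta^{*}(E)=\pi_{33}(E)+\theta_1\bigl(\pi_{11}(E)-\pi_{33}(E)\bigr)+\theta_2\bigl(\pi_{22}(E)-\pi_{33}(E)\bigr)$, so $\nabla W(\theta)=(\pi_{11}-\pi_{33},\pi_{22}-\pi_{33})$ almost everywhere, while IR gives $W(\theta)\geq u(\theta)=\max_i\theta_i$. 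The revenue decomposes as
\[\int_\Theta\bigl(\theta\cdot\nabla W(\theta)+\pi_{33}(\MM(\theta))-W(\theta)\bigr)f(\theta)\,d\theta.\]
We relax in two steps: drop the {\sigmaIC} constraints, and choose $\pi_{33}(\theta)$ pointwise to maximize the integrand subject to $\pi_{ii}\in[0,1]$, giving $\pi_{33}(\theta)=1-\max(0,\partial_1 W(\theta),\partial_2 W(\theta))$. The relaxed problem is then purely in the convex function $W$ with $W\geq u$ and $\nabla W$ constrained to the polytope of admissible gradients, and its value upper bounds $\opt$.

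To this relaxation we apply Lagrangian duality in the form of~\cite{daskalakis2017strong}: the IR constraint is dualized by a non-negative measure on $\Theta$, and convexity together with the gradient-polytope constraint is encoded by a signed measure on an auxiliary space. Strong duality produces a transportation-type program whose feasibility combined with complementary slackness certifies optimality of a given primal candidate. Concretely, it suffices to exhibit a non-negative measure $\gamma^{*}$, supported on the IR-binding set $\{W^{*}=u\}$ together with the boundary loci where $\nabla W^{*}$ jumps between extreme points of the gradient polytope, whose balance condition against the uniform primal density $f\equiv 2$ is satisfied.

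For the candidate mechanism -- selling only the fully informative experiment at price $p=\tfrac{1}{3}$ -- the induced utility is $W^{*}(\theta)=\max\{\tfrac{2}{3},\theta_1,\theta_2,\theta_3\}$. This partitions $\Theta$ into a central buy region $R_0=\{\max_i\theta_i\leq \tfrac{2}{3}\}$ on which $\nabla W^{*}=0$, and three non-buy corner regions $R_i=\{\theta_i>\tfrac{2}{3}\}$ on which $\nabla W^{*}$ equals $(1,0)$, $(0,1)$, or $(-1,-1)$ respectively; one checks that the induced $(\pi_{33},t)$ from the relaxed-problem recipe reproduces $t=\tfrac{1}{3}$ on $R_0$ and $t=0$ elsewhere. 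Exploiting the three-fold symmetry of both the uniform measure and the payoff structure, we propose a symmetric $\gamma^{*}$ that places the primal density on each non-buy region $R_i$ (reflecting binding IR there) together with singular components on the three flat faces $\partial R_0\cap\partial R_i$, with magnitudes fixed by the jump of $\nabla W^{*}$ across each face. The main obstacle will be pinning these singular components down so that feasibility and complementary slackness hold simultaneously; symmetry reduces the verification to a single face, and careful bookkeeping between the density on $R_i$ and the gradient jump across $\partial R_0\cap\partial R_i$ confirms that no extra mass is needed inside $R_0$ or at its vertices. Once $\gamma^{*}$ is constructed, matching the primal revenue $\tfrac{1}{3}\cdot\Pr[\theta\in R_0]=\tfrac{2}{9}$ to the dual objective is a direct symmetric calculation, concluding that selling only the fully informative experiment at price $\tfrac{1}{3}$ is the optimal IC, IR mechanism.
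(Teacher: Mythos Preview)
Your high-level plan—relax to a convex program in the interim utility $W$ and certify $W^*=\max\{\tfrac23,\theta_1,\theta_2,\theta_3\}$ via a transportation-type dual—matches the paper's. The gap is in the dual witness. The term $-\max\{0,\partial_1 W,\partial_2 W\}$ in the revenue integrand is not linear in $W$; to apply the divergence theorem you must first fix a partition $P=(\Theta_1,\Theta_2,\Theta_3)$ recording which argument attains the max at each $\theta$. The resulting signed measure $\mu^P$ then carries singular mass not on the faces $\partial R_0\cap\partial R_i$ you identify, but on the \emph{interior partition boundaries}—the three segments $S_{ij}=\{\theta_i=\theta_j\ge\theta_k\}$ running from the centroid to the edge-midpoints of $\Theta$. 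Concretely (\Cref{obs:opt-mu-P}), $\mu^{P^*}$ is $+1$ at $\mathbf{0}$, plus $+\tfrac23$ on each of these three segments, plus $-3$ uniform on $\Theta$. So your assertion that ``no extra mass is needed inside $R_0$'' is precisely where the argument fails: handling the mass inside $R_0$ is the crux.

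The paper's dual accordingly carries both the partition $P$ and a transport plan $\gamma\in\Gamma_+(\Theta\times\Theta)$ with cost $c(\theta,\theta')=\max_E\{V_\theta(E)-V_{\theta'}(E)\}$, under the constraint that $\gamma_1-\gamma_2$ strongly convex-dominates $\mu^P$. The optimal $\gamma^*$ is not supported on boundary faces as you propose; it transports unit mass from the single point $\mathbf{0}$ uniformly onto the non-buy region $\Omega_2$. This cancels the point mass at $\mathbf{0}$ and the negative density on $\Omega_2$, leaving a residual of $+\tfrac23$ on each interior segment and $-2$ uniform on the buy hexagon $\Omega_1=R_0$. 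Since $W^*$ is constant on $\Omega_1$, dual feasibility and complementary slackness reduce to sweeping this residual to zero by mean-preserving spreads within the hexagon—an explicit geometric argument (\Cref{lem:verify-optimality}) that exploits the specific shape of $\Omega_1$ at $p=\tfrac13$ and is the technical heart of the proof. Your sketch omits both the partition ingredient and this sweeping step.
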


The remaining of this section is dedicated to the proof of \Cref{Thm:multi-state-uniform}. We first provide a high-level plan of the proof. In \Cref{sec:upper_bound_full_info}, we considered a relaxation of our problem, which finds the optimal {\responsiveIC} and IR mechanism. Here we propose another relaxed problem of finding the optimal menu for any matching utility environment (\Cref{fig:transport-primal}). Then we construct a dual problem (\Cref{fig:transport-dual}) in the form of optimal transportation~\cite{villani2009optimal, daskalakis2017strong, rochet1998ironing}. This primal-dual framework provides a general approach to prove the optimality of any menu $\MM$: If we can construct a feasible dual that satisfies the complementary slackness conditions, then $\MM$ is the optimal primal solution to the relaxed problem, which implies that $\MM$ is the optimal menu. We apply this framework to the matching utility environment with 3 states and uniform type distribution, proving that selling complete information at $p=\frac{1}{3}$ is optimal.        


\subsubsection{Construction of the Primal and Dual Problem}

By \Cref{lem:responsive menu}, we will focus on responsive mechanisms $\MM=\{(E(\theta),t(\theta))\}_{\theta\in \Theta}$. For any $\theta\in \Theta$, denote $\pi_{ij}(\theta)$ the $(i,j)$-entry of experiment $E(\theta)$. We also use $\pi_i(\theta)$ to represent $\pi_{ii}(\theta)$. For every measurable set $S\subseteq \Theta$, denote $\vol(S)=\int_{S}1d\theta$ the volume of $S$. 
We first prove the following lemma that applies to any matching utility environment. 

\begin{lemma}\label{lem:one-column-has-1}
In any matching utility environment, there is an optimal responsive mechanism $\MM$ such that: for every $\theta\in \Theta$, there exists $i\in [n]$ such that $\pi_i(\theta)=1$. 
\end{lemma}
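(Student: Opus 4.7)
The plan is to start from any optimal responsive mechanism $\MM = \{(E(\theta), t(\theta))\}_{\theta \in \Theta}$ and show that whenever $c(\theta) := 1 - \max_i \pi_i(\theta) > 0$ on a set of positive measure, a pointwise modification of $\MM$ at such types strictly increases revenue while preserving IC, IR, and responsiveness. Optimality of $\MM$ will then force $c(\theta) = 0$ almost surely, and the same modification performed on any remaining measure-zero exceptional set yields an optimal responsive mechanism satisfying the claim at every $\theta$.

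The modification at a type $\theta$ with $c(\theta)>0$ is as follows. Pick $i^* \in \argmax_i \pi_i(\theta)$ and define $E'(\theta)$ by: $\pi'_{i^*, i^*}(\theta) = 1$ and $\pi'_{i^*, j}(\theta) = 0$ for $j \neq i^*$; for each row $i \neq i^*$, $\pi'_{ii}(\theta) = \pi_i(\theta) + c(\theta)$, and the off-diagonal entries in that row are rescaled by the factor $(1 - \pi_i(\theta) - c(\theta))/(1 - \pi_i(\theta))$. Because $c(\theta) \le 1 - \pi_i(\theta)$ for every $i$ by the choice of $i^*$, every entry lies in $[0,1]$ and each row still sums to $1$, so $E'(\theta)$ is a valid experiment with $\pi'_{i^*}(\theta) = 1$. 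Finally, set $t'(\theta) = t(\theta) + c(\theta)$.

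The crux of the argument is the uniform bound
\[
V_{\theta'}^{(\sigma)}(E'(\theta)) - V_{\theta'}^{(\sigma)}(E(\theta)) \le c(\theta) \qquad \text{for every } \theta' \in \Theta \text{ and every } \sigma : [m] \to [m],
\]
with equality when $\theta' = \theta$ and $\sigma$ is the identity. Using the matching-utility identity $V_{\theta'}^{(\sigma)}(E) = \sum_i \theta'_i \sum_{j \in \sigma^{-1}(i)} \pi_{ij}$, the change decomposes row by row. In row $i^*$, the change summed over $\sigma^{-1}(i^*)$ equals $c(\theta)\cdot \ind[i^* \in \sigma^{-1}(i^*)] - \sum_{j \in \sigma^{-1}(i^*)\setminus\{i^*\}} \pi_{i^*, j}(\theta)$, which is at most $c(\theta)$; in every other row $i$, the proportional scaling yields exactly $c(\theta)\cdot \ind[i \in \sigma^{-1}(i)]$ minus a non-negative quantity, hence also at most $c(\theta)$. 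Summing against $\theta'_i$ and using $\sum_i \theta'_i = 1$ gives the bound, and plugging in $\theta'=\theta$, $\sigma=\mathrm{id}$ saturates it.

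Everything else follows easily from this bound. IR and the self-IC at $\theta$ are preserved since the truthful value and the payment both increase by exactly $c(\theta)$. Each $\sigma$-IC constraint for any $\theta'$ mis-reporting $\theta$ is preserved because the right-hand side grows by at most $c(\theta)$ while the payment grows by exactly $c(\theta)$. Responsiveness at $\theta$ follows from the equality case. Therefore $\rev(\MM') - \rev(\MM) = \int c(\theta)\,dF(\theta) \ge 0$, and optimality of $\MM$ forces $c(\theta) = 0$ almost everywhere; applying the same modification on the remaining measure-zero set preserves all constraints pointwise without affecting revenue. The main obstacle is establishing the uniform bound for \emph{all} maps $\sigma$, including non-permutations: naive redistributions of the removed off-diagonal mass can be exploited by an adversarial $\sigma$ to blow up the right-hand side, and the proportional row-wise scaling is precisely what makes the row-by-row bound hold simultaneously for every $\sigma$.
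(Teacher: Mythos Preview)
Your proof is correct and takes essentially the same approach as the paper: shift mass $c(\theta)$ onto each diagonal entry and raise the payment by $c(\theta)$, then verify IC, IR, and responsiveness are preserved---your uniform bound over all $\sigma$ makes explicit a step the paper leaves implicit. The detour through ``optimality forces $c=0$ a.e.'' is unnecessary, however: the modified mechanism $\MM'$ already has $\max_i \pi'_i(\theta)=1$ at every $\theta$ and weakly higher revenue, so it is itself an optimal responsive mechanism with the desired property and you can stop there.
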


By \Cref{lem:one-column-has-1}, we focus on all mechanisms that are responsive and satisfy: for every $\theta\in \Theta$, there exists $i\in [n]$ such that $\pi_i(\theta)=1$. Recall that $V_{\theta}(E)$ (or $V_{\theta}^*(E)$) is the buyer's value of experiment $E$ (or the value if she follows the recommendation) at type $\theta$. For any $\theta$, denote $G(\theta)=V_{\theta}(E(\theta))-t(\theta)$ the buyer's utility at type $\theta$. Since the mechanism is responsive, we have $$G(\theta)=V_{\theta}^*(E(\theta))-t(\theta)=\sum_{i=1}^{n-1}\pi_i(\theta)\cdot \theta_i+\pi_n(\theta)\cdot \left(1-\sum_{i=1}^{n-1}\theta_i\right)-t(\theta)$$  

We prove the following observation using the IC constraints.

\begin{observation}\label{obs:partial G}
For any $i\in [n-1]$, $\frac{\partial G(\theta)}{\partial \theta_i}=\pi_i(\theta)-\pi_n(\theta)$.
\end{observation}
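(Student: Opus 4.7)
The plan is to apply a standard envelope/IC squeeze argument. Observe first that by the IC constraints, we can write
\[
G(\theta) \;=\; \sup_{\theta' \in \Theta} \bigl[\, V_\theta^*(E(\theta')) - t(\theta') \,\bigr],
\]
with the supremum attained at $\theta' = \theta$. For any fixed $\theta'$, the function $\theta \mapsto V_\theta^*(E(\theta'))$ is linear in $\theta$: substituting $\theta_n = 1 - \sum_{j=1}^{n-1}\theta_j$ into the responsive value
\[
V_\theta^*(E(\theta')) \;=\; \sum_{j=1}^{n-1}\pi_j(\theta')\,\theta_j + \pi_n(\theta')\Bigl(1 - \sum_{j=1}^{n-1}\theta_j\Bigr),
\]
we see that its partial derivative with respect to $\theta_i$ (for $i \in [n-1]$) equals $\pi_i(\theta') - \pi_n(\theta')$ at every $\theta$. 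Hence $G$ is a supremum of affine functions of $\theta$, and in particular convex on $\Theta$, so it is differentiable almost everywhere.

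Next I would turn the IC inequalities into a two-sided squeeze. Fix $\theta$ in the interior of $\Theta$ and let $\delta > 0$ be small enough that $\theta + \delta e_i \in \Theta$, where $e_i$ is the $i$-th standard basis vector. Applying IC at the pair $(\theta, \theta + \delta e_i)$ and at $(\theta + \delta e_i, \theta)$, and using the linearity computation above, yields
\[
\delta\,[\pi_i(\theta) - \pi_n(\theta)] \;\le\; G(\theta + \delta e_i) - G(\theta) \;\le\; \delta\,[\pi_i(\theta + \delta e_i) - \pi_n(\theta + \delta e_i)].
\]
A symmetric chain of inequalities holds for $\delta < 0$. Dividing by $\delta$ and letting $\delta \to 0$ gives $\partial G(\theta)/\partial \theta_i = \pi_i(\theta) - \pi_n(\theta)$ at every point where the partial derivative exists, provided $\pi_i$ and $\pi_n$ are continuous at $\theta$; otherwise one lands on the standard sub/superdifferential conclusion that $\pi_i(\theta) - \pi_n(\theta)$ lies in the subdifferential, and the claimed equality of partial derivative and allocation coordinate difference is understood in the almost-everywhere sense guaranteed by the convexity of $G$. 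This is the envelope theorem in its most elementary guise, and no step beyond the squeeze argument should be needed.

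I do not expect an obstacle here: the only subtlety is the boundary of $\Theta$ and points where $G$ fails to be differentiable, and these are handled by the standard caveat that the identity holds wherever the partial derivative exists. The observation will then be used in subsequent steps as a Mirrlees-style envelope formula, so stating it pointwise a.e.\ is sufficient.
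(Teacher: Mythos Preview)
Your proposal is correct and follows essentially the same envelope argument as the paper. The paper's proof is terser: it records only the one-sided inequality $G(\theta')-G(\theta)\ge \sum_{i=1}^{n-1}(\pi_i(\theta)-\pi_n(\theta))(\theta_i'-\theta_i)$ coming from IC (this is exactly your lower bound with $\theta'=\theta+\delta e_i$) and then says ``taking derivative on both sides at $\theta'=\theta$ finishes the proof,'' implicitly using that a subgradient of a convex function equals the gradient wherever the latter exists. Your two-sided squeeze just makes this last step explicit; the content is the same.
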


Let $\nabla G(\theta)=(\frac{\partial G(\theta)}{\partial \theta_1},\ldots, \frac{\partial G(\theta)}{\partial \theta_{n-1}})$. By \Cref{obs:partial G} and the fact that $\max_{i\in [n]}\{\pi_i(\theta)\}=1,\forall \theta\in \Theta$, we have
$$t(\theta)=-G(\theta)+\nabla G(\theta)\cdot \theta+\pi_n(\theta)=-G(\theta)+\nabla G(\theta)\cdot \theta+1-\max\left\{\frac{\partial G(\theta)}{\partial \theta_1},\ldots, \frac{\partial G(\theta)}{\partial \theta_{n-1}}, 0\right\}$$

In \Cref{obs:c-function}, we prove a necessary condition under which a function $G(\cdot)$ is derived by some IC, IR and responsive mechanism. 

\begin{definition}\label{def:transport-cost}
For every type $\theta, \theta'\in \Theta$, define $c(\theta, \theta')=\max_{E}\left\{V_{\theta}(E)-V_{\theta'}(E)\right\}$, where the maximum is taken over all possible experiments with $n$ states and $n$ actions.  
\end{definition}

\begin{lemma}\label{obs:c-function}
For any IC, IR and responsive mechanism $\MM$, let $G(\cdot)$ be the buyer's utility function. Then $G(\cdot)$ is convex and satisfies $G(\theta)-G(\theta')\leq c(\theta,\theta'), \forall \theta,\theta'\in \Theta$. Moreover, $G(\textbf{0})=1$.
\end{lemma}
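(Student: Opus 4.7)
The plan is to verify the three claims, convexity, the $c$-Lipschitz bound, and the boundary condition at $\mathbf{0}$, one at a time, each as a short consequence of the IC/IR constraints plus responsiveness.

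For convexity, I would rewrite $G$ as an upper envelope of affine functions. The full IC inequality~\eqref{equ:ic} says that for every reporting type $\theta'$ and every post-signal relabeling $\sigma:[m]\to[m]$,
\[
G(\theta) \;=\; V_\theta^{*}(E(\theta)) - t(\theta) \;\geq\; V_\theta^{(\sigma)}(E(\theta')) - t(\theta'),
\]
with equality attained at $(\theta',\sigma)=(\theta,\mathrm{id})$ by responsiveness. Hence
\[
G(\theta) \;=\; \sup_{\theta'\in\Theta,\,\sigma:[m]\to[m]}\bigl[V_\theta^{(\sigma)}(E(\theta')) - t(\theta')\bigr].
\]
For fixed $(\theta',\sigma)$, $V_\theta^{(\sigma)}(E(\theta'))=\sum_{i,j}\theta_i\,\pi_{ij}(E(\theta'))\,u_{i,\sigma(j)}$ is affine in $\theta$, so $G$ is a supremum of affine functions and therefore convex.

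For the Lipschitz bound, I would apply IC at the report $\theta'$. Choosing the $\sigma$ that attains $V_{\theta'}(E(\theta)) = \max_\sigma V_{\theta'}^{(\sigma)}(E(\theta))$ gives $G(\theta') \geq V_{\theta'}(E(\theta)) - t(\theta)$, which rearranges to $t(\theta) \geq V_{\theta'}(E(\theta)) - G(\theta')$. Substituting into $G(\theta) = V_\theta^{*}(E(\theta)) - t(\theta) = V_\theta(E(\theta)) - t(\theta)$ (responsiveness) yields
\[
G(\theta) - G(\theta') \;\leq\; V_\theta(E(\theta)) - V_{\theta'}(E(\theta)) \;\leq\; \sup_{E}\bigl[V_\theta(E) - V_{\theta'}(E)\bigr] \;=\; c(\theta,\theta'),
\]
which is exactly the desired inequality.

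For $G(\mathbf{0})=1$, I would exploit the matching-utility structure. At $\theta=\mathbf{0}$ we have $\theta_n=1$, so for \emph{any} experiment $E$,
\[
V_{\mathbf{0}}(E) \;=\; \sum_{k=1}^{n}\max_{j}\,\pi_{nk}\,u_{nj} \;=\; \sum_{k=1}^{n}\pi_{nk}\cdot u_{nn} \;=\; 1,
\]
and likewise $u(\mathbf{0}) = u_{nn}=1$. The IR constraint then forces $G(\mathbf{0}) = 1 - t(\mathbf{0}) \geq 1$, i.e., $t(\mathbf{0})\leq 0$. The only genuine subtlety is the opposite inequality $t(\mathbf{0})\geq 0$: I would argue that without loss of generality $t(\mathbf{0})=0$, since if $t(\mathbf{0})<0$ the seller could replace the option assigned to type $\mathbf{0}$ with the uninformative experiment at price $0$ without disturbing IC/IR for any other type (the new option gives every type value exactly $u(\theta)$, which never tempts a type away from its assigned option). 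I do not anticipate any serious obstacle; the only care needed is invoking the $\sigma$-IC form (rather than just responsive-IC) in the Lipschitz step and justifying the non-negativity of $t(\mathbf{0})$ in the boundary step.
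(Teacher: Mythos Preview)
Your proposal is correct and follows essentially the same route as the paper: convexity via the envelope representation $G(\theta)=\sup_{\theta',\sigma}[V_\theta^{(\sigma)}(E(\theta'))-t(\theta')]$, the Lipschitz bound by applying IC at $\theta'$ reporting $\theta$, and $G(\mathbf{0})\geq 1$ from IR together with $u(\mathbf{0})=1$.

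The only place you diverge is in establishing $G(\mathbf{0})\leq 1$. The paper simply says ``the buyer's value for any experiment is at most $1$, thus $G(\mathbf{0})\leq 1$,'' which silently uses $t(\mathbf{0})\geq 0$. You instead try a WLOG replacement of the option at $\mathbf{0}$. That argument is fine for an \emph{optimal} mechanism (which is how the lemma is actually used downstream), but it does not literally match the lemma's ``for any IC, IR, responsive $\MM$'' phrasing; replacing the option changes the mechanism. The cleaner fix is the paper's: just take non-negative payments as part of the model, so $G(\mathbf{0})=V_{\mathbf{0}}(E(\mathbf{0}))-t(\mathbf{0})\leq 1-0=1$ directly.
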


We notice that given any convex function $G$ that satisfies $G(\theta)-G(\theta')\leq c(\theta,\theta'),\forall \theta,\theta'$. The function $\widehat{G}(\theta)=G(\theta)-G(\textbf{0})+1$ satisfies all properties in \Cref{obs:c-function}. Thus a relaxation of the problem of finding the revenue-optimal menu can be written as the optimization problem in \Cref{fig:transport-primal}. Here we replace $G(\theta)$ by $G(\theta)-G(\textbf{0})+1$ and remove the constraint $G(\textbf{0})=1$. 

\begin{figure}[ht!]
\colorbox{MyGray}{
\begin{minipage}{.98\textwidth}
$$\sup_{\substack{\text{$G$ is convex} \\ G(\theta)-G(\theta')\leq c(\theta,\theta'), \forall \theta, \theta'}}\int_\Theta \left[-G(\theta)+\nabla G(\theta)\cdot \theta-\max\left\{\frac{\partial G(\theta)}{\partial \theta_1},\ldots, \frac{\partial G(\theta)}{\partial \theta_{n-1}}, 0\right\}+G(\textbf{0})\right]f(\theta)d\theta $$
\end{minipage}}
\caption{The Relaxed Problem of the Optimal Menu }\label{fig:transport-primal}
\end{figure}

In the next step, we construct a dual problem in the form of optimal transportation.
We first introduce some useful notations in the measure theory.

\begin{itemize}
    \item $\Gamma(\Theta)$ and $\Gamma_+(\Theta)$ denote the signed and unsigned (Radon) measures on $\Theta$.
    \item Given any unsigned measure $\gamma\in \Gamma_+(\Theta\times \Theta)$, denote $\gamma_1, \gamma_2$ the two marginals of $\gamma$. Formally, $\gamma_1(A)=\gamma(A\times \Theta)$ and $\gamma_2(A)=\gamma(\Theta\times A)$ for any measurable set $A\subseteq \Theta$. \item Given any signed measure $\mu\in \Gamma(\Theta)$, denote $\mu_+, \mu_-$ the positive and negative parts of $\mu$ respectively, i.e. $\mu=\mu_+-\mu_-$.
    \item Given a measure $\mu\in \Gamma(\Theta)$. A \emph{mean-preserving spread} operation in set $A\subseteq \Theta$ is a sequence of the following operation: picking a positive point mass on $\theta\in A$, splitting it into several pieces, and sending these pieces to multiple points in $A$ while preserving the center of mass.
\end{itemize}

We define \emph{strongly convex dominance} similar to the notion of convex dominance in \cite{shaked2007stochastic, daskalakis2017strong}. The main difference here is that only the mean-preserving spread operation is allowed during the transformation between two measures. \footnote{In \cite{shaked2007stochastic, daskalakis2017strong}, the notion of convex dominance is less restricted: $\mu$ convex dominates $\mu'$ if $\int_{\Theta}gd\mu\geq \int_{\Theta}gd\mu'$ for any continuous, \textbf{non-decreasing}, convex function $g$ on $\Theta$. Under this notion, $\mu'$ can be transformed to $\mu$ by either performing the mean-preserving spread, or sending a positive mass to coordinatewise larger points. In our notion, the inequality holds for any convex function that is not necessarily non-decreasing. Thus the second operation may make $\int gd\mu'$ smaller and is not allowed during the transformation.}


\begin{definition}
Given any $\mu, \mu'\in \Gamma(\Theta)$, we say that $\mu$ \emph{strongly convex dominates} $\mu'$ (denoted as $\mu\cuxd \mu'$) if $\mu'$ can be transformed to $\mu$ by performing a mean-preserving spread. By Jensen's inequality, if $\mu\cuxd \mu'$, then $\int_{\Theta}gd\mu\geq \int_{\Theta}gd\mu'$ for any continuous, convex function $g$ on $\Theta$. 
\end{definition}

In the following definition, we define a measure $\mu^P$ for any partition $P=(\Theta_1,\ldots,\Theta_n)$ of $\Theta$ such that every component $\Theta_i$ is compact. Denote $\cP$ the set of all partitions. 

\begin{definition}\label{def:measure}
Given any partition $P=(\Theta_1,\ldots,\Theta_n)$ of $\Theta$ such that every $\Theta_i$ is compact.~\footnote{Here we assume each $\Theta_i$ to be compact such that a intergral on $\Theta_i$ is well-defined. Two $\Theta_i$s may overlap on their boundaries. We still refer to $(\Theta_i,\ldots,\Theta_n)$ a partition of $\Theta$ as the set of overlapping points has 0 measure with respect to the density $f$.} We define a measure $\mu^P\in \Gamma(\Theta)$ as follows:
\begin{align*}
\mu^P(A)=\ind_A(\textbf{0})&+\int_{\partial \Theta}\ind_A(\theta)f(\theta)(\theta\cdot \bm{n})d\theta-n\cdot\int_{\Theta}\ind_A(\theta)f(\theta)d\theta-\sum_{i=1}^{n-1}\int_{\partial \Theta_i}\ind_A(\theta)f(\theta)(\bm{e}_i\cdot \bm{n}_i)d\theta\\
&-\int_{\Theta}\ind_A(\theta)\cdot (\nabla f(\theta)\cdot \theta)d\theta+\sum_{i=1}^{n-1}\int_{\Theta_i}\ind_A(\theta)\cdot(\bm{e}_i\cdot \nabla f(\theta))d\theta  
\end{align*}
for any measurable set $A$. Here $\partial \Theta$ (or $\partial \Theta_i$) is the boundary of $\Theta$ (or $\Theta_i$). $\bm{n}$ (or $\bm{n}_i$) is the outward pointing unit normal at each point on the boundary $\partial \Theta$ (or $\partial \Theta_i$). 
$\bm{e}_i$ is the $i$-th unit vector of dimension $n-1$. $\ind_A(\theta)=\ind[\theta\in A]$. 
\end{definition}

We prove the following lemma by applying the divergence theorem. 
\begin{lemma}\label{lem:divergence}
For any differentiable function $G$, we have
$$\int_\Theta G(\theta)d\mu^P=\int_{\Theta}\left[-G(\theta)+\nabla G(\theta)\cdot \theta+G(\textbf{0})\right]f(\theta)d\theta-\sum_{i=1}^{n-1}\int_{\Theta_i} \frac{\partial G(\theta)}{\partial \theta_i}f(\theta)d\theta$$   
Choosing $G(\theta)=1,\forall \theta\in\Theta$ implies that $\mu^P(\Theta)=0$.
\end{lemma}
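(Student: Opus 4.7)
The plan is to compute $\int_\Theta G(\theta)\, d\mu^P$ term-by-term using the definition of $\mu^P$ in \Cref{def:measure}, and then convert the boundary integrals into volume integrals by the divergence theorem. The left-hand side involves two boundary integrals, one over $\partial\Theta$ with kernel $f(\theta)(\theta\cdot\bm{n})$ and one over each $\partial\Theta_i$ with kernel $f(\theta)(\bm{e}_i\cdot\bm{n}_i)$; these are precisely the surface integrals one obtains from applying the divergence theorem to the vector fields $G(\theta)f(\theta)\,\theta$ on $\Theta$ and $G(\theta)f(\theta)\,\bm{e}_i$ on $\Theta_i$, respectively.

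Concretely, I would first write
\[
\int_{\partial\Theta} G(\theta)f(\theta)(\theta\cdot\bm{n})\,d\theta = \int_\Theta \divg\bigl(G(\theta)f(\theta)\,\theta\bigr)\,d\theta,
\]
and expand the divergence by the product rule to obtain $(\nabla G\cdot\theta)f + G(\nabla f\cdot\theta) + (n-1)Gf$ (using $\divg(\theta) = n-1$ in $n-1$ dimensional $\theta$-space). Similarly,
\[
\int_{\partial\Theta_i} G(\theta)f(\theta)(\bm{e}_i\cdot\bm{n}_i)\,d\theta = \int_{\Theta_i} \frac{\partial (Gf)}{\partial\theta_i}\,d\theta = \int_{\Theta_i}\!\Bigl[\frac{\partial G}{\partial\theta_i}f + G(\bm{e}_i\cdot\nabla f)\Bigr]d\theta.
\]

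Substituting these into the definition of $\int_\Theta G\,d\mu^P$, the term $\int_\Theta G(\nabla f\cdot\theta)\,d\theta$ coming from the divergence of $Gf\,\theta$ cancels with the explicit $-\int_\Theta G(\nabla f\cdot\theta)\,d\theta$ in the definition of $\mu^P$, and the boundary-of-$\Theta_i$ terms produce $-\sum_i\int_{\Theta_i}[\partial_i G \cdot f + G(\bm{e}_i\cdot\nabla f)]d\theta$ whose second piece cancels with the explicit $+\sum_i\int_{\Theta_i}G(\bm{e}_i\cdot\nabla f)d\theta$. The remaining $Gf$ coefficients combine to $(n-1)-n = -1$, yielding $-\int_\Theta Gf\,d\theta$. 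Finally, since $f$ is a probability density on $\Theta$, $\int_\Theta f\,d\theta = 1$, so the isolated constant $G(\textbf{0})$ can be written as $G(\textbf{0})\int_\Theta f\,d\theta$, producing exactly the desired identity
\[
\int_\Theta G(\theta)\,d\mu^P = \int_\Theta\bigl[-G(\theta)+\nabla G(\theta)\cdot\theta+G(\textbf{0})\bigr]f(\theta)\,d\theta - \sum_{i=1}^{n-1}\int_{\Theta_i}\frac{\partial G(\theta)}{\partial\theta_i}f(\theta)\,d\theta.
\]
The ``$\mu^P(\Theta)=0$'' claim follows immediately by taking $G\equiv 1$, since then $\nabla G\equiv 0$, the bracket reduces to $-1 + 1 = 0$, and the sum over $\Theta_i$ vanishes.

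The main obstacle is not conceptual but rather bookkeeping: one must carefully verify that $\Theta$ and each $\Theta_i$ is sufficiently regular (compact with piecewise-smooth boundary) for the divergence theorem to apply, and keep track of the orientation of outward normals $\bm{n},\bm{n}_i$. Because the partition pieces may share boundary with $\partial\Theta$, the same physical face can appear in both the $\partial\Theta$ and $\partial\Theta_i$ integrals, but with consistent normals, so no additional cancellation across pieces is needed. Since the statement only assumes $G$ is differentiable (not twice-differentiable), I would apply the product rule pointwise and invoke the divergence theorem for $C^1$ vector fields, which is standard.
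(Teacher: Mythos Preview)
Your proposal is correct and follows essentially the same approach as the paper: apply the divergence theorem to the vector fields $G(\theta)f(\theta)\,\theta$ on $\Theta$ and $G(\theta)f(\theta)\,\bm{e}_i$ on each $\Theta_i$, expand via the product rule, and substitute back into the definition of $\mu^P$ so that the $\nabla f$ terms cancel and the $Gf$ coefficients combine to $-1$. The only difference is presentational---you start from $\int G\,d\mu^P$ and substitute the divergence identities in, whereas the paper records the two divergence-theorem identities first and then assembles them---but the computation and the key ideas are identical.
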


Now we are ready to define the dual problem (\Cref{fig:transport-dual}).

\begin{figure}[ht!]
\colorbox{MyGray}{
\begin{minipage}{.98\textwidth}
$$\inf_{\substack{P\in \cP, \gamma\in \Gamma_+(\Theta\times\Theta)\\ \gamma_1-\gamma_2\cuxd \mu^P}} \int_{\Theta\times \Theta} c(\theta,\theta')d\gamma(\theta,\theta')$$
\end{minipage}}
\caption{The Dual Problem in Form of Optimal Transportation}\label{fig:transport-dual}
\end{figure}

To have a better understanding of the dual problem, we prove the following lemma which shows that the weak duality holds. 

\begin{lemma}\label{lem:weak-duality}
For any feasible solution $G$ to the primal (\Cref{fig:transport-primal}) and any feasible solution $(P=(\Theta_1,\ldots, \Theta_n), \gamma)$ to the dual (\Cref{fig:transport-dual}),
\begin{align*}
\int_\Theta \left[-G(\theta)+\nabla G(\theta)\cdot \theta-\max\left\{\frac{\partial G(\theta)}{\partial \theta_1},\ldots, \frac{\partial G(\theta)}{\partial \theta_{n-1}}, 0\right\}+G(\textbf{0})\right]f(\theta)d\theta
\leq \int_{\Theta\times \Theta} c(\theta,\theta')d\gamma(\theta,\theta')
\end{align*}
The inequality achieves equality if and only if all of the following conditions are satisfied:
\begin{enumerate}
    \item $\frac{\partial G(\theta)}{\partial \theta_i}=\max\left\{\frac{\partial G(\theta)}{\partial \theta_1},\ldots, \frac{\partial G(\theta)}{\partial \theta_{n-1}}, 0\right\}$ for every $i\in [n-1], \theta\in \Theta_i$; $\max_{i\in [n-1]}\frac{\partial G(\theta)}{\partial \theta_i}\leq 0$ for every $\theta\in \Theta_n$. 
    \item $\gamma(\theta,\theta')>0\implies G(\theta)-G(\theta')=c(\theta,\theta')$.
    \item $\int_{\Theta}G(\theta)d(\gamma_1-\gamma_2)=\int_{\Theta}G(\theta)d\mu^P$. In other words, $\mu^P$ (in \Cref{def:measure}) can be transformed to $\gamma_1-\gamma_2$ by performing a mean-preserving spread operated in a region where $G$ is linear.
\end{enumerate}
\end{lemma}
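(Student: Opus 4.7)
The plan is to chain three inequalities, each of which corresponds to exactly one of the three stated equality conditions, and so the full equality asserted in the lemma will hold if and only if all three conditions hold simultaneously.

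First, I would use \Cref{lem:divergence} to rewrite the $G$-independent surface/volume data of the primal objective in terms of the measure $\mu^P$. Concretely, applying the lemma gives
\[
\int_{\Theta}\left[-G(\theta)+\nabla G(\theta)\cdot \theta+G(\textbf{0})\right]f(\theta)\,d\theta \;=\; \int_\Theta G\,d\mu^P \;+\; \sum_{i=1}^{n-1}\int_{\Theta_i} \frac{\partial G(\theta)}{\partial \theta_i} f(\theta)\,d\theta,
\]
so the primal objective equals $\int_\Theta G\,d\mu^P$ minus the nonnegative quantity
\[
\int_{\Theta} \max\!\left\{\tfrac{\partial G}{\partial \theta_1},\ldots,\tfrac{\partial G}{\partial \theta_{n-1}},0\right\} f\,d\theta \;-\; \sum_{i=1}^{n-1}\int_{\Theta_i} \frac{\partial G}{\partial \theta_i}\,f\,d\theta.
\]
This difference is nonnegative because on each $\Theta_i$ with $i\le n-1$ we have $\partial_i G(\theta)\le \max\{\partial_1 G,\ldots,\partial_{n-1}G,0\}$, and on $\Theta_n$ we have $0\le \max\{\partial_1 G,\ldots,\partial_{n-1}G,0\}$. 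It vanishes exactly when condition 1 holds, giving the first inequality and its equality case.

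Next, I would exploit $\gamma_1-\gamma_2 \cuxd \mu^P$: by definition $\mu^P$ can be transformed into $\gamma_1-\gamma_2$ via a mean-preserving spread, so Jensen's inequality applied to the convex function $G$ yields
\[
\int_\Theta G\,d\mu^P \;\le\; \int_\Theta G\,d(\gamma_1-\gamma_2).
\]
Equality holds iff every elementary mean-preserving spread occurs inside a region on which $G$ is affine, which is precisely condition 3.

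Finally, I would unfold the marginals and invoke the primal constraint $G(\theta)-G(\theta')\le c(\theta,\theta')$:
\[
\int_\Theta G\,d(\gamma_1-\gamma_2) \;=\; \int_{\Theta\times\Theta}\bigl(G(\theta)-G(\theta')\bigr)\,d\gamma(\theta,\theta') \;\le\; \int_{\Theta\times\Theta} c(\theta,\theta')\,d\gamma(\theta,\theta'),
\]
with equality iff $\gamma$ is supported on $\{(\theta,\theta'):G(\theta)-G(\theta')=c(\theta,\theta')\}$, which is condition 2. Concatenating the three inequalities gives the weak duality bound, and simultaneous equality in all three is exactly the conjunction of conditions 1--3. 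I do not expect a significant obstacle here; the only delicate point is justifying the Jensen step rigorously from the definition of $\cuxd$ (the mean-preserving spread is defined as a limit of elementary splits, so one must verify $\int G\,d\mu$ is monotone along the limit), but this follows from standard approximation since $G$ is continuous and convex on the compact simplex $\Theta$.
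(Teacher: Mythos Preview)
Your proposal is correct and follows essentially the same approach as the paper: apply \Cref{lem:divergence} to bound the primal objective by $\int_\Theta G\,d\mu^P$ (equality iff condition~1), use $\gamma_1-\gamma_2\cuxd\mu^P$ and convexity of $G$ to bound this by $\int_\Theta G\,d(\gamma_1-\gamma_2)$ (equality iff condition~3), and then unfold the marginals and use $G(\theta)-G(\theta')\le c(\theta,\theta')$ (equality iff condition~2). The paper presents the same three-step chain in the reverse direction but is otherwise identical.
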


An important application of \Cref{lem:weak-duality} is that: Given a menu $\MM$ in any matching utility environment, we can certify the optimality of $\MM$ by constructing a feasible dual $(P, \gamma)$ that satisfies all conditions in \Cref{lem:weak-duality} with respect to the utility function $G(\cdot)$ induced by $\MM$. This is because $G(\cdot)$ being an optimal solution to the primal problem in \Cref{fig:transport-primal} immediately implies that $\MM$ is the optimal menu since the primal problem is a relaxation of our problem. 

\begin{theorem}\label{thm:certify-optimality}
Given any matching utility environment and any menu $\MM$. Let $G(\cdot)$ be the buyer's utility function induced by $\MM$. If there exists a feasible dual solution
$(P, \gamma)$ to the problem in \Cref{fig:transport-dual} such that $G$ and $(P, \gamma)$ satisfies all conditions in the statement of \Cref{lem:weak-duality}, then $\MM$ is the optimal menu in this environment.
\end{theorem}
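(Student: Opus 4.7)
The plan is to deduce the theorem directly from the weak-duality framework in Lemma~\ref{lem:weak-duality} combined with the observation that the program in Figure~\ref{fig:transport-primal} is a relaxation of revenue maximization.

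First, I would verify that the utility function $G$ induced by $\MM$ is a feasible primal solution and that the primal objective at $G$ equals $\rev(\MM)$. Feasibility follows immediately from Lemma~\ref{obs:c-function}: $G$ is convex and satisfies $G(\theta)-G(\theta')\le c(\theta,\theta')$ for all $\theta,\theta'\in\Theta$. For the revenue identification, Lemma~\ref{lem:responsive menu} and Lemma~\ref{lem:one-column-has-1} let us assume $\MM$ is responsive with $\pi_i(\theta)=1$ for some $i\in[n]$ at every $\theta$, so the derivation preceding Figure~\ref{fig:transport-primal} gives
\[
t(\theta) = -G(\theta)+\nabla G(\theta)\cdot\theta + 1 - \max\left\{\frac{\partial G(\theta)}{\partial\theta_1},\ldots,\frac{\partial G(\theta)}{\partial\theta_{n-1}},0\right\}.
\]
Using $G(\textbf{0})=1$ to rewrite the constant additive term and integrating against $f$, we see that $\rev(\MM)$ equals exactly the primal objective at $G$.

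Second, I would observe that the primal is a relaxation of the revenue maximization problem: by Lemmas~\ref{lem:responsive menu} and~\ref{lem:one-column-has-1}, the optimal menu may be taken in the reduced responsive class, and any such menu produces a primal-feasible $G$ whose primal value equals its revenue. Since the primal also admits $G$'s that need not be implementable by a menu, the primal supremum is an upper bound on $\opt$ in this environment.

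Finally, by Lemma~\ref{lem:weak-duality}, every primal-feasible $G'$ has primal value at most the dual value at $(P,\gamma)$, with equality precisely when the three stated conditions hold. By the hypothesis of the theorem these conditions hold for the pair $(G,(P,\gamma))$, so the primal value at $G$ coincides with the dual value at $(P,\gamma)$; hence $G$ attains the primal supremum. Combining with the relaxation bound,
\[
\rev(\MM) \;=\; \textrm{primal value at }G \;=\; \textrm{primal supremum} \;\ge\; \opt,
\]
and since trivially $\rev(\MM)\le\opt$, we conclude $\rev(\MM)=\opt$, i.e.\ $\MM$ is optimal. The proof is a standard complementary-slackness-implies-optimality argument, so there is no real obstacle; the only subtlety is the bookkeeping that matches $\rev(\MM)$ with the primal objective at $G$, which has already been justified by Lemma~\ref{lem:one-column-has-1} and the derivation preceding Figure~\ref{fig:transport-primal}.
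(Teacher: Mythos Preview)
Your proposal is correct and follows essentially the same approach as the paper, which simply observes (in the paragraph preceding the theorem) that when the complementary-slackness conditions of Lemma~\ref{lem:weak-duality} hold, $G$ is primal-optimal, and since the primal is a relaxation of revenue maximization this certifies that $\MM$ is optimal. You have added the careful bookkeeping (via Lemmas~\ref{lem:responsive menu}, \ref{lem:one-column-has-1}, and~\ref{obs:c-function}) identifying $\rev(\MM)$ with the primal value at $G$, which the paper leaves implicit.
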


\subsubsection{Proof of \Cref{Thm:multi-state-uniform}}

Now we focus on the special case $n=3$ and give a proof of \Cref{Thm:multi-state-uniform} using \Cref{thm:certify-optimality}. The type space $\Theta=\{(\theta_1,\theta_2)\in [0,1]^2 \mid \theta_1+\theta_2\leq 1\}$ is a triangle.
Let $\MM^*=\{E^*(\theta),t^*(\theta)\}_{\theta\in\Theta}$ be the mechanism that only sells the full information at price 
$p=\frac{1}{3}$. Let $G^*(\theta)=V_{\theta}^*(E^*(\theta))-t^*(\theta)$ be the buyer's utility function in $\MM^*$. By \Cref{thm:certify-optimality}, it suffices to construct a feasible dual $(P^*,\gamma^*)$ that satisfies all conditions in \Cref{lem:weak-duality}.

Denote $\pi_i^*(\theta)$ the $(i,i)$-entry of $E^*(\theta)$. We consider the following partition $P^*=(\Theta_1^*,\Theta_2^*, \Theta_3^*)$ of $\Theta$: $\Theta_i^*=\{\theta\in\Theta \mid \theta_i\geq \theta_j,\forall j\in \{1, 2, 3\}\}$, here $\theta_3=1-\theta_1-\theta_2$. See \Cref{fig:uniform_1} for an illustration.
In uniform distribution, the density $f(\theta)=2$ is a constant for all $\theta\in \Theta$. Thus we can simplify the description of $\mu^{P^*}$ as in \Cref{obs:opt-mu-P}. 

\begin{observation}\label{obs:opt-mu-P}
By \Cref{def:measure}, the measure $\mu^{P^*}$ with respect to the partition $P^*$ is the sum of:
\begin{itemize}
    \item A point mass of $+1$ at $\textbf{0}$.
    \item For each $(i,j)=(1,2), (1,3), (2,3)$, a mass of $\frac{2}{3}$ uniformly distributed on the line segment $S_{ij}=\{\theta\in \Theta\mid \theta_i=\theta_j\geq \theta_k\}$. Here $k$ is the index other than $i,j$. $\theta_3=1-\theta_1-\theta_2$.
    \item A mass of $-3$ uniformly distributed through out $\Theta$.
\end{itemize}
\end{observation}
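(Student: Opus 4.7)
The plan is to directly unpack Definition~\ref{def:measure} for $n=3$ with the uniform density $f(\theta)\equiv 2$ on $\Theta$ (where $\vol(\Theta)=\tfrac{1}{2}$), and identify each surviving term with one of the three pieces in the claim. Since $f$ is constant, both volume integrals involving $\nabla f$ vanish identically. The point mass $\ind_A(\mathbf{0})$ is the $+1$ mass at $\mathbf{0}$, and the bulk term $-n\int_\Theta \ind_A f\, d\theta = -6\int_\Theta \ind_A\, d\theta$ is the mass $-3$ uniformly distributed on $\Theta$ (density $-6$ times area $\tfrac{1}{2}$).

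I would then show that the boundary integral over $\partial\Theta$ (term 2) is completely cancelled by the portion of the $\partial\Theta_i$ integrals (term 4) that lies on $\partial\Theta$. Along the legs $\{\theta_1=0\}$ and $\{\theta_2=0\}$, one has $\theta\cdot\bm n=0$ and also $\bm e_i\cdot\bm n_i=0$ for the relevant $i$, so both terms vanish. Along the hypotenuse $\{\theta_1+\theta_2=1\}$, wherever a point lies in $\Theta_i$ with $i\in\{1,2\}$ the outward normal of $\Theta_i$ coincides with that of $\Theta$, namely $\tfrac{1}{\sqrt 2}(1,1)$, so the contribution $f(\theta\cdot \bm n)=\sqrt 2$ per unit length is exactly cancelled by $-f(\bm e_i\cdot\bm n_i)=-\sqrt 2$ from term 4. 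The portion of the hypotenuse lying in $\Theta_3$ contributes nothing because the sum in term 4 runs only over $i=1,2$.

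What remains is the term 4 contribution on the three interior partition walls $S_{12},S_{13},S_{23}$. On $S_{12}$ both $i=1$ and $i=2$ contribute with outward normals $\mp\tfrac{1}{\sqrt 2}(1,-1)$, giving combined line density $2\sqrt 2$; multiplying by the length $\tfrac{\sqrt 2}{6}$ yields mass $\tfrac{2}{3}$. On $S_{13}$ only $i=1$ contributes (as $\Theta_3$ is not indexed in the sum) with outward normal $-\tfrac{1}{\sqrt 5}(2,1)$, giving density $\tfrac{4}{\sqrt 5}$ times length $\tfrac{\sqrt 5}{6}$, again $\tfrac{2}{3}$. The case $S_{23}$ is symmetric. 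The three masses sum to $2$, consistent with the sanity check $\mu^{P^*}(\Theta)=1+2-3=0$ predicted by Lemma~\ref{lem:divergence}.

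The only real obstacle is careful bookkeeping: for each face of each $\Theta_i$, identify whether it is an interior partition wall or a portion of $\partial\Theta$, write down the correct outward unit normal, and track the signs from the $-\sum_{i=1}^{n-1}$ in Definition~\ref{def:measure}. With those data in hand, each mass on a segment reduces to a one-line arc-length product.
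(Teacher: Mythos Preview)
Your proposal is correct and follows essentially the same approach as the paper: unpack Definition~\ref{def:measure} for the uniform density, observe that $\nabla f\equiv 0$ kills the last two terms, verify that the $\partial\Theta$ contributions from terms~2 and~4 cancel, and compute the remaining mass on the three interior walls $S_{ij}$. The paper's own proof is terser (it invokes Lemma~\ref{lem:divergence} to get the total boundary mass of $2$ and leaves the per-segment line integrals to the reader), whereas you carry out the explicit normal-and-arclength computations; your version is in fact more detailed than the paper's. One cosmetic note: the hypotenuse does not intersect the interior of $\Theta_3^*$ at all (since $\theta_3=0$ there), so the case you flag is vacuous rather than merely harmless.
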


\begin{figure}[ht]
	\centering
	\begin{subfigure}[b]{0.48\textwidth}
	    \centering
	    \includegraphics[width=0.65\textwidth]{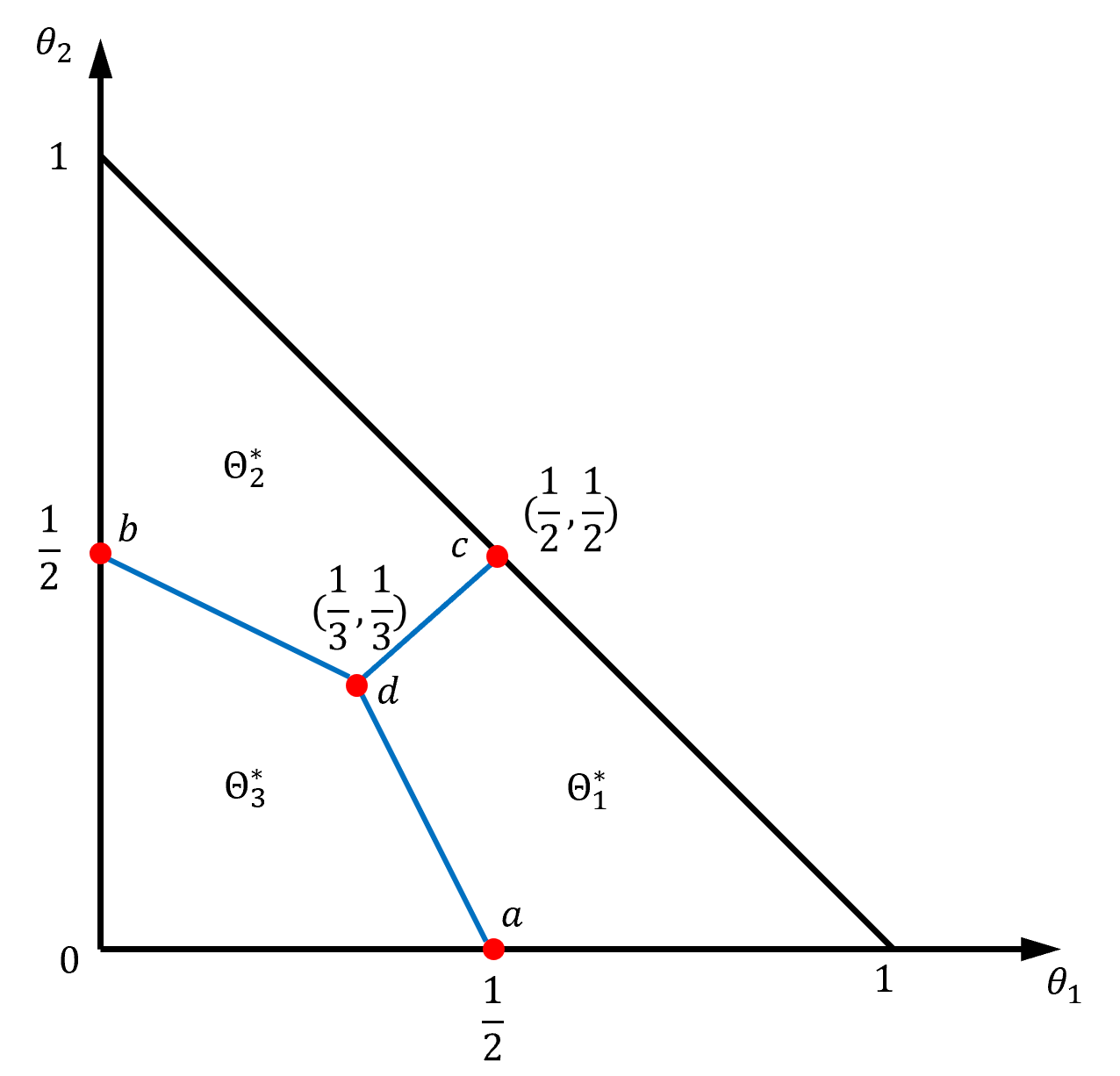}
	    \caption{Illustration of $P^*$. Line segments $ad$, $bd$, $cd$ correspond to $S_{13}$, $S_{23}$, $S_{12}$ respectively.} 
	    \label{fig:uniform_1}
	\end{subfigure}
	\hfill
	\begin{subfigure}[b]{0.48\textwidth}
	    \centering
	    \includegraphics[width=0.65\textwidth]{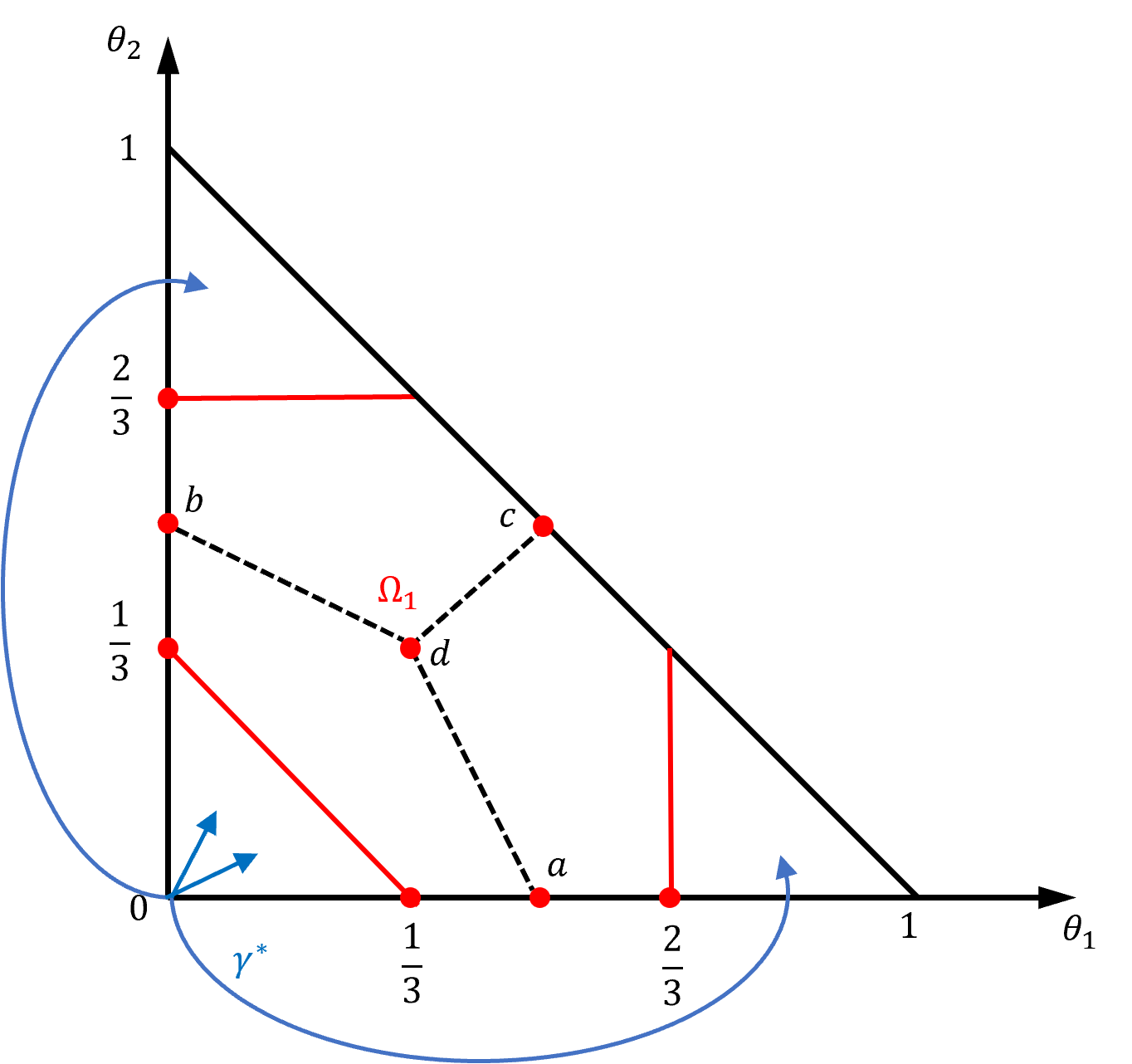}
	    \caption{$\Omega_1$ is the hexagon separated by the three red lines. $\gamma^*$ has a positive density between $\textbf{0}$ and each point in $\Omega_2$.}
	    \label{fig:uniform_2}
	\end{subfigure}
\caption{Illustrations of the optimal dual when $n=3$}\label{fig:uniform}
\end{figure}

Now we construct the measure $\gamma^*\in \Gamma_+(\Theta\times \Theta)$ as follows: Let $\Omega_1=\{\theta\in \Theta:1-\max_{i\in [n]}\theta_i\geq p\}$ be the set of types where the buyer purchases the full information in $\MM^*$. Let $\Omega_2=\Theta\backslash \Omega_1$. For any measurable set $A, B\subseteq \Theta$, define $\gamma(A\times B)=\ind_A(\textbf{0})\cdot \int_{\Omega_2}\ind_B(\theta)d\theta/\vol(\Omega_2)$. See \Cref{fig:uniform_2} for an illustration of the notations. To verify that $G^*$ and $(P^*, \gamma^*)$ are the optimal primal and dual solution respectively, it suffices to prove the following lemma.

\begin{lemma}\label{lem:verify-optimality}
$G^*$ and $(P^*, \gamma^*)$ are feasible solutions to the primal (\Cref{fig:transport-primal}) and dual problem (\Cref{fig:transport-dual}) respectively. Moreover, they satisfy all the conditions in the statement of \Cref{lem:weak-duality}.
\end{lemma}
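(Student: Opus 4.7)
The plan is to verify, in the following order: (i) primal feasibility of $G^*$, (ii) dual feasibility of $(P^*,\gamma^*)$, and (iii) the three complementary slackness conditions of Lemma~\ref{lem:weak-duality}.

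For (i), I would first write down $G^*$ in closed form. By matching utility the full-information value of the buyer is $1$, and his outside option value is $\max_i\theta_i$, so he participates in $\MM^*$ iff $1-\max_i\theta_i\geq 1/3$. A short case analysis then gives
\[
G^*(\theta)=\max\Big\{\tfrac{2}{3},\,\theta_1,\,\theta_2,\,1-\theta_1-\theta_2\Big\},
\]
which is convex as a maximum of affine functions. The $c$-Lipschitz inequality $G^*(\theta)-G^*(\theta')\leq c(\theta,\theta')$ is immediate from Lemma~\ref{obs:c-function} applied to $\MM^*$, which is IC, IR, and responsive.

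For (ii), I would compute both signed measures explicitly. Direct calculation gives $\gamma_1^*=\delta_{\mathbf{0}}$ and $\gamma_2^*$ uniform on $\Omega_2$ with total mass $1$ (hence density $6$ since $\vol(\Omega_2)=1/6$); from Observation~\ref{obs:opt-mu-P}, $\mu^{P^*}$ consists of a point mass $+1$ at $\mathbf{0}$, uniform mass $+2/3$ on each of $S_{12},S_{13},S_{23}$, and constant density $-6$ on $\Theta$. Since $\mathbf{0}\in\Omega_2$ and the three segments lie inside $\Omega_1$, $\mu^{P^*}$ and $\gamma_1^*-\gamma_2^*$ already coincide on $\Omega_2$, so verifying $\gamma_1^*-\gamma_2^*\cuxd\mu^{P^*}$ reduces to showing that $\mu^{P^*}\big|_{\Omega_1}$ can be transformed into the zero measure by a mean-preserving spread supported in $\Omega_1$. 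A quick sanity check passes: the total positive mass $3\cdot 2/3=2$ equals the total negative mass $6\cdot\vol(\Omega_1)=2$, and by $S_3$-symmetry both the positive part and the negative part have centroid $(1/3,1/3)$. To carry this out, I would use that the three segments meet at $(1/3,1/3)$ and partition $\Omega_1$ into three congruent pieces $R^{(1)},R^{(2)},R^{(3)}$ of area $1/9$ each, and construct the spread $S_3$-equivariantly: each segment's mass is split into two halves and pushed into its two adjacent pieces so that the resulting target density on each $R^{(k)}$ equals $6$, canceling the $-6$ volume density.

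For (iii), \emph{Condition 1} is a direct inspection of partial derivatives in each subregion: on $\Omega_1$, $G^*$ is constant and all conditions are trivial; on $\Omega_2\cap\Theta_i^*$ for $i\in\{1,2\}$, $G^*=\theta_i$ gives $\partial G^*/\partial\theta_i=1$ matching the max, while the other partial is $0$; on $\Omega_2\cap\Theta_3^*$, $G^*=1-\theta_1-\theta_2$ gives both partials equal $-1<0$. \emph{Condition 2} follows since $\gamma^*$ is supported on $\{\mathbf{0}\}\times\Omega_2$, and I would verify $G^*(\mathbf{0})-G^*(\theta')=c(\mathbf{0},\theta')$ on $\Omega_2$ by computing $G^*(\mathbf{0})=1$, $G^*(\theta')=\max_i\theta'_i$, and $c(\mathbf{0},\theta')=1-\max_i\theta'_i$; the last equality uses that at $\mathbf{0}$ the agent knows state $3$ for certain so $V_{\mathbf{0}}(E)=1$ for every experiment, while $\min_EV_{\theta'}(E)=\max_i\theta'_i$ is attained by the experiment that always recommends the buyer's prior-optimal action. \emph{Condition 3} is then automatic: both measures agree on $\Omega_2$ and $G^*\equiv 2/3$ on $\Omega_1$, so the difference $\int G^*\,d((\gamma_1^*-\gamma_2^*)-\mu^{P^*})$ collapses to $\tfrac{2}{3}\cdot[(\gamma_1^*-\gamma_2^*)(\Omega_1)-\mu^{P^*}(\Omega_1)]=0$.

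The main obstacle I foresee is step~(ii): the rigorous construction or verification of the mean-preserving spread on $\Omega_1$. The mass and first-moment balance shows that such a spread exists in principle, but making this water-tight requires either exhibiting an explicit transport plan from the segment-uniform measures to density-matched sub-regions of the hexagon $\Omega_1$, or verifying the equivalent convex-order inequality $\sum_{ij}\tfrac{2}{3L_{ij}}\int_{S_{ij}}g\,ds\leq 6\int_{\Omega_1}g\,d\theta$ for every convex $g$. The $S_3$-symmetry reduces the latter to $S_3$-symmetric convex $g$, but a careful geometric analysis of the hexagon and its three spokes is still needed to complete the argument.
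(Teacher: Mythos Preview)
Your plan is correct and tracks the paper's proof closely for primal feasibility and for Conditions~1, 2, and~3. In particular, your direct verification of Condition~3 via ``both measures agree on $\Omega_2$ and $G^*\equiv\tfrac{2}{3}$ on $\Omega_1$, so the difference integrates to $\tfrac{2}{3}\cdot(0-0)=0$'' is a clean shortcut that the paper does not isolate (it folds Condition~3 into the dual-feasibility argument).

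The one genuine gap is exactly where you flag it: the construction of the mean-preserving spread showing $\gamma_1^*-\gamma_2^*\succeq_{cux}\mu^{P^*}$. Your heuristic ``split each segment's mass into two halves and push into its two adjacent pentagons'' passes the mass and first-moment checks but does not by itself produce the uniform density~$6$: a one-stage spread of the \emph{uniform} line measure along perpendicular slices yields a density inversely proportional to slice width, which is nonconstant here. The paper fixes this with a two-stage construction that you have not identified. First, associate to each segment $S_{ij}$ the pentagon $\{\theta\in\Omega_1:\theta_k=\min_\ell\theta_\ell\}$ through whose axis of symmetry $S_{ij}$ runs (so one pentagon per segment, not two), and let $w(\theta)$ be the width of that pentagon at the slice $\{\theta_k=\text{const}\}$. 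The measure $\eta_2'$ on the segments with density proportional to $w(\theta)$ then spreads, slice by slice, to the uniform area measure on $\Omega_1$; this step is mean-preserving because the slice is symmetric about $S_{ij}$. Second, to pass from the uniform segment measure $\eta_1$ to $\eta_2'$, the paper uses a radial trick (its Claim~\ref{claim:uniform-condition-3}): for any $0<x<y$, the three points at distance $x$ from the centre $d$ on $S_{12},S_{13},S_{23}$ can be pushed outward to the three points at distance $y$ by first collapsing part of each to $d$ and then fanning $d$ back out. Since $w(\theta)$ has a single crossing with the constant $\eta_1$-density along each segment (small near $d$, large near the far end), this lets you shuttle the surplus outward and match $\eta_2'$ exactly.

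Your alternative route—verifying $\int_{S}g\,d\eta_1\le\int_{\Omega_1}g\cdot 6\,d\theta$ for all convex $g$—is in principle sound for these \emph{positive} measures of equal mass and barycentre (by a Strassen-type argument), but note that the paper's strong convex dominance is defined operationally via mean-preserving spreads, so you would still need to cite or prove that equivalence. The explicit two-stage spread avoids this detour.
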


The primal feasibility and the first two conditions are relatively easy to verify. To prove the dual feasibility and Condition 3, denote $\gamma_1^*, \gamma_2^*$ the marginals of $\gamma$. By definition of $\gamma$, $\mu^{P^*}-(\gamma_1^*-\gamma_2^*)$ contains only a mass of $-2$ uniformly distributed throughout out the hexagon $\Omega_1$, and a mass of $\frac{2}{3}$ uniformly distributed in each line segment $ad, bd, cd$. Due to the special geometric shape of the hexagon, we can transform the positive mass on the line segments to the whole region $\Omega_1$ via mean-preserving spread, to ``zero-out'' the negative mass. This proves both the dual feasibility and Condition 3 since $G^*(\theta)=\frac{2}{3}$ is constant throughout $\Omega_1$. 

\section{Conclusion}\label{sec:conclusion}

We considered the problem of selling information to a data-buyer with
private information. 
The (approximately) optimal mechanism
typically offers menu of options at different prices to the data-buyer. We
showed that this revenue maximization problem shares some features with the
problem of selling multiple items to a single buyer. Yet, the problem of
finding the optimal menu of information structure is richer in two important
aspects. First, every item on the menu is information structure, thus matrix
of signals given the true state, and second, the individual rationality
constraint varies with the private information of the agent. Thus, the
choice set as well as the set of constraints is richer than the standard
multi-item. 
Our analysis thus focused on establishing lower and upper bounds
for the optimality of the complete information structure, which is a natural information product and in a sense the equivalence of the grand bundle in the mutli-item problem. 

We established a lower bound on
the cardinality of the optimal menu in the binary-state environment. An interesting future direction is to show an upper bound of the cardinality of the optimal menu. Although we showed an $O(m)$ approximation ratio for selling complete information in this single-dimensional environment, no finite upper bound (in a function of $m$) of the cardinality that holds for all type distribution is known.

In general environments, we offered a primal-dual approach to prove the optimality of a given menu. We used the approach to show the optimality of selling complete information in a special environment. We believe that this approach should be productive to further characterize the optimal menu in other environments.


\newpage
\bibliographystyle{alpha}
\bibliography{Yang.bib}

\newpage
\appendix
\section{Additional Preliminaries}\label{sec:more_prelim}


\begin{definition}[Responsive Experiment \cite{BergemannBS2018design}]\label{def:responsive experiment}
A buyer type $\theta$ is \emph{responsive} to an experiment $E$ if: There is a one-to-one mapping $g:S\to [m]$ from the signal set to the action space such that, for every $k\in [m]$, the buyer chooses the action $k$ when receiving signal $s$ iff $g(s)=k$.   
\end{definition}

\begin{lemma}[Adapted from Proposition 1 from \cite{BergemannBS2018design}]\label{lem:responsive menu}
   A mechanism $\MM=\{(E(\theta),t(\theta))\}_{\theta\in \Theta}$ is \emph{responsive} if every buyer type $\theta$ is responsive to $E(\theta)$. Then for any mechanism $\MM$, there exists a responsive mechanism $\MM'=\{E'(\theta),t(\theta)\}_{\theta\in \Theta}$ with the same payment rule, such that for every type $\theta$, the buyer has the same value for both experiments $E(\theta)$ and $E'(\theta)$.
\end{lemma}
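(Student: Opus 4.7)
The plan is to construct $E'(\theta)$ for each type $\theta$ by ``pooling'' the signals of $E(\theta)$ according to the actions they induce under type $\theta$, and to keep the payment rule unchanged: $t'(\theta)=t(\theta)$. Concretely, for each $\theta$, fix a tie-breaking rule (say lexicographic) and define the best-response map $a(\cdot\mid\theta):S\to[m]$ by $a(s\mid\theta)\in\argmax_{j\in[m]}\sum_{i\in[n]}\theta_i\,\pi_{is}(E(\theta))\,u_{ij}$. Then let $E'(\theta)$ have signal set $[m]$ with entries $\pi'_{ik}(E'(\theta)):=\sum_{s\in S:\,a(s\mid\theta)=k}\pi_{is}(E(\theta))$ for every $i\in[n]$, $k\in[m]$. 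Intuitively, we relabel each original signal $s$ by the action $a(s\mid\theta)$ that type $\theta$ would take in response, and then merge all signals carrying the same label.

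Two verifications remain. First, I will check that type $\theta$ is responsive to $E'(\theta)$ in the sense of \Cref{def:responsive experiment}, taking $g$ to be the identity map on $[m]$. For each signal $k\in[m]$ that has positive probability under $E'(\theta)$, the unnormalized posterior expected payoff of action $j$ is
$$\sum_{i\in[n]}\theta_i\,\pi'_{ik}(E'(\theta))\,u_{ij}\;=\;\sum_{s:\,a(s\mid\theta)=k}\Big(\sum_{i\in[n]}\theta_i\,\pi_{is}(E(\theta))\,u_{ij}\Big).$$
By construction, for each $s$ in this sum, action $a(s\mid\theta)=k$ attains the maximum of $\sum_i\theta_i\pi_{is}(E(\theta))u_{ij}$ over $j$, so the summed expression is also maximized at $j=k$. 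Hence action $k$ is a best response to signal $k$, which is exactly the responsiveness condition under $g=\mathrm{id}$ (with the same lexicographic tie-break applied to the pooled signal; ``null'' signals with zero probability cause no issue since the iff condition is then vacuous).

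Second, I will verify that $V_\theta(E'(\theta))=V_\theta(E(\theta))$. Using responsiveness from the previous step,
$$V_\theta(E'(\theta))\;=\;\sum_{k\in[m]}\sum_{i\in[n]}\theta_i\,\pi'_{ik}(E'(\theta))\,u_{ik}\;=\;\sum_{s\in S}\sum_{i\in[n]}\theta_i\,\pi_{is}(E(\theta))\,u_{i,a(s\mid\theta)},$$
where the second equality re-expands the inner sum over $s$ with $a(s\mid\theta)=k$ and then interchanges summation. Because $a(s\mid\theta)\in\argmax_j\sum_i\theta_i\pi_{is}(E(\theta))u_{ij}$, the right-hand side equals $\sum_{s\in S}\max_j\sum_i\theta_i\pi_{is}(E(\theta))u_{ij}=V_\theta(E(\theta))$, as required. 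The payment rule is unchanged by construction, which completes the proof.

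The main obstacle, and really the only subtle point, is the handling of ties in $\argmax_j$: we must fix $a(\cdot\mid\theta)$ as a single-valued function before pooling, since pooling according to two different tie-breaking rules could yield different experiments. Any fixed deterministic rule (lexicographic is convenient) works because every choice among tied maximizers yields the same inner product $\sum_i\theta_i\pi_{is}(E(\theta))u_{i,a(s\mid\theta)}$, so the value identity in the previous paragraph is unaffected, and the responsiveness verification in the middle paragraph also goes through regardless of which tied action is selected.
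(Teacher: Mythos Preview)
Your proof is correct and follows the standard ``signal pooling'' construction that underlies Proposition~1 of \cite{BergemannBS2018design}: merge the original signals of $E(\theta)$ according to the action they induce for type $\theta$, and verify that value is preserved and that the resulting experiment is responsive. The paper itself does not supply a proof of \Cref{lem:responsive menu}; it simply cites the result, so there is nothing to compare beyond noting that your argument is precisely the intended one.
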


\begin{lemma}\label{lem:full-info-contained}\cite{BergemannBS2018design}
Any optimal menu contains a fully informative experiment. 
\end{lemma}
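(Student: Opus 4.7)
The plan is a contradiction argument. Suppose $\MM=\{(E(\theta),t(\theta))\}_{\theta\in\Theta}$ is an optimal IC, IR menu whose allocated experiments $\{E(\theta):\theta\in\Theta\}$ never coincide with the fully informative experiment $E^*$. I will exhibit a strictly revenue-improving IC, IR menu, contradicting optimality. As a first step I would invoke \Cref{lem:responsive menu} to assume WLOG that $\MM$ is responsive, so $G(\theta):=V_\theta^*(E(\theta))-t(\theta)=V_\theta(E(\theta))-t(\theta)$ is a continuous convex function of $\theta$. The central input is Blackwell dominance: for every $\theta$ and every experiment $E$,
$$V_\theta(E)\ \le\ V_\theta(E^*)\ =\ \sum_i \theta_i\,\max_j u_{ij},$$
since even a fully-informed decision-maker cannot do better than $\max_j u_{ij}$ in state $i$. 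Define the excess willingness-to-pay $h(\theta):=V_\theta(E^*)-G(\theta)$; Blackwell dominance gives $h(\theta)=V_\theta(E^*)-V_\theta(E(\theta))+t(\theta)\ge t(\theta)\ge 0$, and $h$ is continuous in $\theta$.

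I would then split on whether $h\equiv 0$ on $\supp(F)$. In the first case, $V_\theta(E(\theta))=V_\theta(E^*)$ and $t(\theta)=0$ for a.e.\ $\theta$ in the support, so the responsive allocation $\sum_{i,j}\theta_i\pi_{ij}(\theta)u_{ij}$ agrees with the linear function $\sum_i\theta_i\max_j u_{ij}$ on a positive-measure set; matching coefficients forces $\pi_{ij}(\theta)=1$ iff $j=\argmax_k u_{ik}$ on interior types, i.e.\ $E(\theta)=E^*$ for a positive-measure set, contradicting the hypothesis. In the second case there is some $\theta_0\in\supp(F)$ with $\Delta_0:=h(\theta_0)>0$, and I would consider $\MM'=\MM\cup\{(E^*,p)\}$ for a carefully chosen $p$. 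Because each type can stay with its old option, $\MM'$ is automatically IC and IR. The revenue change equals $\int_{S(p)}(p-t(\theta))\,dF(\theta)$ where $S(p)=\{\theta:h(\theta)>p\}$.

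To make this strictly positive I would take $p=p^*-\eta$ with $p^*:=\sup_\theta h(\theta)$ and small $\eta>0$, where the supremum is attained at some $\theta^*$ with $p^*\ge h(\theta_0)>0$ by compactness of $\Theta$ and continuity of $h$. By continuity of $h$ the switching set $S(p^*-\eta)$ shrinks to a small neighborhood of $\argmax h$ as $\eta\downarrow 0$. At $\theta^*$ one has $p^*=t(\theta^*)+\bigl(V_{\theta^*}(E^*)-V_{\theta^*}(E(\theta^*))\bigr)>t(\theta^*)$, so for sufficiently small $\eta$ and a sufficiently small open neighborhood $N$ of $\theta^*$ inside the responsive cell where $E(\cdot)$ is locally constant (and hence $t(\cdot)$ is continuous by the IC envelope identity), $p>t(\theta)$ for all $\theta\in N$. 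This neighborhood has positive $F$-measure, so $\Delta\rev>0$, contradicting optimality of $\MM$.

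The main obstacle I anticipate is controlling types outside $N$ that also jump to $(E^*,p)$ and potentially pay less than before. This is exactly why $p$ must be chosen close to $p^*$ rather than close to $t(\theta_0)$: pushing $p$ up to $p^*-\eta$ ensures $S(p)$ is contained in a small neighborhood of $\arg\max h$ where the bound $p>t(\theta)$ persists. A secondary delicate point is the possible discontinuity of $t$ across transitions between experiments; I would handle it by working inside a single ``obedience cell'' on which $E(\theta)$ is constant (so $t(\theta)=V_\theta^*(E(\theta))-G(\theta)$ is continuous), and by perturbing $\theta^*$ into the interior of such a cell if necessary. Once the strict revenue gain is established in the interior of some cell, the contradiction with the optimality of $\MM$ is immediate.
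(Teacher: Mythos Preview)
The paper does not prove this lemma; it is stated with a citation to \cite{BergemannBS2018design} and no argument is given. So there is no paper proof to compare against, and your proposal must stand on its own.

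Your overall plan is the natural one, but the strict-improvement step in Case~2 has genuine gaps that your proposed fixes do not close. First, the key inequality $p^*>t(\theta^*)$ is equivalent to $V_{\theta^*}(E^*)>V_{\theta^*}(E(\theta^*))$, which you treat as automatic from $E(\theta^*)\neq E^*$. It is not: whenever $E(\theta^*)$ already recommends $\argmax_j u_{ij}$ in every state $i$ with $\theta^*_i>0$, the Blackwell gap is zero even though the experiments differ. What you actually need is that the maximizer $\theta^*$ of $h$ is an \emph{interior} type (all $\theta^*_i>0$); then a zero gap would force $E(\theta^*)=E^*$ and you would be done immediately. But concavity of $h$ together with $h(e_i)=0$ at the vertices does not prevent the maximum from sitting on a proper face of the simplex, and you give no argument ruling that out. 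Second, even granting $p^*>t(\theta^*)$, you need $p>t(\theta)$ for \emph{every} $\theta\in S(p)$, which amounts to a uniform lower bound on $V_\theta(E^*)-V_\theta(E(\theta))$ over the whole switching set; positivity at the single point $\theta^*$ does not deliver this. Your remedy---restrict to an ``obedience cell'' on which $E(\cdot)$ is locally constant---is unavailable in general: an optimal menu can allocate a continuum of distinct experiments, so no such cell of positive $F$-measure need exist, and ``perturbing $\theta^*$ into the interior of such a cell'' is not well-defined. Relatedly, $\argmax h$ need not be a singleton, so $S(p^*-\eta)$ need not be contained in a small neighborhood as $\eta\downarrow 0$.
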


\section{Missing Details from Section~\ref{sec:lower_bound_full_info}}\label{sec:proof_lower_bound}

\begin{prevproof}{Lemma}{lem:ER-dist}
Since $h$ is strictly decreasing in $[a,b]$, let $h^{-1}$ be its inverse function. 
Define $F$ as follows: $F(\theta)=1-\frac{c}{1-h(\theta)},\forall \theta\in [a,b)$; $F(b)=1$. Here $c=1-h(a)$ to ensure that $F(a)=0$. Consider the fully informative experiment with any price $p$. The utility for purchasing this experiment is $1-p$ regardless of the buyer's prior. Thus the buyer purchases the fully informative experiment iff $h(\theta)\leq 1-p$, which happens with probability $1-F(h^{-1}(1-p))=c/p$. Thus $\optfi=c$.
Moreover, $f(\theta)=-c\cdot \frac{h'(\theta)}{(1-h(\theta))^2}, \forall \theta\in [a,b)$.~\footnote{As $h(\cdot)$ is a convex function, it is also continuously differentiable on except countably many points. For those points, we can choose $h'(\theta)$ to be any subdifferential at $\theta$.} We have
$$\int_a^bf(\theta)\cdot (1-h(\theta))d\theta= -c\cdot \int_a^b\frac{h'(\theta)}{1-h(\theta)}d\theta=c\cdot \log(1-h(\theta))\Big|_a^b=c\cdot \log\left(\frac{1-h(b)}{1-h(a)}\right)$$
\end{prevproof}

\begin{prevproof}{Lemma}{lem:mech-m}
We first bound the revenue of $\MM$. Notice that for every $1\leq i\leq m-2$,
\begin{align}\label{equ:p_i upper bound}
p_i=\sum_{j=0}^{i-1}h_j+\theta_il_i=(2^{i}-1)\cdot \eps^{2^m}-\eps^{2^i}\cdot (\eps^{2^m}+\sum_{1\leq j<i}2^j\cdot \eps^{2^m-2^j})\leq (2^i-1)\cdot \eps^{2^m}
\end{align}
On the other hand, when $\eps<\frac{1}{2^m}$, we have
\begin{align}\label{equ:p_i lower bound}
p_i&=(2^i-1)\cdot \eps^{2^m}-\eps^{2^m+2^i}-\sum_{1\leq j<i}2^j\cdot \eps^{2^m+2^i-2^j}\\
&\geq(2^i-1)\cdot \eps^{2^m}-\eps^{2^m+2}\cdot(1+\sum_{1\leq j<i}2^j)\geq(2^i-1)\cdot \eps^{2^m}-\eps^{2^m+1}  \nonumber
\end{align}
Moreover,
$$1-u(\theta_{i+1})=\sum_{j=0}^{i}h_j=(2^{i+1}-1)\cdot \eps^{2^m}$$

Thus for every $\theta$ such that $\theta_i<\theta\leq \theta_{i+1}$, the payment of $\MM$ is at least $$\min\{p_i,p_{i-1},p_{i-2}\}=p_{i-2}\geq \frac{1}{9}(1-u(\theta_{i+1}))\geq \frac{1}{9}(1-u(\theta)).$$ 
Thus we have $\rev(\MM)\geq \frac{1}{9}\cdot \int_{\theta_1}^{\theta_{m-1}}f(\theta)\cdot (1-u(\theta))d\theta$.

Next we prove that $\MM$ is IR and $\delta$-IC. Firstly, if the buyer follows the recommendation of experiment $E_i$, his expected utility is $\theta(1+l_i)+1-\theta-p_i=1+\theta l_i-p_i=1-\sum_{j=0}^{i-1}h_j+(\theta-\theta_i)l_i$, {which is $u(\theta_i)$ and $u(\theta_{i+1})$ if we choose $\theta$ to be $\theta_i$ and $\theta_{i+1}$ respectively. Since $u(\cdot)$ is linear on $(\theta_i,\theta_{i+1}]$, the buyer' expected utility for receiving $(E_i,p_i)$ is exactly $u(\theta)$.} Thus her utility by reporting truthfully is at least $u(\theta)\geq 0$, and $\MM$ is IR.

Now fix any $\theta\in [\theta_1,\theta_{m-1}]$. Let $i$ be the unique number such that $\theta_i<\theta\leq \theta_{i+1}$. Suppose that the buyer misreports and receives experiment $E_j$ for $1\leq j\leq m-2$. We notice that when the experiment recommends action 1, the state $\omega_1$ is fully revealed and the buyer will follow the recommendation, choosing action 1. If the buyer chooses action $m-r$ when being recommended action $m$, for $0\leq r\leq m-1$, then his utility for buying experiment $j$ is $$U_{j,r}(\theta):=\theta(1+l_j)+\theta(-l_j)u_{1,m-r}+(1-\theta)u_{2,m-r}-p_j=u_{2,m-r}+\theta(1+l_j-l_ju_{1,m-r}-u_{2,m-r})-p_j.$$ When $1\leq r\leq m-2$, using the fact that $u_{2,m-r}=1-p_r$ and $u_{1,m-r}=1-p_r+l_r$, we can rewrite $U_{j,r}(\theta)=1+\theta(p_r+l_j(p_r-l_r))-p_r-p_j$.

We argue that for every $\theta\in (\theta_i,\theta_{i+1}]$, the following inequality holds.
\begin{equation}\label{equ:lower_bound}
\max_{j\in \{1,\ldots,m-2\},r\in \{0,\ldots,m-1\}}U_{j,r}(\theta)\leq \max_{j'\in \{i,i-1,i-2\},r'\in\{0,\ldots,m-1\}}U_{j',r'}(\theta)+7\cdot \eps^{2^m+1}. ~\footnote{In the RHS of the inequality, $j$ can only take value $i,i-1$ when $i\leq 2$, and can only take value $i$ when $i=1$.}
\end{equation}
Since a buyer with type $\theta$ receives utility $\max_{j'\in \{i,i-1,i-2\},r'}U_{j',r'}(\theta)$ in $\MM$ if he reports truthfully, Inequality~\eqref{equ:lower_bound} guarantees that $\MM$ is $\delta$-IC (recall that $\delta=7\cdot \eps^{2^m+1}$), as the buy can obtain utility at most $\max_{j\in \{1,\ldots,m-2\},r}U_{j,r}(\theta)$ by misreporting his type.

Now we prove that for every $j\in\{1,\ldots,m-2\}, r\in \{0,\ldots,m-1\}$, $U_{j,r}(\theta)$ is no more than the RHS of Inequality~\eqref{equ:lower_bound} by a case analysis.

\begin{enumerate}
    \item $r=0$. For every $j\in \{1,\ldots,m-2\}$, $U_{j,0}(\theta)=V_{\theta}^*(E_j)-p_j$ is the agent's utility for purchasing $E_j$, if he follows the recommendation. {By the definition of the IR curve $u(\cdot)$, the function $U_{\ell,0}(\cdot)$ coincides with $u(\cdot)$ on the interval $[\theta_\ell,\theta_{\ell+1}]$ for all $\ell\in \{1,\ldots, m-2\}$. Since $u(\cdot)$ is a convex function and $\theta\in(\theta_i,\theta_{i+1}]$, $U_{i,0}(\theta)\geq U_{j,0}(\theta)$ for all $j\in \{1,\ldots,m-2\}$.}
    \item $r=m-1$. For every $j\in \{1,\ldots,m-2\}$, since $\eps<\frac{1}{2^m}$,  $U_{j,m-1}(\theta)=\theta-p_j\leq \theta\leq \theta_{m-1}<\eps^{2^{m-1}-1}.$
On the other hand, $U_{i,0}(\theta)=1+\theta\ell_i-p_i\geq 1-2^m\cdot \eps^{2^m}>1-\eps^{2^m-1}>U_{j,m-1}(\theta)$. 
\item $1\leq r\leq m-2$. Recall $p_r-(2^r-1)\cdot \eps^{2^m}\in [-\eps^{2^m+1},0]$ for every $r\in \{1,...,m-2\}$ (Inequality~\eqref{equ:p_i lower bound}). For every $j,r\in \{1,...,m-2\}$, $U_{j,r}(\theta)$ can be further bounded as follows: 
\begin{align*}
    U_{j,r}(\theta)&\geq {1-p_r-p_j+\theta\cdot [(2^r-1)\eps^{2^m}-\eps^{2^m+1}-\eps^{2^j}((2^r-1)\eps^{2^m}+\eps^{2^r})]}\\
    &\geq 1-p_r-p_j-\theta\cdot \eps^{2^j+2^r}-2\cdot \eps^{2^m+1}\\
    &\geq 1-(2^j+2^r-2)\cdot \eps^{2^m}-\theta\cdot \eps^{2^j+2^r}-{2\cdot \eps^{2^m+1}}
\end{align*} 
{The first inequality follows from the upper bound (Inequality~\eqref{equ:p_i upper bound}) and the lower bound (Inequality~\eqref{equ:p_i lower bound}) of $p_i$.} The second inequality follows from $\theta<1$ and $\eps<\frac{1}{2^m}$. {The last inequality follows from Inequality~\eqref{equ:p_i upper bound}.} On the other hand,
\begin{align*}
    U_{j,r}(\theta)&\leq {1-p_r-p_j+\theta\cdot [(2^r-1)\eps^{2^m}-\eps^{2^j}((2^r-1)\eps^{2^m}-\eps^{2^m+1}+\eps^{2^r})]}\\
    &\leq 1-p_r-p_j-\theta\cdot \eps^{2^j+2^r}+2\cdot \eps^{2^m+1}\\
    &\leq 1-(2^j+2^r-2)\cdot \eps^{2^m}-\theta\cdot \eps^{2^j+2^r}+{4}\cdot \eps^{2^m+1}.
\end{align*}
{The first and last inequalities follow from Inequality~\eqref{equ:p_i upper bound} and~\eqref{equ:p_i lower bound}. The second inequality follows from $\theta\leq \theta_{m-1}<\eps^{2^{m-1}-1}$. There are two nice properties of the upper and lower bounds of $U_{j,r}(\theta)$: (i) they are within $O(\eps^{2^m+1})$ and (ii) they are both symmetric with respect to $j$ and $r$.}
\begin{enumerate}
    \item If $2^j+2^r>2^i$, then either $j$ or $r$ must be at least $i$, which implies that $2^j+2^r\geq 2^i+2$. 
By \Cref{equ:theta_m-1}, $\theta\leq \theta_{i+1}<\eps^{2^m-2^{i}-1}$. Thus $\theta\cdot \eps^{2^j+2^r}<\eps^{2^m+1}$. We notice that if either $j$ or $r$ is at least $i$, $(2^j+2^r-2)\cdot \eps^{2^m}$ is minimized when $(j,r)=(i,1)$ or $(1,i)$. Thus $U_{j,r}(\theta)\leq U_{i,1}(\theta)+7\cdot \eps^{2^m+1}$.
\item If $2^j+2^r<2^{i-1}$, then $i>1$ and thus $2^j+2^r\leq 2^{i-1}-2$ (as both are even numbers when $i>1$). Since $\theta\geq \theta_i>2^{i-1}\cdot \eps^{2^m-2^{i-1}}$, we have $\theta\cdot \eps^{2^j+2^r}\geq 2^{i-1}\cdot \eps^{2^m-2}$. As $\eps<\frac{1}{2^m}$, the term $\theta\cdot \eps^{2^j+2^r}$ dominates $(2^j+2^r-2)\cdot \eps^{2^m}$   {and $1-(2^j+2^r-2)\cdot \eps^{2^m}-\theta\cdot \eps^{2^j+2^r}$ is maximized when  
$2^j+2^r$ is maximized conditioned on $2^j+2^r<2^{i-1}$. In other words, it is maximized when $(j,r)=(i-2,i-3)$ or $(i-3,i-2)$.} Thus $U_{j,r}(\theta)\leq U_{i-2,i-3}(\theta)+6\cdot \eps^{2^m+1}$.
 
\item If $2^j+2^r\in [2^{i-1},2^i]$, then either $j$ or $r$ is in $\{i-2,i-1,i\}$. Since $|U_{j,r}(\theta)-U_{r,j}(\theta)|\leq 6\cdot\eps^{2^m+1}$, for all $j$ and $r$ in $\{1,\ldots, m-2\}$, we have $U_{j,r}(\theta)\leq \max_{j'\in \{i,i-1,i-2\},r'\in\{1,\ldots, m-2\}}U_{j',r'}(\theta)+6\cdot\eps^{2^m+1}$. 
\end{enumerate}
\end{enumerate}
\end{prevproof}

To complete the proof of \Cref{thm:lower_bound_full_info}, we need the following lemma adapted from~\cite{cai2020sell}.

\begin{lemma}[Adapted from Lemma 8 in \cite{cai2020sell}]\label{lem:eps-ic}
Suppose $\MM$ is a mechanism of option size $\ell$, where the IC and IR constraints are violated by at most $\delta$, for some $\delta>0$ and finite $\ell>0$. Then there exists an IC, IR mechanism $\MM'$ of option size $\ell$ such that $\rev(\MM')\geq (1-\eta)\cdot \rev(\MM)-\delta/\eta-\delta$, for any $\eta>0$. Moreover, if every experiment $E$ in $\MM$ is semi-informative, then $\MM'$ is a semi-informative menu. 
\end{lemma}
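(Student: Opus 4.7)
The plan is a ``price-shift'' conversion from $\MM$ to an exact IC, IR menu $\MM'$ reusing the same option experiments. Let $(E_1,t_1),\ldots,(E_\ell,t_\ell)$ enumerate the $\ell$ distinct options of $\MM$. Define $\MM'$ to be the menu with options $\{(E_j,\, (1-\eta)t_j-\delta) : (1-\eta)t_j \ge \delta\}$, augmented with a null option $(E_\bot,0)$ where $E_\bot$ deterministically outputs signal $1$ (so the buyer learns nothing and earns exactly $u(\theta)$) \emph{only} when at least one original option is dropped. Because $\MM'$ is a finite menu it is automatically IC. When $\MM$ is semi-informative, every surviving $E_j$ is semi-informative by assumption, and $E_\bot$ has $\pi_{11}=\pi_{21}=1$, matching Pattern~2 of Table~\ref{table:two-patterns}, so $\MM'$ is a semi-informative menu. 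The option count is at most $\ell$: if $k\ge 1$ originals are dropped the count becomes $\ell-k+1\le\ell$, otherwise no null is needed and the count stays at $\ell$.

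The revenue bound is obtained type-by-type. Fix $\theta$ and let $j^*=j^*(\theta)$ be its chosen option in $\MM$. If $t_{j^*}\ge \delta/(1-\eta)$, option $j^*$ survives in $\MM'$; let $j'$ be the (non-null) option that $\theta$ picks in $\MM'$. Buyer optimality in $\MM'$ gives $V_\theta(E_{j'})-(1-\eta)t_{j'} \ge V_\theta(E_{j^*})-(1-\eta)t_{j^*}$. Applying $\delta$-IC of $\MM$ at the type that chooses $j'$ gives $V_\theta(E_{j'})-V_\theta(E_{j^*}) \le t_{j'}-t_{j^*}+\delta$. Eliminating the value difference between the two inequalities yields $\eta(t_{j'}-t_{j^*})\ge -\delta$, i.e., $t_{j'}\ge t_{j^*}-\delta/\eta$. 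Hence the payment collected from $\theta$ in $\MM'$ satisfies
\[
(1-\eta)t_{j'}-\delta \;\ge\; (1-\eta)\left(t_{j^*}-\delta/\eta\right)-\delta \;\ge\; (1-\eta)t_{j^*}-\delta/\eta-\delta.
\]
If $\theta$'s best response in $\MM'$ is the null option, then the symmetric comparison with $j^*\in\MM'$ (which yields utility $\ge u(\theta)+\eta t_{j^*}$, shown below) forces $t_{j^*}=0$, contradicting $t_{j^*}\ge\delta/(1-\eta)$. If $t_{j^*}<\delta/(1-\eta)$, option $j^*$ is dropped, but $(1-\eta)t_{j^*}<\delta$, so $(1-\eta)t_{j^*}-\delta/\eta-\delta<0$, trivially dominated by the nonnegative payment in $\MM'$. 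In both cases the per-type payment in $\MM'$ is at least $(1-\eta)t_{j^*}-\delta/\eta-\delta$; integrating over $\theta\sim F$ gives $\rev(\MM')\ge (1-\eta)\rev(\MM)-\delta/\eta-\delta$.

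For IR, when $j^*$ survives in $\MM'$ its utility in $\MM'$ is $V_\theta(E_{j^*})-(1-\eta)t_{j^*}+\delta$; by $\delta$-IR of $\MM$ this is $\ge (u(\theta)-\delta)+\eta t_{j^*}+\delta \ge u(\theta)$, and the buyer's actual choice is at least as good; when $j^*$ is dropped, the null option delivers exactly $u(\theta)$. The main obstacle is coordinating the multiplicative factor $1-\eta$ (which opens a strict margin in the IC inequalities and forces $t_{j'}$ to stay within $\delta/\eta$ of $t_{j^*}$) with the additive shift $-\delta$ (which absorbs $\MM$'s $\delta$-IR slack at cost $\delta$ per type), all while keeping $|\MM'|\le\ell$. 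The key bookkeeping observation that makes this work is that whenever the null option would be required---for ``dropped'' types---at least one original option has been removed, leaving room to add the null without inflating the option count.
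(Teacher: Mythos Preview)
Your proof is correct and follows essentially the same price-shift approach as the paper: define new prices $(1-\eta)t_j-\delta$, then combine buyer optimality in $\MM'$ with $\delta$-IC of $\MM$ to conclude $t_{j'}\ge t_{j^*}-\delta/\eta$ and hence the stated revenue bound. The paper's version is slightly simpler in that it keeps all $\ell$ options (allowing the shifted price to be negative) rather than dropping low-price options and inserting a null experiment, but the core argument is identical.
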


\begin{prevproof}{Lemma}{lem:eps-ic}
Let $\MM=(\{E(\theta),t(\theta)\})_{\theta\in [0,1]}$ and $\{(E_j,t_j)\}_{j\in [\ell]}$ be the $\ell$ options in $\MM$. For every $j\in [\ell]$, let $t_j' = (1-\eta) t_j - \delta$. Let $\MM'$ be the menu with options $\{(E_j,t_j')\}_{j\in [\ell]}$, i.e., the buyer with type $\theta$ purchase the experiment that maximizes $V_{\theta}(E_j)-t_j'$. Clearly $\MM'$ is IC and IR. 

For any type $\theta\in [0,1]$, let $j\in [m]$ be the unique number such that $E(\theta)=E_j$. Since the IC constraints in $\MM$ are violated by at most $\delta$, we know that
$$
    V_{\theta}(E_j) - t_j \geq  V_{\theta}(E_k) - t_k - \delta,  ~ \forall k\in [\ell]
$$
Now suppose that the buyer with type $\theta$ purchases $E_{k^*}$ in $\MM'$. Then
$$
    V_{\theta}(E_{k^*}) - (1-\eta) t_{k^*} \geq  V_{\theta}(E_j) - (1-\eta)t_j
$$
    
Choosing $k$ to be $k^*$ in the first inequality and  combining the two inequalities, we have that
$$
    V_{\theta}(E_{k^*}) - (1-\eta) t_{k^*} \geq V_{\theta}(E_{k^*}) - t_{k^*} - \varepsilon +\eta t_j \implies
        t_j - t_{k^*} \leq \frac{\delta}{\eta}
$$
    
    Hence, for the revenue we have
    $\rev(\MM') \geq (1-\eta)\cdot\rev(\MM) - \delta - \frac{\delta}{\eta}$. 
    The second part of the statement directly follows from the fact that $\MM'$ shares the same set of experiments with $\MM$.  
\end{prevproof}

\begin{prevproof}{Theorem}{thm:lower_bound_full_info}
We consider the construction present in \Cref{sec:lower_bound_full_info}. By \Cref{lem:ER-dist} and \Cref{lem:mech-m},
$$\rev(\MM)\geq \frac{1}{9}\cdot \log(2)\cdot (m-2)\cdot\optfi$$
We notice that $\optfi=\eps^{2^m}$.
According to the construction of the mechanism, $\MM$ is of option size at most $m-2$. Thus by applying \Cref{lem:eps-ic} with $\eta=\frac{1}{2}$ and $\delta=7\cdot \eps^{2^m+1}$, we have
$$\optsi\geq \frac{1}{2}\cdot\rev(\MM)-3\cdot 7\cdot \eps^{2^m+1}\geq \eps^{2^m}\cdot\left[\frac{\log(2)}{18}(m-2)-21\eps\right]=\Omega(m)\cdot \optfi$$
Since $\opt\geq \optsi$, we further have $\opt=\Omega(m)\cdot \optfi$.
\end{prevproof}

\begin{prevproof}{Lemma}{lem:1/k-approx for menus of size k}
Let $E_1, \ldots, E_{\ell}$ be the experiments contained in mechanism $\MM$, and $t(E_1), \ldots, t(E_{\ell})$ be the prices. Let $p_i$ be the revenue that is generated by experiment $E_i$, i.e., $p_i = \Pr_{\theta}[\MM(\theta)=E_i]\cdot t(E_i)$. Then, $\rev{(\MM)} = \sum_{i=1}^{\ell} p_i \implies p_j \geq \frac{\rev{(\MM})}{\ell}$, for some $j \in [\ell]$. Now, consider menu $\MM'$ that contains only
the fully informative experiment $E^*$ with price $t(E_j)$. We claim that for all types $\theta$ such that $\MM(\theta)=E_j$,
the agent with type $\theta$ is willing to purchase the experiment $E^*$ at price $t(E_j)$, as $V_\theta(E^*) \geq V_\theta(E_j), \forall \theta \in [0,1]$. Thus, for all $\theta$ such that $\MM(\theta)=E_j$, $V_{\theta}(E^*)-t(E^*)\geq V_{\theta}(E_j)-t(E_j)\geq u(\theta)$, where the second inequality follows from $\MM$ is IR. Hence, the revenue of $\MM'$ is at least $p_j\geq \frac{\rev(\MM)}{\ell}$.
\end{prevproof}

\begin{prevproof}{Theorem}{thm:lower_bound_menu_complexity}
By \Cref{thm:lower_bound_full_info}, there exists some environment such that $c:=\frac{\opt}{\optfi}=\Omega(m)$. Suppose there is an optimal menu $\MM^*$ that consists of $c'<c$ different experiments. Then by applying \Cref{lem:1/k-approx for menus of size k} to $\MM^*$, we have $\optfi\geq \frac{\opt}{c'}>\frac{\opt}{c}$. Contradiction. Thus any optimal menu consists of at least $c$ experiments.  
\end{prevproof}

\section{Missing Details from Section~\ref{sec:upper_bound_full_info}}\label{sec:proof_upper_bound}

\begin{observation}\label{obs:q to exp}\cite{BergemannBS2018design}
For any {\semiinformative} experiment $E$, $q(E)=\pi_{11}\cdot u_{11}-\pi_{2m}\cdot u_{2m}$ satisfies the following: When $q(E)\leq u_{11}-u_{2m}$, $E$ has Pattern 1 in \Cref{table:two-patterns},
where $\pi_{11}=(q(E)+u_{2m})/u_{11}$. When $q(E)>u_{11}-u_{2m}$, $E$ has Pattern 2 in \Cref{table:two-patterns}, 
where $\pi_{2m}=(u_{11}-q(E))/u_{2m}$.
\end{observation}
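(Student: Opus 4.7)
\begin{prevproof}{Observation}{obs:q to exp}
The plan is a direct case analysis on which of the two patterns of \Cref{table:two-patterns} the semi-informative experiment $E$ conforms to, after which the claim becomes a one-line algebraic substitution together with a check on the range of the free parameter. First I would record the structural consequence of being semi-informative: since $\pi_{ij}(E)=0$ for all $i\in\{1,2\}$ and $2\le j\le m-1$, the rows of $E$ are supported only on columns $1$ and $m$, and the row-sum constraint $\sum_j\pi_{ij}(E)=1$ forces $\pi_{1m}=1-\pi_{11}$ and $\pi_{21}=1-\pi_{2m}$. Thus each semi-informative $E$ is completely determined by the pair $(\pi_{11},\pi_{2m})\in[0,1]^2$, subject to the extra requirement that either $\pi_{11}=1$ or $\pi_{2m}=1$.

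For the first case, I would assume $E$ has Pattern 1 of \Cref{table:two-patterns}, i.e.\ $\pi_{2m}=1$. Substituting into the definition of $q(E)$ gives
\[
q(E)=\pi_{11}u_{11}-u_{2m},
\]
which, as $\pi_{11}$ ranges over $[0,1]$, sweeps out the interval $[-u_{2m},\,u_{11}-u_{2m}]$. Solving for $\pi_{11}$ yields $\pi_{11}=(q(E)+u_{2m})/u_{11}$, matching the stated expression. For the second case, I would assume Pattern 2 holds, so $\pi_{11}=1$ and
\[
q(E)=u_{11}-\pi_{2m}u_{2m},
\]
sweeping $[u_{11}-u_{2m},\,u_{11}]$ as $\pi_{2m}\in[0,1]$, and inverting gives $\pi_{2m}=(u_{11}-q(E))/u_{2m}$.

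The only point needing a comment is the boundary $q(E)=u_{11}-u_{2m}$, where the two ranges overlap. I would observe that this value is attained in Pattern 1 exactly when $\pi_{11}=1$ and in Pattern 2 exactly when $\pi_{2m}=1$; in both cases $E$ coincides with the fully informative experiment, so the formulas from the two cases agree and the assignment $q\mapsto E$ is well-defined at the boundary. No step is a genuine obstacle: the whole argument is a substitution into the definition of $q(E)$ combined with the observation that a semi-informative experiment is one-dimensional once the pattern is fixed. The assumption $u_{11}\ge u_{2m}=1>0$ from the binary-state normalization of the payoff matrix ensures all denominators are nonzero.
\end{prevproof}
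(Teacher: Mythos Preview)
Your proposal is correct and is exactly the natural verification; the paper itself does not prove this observation but simply cites it from \cite{BergemannBS2018design}, so there is no alternative approach to compare against. Your case analysis on the two patterns, the inversion of the defining formula for $q(E)$, and the boundary check at $q(E)=u_{11}-u_{2m}$ are all sound and complete.
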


\begin{observation}\label{obs:value-function}\cite{BergemannBS2018design}
For any {\semiinformative} experiment $E$, and any $\theta\in [0,1]$, $V_{\theta}^*(E) = \theta q(E) + u_{2m} + \min\{ u_{11} - u_{2m} - q(E), 0\}$ (recall that $u_{11}\geq u_{2m}=1$).
\end{observation}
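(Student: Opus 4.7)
The plan is to prove the identity by direct computation, splitting into the two cases identified by Observation~\ref{obs:q to exp} according to whether $q(E)\leq u_{11}-u_{2m}$ (Pattern~1) or $q(E)>u_{11}-u_{2m}$ (Pattern~2). In both patterns a semi-informative experiment has $\pi_{1j}(E)=\pi_{2j}(E)=0$ for every $1<j<m$, so only the entries $\pi_{11},\pi_{1m},\pi_{21},\pi_{2m}$ contribute to $V_\theta^*(E)$. Since the reduced payoff matrix satisfies $u_{1m}=u_{21}=0$, the two corresponding terms vanish in the sum, and the general formula
\[
V_\theta^*(E)=\sum_{i\in[n]}\sum_{j\in[m]}\theta_i\,\pi_{ij}(E)\,u_{ij}
\]
collapses to $V_\theta^*(E)=\theta\,\pi_{11}(E)\,u_{11}+(1-\theta)\,\pi_{2m}(E)\,u_{2m}$ for every semi-informative~$E$.

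For Pattern~1 I would plug in $\pi_{2m}(E)=1$ and $\pi_{11}(E)=(q(E)+u_{2m})/u_{11}$ from Observation~\ref{obs:q to exp}, yielding
\[
V_\theta^*(E)=\theta\bigl(q(E)+u_{2m}\bigr)+(1-\theta)u_{2m}=\theta\,q(E)+u_{2m}.
\]
Since $q(E)\leq u_{11}-u_{2m}$ in this case, $\min\{u_{11}-u_{2m}-q(E),0\}=0$, so this matches the claimed formula. For Pattern~2 I would plug in $\pi_{11}(E)=1$ and $\pi_{2m}(E)=(u_{11}-q(E))/u_{2m}$, yielding
\[
V_\theta^*(E)=\theta\,u_{11}+(1-\theta)(u_{11}-q(E))=\theta\,q(E)+u_{11}-q(E).
\]
In this case $q(E)>u_{11}-u_{2m}$, so $\min\{u_{11}-u_{2m}-q(E),0\}=u_{11}-u_{2m}-q(E)$, and adding $u_{2m}$ to this gives exactly $u_{11}-q(E)$, matching the claimed formula as well.

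There is essentially no obstacle: once Observation~\ref{obs:q to exp} has fixed the explicit entries of $E$ in each pattern, the identity is a one-line substitution. The only thing to be slightly careful about is recording the reduction from four nonzero entries to two (via $u_{1m}=u_{21}=0$) before case-splitting, so that the computation in each pattern is transparent.
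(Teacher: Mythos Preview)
Your proof is correct and is exactly the natural direct computation one would write for this observation. The paper itself does not prove it (it is stated with a citation to \cite{BergemannBS2018design}), so there is nothing to compare; your case split via Observation~\ref{obs:q to exp} and the reduction to $V_\theta^*(E)=\theta\,\pi_{11}(E)\,u_{11}+(1-\theta)\,\pi_{2m}(E)\,u_{2m}$ using $u_{1m}=u_{21}=0$ is precisely the intended calculation.
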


\subsection{Proof of \Cref{lem:describe the menu with q}}

\begin{lemma}\label{lem:opt without sigma-IC has mass on first two columns}
There exists a {\responsiveIC} and IR mechanism $\MM$ that satisfies both of the following properties:
\begin{enumerate}
    \item Every experiment of $\MM$ only recommends the fully informative actions. Formally, $\pi_{ij}(\MM(\theta))=0$, for all $\theta\in \Theta, i\in \{1,2\},2\leq j\leq m-1$. 
    \item $\rev(\MM)=\optstar$.
\end{enumerate}
\end{lemma}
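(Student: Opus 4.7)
The plan is to take any optimal responsive-IC and IR mechanism $\MM=\{(E(\theta),t(\theta))\}_{\theta\in[0,1]}$, which by the WLOG assumption from the preliminaries may be taken to be responsive (so $V_\theta(E(\theta))=V_\theta^*(E(\theta))$ at every type), and replace each experiment $E(\theta)$ by a new experiment $E'(\theta)$ supported only on columns $1$ and $m$, chosen to preserve the responsive value function exactly. The payment rule will be left untouched, so revenue is preserved automatically, and only the responsive-IC and IR conditions of the new mechanism need verification.

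The construction rests on a single identity: for every experiment $E$ and every $\theta$,
$$V_\theta^*(E)\;=\;\theta\cdot A(E)+(1-\theta)\cdot B(E),\qquad A(E):=\sum_{j=1}^{m}\pi_{1j}(E)\,u_{1j},\ \ B(E):=\sum_{j=1}^{m}\pi_{2j}(E)\,u_{2j}.$$
Any experiment supported only on columns $1$ and $m$ has $A=\pi_{11}u_{11}$ and $B=\pi_{2m}u_{2m}$, using $u_{1m}=u_{21}=0$. So I set
$$\pi_{11}(E'(\theta)):=\frac{A(E(\theta))}{u_{11}},\qquad \pi_{2m}(E'(\theta)):=\frac{B(E(\theta))}{u_{2m}},$$
with $\pi_{1m}(E'(\theta))=1-\pi_{11}(E'(\theta))$, $\pi_{21}(E'(\theta))=1-\pi_{2m}(E'(\theta))$, and $\pi_{ij}(E'(\theta))=0$ for $2\le j\le m-1$. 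These are valid probabilities because $A(E(\theta))\leq u_{11}\sum_j\pi_{1j}(E(\theta))=u_{11}$, and similarly $B(E(\theta))\leq u_{2m}$. By design, $A(E'(\theta))=A(E(\theta))$ and $B(E'(\theta))=B(E(\theta))$, hence $V_\theta^*(E'(\theta'))=V_\theta^*(E(\theta'))$ for every pair $\theta,\theta'$.

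It follows immediately that every responsive-IC constraint of the new mechanism $\MM'=\{(E'(\theta),t(\theta))\}_{\theta}$ is numerically identical to the corresponding constraint of $\MM$, so responsive-IC is preserved. For IR, the buyer can always elect to follow the recommendation, so $V_\theta(E'(\theta))\ge V_\theta^*(E'(\theta))=V_\theta^*(E(\theta))$; combining with the responsiveness of $\MM$ gives $V_\theta^*(E(\theta))=V_\theta(E(\theta))\ge u(\theta)+t(\theta)$, and hence $V_\theta(E'(\theta))-t(\theta)\ge u(\theta)$. Since payments are unchanged, $\rev(\MM')=\rev(\MM)=\optstar$, and property $1$ of the statement holds by construction.

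The main delicate point in the proof is the IR step. One cannot bound $V_\theta(E'(\theta))$ by $V_\theta(E(\theta))$ directly, because $E'(\theta)$ need not be Blackwell-more-informative than $E(\theta)$ and the best-response value can strictly drop under the transformation. The argument must therefore route through the responsive value $V_\theta^*$, which is precisely why the construction is pinned down to match both aggregates $A(E(\theta))$ and $B(E(\theta))$ exactly, and why it is essential to invoke the WLOG responsiveness of the starting mechanism so that the original IR constraint, phrased in terms of $V_\theta$, can be converted to one in terms of $V_\theta^*$.
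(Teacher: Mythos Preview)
Your proof is correct and follows essentially the same approach as the paper. The paper redistributes the probability mass column by column (moving each $\pi_{i\ell}$ to a convex combination of $\pi_{i1}$ and $\pi_{im}$ that preserves the weighted average $\sum_j\pi_{ij}u_{ij}$), whereas you compute the aggregates $A(E)$ and $B(E)$ once and write down the two-column experiment directly; the end result is the same experiment, and both arguments hinge on the identity $V_\theta^*(E)=\theta A(E)+(1-\theta)B(E)$ to preserve all \responsiveIC\ constraints and IR with unchanged payments.
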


\begin{proof}
Fix any {\responsiveIC} and IR mechanism $\MM$, any experiment $E$ that $\MM$ offers, any action $\ell$ such that $2\leq \ell\leq m-1$, and any state $i=1,2$. Suppose $\pi_{i\ell}(E)$ is not zero. We 
modify $E$ to get another experiment $E'$ by moving all the probability mass from $\pi_{i\ell}(E)$ to $\pi_{i1}(E)$ and $\pi_{im}(E)$ in a proper way, such that the modified mechanism generates the same revenue. 
Let $\MM^*$ be an optimal {\responsiveIC} and IR mechanism. We show how to modify its experiments $E$ to obtain a {\responsiveIC} and IR menu that satisfies both properties in the statement.  

For any $\theta\in [0,1]$, recall that $V_{\theta}^*(E)$ is the agent's value for experiment $E$ if she follows the recommendation:

$$V_\theta^*(E) = \theta \sum_{j\in [m]} \pi_{1j}(E) u_{1j} + (1-\theta)\sum_{j\in [m]} \pi_{2j}(E) u_{2j}$$

Let $U_{\theta}(E)=V_\theta^*(E)-t(E)$ be her utility for $E$ when following the recommendation. We only consider moving the probability mass from $\pi_{2\ell}(E)$ to $\pi_{21}(E)$ and $\pi_{2m}(E)$. The case where we move the probability mass of $\pi_{1\ell}(E)$ follows from a similar argument. 

Without loss of generality, assume $\eps=\pi_{2\ell}(E)>0$. We move $\eps_1$ mass from $\pi_{2\ell}(E)$ to $\pi_{21}(E)$ and $\eps_2$ mass from $\pi_{2\ell}(E)$ to $\pi_{2m}(E)$. Both $\eps_1$ and $\eps_2$, which satisfy $\eps_1+\eps_2=\eps$, will be determined later. Let $E'$ be the modified experiment, and we keep its price as $t(E)$. 
For every $\theta$, we show how to choose $\eps_1$ and $\eps_2$ appropriately so that $U_{\theta}(E')=U_{\theta}(E)$. 
\begin{align*}
    U_\theta(E')-U_{\theta}(E) &= (1-\theta) \sum_{j\in [m]} (\pi_{2j}(E')-\pi_{2j}(E)) u_{2j}\\
    &{=(1-\theta) \cdot \left( \eps_1(u_{21}-u_{2\ell} ) + \eps_2(u_{2m}-u_{2\ell})\right)}
\end{align*}

Here the first equality follows from $\pi_{1j}(E')=\pi_{1j}(E), \forall j\in [m]$. 
Now choose $\eps_1, \eps_2$ such that $\eps_1 + \eps_2 = \eps$ and $\eps_1(u_{2\ell} - u_{21}) =\eps_2(u_{2m} - u_{2\ell})$. In other words, $$\eps_1 = \frac{(u_{2m} - u_{2\ell})\eps}{(u_{2\ell} - u_{21})+(u_{2m} - u_{2\ell})},\quad\eps_2 = \frac{(u_{2\ell} - u_{21})\eps}{(u_{2\ell} - u_{21})+(u_{2m} - u_{2\ell})}$$
Notice that since $u_{21}<u_{2\ell}<u_{2m}$ we have that $\eps_1,\eps_2>0$. With the choices of $\eps_1,\eps_2$, we have $U_\theta(E) = U_\theta(E'), \forall \theta \in [0,1]$. Thus, the modified mechanism is still {\responsiveIC} and IR and has the same revenue. 
We modify  $\MM^*$ by applying the above procedure to every experiment $E$ offered by $\MM^*$, every $\ell\in \{2,...,m-1\}$ , and every state $i=1,2$. Let $\MM'$ be the mechanism after the modification. Then $\MM'$ is {\responsiveIC} and IR and satisfies both properties in the statement. 
\end{proof}

\notshow{
\begin{lemma}\label{lem:opt without sigma-IC has mass on first two columns}
We consider an environment where $\Omega = \{ \omega_1, \omega_2\}$ and $A = \{a_1, \ldots, a_k\}$. We consider the class of menus $\mathcal{C}_\sigma$ that satisfy the feasibility constraints, the  IC constraints, and the IR constraints but \underline{not} the $\sigma$-IC constraints. Let $\MM_\sigma \in \mathcal{C}_\sigma$ be a menu in this class. Let $a_1, a_2$ be the actions that generate the most payoff under states $\omega_1, \omega_2$, respectively. Then, there is a menu $\MM'_\sigma \in \mathcal{C}_\sigma$ in which every experiment $E \in \MM'_\sigma$ puts probability mass only in the columns that correspond to actions $a_1, a_2$ and has $\rev{(\MM'_\sigma)} = \rev{(\MM_\sigma)}$.
\end{lemma}

\begin{proof}
Consider an experiment $E \in \MM^*_\sigma$, an action $a_\ell, \ell > 2$, and let $\pi_{\omega_1,\ell}(E), \pi_{\omega_2, \ell}(E)$ be the probability that $E$ recommends action $\ell$ when the state of the world is $\omega_1, \omega_2$ respectively. Assume that $\pi_{\omega_1,\ell}(E) > \epsilon > 0$ or $\pi_{\omega_2, \ell}(E) > \epsilon > 0$. We will show how to modify this experiment $E$ to generate the same amount of revenue and having $\pi_{\omega_1,\ell}(E) = \pi_{\omega_2,\ell}(E) = 0$. In this class of menus, since we ignore the $\sigma$-IC constraints we consider a modified version of the value function 
\begin{align*}
 V_\theta^\sigma(E) = \theta \sum_{j\in [k]} \pi_{\omega_1, j}(E) u_{\omega_1, a_j} + (1-\theta)\sum_{j\in [k]} \pi_{\omega_2, j}(E) u_{\omega_2, a_j} \implies \\
 V_\theta^\sigma(E) = \theta \left(\sum_{j\in [k]} \pi_{\omega_1, j}(E) u_{\omega_1, a_j} -  \sum_{j\in [k]} \pi_{\omega_2, j}(E) u_{\omega_2, a_j}\right) + \sum_{j\in [k]} \pi_{\omega_2, j}(E) u_{\omega_2, a_j}
\end{align*}
Let $U_{\theta}(E)$ be the utility of $\theta$ if she picks experiment $E$. We have that
\begin{align*}
    U_{\theta}(E) = \theta \left(\sum_{j\in [k]} \pi_{\omega_1, j}(E) u_{\omega_1, a_j} -  \sum_{j\in [k]} \pi_{\omega_2, j}(E) u_{\omega_2, a_j}\right) + \sum_{j\in [k]} \pi_{\omega_2, j}(E) u_{\omega_2, a_j} - t(E)
\end{align*}
We first consider the case where $\pi_{\omega_2, \ell}(E) = \epsilon > 0$. The other case follows similarly. We move $\epsilon_1$ mass from $\pi_{\omega_2, \ell}(E)$ to $\pi_{\omega_2, 1}(E)$ and $\epsilon_2$ mass from $\pi_{\omega_2, \ell}(E)$ to $\pi_{\omega_2, 2}(E)$. These quantities will be determined later. We call this modified experiment $E'$ and the modified menu $\MM'_\sigma$. Now consider the utility of the type $\theta$ if she picks the experiment $E'$. We have

\begin{align*}
    U_\theta(E') = \theta \left(\sum_{j\in [k]} \pi_{\omega_1, j}(E') u_{\omega_1, a_j} -  \sum_{j\in [k]} \pi_{\omega_2, j}(E') u_{\omega_2, a_j}\right) + \sum_{j\in [k]} \pi_{\omega_2, j}(E') u_{\omega_2, a_j} - t(E') =\\
    \theta \left(\sum_{j\in [k]} \pi_{\omega_1, j}(E) u_{\omega_1, a_j}  -  \sum_{j\in [k]} \pi_{\omega_2, j}(E) u_{\omega_2, a_j} - \epsilon_1(u_{\omega_2, a_1} - u_{\omega_2, a_\ell}) - \epsilon_2(u_{\omega_2, a_2} - u_{\omega_2, a_\ell})\right) + \\
    \sum_{j\in [k]} \pi_{\omega_2, j}(E) u_{\omega_2, a_j} + \epsilon_1(u_{\omega_2, a_1} - u_{\omega_2, a_\ell}) + \epsilon_2(u_{\omega_2, a_2} - u_{\omega_2, a_\ell}) - t(E) 
\end{align*}

Now we pick $\epsilon_1, \epsilon_2$ in a way such that $\epsilon_1 + \epsilon_2 = \epsilon$ and $\epsilon_1(u_{\omega_2, a_1} - u_{\omega_2, a_\ell}) = -\epsilon_2(u_{\omega_2, a_2} - u_{\omega_2, a_\ell})$. Hence, we get that $$\epsilon_1 = -\frac{u_{\omega_2, a_2} - u_{\omega_2, a_\ell}}{(u_{\omega_2, a_1} - u_{\omega_2, a_\ell})-(u_{\omega_2, a_2} - u_{\omega_2, a_\ell})}\epsilon, \epsilon_2 = \frac{u_{\omega_2, a_1} - u_{\omega_2, a_\ell}}{(u_{\omega_2, a_1} - u_{\omega_2, a_\ell})-(u_{\omega_2, a_2} - u_{\omega_2, a_\ell})}\epsilon$$
Notice that since $u_{\omega_2, a_2} - u_{\omega_2, a_\ell} > 0, u_{\omega_2, a_1} - u_{\omega_2, a_\ell} < 0$ we have that $\epsilon_1 > 0, \epsilon_2 > 0$. With these choices, we have that $U_\theta(E) = U_\theta(E'), \forall \theta \in \Theta$. Hence, all the types who were buying experiment $E$ will now purchase experiment $E'$ and no type who was not picking this experiment will now deviate. Moreover, this menu will generate the same revenue as the original one since we have not modified the prices. The case where we have $\pi_{\omega_1, \ell}(E) > 0$ follows by a symmetric argument.
\end{proof}

}



Due to \Cref{lem:opt without sigma-IC has mass on first two columns}, we can focus on {\responsiveIC} and IR mechanisms that only recommend fully informative actions. Let $\MenuTwoAction$ be the set of all such mechanisms. For ease of notation, we denote an experiment offered by the mechanism $E=\begin{pmatrix}
\pi_{11} & 1-\pi_{11}\\
1-\pi_{2m} & \pi_{2m}
\end{pmatrix}$. Note that 
$V_\theta^*(E) = \theta u_{11}\cdot \pi_{11}(E) + (1-\theta)u_{2m}\cdot \pi_{2m}(E)$. 


\vspace{.1in}

\begin{prevproof}{Lemma}{lem:describe the menu with q}
Let $\MM$ be an arbitrary mechanism in $\MenuTwoAction$. Let $E$ be any experiment offered by $\MM$, such that $\pi_{11}<1$ and $\pi_{2m}<1$. Without loss of generality, assume that 
$(1-\pi_{11})u_{11}\leq (1-\pi_{2m})u_{2m}$. The other case is similar. Let $\eps_1 = 1-\pi_{11}$ and $\eps_2 = \frac{u_{11}}{u_{2m}}\cdot \eps_1$. By our assumption, $\eps_2\leq 1-\pi_{2m}$. 
We modify experiment $E$ and its price $t(E)$ in the following way: Move probability mass $\eps_1$ from $\pi_{1m}$ to $\pi_{11}$, and move probability mass $\eps_2$ from $\pi_{21}$ to $\pi_{2m}$. 
Denote $E'$ the modified experiment. Since $\pi_{11}+\eps_1=1$, we have 
$$E'=\begin{pmatrix}
1 & 0 \\
1-\pi_{2m}-\eps_2 & \pi_{2m}+\eps_2
\end{pmatrix}.$$ 
Consider the mechanism $\widehat{\MM}$ where we replace $E$ by $E'$ and let the price of experiment $E'$ be $t(E'):=t(E)+\eps_2\cdot u_{2m}$. Formally, for every $\theta$ such that $\MM(\theta)=E$, define $\widehat{\MM}(\theta)=E'$ and the payment $t'(\theta)=t(\theta)+\eps_2\cdot u_{2m}$. For any other types, $\widehat{\MM}$ offers the same experiment with the same price as offered by $\MM$.   

For any $\theta$, we have that 
\begin{align*}
    V^*_\theta(E') - V^*_\theta(E) &= \theta u_{11}\cdot (\pi_{11}(E')-\pi_{11}(E)) + (1-\theta) u_{2m}\cdot (\pi_{2m}(E')-\pi_{2m}(E))\\
    &= \theta \eps_1 \cdot u_{11} + (1-\theta) \eps_2\cdot u_{2m} = \eps_2\cdot u_{2m}.
\end{align*} Hence, we can immediately see that for every type $\theta$, the agent's utility for experiment $E'$ at price $t(E'))$ is the same as the agent's utility for experiment $E$ at price $t(E)$. Since $\MM$ is {\responsiveIC} and IR, $\widehat{\MM}$ is also {\responsiveIC} and IR. Moreover, we have that $\rev{(\widehat{\MM})} \geq \rev{(\MM)}$. By repeating this procedure 
for all the experiments in the optimal mechanism $\MM'\in \MenuTwoAction$, we can construct a mechanism $\MM^*\in \MenuWithQ$, such that $\rev(\MM^*)\geq \rev(\MM')$. By \Cref{lem:opt without sigma-IC has mass on first two columns}, $\rev(\MM')=\optstar$. Thus $\rev(\MM^*)=\optstar$. 
\end{prevproof}


\subsection{Proof of Lemma~\ref{lem:ic in q space}}

We begin with necessity. For every $\theta$, let $E(\theta)$ be the experiment that corresponds to $q(\theta)$ (\Cref{obs:q to exp}). For any two types $\theta_1,\theta_2\in [0,1]$, where $\theta_2>\theta_1$, 
we have
\begin{align*}
    V^*_{\theta_1}(E(\theta_1))-t(\theta_1)&\geq     V^*_{\theta_1}(E(\theta_2))-t(\theta_2)\\
    V^*_{\theta_2}(E(\theta_2))-t(\theta_2)&\geq     V^*_{\theta_2}(E(\theta_1))-t(\theta_1)    
\end{align*}

Adding up both inequalities obtains
\begin{equation}\label{equ1:lem:ic in q space}
V^*_{\theta_1}(E(\theta_1))+V^*_{\theta_2}(E(\theta_2))\geq V^*_{\theta_1}(E(\theta_2))+V^*_{\theta_2}(E(\theta_1))    
\end{equation}

By \Cref{obs:value-function}, for every $\theta,\theta'\in [0,1]$,
$$
V^*_{\theta}(E(\theta'))=
\begin{cases}
\theta\cdot q(\theta')+u_{2m},& \quad\text{if }q(\theta')\leq u_{11}-u_{2m}\\
(\theta-1)\cdot q(\theta')+u_{11},&\quad \text{if }q(\theta')>u_{11}-u_{2m}
\end{cases}
$$

If both $q(\theta_1)$ and $q(\theta_2)$ are at most $u_{11}-u_{2m}$, Inequality~\eqref{equ1:lem:ic in q space} $\iff (\theta_1-\theta_2)(q(\theta_1)-q(\theta_2))\geq 0\iff q(\theta_1)\leq q(\theta_2)$. Similarly, when both $q(\theta_1)$ and $q(\theta_2)$ are greater than $u_{11}-u_{2m}$, we also have $q(\theta_1)\leq q(\theta_2)$.  

Without loss of generality, the only remaining case is when 
$q(\theta_2)\leq u_{11}-u_{2m}<q(\theta_1)$. Inequality~\eqref{equ1:lem:ic in q space} is equivalent to
\begin{align*}
&(\theta_1-1)q(\theta_1)+u_{11}+\theta_2\cdot q(\theta_2)+u_{2m}\geq \theta_1\cdot q(\theta_2)+u_{2m}+(\theta_2-1)\cdot q(\theta_1)+u_{11}\\
\iff &(\theta_1-\theta_2)(q(\theta_1)-q(\theta_2))\geq 0,
\end{align*}
which contradicts with $\theta_1<\theta_2$. Hence, $q(\theta_1)\geq q(\theta_2)$ and $q(\theta)$ is non-decreasing in $\theta$. 

For every $\theta\in [0,1]$, let $U^*(\theta)=V^*_{\theta}(E(\theta))-t(\theta)$ be the agent's utility by reporting truthfully and following the recommendation of the experiment. Let $\theta^*=\sup\{\theta:q(\theta)\leq u_{11}-u_{2m}\}$. Assume $q(\theta^*)\leq u_{11}-u_{2m}$.\footnote{The other case follows from a similar argument, where we replace both intervals $[0,\theta^*],(\theta^*,1]$ by $[0,\theta^*),[\theta^*,1]$.}
When $\theta\in [0,\theta^*]$, $U^*(\theta)=\theta\cdot q(\theta)+u_{2m}-t(\theta)$ is a quasilinear function, and by Myerson's theory~\cite{Myerson81}, $\MM$ is {\responsiveIC} implies that $U^*(\theta)=U^*(0)+\int_{0}^\theta q(x)dx$. Thus
\begin{equation}\label{equ:payment-low}
t(\theta)=V^*_\theta(E(\theta))-U^*(\theta)=\theta\cdot q(\theta)+u_{2m}-U^*(0)-\int_{0}^\theta q(x)dx
\end{equation}
Similarly, for all $\theta\in (\theta^*,1]$, we have $U^*(\theta)=U^*(1)-\int_{\theta}^{1} q(x)dx$. Hence
\begin{equation}\label{equ:payment-high}
t(\theta)=(\theta-1)\cdot q(\theta)+u_{11}-U^*(1)+\int_{\theta}^{1} q(x)dx
\end{equation}

According to \Cref{equ:payment-low}, when $\theta=0$, $t(0)=u_{2m}-U^*(0)\geq 0$. Since $\MM$ is IR, $U^*(0)\geq u_{2m}$. Thus $U^*(0)=u_{2m}$. Similarly, by \Cref{equ:payment-high} we have $U^*(1)=u_{11}$.

When the agent has type $\theta^*$ and misreports to $\theta'>\theta^*$, {\responsiveIC} implies that 
\begin{align*}
V^*_{\theta^*}(E(\theta'))-t(\theta')=&(\theta^*-\theta')\cdot q(\theta')+u_{11}-\int_{\theta'}^1q(x)dx\\
\leq & U^*(\theta^*)=u_{2m}+\int_{0}^{\theta^*}q(x)dx
\end{align*} 

Let $\theta'\rightarrow \theta^{*^+}$,~\footnote{$a\rightarrow b^+$ means that $a$ approaches $b$ from the right.} we have $\int_{0}^1q(x)dx\geq u_{11}-u_{2m}$. Similarly, when the agent has type $\theta'>\theta^*$ and misreports to $\theta^*$, {\responsiveIC} implies that
\begin{align*}
V^*_{\theta'}(E(\theta^*))-t(\theta^*)=&(\theta'-\theta^*)\cdot q(\theta^*)+u_{2m}+\int_0^{\theta^*}q(x)dx\\
\leq & U^*(\theta')=u_{11}-\int_{\theta'}^{1}q(x)dx
\end{align*} 

Let $\theta'\rightarrow \theta^{*^+}$, we have $\int_{0}^1q(x)dx\leq u_{11}-u_{2m}$. Thus $\int_{0}^1q(x)dx=u_{11}-u_{2m}$. Now it's easy to verify that $t(\theta)$ satisfies \Cref{equ:payment}. Thus for every $\theta$, $U^*(\theta)=V_\theta^*(E(\theta))-t(\theta)=u_{2m}+\int_{0}^\theta q(x)dx$. The third property directly follows from the fact that $\MM$ is IR.

For sufficiency, suppose $\Bq$ satisfies all of the properties in the statement, construct the payment $\Bt$ using \Cref{equ:payment}. Then by \Cref{obs:value-function}, for every true type $\theta$, misreporting to type $\theta'$ induces utility
$$V^*_{\theta}(E(\theta'))-t(\theta')=(\theta-\theta')\cdot q(\theta')+u_{2m}+\int_{0}^{\theta'}q(x)dx$$

Since $q(\theta')$ is non-decreasing, it is not hard to see that the utility is maximized when $\theta'=\theta$. Thus the mechanism is {\responsiveIC}. Moreover, the utility for reporting truthfully her type $\theta$ is $u_{2m}+\int_{0}^\theta q(x)dx$. Thus by property 3, the mechanism is IR. 


\subsection{Proof of \Cref{lem:modified-program-equivalent}}

By \Cref{lem:ic in q space}, for any $\Bq$ that can be implemented with some {\responsiveIC} and IR mechanism $\MM$, the revenue of $\MM$ can be written as follows:
\begin{align*}
   \rev(\MM)&=\int_0^1 t(\theta){f(\theta)} d\theta=\int_0^1\left(\theta\cdot q(\theta) + \min\{u_{11} - u_{2m} - q(\theta), 0\} - \int_0^\theta q(x) dx\right){f(\theta)}d\theta\\
   &=\int_0^1 [\theta f(\theta) q(\theta) + \min\{(u_{11}-u_{2m} - q(\theta))f(\theta), 0\}]d\theta-\int_0^1\int_0^\theta q(x) d x d F(\theta)\\
    &=\int_0^1 [\theta f(\theta) q(\theta) + \min\{(u_{11}-u_{2m} - q(\theta))f(\theta), 0\}]d\theta-F(\theta)\int_0^\theta q(x) d x~\Big{|}_0^1 +\int_0^1 q(\theta) F(\theta)d \theta\\
   &=\int_0^1 [(\theta f(\theta) + F(\theta))q(\theta) + \min\{(u_{11}-u_{2m} - q(\theta))f(\theta), 0\}]d\theta{-(u_{11}-u_{2m})}
\end{align*}

\begin{figure}[ht!]
\colorbox{MyGray}{
\begin{minipage}{.98\textwidth}
$$\quad\textbf{{sup}  } \int_0^1 [(\theta f(\theta) + F(\theta))q(\theta) + \min\{(u_{11}-u_{2m} - q(\theta))f(\theta), 0\}]d\theta$$
  \begin{align*}
\text{s.t.} \qquad 
 &q(\theta) \text{ is non-decreasing in }\theta\in [0,1]\\
 & q(0)\geq -u_{2m},\quad q(1)\leq u_{11}\\ 
    &\int_{0}^1 q(\theta) d\theta = u_{11}-u_{2m} \\
    & u_{2m}+\int_0^{\theta}q(x)dx\geq u(\theta), &\forall \theta\in [0,1]
\end{align*}
\end{minipage}}
\caption{The Optimization Problem with Explicit IR constraints}\label{fig:continous-program}
\end{figure}

Thus the optimal mechanism $\MM^*\in \MenuWithQ$ is captured by the optimization problem in \Cref{fig:continous-program}. Notice that there is an IR constraint for every type $\theta$ in \Cref{fig:continous-program}. To bound the size of the optimal {\responsiveIC} and IR mechanism, we propose an equivalent optimization problem that only contains $O(m)$ constraints. 
Recall that for every $\theta\in [0,1]$, $u(\theta)=\max_{k\in [m]}\{\theta\cdot u_{1k}+(1-\theta)\cdot u_{2k}\}$. For every $k\in [m]$, let $h_k(\theta)=\theta\cdot u_{1,m+1-k}+(1-\theta)\cdot u_{2,m+1-k}=\theta\cdot \ell_k+u_{2,m+1-k},\forall \theta\in [0,1]$ be the agent's value function when she has no additional information, and follows action $m+1-k$. Recall that $\ell_k=u_{1,m+1-k}-u_{2,m+1-k}$ is also the slope of the $k$-th piece (from the left) of the IR curve. 

Given any feasible solution $\{q(\theta)\}_{\theta\in [0,1]}$ of the program in \Cref{fig:continous-program}, the IR constraint at type $\theta$ is equivalent to: $u_{2m}+\int_{0}^{\theta}q(x)dx\geq h_k(\theta),\forall k\in [m]$. For every $k\in [m]$, let $\theta_k :=\sup\{\theta\in [0,1]:q(\theta)\leq \ell_k\}$. Consider the following function $$u_{2m}+\int_{0}^{\theta}q(x)dx- h_k(\theta)=u_{2m}+\int_{0}^{\theta}q(x)dx-\theta\cdot \ell_k-u_{2,m+1-k}$$

By taking the derivative on $\theta$, we can see that the above function is minimized at $\theta_k$, {as $q(\theta)\leq\ell_k$ for all $\theta<\theta_k$ and $q(\theta)> \ell_k$ for all $\theta>\theta_k$.} Thus the set of constraints $\left\{u_{2m}+\int_{0}^{\theta}q(x)dx\geq h_k(\theta),\forall \theta\in [0,1]\right\}$ is captured by a single constraint $u_{2m}+\int_{0}^{\theta_k}q(x)dx\geq h_k(\theta_k)$. We notice that $\theta_k=\int_{0}^1\ind[q(x)\leq \ell_k]dx$. By the definition of $h_k$,
the constraint can be rewritten as

\begin{equation}\label{equ:IR_consraint}
\int_0^{1}(q(x)-\ell_k)\cdot \ind[q(x)\leq \ell_k]dx\geq u_{2,m+1-k}-u_{2m}    
\end{equation}

A feasible $\Bq$ must satisfy that $q(\theta)\in [-u_{2m},u_{11}]$ and $\int_{0}^1q(x)dx=u_{11}-u_{2m}$. Also $\ell_1=-u_{2m},\ell_m=u_{11}$. Thus, Inequality~\eqref{equ:IR_consraint} is trivial when $k=1$ or $m$. In our modified optimization problem, we only include the constraints for $2\leq k\leq m-1$.

\vspace{.2in}

\begin{prevproof}{Lemma}{lem:modified-program-equivalent}

It suffices to show that: A solution $\Bq=\{q(\theta)\}_{\theta\in [0,1]}$ is feasible in the optimization problem in \Cref{fig:continous-program} if and only if it is feasible in the optimization problem in \Cref{fig:continous-program-modified}.

We denote $Q$ the optimization problem in \Cref{fig:continous-program} and $Q'$ the optimization problem in \Cref{fig:continous-program-modified}. Suppose $\Bq$ is feasible in $Q$. For every $k\in [m]$, let $\theta_k=\sup\{\theta\in [0,1]:q(\theta)\leq \ell_k\}$. Then since $\theta_k=\int_{0}^1\ind[q(\theta)\leq \ell_k]dx$, we have
\begin{align*}
\int_0^1(q(x)-\ell_k)\cdot \ind[q(x)\leq \ell_k]dx&=\int_0^{\theta_k}q(x)dx-\theta_k\ell_k\\
&\geq u(\theta_k)-u_{2m}-\theta_k\ell_k \geq u_{2,m+1-k}-u_{2m}
\end{align*}

Here the first inequality follows from the IR constraint of program $Q$ at type $\theta_k$, and the second inequality follows from the definition of the IR curve $u(\cdot)$. Thus $\Bq$ is feasible in $Q'$.

For the other direction, suppose $\Bq$ is feasible for $Q'$. For every $k\in [m]$, consider the following function $H_k:[0,1]\to \mathbb{R}$, where $H(\theta):=\int_0^\theta q(x)dx-\theta\cdot \ell_k$. Since $q(\theta)$ is non-decreasing in $\theta$, $H_k(\theta)$ is minimized at $\theta=\theta_k=\sup\{\theta\in [0,1]:q(\theta)\leq \ell_k\}$. We notice that the last constraint of $Q'$ implies that $H_k(\theta_k)\geq u_{2,m+1-k}-u_{2m}$ for all  $k\in \{2,3,...,m-1\}$. It is not hard to verify that the inequality also holds for $k=1$ and $k=m$. When $k=1$, the inequality is trivial when $q(0)>-u_{2m}$. If $q(0)=-u_{2m}$, since $\ell_1=-u_{2m}$ and $q(\theta)\in [-u_{2m},u_{11}],\forall \theta\in [0,1]$, thus $q(\theta)=-u_{2m},\forall \theta\in [0,\theta_1]$. Thus $H_1(\theta_1)=0=u_{2m}-u_{2m}$. When $k=m$, since $\theta_m=1$ and $\ell_m=u_{11}$, $H_m(\theta_m)=u_{11}-u_{2m}-u_{11}=u_{21}-u_{2m}$.

Thus for every $k\in [m],\theta\in [0,1]$, we have
$$\int_0^\theta q(x)dx-\theta\cdot \ell_k\geq u_{2,m+1-k}-u_{2m}$$

Since $u(\theta)=\max_{k\in [m]}\{\theta\cdot \ell_k+u_{2,m+1-k}\}$, we have
$\int_0^\theta q(x)dx\geq u(\theta)-u_{2m}$. Thus $\Bq$ is feasible for $Q$. 
\end{prevproof}

\begin{prevproof}{Theorem}{thm:1/k approximation of full information for $k$ actions }
By Lemma~\ref{lem:opt without sigma-IC has k pieces}, let 
$\MM^*$ be the optimal semi-informative, {\responsiveIC} and IR mechanism with option size at most $\optionsize$.
By \Cref{lem:describe the menu with q}, $\rev(\MM^*)=\optstar\geq \opt$. By Lemma~\ref{lem:1/k-approx for menus of size k}, there exists a menu $\MM$ that contains only the fully informative experiment, such that $\rev(\MM)\geq \frac{\rev(\MM^*)}{\optionsize}$. Thus, $\optfi\geq \frac{\opt}{\optionsize}$. 
\end{prevproof}

\subsection{Proof of \Cref{lem:opt without sigma-IC has k pieces}}\label{sec:proof of m upper bound}


We first consider the case where $D$ is a discrete distribution. For simplicity we assume that $0\in \supp(D)$. Let $\supp(D)=\{\theta_1,\ldots,\theta_N\}$, where $0=\theta_1<\theta_2<\ldots <\theta_N\leq 1$ and $N$ is the size of the support. For every $i\in N$, denote $f_i$ the density of type $\theta_i$. For ease of notations, we denote $\theta_{N+1}=1$. The optimization problem in \Cref{fig:continous-program-modified} w.r.t. the discrete distribution $D$ is shown in \Cref{fig:discrete-program}.~\footnote{For any solution $\{q_i\}_{i\in [N]}$, the problem in \Cref{fig:discrete-program} is clearly equivalent to the problem in \Cref{fig:continous-program-modified} where $q(\cdot):[0,1]\to [-u_{2m},u_{11}]$ is defined as $q(\theta)=q_i, \forall i\in [N], \theta\in [\theta_i,\theta_{i+1})$.} The set of variables in the program is $\{q_i\}_{i\in [N]}$. Denote $\cP(D)$ (or $\cP$ when $D$ is clear from context) the optimization problem in \Cref{fig:discrete-program}.

\begin{figure}[ht!]
\colorbox{MyGray}{
\begin{minipage}{.98\textwidth}
$$\quad\textbf{max }\sum_{i=1}^{N}\left[q_i\cdot\left(\theta_i\cdot f_i+(\theta_{i+1}-\theta_i)\cdot\sum_{j=1}^i f_j\right)+\min\left\{(u_{11}-u_{2m}-q_i)\cdot f_i,0\right\}\right]$$
\begin{align*}
\text{s.t.} \qquad 
 &{\MonoConstr}\quad q_i\leq q_{i+1}, & \forall i\in [N-1]\\
 & {\MarginConstr}\quad q_1\geq-u_{2m},\quad q_N\leq u_{11}\\
 &{\IntegralConstr}\quad \sum_{i=1}^N q_i\cdot (\theta_{i+1}-\theta_i)= u_{11}-u_{2m} \\
 & {\IRConstr}\quad \sum_{i=1}^{N}\min\{q_i-\ell_k,0\}\cdot (\theta_{j+1}-\theta_j)\geq u_{2,m+1-k}-u_{2m}, &\forall k\in \{2,3,...,m-1\}
\end{align*}
\end{minipage}}
\caption{The Optimal {\ResponsiveIC} and IR Mechanism for Discrete Distribution $D$}~\label{fig:discrete-program}
\end{figure}

{
Our goal is to turn the program $\cP(D)$ into a collection of LPs, so that the highest optimum among the collection of LPs correspond to the optimum of $\cP(D)$. Each LP is parametrized by a vector $(i^*,\Bi=\{i_2,\ldots, i_{m-1}\})$, and finds the optimal solution among all $(q_1,\ldots, q_N)$'s that satisfies the following conditions:
\begin{enumerate}
    \item $q_{i}\leq u_{11}-u_{2m}$ for all $i< i^*$ and $q_{i}\geq u_{11}-u_{2m}$ for all $i\geq i^*$.
    \item $q_{i}\leq \ell_k$ for all $i< i_k$ and $q_{i}\geq \ell_k$ for all $i\geq i_k$ for all $k\in\{2,\ldots,m-1\}$. Recall that $\ell_k=u_{1,m+1-k}-u_{2,m+1-k}$.
\end{enumerate}



The reason that we consider solutions that satisfy these conditions is that now the objective and constraint {\IRConstr} are linear. In particular, the objective becomes 
$$\sum_{i=1}^{N}q_i\cdot\left(\theta_i\cdot f_i+(\theta_{i+1}-\theta_i)\cdot\sum_{j=1}^if_i\right)+\sum_{i=i^*+1}^N(u_{11}-u_{2m}-q_i)\cdot f_i.$$ 

Constraint {\IRConstr} becomes
$$\sum_{j=1}^{i_k}(q_j-\ell_k)\cdot (\theta_{j+1}-\theta_j)\geq u_{2,m+1-k}-u_{2m}.$$

To further simplify the program, we introduce variables $\hat{q}_{i}=q_{i+1}-q_i,\forall i\in [N-1]$, and replace $q_i$ with $q_1+\sum_{j=1}^{i-1}\hat{q}_j,\forall i\in [N]$. The monotonicity constraint of $\cP$ is thus captured by all $\hat{q}_i$'s being non-negative. See \Cref{fig:discrete-program2} for the LP we construct. In particular, constraint {\MarginConstrDis} corresponds to constraint {\MarginConstr} of $\cP$; Constraint {\IntegralConstrDis} corresponds to constraint {\IntegralConstr} of $\cP$. Constraints {\istarConstrLeft} and {\istarConstrRight} follow from the definition of $i^*$. Constraints {\ikConstrLeft} and {\ikConstrRight} follow from the definition of $i_k$, and {\IRConstrDis} corresponds to constraint {\IRConstr} of $\cP$. In the LP in \Cref{fig:discrete-program2}, both $i^*$ and $\mathbf{i}=\{i_k\}_{2\leq k\leq m-1}$ are fixed parameters. We denote $\cP'(D, i^*,\mathbf{i})$ the LP with those parameters.}

\begin{figure}[ht!]
\colorbox{MyGray}{
\begin{minipage}{.98\textwidth}
$$\quad\textbf{max  } \sum_{i=1}^{N}(q_1+\sum_{j=1}^{i-1}\hat{q}_j)\cdot(\theta_i f_i+(\theta_{i+1}-\theta_i)\cdot \sum_{r=1}^if_r)+\sum_{i=i^*+1}^N(u_{11}-u_{2m}-q_1-\sum_{j=1}^{i-1}\hat{q}_j)\cdot f_i$$
\vspace{-.3in}
  \begin{align*}
\text{s.t.} \qquad
   & {\MarginConstrDis}\quad q_1\geq -u_{2m},
   \quad q_1+\sum_{j=1}^{N-1}\hat{q}_j\leq u_{11}\\
    &{\IntegralConstrDis}\quad \sum_{i=1}^N(q_1+\sum_{j=1}^{i-1}\hat{q}_j)\cdot (\theta_{i+1}-\theta_i) = u_{11} - u_{2m}\\
    &{\istarConstrLeft}\quad q_1+\sum_{j=1}^{i^*-1}\hat{q}_j\leq u_{11}-u_{2m}\\
    &{\istarConstrRight}\quad q_1+\sum_{j=1}^{i^*}\hat{q}_j\geq u_{11}-u_{2m}\\ 
    & {\ikConstrLeft}\quad q_1+\sum_{j=1}^{i_k-1}\hat{q}_j\leq u_{1,m+1-k}-u_{2,m+1-k}, &\forall k\in \{2,3,\ldots,m-1\}\\
    & {\ikConstrRight}\quad q_1+\sum_{j=1}^{i_k}\hat{q}_j\geq u_{1,m+1-k}-u_{2,m+1-k}, &\forall k\in \{2,3,\ldots,m-1\}\\  
    &{\IRConstrDis}\quad \sum_{j=1}^{i_k}(q_1+\sum_{r=1}^{j-1}\hat{q}_r-\ell_k)\cdot (\theta_{j+1}-\theta_j)\geq u_{2,m+1-k}-u_{2m}, &\forall k\in \{2,3,\ldots,m-1\}\\    
    &\hat{q}_i\geq 0,&\forall i\in [N-1]
\end{align*}
\end{minipage}}
\caption{The Linear Program with Parameters $(i^*,\mathbf{i})$}~\label{fig:discrete-program2}
\end{figure}



We first show in \Cref{lem:discrete-program-most-zeros} that the LP has an optimal solution $\hat{\Bq}=(q_1,\{\hat{q}_i\}_{i\in [N-1]})$ such that all but ($\optionsize$) $\hat{q}_i$'s are zero, 
for any discrete distribution $D$ and any set of parameters $(i^*,\mathbf{i})$. We notice that it implies that the corresponding $\Bq=\{q_i\}_{i\in [N]}$ takes at most $\optionsize$ different values, which then implies that the mechanism contains at most $\optionsize$ different options. 

\begin{lemma}\label{lem:discrete-program-most-zeros}
For any integer $N$ and set of parameters $(i^*,\mathbf{i})$ where each parameter is in $[N]$, if $\cP'(D,i^*,\mathbf{i})$ is feasible, then there exists an optimal solution of $\cP'(D,i^*,\mathbf{i})$, denoted as $(q_1^*,\{\hat{q}_i^*\}_{i\in [N-1]})$, such that $|\{i\in [N-1]:\hat{q}_i^*>0\}|\leq \optionsize$.  
\end{lemma}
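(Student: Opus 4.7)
The plan is to exploit the fact that $\cP'(D, i^*, \mathbf{i})$ is a bounded linear program in the $N$ variables $q_1,\hat{q}_1,\ldots,\hat{q}_{N-1}$, so (under the feasibility hypothesis) it attains its maximum at a vertex of the feasible polytope. Standard LP theory will then bound how many coordinates can be off the non-negativity bound.

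First I would verify boundedness. The constraint $q_1\geq -u_{2m}$ together with $\hat{q}_i\geq 0$ and the upper half of \MarginConstrDis, namely $q_1+\sum_{j=1}^{N-1}\hat{q}_j\leq u_{11}$, pin every variable into a compact interval, so the feasible set is bounded. Since it is assumed non-empty, the maximum is attained at some vertex $(q_1^*, \hat{q}_1^*, \ldots, \hat{q}_{N-1}^*)$.

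Next, I would tally the ``non-bound'' constraints of $\cP'(D, i^*, \mathbf{i})$, i.e.\ those not of the form $q_1\geq -u_{2m}$ or $\hat{q}_i\geq 0$. Inspecting the LP: the upper half of \MarginConstrDis{} contributes $1$; \IntegralConstrDis{} contributes $1$ equality; \istarConstrLeft{} and \istarConstrRight{} contribute $2$; and \ikConstrLeft, \ikConstrRight, \IRConstrDis{} each contribute $m-2$ as $k$ ranges over $\{2,\ldots,m-1\}$. The total is $1+1+2+3(m-2)=3m-2$. The remaining constraints are exactly the $N$ bounds $\{q_1\geq -u_{2m}\}\cup\{\hat{q}_i\geq 0:i\in[N-1]\}$.

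Finally I would apply the vertex characterization. At any vertex of a polyhedron in $\mathbb{R}^N$, the gradients of the tight constraints span $\mathbb{R}^N$, so at least $N$ constraints are tight at $(q_1^*,\hat{q}_1^*,\ldots,\hat{q}_{N-1}^*)$. Of these, at most $3m-2$ can be non-bound constraints, so at least $N-(3m-2)$ tight constraints must come from the $N$ bounds; since at most one of those bounds involves $q_1$, at least $N-3m+1$ of the indices $i\in[N-1]$ satisfy $\hat{q}_i^*=0$. Consequently
\[
|\{i\in[N-1]:\hat{q}_i^*>0\}|\;\le\;(N-1)-(N-3m+1)\;=\;3m-2\;\le\;\optionsize,
\]
which is the desired bound. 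I do not expect a serious obstacle here; the argument is essentially a Carath\'eodory/BFS counting, and the only point needing care is bookkeeping the constraints (in particular, treating $q_1\geq -u_{2m}$ as a bound separate from the non-bound inequalities that also mention $q_1$).
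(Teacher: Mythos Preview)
Your proposal is correct and takes essentially the same approach as the paper: both argue via the structure of LP extreme points, counting that there are only $3m-2$ ``non-trivial'' constraints against $N$ variables. The paper converts to canonical form (splitting $q_1=q_1^+-q_1^-$, adding slacks) and invokes the basic-feasible-solution bound of $3m-1$ nonzeros, whereas you work directly with the vertex characterization in the original polytope and count tight constraints; your bookkeeping even yields the slightly sharper bound $3m-2\le\optionsize$.
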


\begin{proof}

The LP contains $N$ variables ($q_1$ is unconstrained), 1 equality constraint, and $4+3(m-2)=3m-2$ inequality constraints. We add slack variables to change the LP into the canonical form $\max\{\bm{c}^T\bm{x}:~A\bm{x}=\bm{b},\bm{x}\geq \bm{0}\}$. In particular, we replace variable $q_1$ by two non-negative variables $q_1^+,q_1^-$ such that $q_1=q_1^+-q_1^-$. For each inequality constraints, we add one slack variable and make the constraint an equality. 

Now, we have a canonical form LP, with $N+3m-1$ variables, and $3m-1$ equality constraints. Thus by the Fundamental Theorem of linear programming, if $\cP'(D, i^*,\mathbf{i})$ is feasible, any basic {feasible} solution in the canonical-form LP has at most $\optionsize$ non-zero entries. Since there must be an optimal solution of $\cP'(D, i^*,\mathbf{i})$ that correspond to a basic feasible solution, so it must have at most $\optionsize$ non-zero entries. 
\end{proof}

\begin{prevproof}{Lemma}{lem:opt without sigma-IC has k pieces} 
We first prove the case when $D$ is a discrete distribution. Let $\opt'$ be the maximum objective, over all parameters $(i^*,\mathbf{i})$,
and all feasible solutions of $\cP'(D, i^*,\mathbf{i})$.

\begin{claim}\label{claim:program-equivalent}
For any discrete distribution $D=(\{\theta_i\}_{i\in [N]},\{f_i\}_{i\in [N]})$, a solution $\Bq=\{q_i\}_{i\in [N]}$ is feasible for $\cP(D)$ if and only if there exists parameters $(i^*,\mathbf{i})$ such that the solution $\hat{\Bq}=(q_1,\{\hat{q}_i\}_{i\in [N-1]})$ is feasible for $\cP'(D,i^*,\mathbf{i})$. Here $\hat{q}_i=q_{i+1}-q_i,\forall i\in [N-1]$. Moreover, $\opt'=\opt^*$.  
\end{claim}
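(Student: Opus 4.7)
The plan is to establish a correspondence between feasible solutions of $\cP(D)$ and the union, over all valid parameters $(i^*,\mathbf{i})$, of feasible solutions of $\cP'(D,i^*,\mathbf{i})$, such that matched solutions have identical objective values. Given any $\mathbf{q}=(q_1,\ldots,q_N)$ feasible for $\cP(D)$, I canonically define $\hat{q}_i=q_{i+1}-q_i$ for $i\in[N-1]$ together with the cutoffs $i^*:=\max\{i\in[N]:q_i\le u_{11}-u_{2m}\}$ and, for each $k\in\{2,\ldots,m-1\}$, $i_k:=\max\{i\in[N]:q_i\le\ell_k\}$, with the convention that this maximum is $0$ when the set is empty and that any LP constraint whose summation range becomes empty (or whose right endpoint exceeds $N$) is treated as vacuous. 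The non-decreasing property of $\mathbf{q}$ (constraint {\MonoConstr}) makes each $\hat{q}_i\ge 0$, and convexity of the IR curve $u(\cdot)$ together with this monotonicity makes the parameter vector $\mathbf{i}$ non-decreasing.

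For the forward direction, under the substitution $q_i=q_1+\sum_{j=1}^{i-1}\hat{q}_j$ the constraints {\MarginConstr} and {\IntegralConstr} translate verbatim to {\MarginConstrDis} and {\IntegralConstrDis}; the definition of $i^*$ gives $q_{i^*}\le u_{11}-u_{2m}\le q_{i^*+1}$, which is exactly {\istarConstrLeft}--{\istarConstrRight}; analogously, the $i_k$ cutoffs yield {\ikConstrLeft}--{\ikConstrRight}. For constraint {\IRConstr}, monotonicity together with the definition of $i_k$ gives $\min\{q_j-\ell_k,0\}=(q_j-\ell_k)\cdot\ind[j\le i_k]$, so the sum in {\IRConstr} reduces to the truncated sum in {\IRConstrDis}. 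Conversely, given any feasible $(q_1,\hat{\mathbf{q}})$ for some $\cP'(D,i^*,\mathbf{i})$ and setting $q_i=q_1+\sum_{j=1}^{i-1}\hat{q}_j$, non-negativity of $\hat{\mathbf{q}}$ recovers {\MonoConstr}, the margin and integral constraints translate back, and the cutoff constraints combined with monotonicity force $q_j\le u_{11}-u_{2m}$ iff $j\le i^*$ and $q_j\le\ell_k$ iff $j\le i_k$, so {\IRConstrDis} is precisely the rewriting of {\IRConstr}.

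The objectives then match term by term: $\min\{(u_{11}-u_{2m}-q_i)f_i,0\}$ equals $(u_{11}-u_{2m}-q_i)f_i$ exactly when $q_i\ge u_{11}-u_{2m}$, i.e., when $i\ge i^*+1$ under the canonical choice of $i^*$, and equals $0$ otherwise. Hence the $\cP$-objective at $\mathbf{q}$ equals the $\cP'(D,i^*,\mathbf{i})$-objective at $(q_1,\hat{\mathbf{q}})$, and taking suprema on both sides of the correspondence gives $\optstar=\opt'$.

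The main obstacle is endpoint bookkeeping: when $i^*\in\{0,N\}$ or some $i_k\in\{0,N\}$, one of the corresponding LP constraints either has an empty summation range on its left-hand side or refers to an index outside $[N]$, and we must consistently interpret such constraints as vacuous (equivalently, we let $(i^*,\mathbf{i})$ range over $\{0,1,\ldots,N\}^{m-1}$ with this convention). Once this convention is fixed, the remainder of the argument is a mechanical constraint-by-constraint verification with no further content.
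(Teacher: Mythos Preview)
Your proposal is correct and follows essentially the same approach as the paper: define the canonical cutoffs $i^*=\max\{i:q_i\le u_{11}-u_{2m}\}$ and $i_k=\max\{i:q_i\le\ell_k\}$, translate the constraints via $q_i=q_1+\sum_{j<i}\hat q_j$, and match the objectives using the cutoff position of $u_{11}-u_{2m}$. Your explicit treatment of the boundary cases $i^*,i_k\in\{0,N\}$ (with the vacuous-constraint convention) is actually more careful than the paper, which silently assumes the parameters lie in $[N]$.
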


\begin{proof}
Suppose $\Bq$ is feasible for $\cP(D)$. Let $i^*=\max\{i\in [N]:q_i\leq u_{11}-u_{2m}\}$. For every $k\in \{2,\ldots,m-1\}$, let $i_k=\max\{i\in [N]: q_i\leq \ell_k\}$. We verify that $\hat{\Bq}$ satisfies all constraints of $\cP'(D,i^*,\mathbf{i})$. Constraints {\MarginConstrDis}, {\IntegralConstrDis} follow from constraints {\MarginConstr}, {\IntegralConstr} of $\cP(D)$ accordingly. Constraints {\istarConstrLeft}-{\ikConstrRight} follow from the definition of $i^*$ and all $i_k$'s. Constraint {\IRConstrDis} follows from the definition of $i_k$ and {\IRConstr} of $\cP(D)$. Moreover, by the definition of $i^*$ and the fact that $q_i=q_1+\sum_{j=1}^{i-1}\hat{q}_i,\forall i\in [N]$, the objective of solution $\Bq$ in $\cP(D)$ is equal to the objective of solution $q_1$ and $\hat{\Bq}$ in $\cP'(D,i^*,\mathbf{i})$. Choosing $\Bq$ as the optimal solution for $\cP(D)$ implies that $\opt^*\leq \opt'$. 

On the other hand, suppose $\hat{\Bq}$ is feasible for $\cP'(D,i^*,\mathbf{i})$ for some parameters $(i^*,\mathbf{i})$. For every $i\in \{2,\ldots,N\}$, define $q_i=q_1+\sum_{j=1}^{i-1}\hat{q}_i$. We verify that $\Bq$ is feasible for $\cP(D)$. Constraint {\MonoConstr} holds as all $\hat{q}_i$'s are non-negative; Constraints {\MarginConstr} and  {\IntegralConstr} follow from constraints {\MarginConstrDis} and {\IntegralConstrDis} of $\cP'(D,i^*,\mathbf{i})$ accordingly; Constraint {\IRConstr} follows from constraints {\ikConstrLeft}-{\IRConstrDis} of $\cP'(D,i^*,\mathbf{i})$. Similarly, according to constraints {\istarConstrLeft} and {\istarConstrRight} of $\cP'(D,i^*,\mathbf{i})$ and the fact that $q_i=q_1+\sum_{j=1}^{i-1}\hat{q}_i,\forall i\in [N]$, the objective of solution $\hat{\Bq}$ in $P'(D,i^*,\mathbf{i})$ is equivalent to the objective of solution $\Bq$ in $\cP(D)$. Thus by choosing $\hat{\Bq}$ as the optimal solution among all parameters $(i^*,\mathbf{i})$ and all feasible solutions of $\cP'(D,i^*,\mathbf{i})$, we have $\opt'\leq \opt^*$. Hence, $\opt'=\opt^*$.
\end{proof}

By \Cref{lem:discrete-program-most-zeros} and the definition of $\opt'$, there exist a set of parameters $(i^*,\mathbf{i})$ as well as a feasible solution $\hat{\Bq}$ of $\cP'(D,i^*,\mathbf{i})$ such that: (i) the solution $\hat{\Bq}$ achieves objective $\opt'$ in the program $\cP'(D,i^*,\mathbf{i})$; (ii) $|\{i\in [N-1]:\hat{q}_i^*>0\}|\leq \optionsize$. By \Cref{claim:program-equivalent} and property (i), the corresponding $\Bq$ is the optimal solution of $\cP(D)$. By property (ii), there are at most $\optionsize$ different values among $\{q_i\}_{i\in [N]}$. Let $\MM=\{(q_i,t_i)\}_{i\in [N]}\in \MenuWithQ$ be the {\responsiveIC} and IR mechanism that implements $\Bq$. According to \Cref{equ:payment} and the fact that $q_i\leq q_{i+1},\forall i\in [N-1]$, $t_i=t_j$ for all $i,j$ such that $q_i=q_j$. Thus $\MM$ has option size at most $\optionsize$. 

Next we prove the case for continuous distributions. 

We consider a discretization of the continuous interval $[0,1]$. Let $N\geq 2$ be any integer and $\eps=\frac{1}{N}$. We consider the discretized space $\{0, \eps,2\eps,...,(N-1)\eps\}$. 
We denote by $D^{(N)}$ the discretized distribution of $D$, and $\{f_i^{(N)}\}_{i\in [N]}$, $\{F_i^{(N)}\}_{i\in [N]}$ the pdf and cdf of $D^{(N)}$. Formally, $F_i^{(N)}=F(i\eps)$, and $f_i^{(N)}=F(i\eps)-F((i-1)\eps)$, $\forall i\in [N]$. 

Denote $\cS=\{\Bq: [0,1]\rightarrow [-u_{2m},u_{11}] : q(\cdot) \text{ is non-decreasing}\}$. For every $N$ and every $\Bq\in \cS$, denote $h_N(\Bq)$ the objective of the program in \Cref{fig:discrete-program} under solution $\{q((i-1)\eps)\}_{i\in [N]}$, with respect to the distribution $D^{(N)}$. Formally, 
$$h_N(\Bq)=\sum_{i=1}^{N}\left[q((i-1)\eps)\cdot((i-1)\eps\cdot f_i^{(N)}+\eps\cdot F_i^{(N)})+\min\{(u_{11}-u_{2m}-q((i-1)\eps))\cdot f_i^{(N)},0\}\right]$$

For every $\Bq\in \cS$, denote $h(\Bq)$ the objective of the program in \Cref{fig:continous-program-modified} under solution $\Bq$, with respect to the continuous distribution $D$, i.e., 
$$h(\Bq)=\int_0^1 [(\theta f(\theta) + F(\theta))q(\theta) + \min\{(u_{11}-u_{2m} - q(\theta))f(\theta), 0\}]d\theta$$

\begin{claim}\label{claim:uniform-converge}
$h_N(\cdot)$ uniformly converges to $h(\cdot)$ on $\cS$, i.e., for every $\delta>0$ there exists $N_0\in \mathcal{N}$ such that $|h_N(\Bq)-h(\Bq)|<\delta,\forall N\geq N_0,\Bq\in \cS$.
\end{claim}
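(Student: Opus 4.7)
The plan is to view $h_N(\Bq)$ as a Riemann-type approximation of $h(\Bq)$ and exploit two facts: (i) monotonicity of $\Bq$ bounds its total variation by $u_{11}+u_{2m}$, and (ii) absolute continuity of the Lebesgue integral applied to the density $f$ gives $\max_i f_i^{(N)} \to 0$ as $N \to \infty$, since the intervals $[(i-1)/N,\, i/N]$ have vanishing length. These together will yield a bound on $|h_N(\Bq)-h(\Bq)|$ that depends only on $\eps$ and $\max_i f_i^{(N)}$, hence is uniform in $\Bq$.

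\textbf{First term.} First I would note the integration-by-parts identity $\int_{(i-1)\eps}^{i\eps}(\theta f(\theta)+F(\theta))\,d\theta = i\eps F(i\eps)-(i-1)\eps F((i-1)\eps) = (i-1)\eps\,f_i^{(N)} + \eps\,F_i^{(N)}$, which shows that the first piece of $h_N(\Bq)$ is exactly the left-endpoint Riemann sum of $\int_0^1 q(\theta)(\theta f(\theta)+F(\theta))\,d\theta$. Writing the error as $\sum_{i=1}^N \int_{(i-1)\eps}^{i\eps}(q(\theta)-q((i-1)\eps))(\theta f(\theta)+F(\theta))\,d\theta$, I would bound it by $\sum_i (q(i\eps)-q((i-1)\eps))(f_i^{(N)}+\eps) \leq (u_{11}+u_{2m})(\max_i f_i^{(N)} + \eps)$, using monotonicity of $\Bq$ on each subinterval together with the telescoping bound $\sum_i (q(i\eps)-q((i-1)\eps)) = q(1)-q(0) \leq u_{11}+u_{2m}$.

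\textbf{Second term.} For the $\min$-term, monotonicity of $\Bq$ implies a threshold $\theta^*\in[0,1]$ and an index $i^*\in\{1,\ldots,N+1\}$ with $\theta^*\in[(i^*-2)\eps,(i^*-1)\eps]$, such that the integrand $\min\{(u_{11}-u_{2m}-q(\theta))f(\theta),0\}$ is nonzero only for $\theta>\theta^*$, while the summand is nonzero only for $i\geq i^*$. I would split the integral at $(i^*-1)\eps$: on the boundary strip $[\theta^*,(i^*-1)\eps]$, the integrand has magnitude at most $(u_{11}+u_{2m})f(\theta)$, contributing at most $(u_{11}+u_{2m})f_{i^*-1}^{(N)}$; on $[(i^*-1)\eps,1]$, the same monotonicity argument as in the first term yields a bound of $(u_{11}+u_{2m})\max_i f_i^{(N)}$. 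Adding these gives $|\mathrm{error}_2|\leq 2(u_{11}+u_{2m})\max_i f_i^{(N)}$.

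\textbf{Conclusion and main obstacle.} Combining the two bounds gives $|h_N(\Bq)-h(\Bq)|\leq (u_{11}+u_{2m})(\eps + 3\max_i f_i^{(N)})$, uniformly in $\Bq\in\cS$, which tends to zero as $N\to\infty$ by absolute continuity of the Lebesgue integral (no further regularity of $f$ beyond integrability is needed). The main technical subtlety lies in the second term, where the $\min$-operator introduces a mismatch between the continuous threshold $\theta^*$ and the discrete threshold $(i^*-1)\eps$; however, since the boundary contribution is always confined to a single interval of $f$-mass at most $\max_i f_i^{(N)}$, it vanishes uniformly. The rest is a straightforward application of Riemann-approximation estimates for monotone integrands.
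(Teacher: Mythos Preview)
Your proposal is correct and follows essentially the same route as the paper's proof: both split $h_N-h$ into the contribution from the $(\theta f+F)q$ term and the $\min$-term, use the identity $\int_{(i-1)\eps}^{i\eps}(\theta f(\theta)+F(\theta))\,d\theta = i\eps F(i\eps)-(i-1)\eps F((i-1)\eps)$, exploit monotonicity of $q$ to telescope, introduce the threshold $\theta^*$ for the $\min$-term and absorb the boundary mismatch into a single interval of mass $\leq \max_i f_i^{(N)}$, and finish via $\max_i f_i^{(N)}\to 0$. The only cosmetic difference is that you bound $i\eps F(i\eps)-(i-1)\eps F((i-1)\eps)\leq f_i^{(N)}+\eps$ and invoke absolute continuity of the Lebesgue integral for $\max_i f_i^{(N)}\to 0$, whereas the paper appeals directly to uniform continuity of $\theta F(\theta)$ and of $F$ on $[0,1]$; these are equivalent here.
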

\begin{proof}
Fix any $\delta>0$. Fix any $\Bq\in \cS$. Consider the term
\begin{align*}
A_1^{(N)}(\Bq)&=\sum_{i=1}^{N}\left[q((i-1)\eps)\cdot((i-1)\eps\cdot f_i^{(N)}+\eps\cdot F_i^{(N)})\right]-\int_0^1 (\theta f(\theta) + F(\theta))q(\theta)d\theta\\
&=\sum_{i=1}^{N}\left[q((i-1)\eps)\cdot(i\eps\cdot F(i\eps)-(i-1)\eps\cdot F((i-1)\eps))-\int_{(i-1)\eps}^{i\eps} (\theta f(\theta) + F(\theta))q(\theta)d\theta\right]
\end{align*}

Here the second equality follows from the definition of $f_i^{(N)}, F_i^{(N)}$. For every $i\in [N]$, since $q(\cdot)$ is non-decreasing, 
\begin{align*}
\int_{(i-1)\eps}^{i\eps} (\theta f(\theta) + F(\theta))q(\theta)d\theta &\geq q((i-1)\eps)\cdot (\theta F(\theta))\bigg\vert_{\theta=(i-1)\eps}^{i\eps}\\
&=q((i-1)\eps)\cdot(i\eps\cdot F(i\eps)-(i-1)\eps\cdot F((i-1)\eps))
\end{align*}

Thus $A_1^{(N)}(\Bq)\leq 0$. On the other hand, for every $i\in [N]$, similarly we have
\begin{align*}
\int_{(i-1)\eps}^{i\eps} (\theta f(\theta) + F(\theta))q(\theta)d\theta\leq q(i\eps)\cdot(i\eps\cdot F(i\eps)-(i-1)\eps\cdot F((i-1)\eps))
\end{align*}

Thus
\begin{align*}
A_1^{(N)}(\Bq)&\geq \sum_{i=1}^N (q((i-1)\eps)-q(i\eps))\cdot(i\eps\cdot F(i\eps)-(i-1)\eps\cdot F((i-1)\eps))\\
&\geq \max_{i\in [N]} \{i\eps\cdot F(i\eps)-(i-1)\eps\cdot F((i-1)\eps)\}\cdot (q(0)-q(1))\\
&\geq -2u_{11}\cdot \max_{i\in [N]} \{i\eps\cdot F(i\eps)-(i-1)\eps\cdot F((i-1)\eps)\}
\end{align*}
Here the last inequality follows from $q(0)\geq -u_{2m}\geq -u_{11}$ and $q(1)\leq u_{11}$. Since $F$ is continuous on the closed interval $[0,1]$, the function $\theta\cdot F(\theta)$ is continuous on $[0,1]$ and thus it is uniformly continuous. Hence there exists $N_1\in \mathcal{N}$ such that $\max_{i\in [N]} \{i\eps\cdot F(i\eps)-(i-1)\eps\cdot F((i-1)\eps)\}<\frac{\delta}{4u_{11}},\forall N\geq N_1$. Thus $A_1^{(N)}(\Bq)>-\frac{\delta}{2}$. 

Now consider the other term
\begin{align*}
A_2^{(N)}(\Bq)&=\sum_{i=1}^N\min\{(u_{11}-u_{2m}-q((i-1)\eps))\cdot f_i^{(N)},0\}-\int_{0}^1 \min\{(u_{11}-u_{2m} - q(\theta))f(\theta), 0\}d\theta\\
&=\sum_{i=1}^N\left[ f_i^{(N)}\cdot\min\{\bar{q}((i-1)\eps),0\}-\int_{(i-1)\eps}^{i\eps} \min\{\bar{q}(\theta), 0\}f(\theta)d\theta\right],
\end{align*}
where $\bar{q}(\theta)=u_{11}-u_{2m}-q(\theta)\in [-u_{2m},u_{11}]$ is non-increasing on $\theta$. For every $i\in [N]$, we denote $T_i$ the $i$-th term in the sum. Let $\theta^*=\sup\{\theta\in [0,1]:\bar{q}(\theta)\leq 0\}$ and $i^*=\max\{i\in [N]: (i-1)\eps\leq \theta^*<i\eps\}$. 

When $i<i^*$, since $\bar{q}(\cdot)$ is non-increasing and $f_i^{(N)}=F(i\eps)-F((i-1)\eps)$,
\begin{align*}
&T_i\geq f_i^{(N)}\cdot \bar{q}((i-1)\eps)-\bar{q}((i-1)\eps)\cdot \int_{(i-1)\eps}^{i\eps}f(\theta)d\theta=0\\
&T_i\leq f_i^{(N)}\cdot [\bar{q}((i-1)\eps)-\bar{q}(i\eps)]
\end{align*}

When $i>i^*$. Clearly $T_i=0$. When $i=i^*$, 
$$|T_{i^*}|=\left|f_{i^*}^{(N)}\cdot \bar{q}((i^*-1)\eps)-\int_{(i^*-1)\eps}^{\theta^*} \bar{q}(\theta)f(\theta)d\theta\right|\leq 2u_{11}\cdot f_{i^*}^{(N)},$$
where the inequality follows from the fact that $|\bar{q}(\theta)|\leq \max\{u_{11},u_{2m}\}=u_{11},\forall \theta\in [0,1]$ and that $F(\theta^*)-F((i^*-1)\eps)\leq f_{i^*}^{(N)}$. Thus
\begin{align*}
|A_2^{(N)}(\Bq)|\leq\left|\sum_{i=1}^{i^*-1}T_i\right|+|T_{i^*}|&\leq \left|\sum_{i=1}^{i^*-1}f_i^{(N)}\cdot [\bar{q}((i-1)\eps)-\bar{q}(i\eps)]\right|+2u_{11}\cdot f_{i^*}^{(N)}\\
&\leq (|\bar{q}((i-1)\eps)-\bar{q}(0)|+2u_{11})\cdot \max_{i\in [N]}f_{i}^{(N)}\\
&\leq 4u_{11}\cdot \max_{i\in [N]}f_{i}^{(N)}
\end{align*}
Since $F(\cdot)$ is uniformly continuous on $[0,1]$, there exists $N_2\in\mathcal{N}$ such that $\max_{i}f_{i}^{(N)}=\max_{i}(F(i\eps)-F((i-1)\eps))<\frac{\delta}{8u_{11}},\forall N\geq N_2$. Thus $|A_2^{(N)}(\Bq)|<\frac{\delta}{2}$.

Combining everything together, for every $N\geq N_0=\max\{N_1,N_2\}$, for every $\Bq\in \cS$, we have
$$|h_N(\Bq)-h(\Bq)|= |A_1^{(N)}(\Bq)+A_2^{(N)}(\Bq)|<\frac{\delta}{2}+\frac{\delta}{2}=\delta$$

Thus $h_N(\cdot)$ uniformly converges to $h(\cdot)$ on $\cS$.
\end{proof}

Back to the proof of \Cref{lem:opt without sigma-IC has k pieces}. Denote $\pcont$ the program in \Cref{fig:continous-program-modified} and $\opt^*$ the supremum of the objective over all feasible solutions of $\pcont$. We first argue that there exists a feasible solution $\Bq^*$ whose objective $h(\Bq^*)$ equals to $\opt^*$. 
\begin{claim}\label{sup_equal_max}
There exists a feasible solution $\Bq^*$ to $\pcont$, whose objective $h(\Bq^*)$ equals to $\opt^*$.
\end{claim}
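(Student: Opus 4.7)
The plan is to argue via a standard compactness-plus-dominated-convergence argument, exploiting the fact that every feasible $\Bq$ is a uniformly bounded non-decreasing function on $[0,1]$, so Helly's selection theorem applies. Let $\{\Bq^{(k)}\}_{k\geq 1}$ be a maximizing sequence of feasible solutions to $\pcont$ with $h(\Bq^{(k)}) \to \opt^*$. By Constraints \MonoConstr{} and \MarginConstr, each $q^{(k)}:[0,1]\to [-u_{2m},u_{11}]$ is non-decreasing and takes values in a fixed compact interval. Hence by Helly's selection theorem, after passing to a subsequence (which I still denote $\{q^{(k)}\}$), there is a non-decreasing function $q^*:[0,1]\to [-u_{2m},u_{11}]$ such that $q^{(k)}(\theta)\to q^*(\theta)$ at every continuity point of $q^*$, i.e., at all but countably many $\theta\in [0,1]$.

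Next I would verify that $\Bq^*$ is feasible. Monotonicity is immediate from the pointwise limit of non-decreasing functions. The bounds $q^*(0)\geq -u_{2m}$ and $q^*(1)\leq u_{11}$ can be enforced by redefining $q^*$ at the (at most two) endpoints, which affects neither monotonicity nor any Lebesgue integral. For Constraint \IntegralConstr{}, since $|q^{(k)}|\leq \max\{u_{11},u_{2m}\}$ uniformly and $q^{(k)}\to q^*$ a.e., the bounded convergence theorem gives $\int_0^1 q^*(\theta)d\theta = \lim_k \int_0^1 q^{(k)}(\theta)d\theta = u_{11}-u_{2m}$. For Constraint \IRConstr, rewrite the integrand as $(q^{(k)}(x)-\ell_k)\cdot \ind[q^{(k)}(x)\leq \ell_k] = \min\{q^{(k)}(x)-\ell_k,0\}$. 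Since $y\mapsto \min\{y-\ell_k,0\}$ is continuous and the integrands are uniformly bounded, bounded convergence again gives
\[
\int_0^1 \min\{q^*(x)-\ell_k,0\}dx \;=\; \lim_k \int_0^1 \min\{q^{(k)}(x)-\ell_k,0\}dx \;\geq\; u_{2,m+1-k}-u_{2m},
\]
so the inequality is preserved for every $k\in\{2,\ldots,m-1\}$.

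Finally, to conclude $h(\Bq^*)=\opt^*$, note that in the objective
\[
h(\Bq) \;=\; \int_0^1\!\Bigl[(\theta f(\theta)+F(\theta))\,q(\theta) + \min\{(u_{11}-u_{2m}-q(\theta))\,f(\theta),\,0\}\Bigr]d\theta,
\]
both integrands are continuous in the scalar $q(\theta)$ and uniformly bounded (since $f$ is a probability density on $[0,1]$ and $q(\theta)\in [-u_{2m},u_{11}]$). Hence a third application of the bounded convergence theorem yields $h(\Bq^*)=\lim_k h(\Bq^{(k)})=\opt^*$, as desired.

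The only step that requires any real care is feasibility under the limit: the \IRConstr{} constraints are phrased with an indicator function that is discontinuous in $q$, and the endpoint conditions \MarginConstr{} only see pointwise values. Both issues are handled by the observations above---rewriting $(q-\ell_k)\ind[q\leq \ell_k]$ as the continuous function $\min\{q-\ell_k,0\}$ for the former, and harmlessly redefining $q^*$ on the measure-zero set $\{0,1\}$ for the latter---and everything else is a routine invocation of Helly plus bounded convergence.
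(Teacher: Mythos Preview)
Your approach is essentially the same as the paper's: take a maximizing sequence, apply Helly's selection theorem to extract a pointwise-a.e.\ convergent subsequence, and pass to the limit in both the constraints and the objective via dominated convergence. The paper's proof is organized slightly differently (it first passes to the limit in the objective, then checks feasibility), but the substance is identical, and your rewriting of the \IRConstr{} integrand as the continuous function $\min\{q-\ell_k,0\}$ is in fact a bit cleaner than the paper's bound $|(q-\ell_k)\ind[q\le\ell_k]|\le 2u_{11}$.

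There is one small technical slip. For the objective $h$, you invoke the \emph{bounded} convergence theorem on the grounds that ``$f$ is a probability density on $[0,1]$.'' But nothing in this part of the paper assumes $f$ is bounded, so the integrand $(\theta f(\theta)+F(\theta))q(\theta)+\min\{(u_{11}-u_{2m}-q(\theta))f(\theta),0\}$ need not be uniformly bounded in $\theta$. The correct statement is that it is dominated by the \emph{integrable} function $u_{11}\bigl(\theta f(\theta)+F(\theta)+f(\theta)\bigr)$, and then the \emph{dominated} convergence theorem applies; this is exactly what the paper does. Your use of bounded convergence for Constraints \IntegralConstr{} and \IRConstr{} is fine, since those integrands involve only $q$ and constants and hence are genuinely uniformly bounded.
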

\begin{proof}
Let $\{\Bq_\ell^*\}_{\ell\in \mathcal{N}_+}$ be a sequence of feasible solutions such that $\lim_{\ell\to\infty}h(\Bq_\ell^*)=\opt^*$. We notice that for every $\ell$, the feasible solution $\Bq_{\ell}^*(\cdot)$ is a non-decreasing function mapping $[0,1]$ to a bounded interval $[-u_{2m},u_{11}]$. Thus $\{\Bq_\ell^*\}_{\ell\in \mathcal{N}_+}$ is a sequence of non-decreasing functions and it's uniformly-bounded. By Helly's selection theorem (see for instance \cite{ambrosio2000functions,barbu2012convexity}), there exists a subsequence $\{\Bq^*_{\ell_i}\}_{i\in \mathcal{N}_+}$ and a function $\Bq^*:[0,1]\to [-u_{2m},u_{11}]$ such that $\{\Bq^*_{\ell_i}\}_i$ pointwisely converges to $\Bq^*$.   

For every $\theta\in [0,1]$ and $q\in [-u_{2m},u_{11}]$, define
\begin{equation}\label{equ:objective-G}
G(\theta,q)=(\theta f(\theta) + F(\theta))q + \min\{(u_{11}-u_{2m} - q)f(\theta), 0\}    
\end{equation} 
For every $\theta\in [0,1]$, $i\in \mathcal{N}_+$, define $g_i(\theta)=G(\theta,q^*_{\ell_i}(\theta))$ and $g(\theta)=G(\theta,q^*(\theta))$. Then since $G(\theta,q)$ is continuous on $q$ for every $\theta$, $\{g_i\}_{i\in \mathcal{N}_+}$ pointwisely converges to $g$. Moreover, define $\bar{g}(\theta)=u_{11}\cdot (\theta f(\theta) + F(\theta) + f(\theta))\geq 0,\forall \theta\in [0,1]$. Then $\bar{g}(\cdot)$ is integrable in $[0,1]$. And for every $i$ and $\theta$, since $q^*_{\ell_i}(\theta)\in [-u_{2m},u_{11}]$ and $u_{11}\geq u_{2m}$, we have $|g_i(\theta)|\leq \bar{g}(\theta)$. Thus by the dominated convergence theorem,
$$\opt^*=\lim_{i\to\infty}h(\Bq^*_{\ell_i})=\lim_{i\to\infty}\int_0^1 g_i(\theta)d\theta=\int_0^1(\lim_{i\to\infty}g_i(\theta))d\theta=\int_0^1g(\theta)d\theta=h(\Bq^*)$$

It remains to verify that $\Bq^*$ is a feasible solution to $\pcont$. Since $\{\Bq^*_{\ell_i}\}_i$ is a sequence of feasible solutions, all constraints in $\pcont$ hold for $\Bq^*_{\ell_i}$. For every inequality, we take the limit $i\to\infty$ on both sides. Constraints {\MonoConstr}, {\MarginConstr} hold for $\Bq^*$ since $\{\Bq^*_{\ell_i}\}_i$ pointwisely converges to $\Bq^*$. In order for constraints {\IntegralConstr} and {\IRConstr} to hold for $\Bq^*$, it's sufficient to argue that we can swap the limit and integral for both inequalities. This is because for every $i$ and $\theta$, we have $|q^*_{\ell_i}(\theta)|\leq u_{11}$ and $|(q^*_{\ell_i}(\theta)-\ell_k)\cdot \ind[q^*_{\ell_i}(\theta)\leq \ell_k]|\leq |q^*_{\ell_i}(\theta)|+|\ell_k|\leq 2u_{11}$ (recall that $|\ell_k|=|u_{1,m+1-k}-u_{2,m+1-k}|\leq \max\{u_{11},u_{2m}\}=u_{11}$). Thus by the dominated convergence theorem, constraints {\IntegralConstr} and {\IRConstr} both hold for $\Bq^*$. Thus $\Bq^*$ is feasible for $\pcont$. 
\end{proof}

For every $N\geq 2$, denote $\opt_N$ the optimum of $\cP(D^{(N)})$, the program in \Cref{fig:discrete-program} with respect to the distribution $D^{(N)}$. By \Cref{lem:discrete-program-most-zeros} and \Cref{claim:program-equivalent}, there exists a set of parameters $(i^*,\mathbf{i})$ and a feasible solution $(q_1^{(N)},\{\hat{q}_i^{(N)}\}_{i\in [N-1]})$ to $\cP'(D^{(N)},i^*,\mathbf{i})$ such that: (i) the solution has objective exactly $\opt_N$, and (ii) $|\{i\in [N-1]:\hat{q}_i^*>0\}|\leq \optionsize$. 
For every $i\in \{2,\ldots,N\}$, let $q_i^{(N)}=q_1^{(N)}+\sum_{j=1}^{i-1}\hat{q}_j^{(N)}$. Define $\Bq^{(N)}:[0,1]\to [-u_{2m},u_{11}]$ as follows: $q^{(N)}(\theta)=q_i^{(N)}, \forall i\in [N], \theta\in [(i-1)\eps,i\eps)$, and $q^{(N)}(1)=q_N^{(N)}$. By \Cref{claim:program-equivalent}, $\{q^{(N)}((i-1)\eps)\}_{i\in [N]}$ is a feasible solution to program $\cP(D^{(N)})$. Moreover, $h_N(\Bq^{(N)})=\opt_N$.


For every $i\in [N]$, by \Cref{claim:program-equivalent}, $\{q_i^{(N)}\}_{i\in [N]}$ is a feasible solution to $\cP(D^{(N)})$. By the definition of $\Bq^{(N)}$, it's not hard to verify that $\Bq^{(N)}$ is a feasible solution to $\pcont$. 
By property (ii), $\text{Im}(\Bq)=\{q(\theta):\theta\in [0,1]\}$, the image of $\Bq(\cdot)$, has size at most $\optionsize$. Let $n=\optionsize$ and $\cS'\subseteq \cS$ be the set of all feasible solutions $\Bq$ to $\pcont$ such that $|\text{Im}(\Bq)|\leq n$. 
We notice that there is a mapping $\Phi$ from every $\Bq\in \cS'$ to a set of $(2n-1)$ real numbers $\xi=(\{\theta_i\}_{i\in [n-1]},\{q_i\}_{i\in [n]})$ such that $q(\theta)=q_i, \forall i\in [n], \theta\in [\theta_{i-1},\theta_i)$ and $q(1)=q_{n}$. Here $\theta_0=0$, $\theta_{n}=1$. 

Let $\cT$ be the set of $\xi$'s that satisfy all of the following properties:
\begin{enumerate}
    \item $0\leq \theta_1\leq \ldots\leq \theta_{n-1}\leq 1$.
    \item $-u_{2m}\leq q_1\leq \ldots\leq q_{n}\leq u_{11}$.
    \item $\sum_{i=1}^{n} (\theta_{i}-\theta_{i-1})\cdot q_i=u_{11}-u_{2m}$.
    \item $\sum_{i=1}^{n} (\theta_{i}-\theta_{i-1})\cdot (q_i-\ell_k)\cdot \ind[q_i\leq \ell_k]\geq u_{2,m+1-k}-u_{2m},\forall k\in \{2,3,\ldots,m-1\}$.
\end{enumerate}

Then it's not hard to verify that $\Phi$ is an 1-1 mapping from $\cS'$ to $\cT$. Denote $\Phi^{-1}$ the inverse of $\Phi$. Moreover, since every constraint above is either equality or non-strict inequality among the sum and product of numbers $(\{\theta_i\}_{i\in [n-1]},\{q_i\}_{i\in [n]})$, thus $\cT$ is a closed subset of $\mathbb{R}^{2n-1}$. Thus $\cT$ is a compact space under the $L_{\infty}$-norm. Therefore, the sequence $\{\Phi(\Bq^{(N)})\}_{N}$ contains a subsequence $\{\Phi(\Bq^{(N_\ell)})\}_{\ell\in \mathbb{N}_+}$ that converges to some $\xi'\in \cT$. Let $\Bq'=\Phi^{-1}(\xi')\in \cS'$. 

\begin{claim}\label{claim:limit-subsequence}
$\lim_{\ell\to\infty}h_{N_\ell}(\Bq^{(N_{\ell})})=h(\Bq')$.
\end{claim}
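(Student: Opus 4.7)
My plan is to bound the target quantity via the triangle inequality, splitting
$$|h_{N_\ell}(\Bq^{(N_\ell)}) - h(\Bq')| \leq |h_{N_\ell}(\Bq^{(N_\ell)}) - h(\Bq^{(N_\ell)})| + |h(\Bq^{(N_\ell)}) - h(\Bq')|,$$
and showing that each term tends to $0$ as $\ell\to\infty$. The first term -- a ``discretization error'' -- vanishes by the uniform convergence of $h_N$ to $h$ on $\cS$ established in Claim~\ref{claim:uniform-converge}, since every $\Bq^{(N_\ell)}\in \cS'\subseteq \cS$.

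For the second term -- a ``parameter convergence error'' -- I would exploit the fact that every $\Bq^{(N_\ell)}$ is a step function with at most $n=\optionsize$ pieces, described by $\xi^{(\ell)}=\Phi(\Bq^{(N_\ell)})=(\{\theta_i^{(\ell)}\},\{q_i^{(\ell)}\})\in \cT$ with $\xi^{(\ell)}\to\xi'=\Phi(\Bq')$ in $L_\infty$ norm. Recalling $G(\theta,q)$ from \Cref{equ:objective-G}, we can write
$$h(\Phi^{-1}(\xi)) = \sum_{i=1}^n \int_{\theta_{i-1}}^{\theta_i} G(\theta,q_i)\,d\theta,$$
so it suffices to show that each summand is continuous in $(\theta_{i-1},\theta_i,q_i)$. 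For fixed $\theta$, $G(\theta,\cdot)$ is continuous on $[-u_{2m},u_{11}]$, and the integrand $G(\theta,q_i^{(\ell)})\cdot \ind[\theta\in[\theta_{i-1}^{(\ell)},\theta_i^{(\ell)}]]$ converges pointwise a.e.\ to $G(\theta,q_i')\cdot \ind[\theta\in[\theta_{i-1}',\theta_i']]$ (the only potential issue is at the two breakpoints $\theta_{i-1}', \theta_i'$, which have Lebesgue measure zero). It is dominated by the integrable function $\bar{g}(\theta)=u_{11}\cdot(\theta f(\theta)+F(\theta)+f(\theta))$ from the proof of Claim~\ref{sup_equal_max}. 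By the dominated convergence theorem, each summand converges to $\int_{\theta_{i-1}'}^{\theta_i'} G(\theta,q_i')\,d\theta$, and hence $h(\Bq^{(N_\ell)})\to h(\Bq')$.

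Combining the two bounds, both terms go to zero, establishing the claim. The main subtlety is carefully justifying the dominated convergence around the breakpoints of the step functions (the indicators there can be discontinuous), but this is harmless because the set of breakpoints is finite (hence measure zero) and $G$ is uniformly bounded on $[0,1]\times[-u_{2m},u_{11}]$ by the integrable envelope $\bar{g}$. The uniform convergence from Claim~\ref{claim:uniform-converge} is what allows us to replace the discrete objective by the continuous one without separately tracking the discretization of the measure; the continuity argument then takes care of the convergence of the step-function parameters.
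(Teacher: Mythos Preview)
Your proof is correct and follows the same overall structure as the paper's: the same triangle inequality split, with Claim~\ref{claim:uniform-converge} handling the discretization term and a continuity argument in the step-function parameters handling the second term. The only difference is technical: you invoke dominated convergence for the second term, whereas the paper evaluates the integrals explicitly (obtaining expressions in $\theta_i F(\theta_i)$ and $F(\theta_i)$) and bounds the differences via uniform continuity of $\theta\mapsto\theta F(\theta)$ and $F$; both approaches work equally well here.
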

\begin{proof}
Fix any $\delta>0$. Since both functions $\theta F(\theta)$ and $F(\theta)$ are continuous, there exists some $\eta(\delta)\in (0,\frac{\delta}{8n})$ such that $|\theta F(\theta)-\theta' F(\theta')|<\frac{\delta}{16n\cdot u_{11}}$ and $|F(\theta)-F(\theta')|<\frac{\delta}{16n\cdot u_{11}}$ for any $\theta$ and $\theta'$ with $|\theta-\theta'|<\eta(\delta)$. 

Since $\{\Phi(\Bq^{(N_\ell)})\}_{\ell\in \mathbb{N}_+}$ converges to $\Phi(\Bq')$, then there exists $\ell_1$ such that for every $\ell\geq \ell_1$, $||\Phi(\Bq^{(N_\ell)})-\Phi(\Bq')||_{\infty}<\eta(\delta)$. Denote $\Phi(\Bq^{(N_\ell)})=(\{\theta_i^{\ell}\}_{i\in [n-1]},\{q_i^{\ell}\}_{i\in [n]})$ and $\Phi(\Bq')=(\{\theta_i'\}_{i\in [n-1]},\{q_i'\}_{i\in [n]})$. Let $\theta_0^{\ell}=\theta_0'=0$ and $\theta_n^{\ell}=\theta_n'=1$. We are going to bound $|h(\Bq^{(N_{\ell})})-h(\Bq')|$. We notice that
\begin{align*}
h(\Bq^{(N_{\ell})})&=
\sum_{i=1}^{n}\left[q_i^\ell\cdot \int_{\theta_{i-1}^{\ell}}^{\theta_i^{\ell}}(\theta f(\theta)+F(\theta))d\theta+(F(\theta_{i}^\ell)-F(\theta_{i-1}^{\ell}))\cdot \min\{u_{11}-u_{2m}-q_i^{\ell},0\}\right]\\
&=\sum_{i=1}^{n}\left[q_i^\ell\cdot (\theta_{i}^\ell F(\theta_{i}^\ell)-\theta_{i-1}^\ell F(\theta_{i-1}^\ell))+(F(\theta_{i}^\ell)-F(\theta_{i-1}^{\ell}))\cdot \min\{u_{11}-u_{2m}-q_i^{\ell},0\}\right]    
\end{align*}
Similarly,
$$h(\Bq')=\sum_{i=1}^{n}\left[q_i'\cdot (\theta_{i}' F(\theta_{i}')-\theta_{i-1}' F(\theta_{i-1}'))+(F(\theta_{i}')-F(\theta_{i-1}'))\cdot \min\{u_{11}-u_{2m}-q_i',0\}\right]
$$

Since $|\theta_i^{\ell}-\theta_i'|<\eta(\delta),\forall i\in [n-1]$, then for every $i\in [n]$, $|(\theta_{i}^\ell F(\theta_{i}^\ell)-\theta_{i-1}^\ell F(\theta_{i-1}^\ell))-(\theta_{i}' F(\theta_{i}')-\theta_{i-1}' F(\theta_{i-1}'))|<2\cdot \frac{\delta}{16n\cdot u_{11}}$, and $|(F(\theta_{i}^\ell)-F(\theta_{i-1}^{\ell}))-(F(\theta_{i}')-F(\theta_{i-1}'))|<2\cdot \frac{\delta}{16n\cdot u_{11}}$. 

Since $||\Phi(\Bq^{(N_\ell)})-\Phi(\Bq')||_{\infty}<\eta(\delta)<\frac{\delta}{8n}$, we have $$|\min\{u_{11}-u_{2m}-q_i^{\ell},0\}-\min\{u_{11}-u_{2m}-q_i',0\}|\leq |q_i^{\ell}-q_i'|<\frac{\delta}{8n}$$ 

Moreover, $|q_i^{\ell}|\leq u_{11}$, $|\min\{u_{11}-u_{2m}-q_i^\ell,0\}|\leq u_{11}$ and $\theta_{i}' F(\theta_{i}')-\theta_{i-1}' F(\theta_{i-1}')\in [0,1], F(\theta_{i}')-F(\theta_{i-1}')\in [0,1]$. Using the inequality $|ab-cd|\leq |a|\cdot |b-d|+|d|\cdot |a-c|$ for every $a,b,c,d\in \mathbb{R}$, we have for every $\ell\geq \ell_1$, 
$$|h(\Bq^{(N_{\ell})})-h(\Bq')|\leq \sum_{i=1}^{n}2\cdot(u_{11}\cdot \frac{\delta}{8n\cdot u_{11}}+1\cdot \frac{\delta}{8n})<\frac{\delta}{2}$$

Moreover, by \Cref{claim:uniform-converge}, there exists $\ell_2$ such that for every $\ell\geq \ell_2$, $|h_{N_\ell}(\Bq^{(N_{\ell})})-h(\Bq^{(N_{\ell})})|<\frac{\delta}{2}$. Thus when $\ell\geq \max\{\ell_1,\ell_2\}$, we have
$|h_{N_\ell}(\Bq^{(N_{\ell})})-h(\Bq')|\leq |h(\Bq^{(N_{\ell})})-h(\Bq')|+|h_{N_\ell}(\Bq^{(N_{\ell})})-h(\Bq^{(N_{\ell})})|<\delta$.
\end{proof}

We prove $h(\Bq')\geq \opt^*$. For every $N$, we construct a feasible solution of $\cP(D^{(N)})$. For every $i\in [N]$, let $p_i^{(N)}=\frac{1}{\eps}\cdot \int_{(i-1)\eps}^{i\eps}q^*(\theta)d\theta$ (see \Cref{sup_equal_max} for the definition of $\Bq^*$). We verify that $\{p_i^{(N)}\}_{i\in [N]}$ is a feasible solution to $\cP(D^{(N)})$. Constraint {\MonoConstr} follows from $q^*(\cdot)$ being non-decreasing. For constraint {\MarginConstr}, since $q^*(\cdot)$ is non-decreasing, $p_1^{(N)}\geq\frac{1}{\eps}\cdot \int_{0}^{\eps}q^*(0)d\theta=q^*(0)\geq -u_{2m}$. Similarly $p_N^{(N)}\leq q^*(1)\leq u_{11}$. For constraint {\IntegralConstr}, we have
$$\eps\cdot \sum_{i=1}^Np_i^{(N)}=\sum_{i=1}^N\int_{(i-1)\eps}^{i\eps}q^*(\theta)d\theta=\int_0^1q^*(\theta)d\theta=u_{11}-u_{2m}$$

For constraint {\IRConstr}, for every $k\in \{2,\ldots,m-1\}$. Let $\theta_k=\sup\{\theta\in [0,1]:q^*(\theta)\leq \ell_k\}$ and let $i^*$ be the unique number such that $(i^*-1)\eps\leq \theta_k<i^*\eps$. Then $p_i^{(N)}\leq \ell_k,\forall i<i^*$ and $p_i^{(N)}>\ell_k,\forall i>i^*$. We have
\begin{equation}\label{qN-feasible}
\begin{aligned}
\eps\cdot\sum_{i=1}^N\min\{p_i^{(N)}-\ell_k,0\}&=\sum_{i=1}^{i^*-1}\left(\int_{(i-1)\eps}^{i\eps}q^*(\theta)d\theta-\eps\ell_k\right)+\eps\cdot\min\{p_{i^*}^{(N)}-\ell_k,0\}\\
&=\int_{0}^{(i^*-1)\eps}(q^*(\theta)-\ell_k)d\theta+\eps\cdot\min\{p_{i^*}^{(N)}-\ell_k,0\}\\
&\geq \int_{0}^{(i^*-1)\eps}(q^*(\theta)-\ell_k)d\theta+\int_{(i^*-1)\eps}^{\theta_k}(q^*(\theta)-\ell_k)d\theta\\
&=\int_0^1(q^*(\theta)-\ell_k)\ind[q^*(\theta)\leq\ell_k]d\theta\geq u_{2,m+1-k}-u_{2m}
\end{aligned}
\end{equation}

The first inequality of \Cref{qN-feasible} is because: By the definition of $\theta_k$, $\int_{\theta_k}^{i^*\eps}(q^*(\theta)-\ell_k)d\theta\geq 0$. Thus 
$$\eps\cdot\min\{p_{i^*}^{(N)}-\ell_k,0\}\geq \eps\cdot (p_{i^*}^{(N)}-\ell_k) =\int_{(i^*-1)\eps}^{i^*\eps}(q^*(\theta)-\ell_k)d\theta\geq \int_{(i^*-1)\eps}^{\theta_k}(q^*(\theta)-\ell_k)d\theta$$
The third equality of \Cref{qN-feasible} follows from the definition of $\theta_k$. The last inequality follows from that $\Bq^*$ is a feasible solution to $\pcont$. Thus $\{p_i^{(N)}\}_{i\in [N]}$ is a feasible solution to $\cP(D^{(N)})$. 
Define $\Bp^{(N)}:[0,1]\to [-u_{2m},u_{11}]$ as follows: $p^{(N)}(\theta)=p_i^{(N)}, \forall i\in [N], \theta\in [(i-1)\eps,i\eps)$, and $p^{(N)}(1)=p_N^{(N)}$. Then $h_N(\Bp^{(N)})$ is exactly the objective of $\{p_i^{(N)}\}_{i\in [N]}$ with respect to the program $\cP(D^{(N)})$. By the optimality of $\Bq^{(N)}$, $h_N(\Bp^{(N)})\leq h_N(\Bq^{(N)})$. Thus $\forall \ell\in \mathcal{N}_+$, $h_{N_\ell}(\Bp^{(N_{\ell})})\leq h_{N_{\ell}}(\Bq^{(N_{\ell})})$. 

We will argue that $h_{N_\ell}(\Bp^{(N_{\ell})})$ converges to $\opt^*$ as $\ell\to\infty$. If this is true, then by taking the limit on both sides of the above inequality, we have that $\opt^*\leq \lim_{\ell\to\infty}h_{N_\ell}(\Bq^{(N_{\ell})})=h(\Bq')$, where the equality follows from \Cref{claim:limit-subsequence}.

Since $q^*(\cdot)$ is non-decreasing, for every $i\in [N]$ and $\theta\in [(i-1)\eps,i\eps)$, we have $$|p^{(N)}(\theta)-q^*(\theta)|=|p_i^{(N)}-q^*(\theta)|\leq q^*(i\eps)-q^*((i-1)\eps).$$ 

Thus if $q^*(\cdot)$ is continuous at $\theta$, then $\lim_{N\to\infty} |p^{(N)}(\theta)-q^*(\theta)|=0$. Moreover, since $q^*(\cdot)$ is a monotone function in a closed interval $[0,1]$, the set of non-continuous points has zero measure. Thus $\lim_{N\to\infty} p^{(N)}(\theta)=q^*(\theta)$ almost everywhere. 

Similar to the proof of \Cref{sup_equal_max}, we have 
$$\lim_{\ell\to\infty}h({\Bp^{(N_{\ell})}})=\lim_{\ell\to\infty}\int_0^1G(\theta,{p^{(N_{\ell})}}(\theta))d\theta=\int_0^1G(\theta,q^*(\theta))d\theta=h(\Bq^*),$$
where the definition of $G$ is shown in \Cref{equ:objective-G}. Thus, by \Cref{claim:uniform-converge}, we have $\lim_{\ell\to\infty}h_{N_\ell}({\Bp^{(N_{\ell})}})=\lim_{\ell\to\infty}h({\Bp^{(N_{\ell})}})=h(\Bq^*)=\opt^*$. 
We have proved that $h(\Bq')\geq \opt^*$. On the other hand, since $\Bq'\in \cS'$, $\Bq'$ is feasible solution to $\pcont$. Thus $h(\Bq')\leq \opt^*$. Thus we have $h(\Bq')=\opt^*$, which implies that $\Bq'$ is indeed an optimal solution to $\pcont$. Let $\Phi(\Bq')=(\{\theta_i'\}_{i\in [n-1]},\{q_i'\}_{i\in [n]})$. Then $q'(\theta)=q_i', \forall i\in [n], \theta\in [\theta_{i-1}',\theta_i')$ and $q'(1)=q_{n}'$. Since $\Bq'$ is a feasible solution to $\pcont$, let $\MM'=\{q(\theta),t(\theta)\}_{\theta\in [0,1]}\in \MenuWithQ$ be the {\responsiveIC} and IR mechanism that implements $\Bq'$ (\Cref{lem:ic in q space}). To show that $\MM'$ contains at most $n=\optionsize$ different options, it suffices to prove that for every $i\in [n]$, $t(\theta)$ is a constant in $[\theta_{i-1}',\theta_i)$. {Suppose there exist $\theta,\theta'\in [\theta_{i-1}',\theta_i)$, $\theta\not=\theta'$ such that $t(\theta)\not=t(\theta')$. Without loss of generality, assume $t(\theta)<t(\theta')$. Then when the buyer has type $\theta'$, misreporting $\theta$ induces the same experiment, but a lower price. It contradicts with the fact that $\MM'$ is {\responsiveIC}. Thus $\MM'$ is an optimal {\responsiveIC}, IR mechanism that contains at most $\optionsize$ different options. We finish the proof.
}

\end{prevproof}

\section{Missing Details from Section~\ref{sec:characterization}}\label{sec:proof_characterization}




In this section we will provide a proof of \Cref{cor:special_distributions}. We first provide a sufficient and necessary condition under which selling complete information achieves the highest revenue among all {\responsiveIC} and IR mechanisms (\Cref{thm:characterization-full-info}). Note that the same condition also guarantees that selling complete information is the optimal mechanism among all IC and IR mechanisms. 

Since $f$ is strictly positive on $[0,1]$, $F(\cdot)$ is strictly increasing. Denote $F^{-1}(\cdot)$ the inverse of $F$. Recall that for every $\theta\in [0,1]$, $\varphi^-(\theta)=\theta f(\theta)+F(\theta)$ and $\varphi^+(\theta)=(\theta-1)f(\theta)+F(\theta)$. Intuitively, they can be viewed as the agent's virtual values when $q(\theta)\leq u_{11}-u_{2m}$ and when $q(\theta)>u_{11}-u_{2m}$ respectively. If either $\varphi^-$ or $\varphi^+$ is not monotonically non-decreasing, we iron the functions using the generalized ironing procedure by Toikka~\cite{toikka2011ironing}, which is a generalization of the ironing procedure by Myerson~\cite{Myerson81}. Denote $\tilde{\varphi}^-$ and  $\tilde{\varphi}^+$ the ironed virtual value functions of $\varphi^-$ and  $\varphi^+$ respectively. See \Cref{subsec:characterization-responsiveIC} for formal definitions of the ironing procedure and the ironed interval.   

\begin{theorem}\label{thm:characterization-full-info}
Let $\MM^*$ be the menu that contains only the fully informative experiment with price $p>0$. Then $\MM^*$ achieves the maximum revenue among all {\responsiveIC} and IR mechanisms if and only if, there exist two multipliers $\eta^*$ and $\lambda^*\geq 0$ such that: 
\begin{enumerate}
    \item $p\leq \widehat{p}$, where $\widehat{p}=\min\left\{\frac{(u_{11}-u_{12})u_{2m}}{u_{11}-u_{12}+u_{22}},\frac{(u_{2m}-u_{2,m-1})u_{11}}{u_{2m}+u_{1,m-1}-u_{2,m-1}}\right\}.$ 
    Moreover, $\lambda^*>0$ only when $m=3$, $u_{12}-u_{22}=u_{11}-1$, and $p=\widehat{p}=1-u_{22}$.
    \item Let $\theta_p^L=\frac{p}{u_{11}}$ and $\theta_p^H=1-\frac{p}{u_{2m}}$ be the two points where the utility function of buying the fully informative experiment, i.e., a linear function, intersects with the IR curve. For every $\theta\in [0,\theta_p^{L})$, $\tilde{\varphi}^{-}(\theta)-\eta^*+\lambda^*\leq 0$; For every $\theta\in [\theta_p^{L},\theta_p^{H})$, $\tilde{\varphi}^{-}(\theta)-\eta^*+\lambda^*\geq 0$ and $\tilde{\varphi}^{+}(\theta)-\eta^*\leq 0$; For every $\theta\in [\theta_p^{H},1]$, $\tilde{\varphi}^{+}(\theta)-\eta^*\geq 0$. 
    \item $\theta_p^{L}$ is not in the interior of any ironed interval of $\tilde{\varphi}^-(\cdot)$. And $\theta_p^{H}$ is not in the interior of any ironed interval of $\tilde{\varphi}^+(\cdot)$.
\end{enumerate}
\end{theorem}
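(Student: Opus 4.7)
The plan is to prove the characterization by Lagrangian duality applied to the program in \Cref{fig:continous-program-modified}, which by \Cref{lem:modified-program-equivalent} captures the optimal responsive-IC and IR mechanism. Selling only the fully informative experiment at price $p$ corresponds to the explicit step function $q_p^*\colon [0,1]\to[-u_{2m},u_{11}]$ taking value $-u_{2m}$ on $[0,\theta_p^L)$ (always recommend action $m$), $u_{11}-u_{2m}$ on $[\theta_p^L,\theta_p^H]$ (fully informative), and $u_{11}$ on $(\theta_p^H,1]$ (always recommend action $1$). A direct computation shows that $q_p^*$ satisfies the integral constraint (3), and Condition 1 of the theorem is precisely the statement that $q_p^*$ satisfies the IR constraints (4).

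First I will attach a real multiplier $\eta$ to constraint (3) and non-negative multipliers $\{\lambda_k\}_{k=2}^{m-1}$ to the IR constraints (4). Using the decomposition of the objective on the two regions $\{q\leq u_{11}-u_{2m}\}$ and $\{q> u_{11}-u_{2m}\}$, the pointwise Lagrangian integrand has coefficient $\varphi^-(\theta)-\eta+\sum_{k\colon \ell_k\geq q(\theta)}\lambda_k$ on $q(\theta)$ in the low region and $\varphi^+(\theta)-\eta$ in the high region, modulo additive constants. The monotonicity constraint (1) is then enforced by invoking Toikka's generalized ironing~\cite{toikka2011ironing}, which replaces $\varphi^\pm$ by their monotone envelopes $\tilde\varphi^\pm$ and makes the Lagrangian attain its maximum over monotone $q$'s at a step function.

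Next I will extract the KKT-type conditions for $q_p^*$ to be a Lagrangian maximizer. Pointwise maximization over $q(\theta)\in[-u_{2m},u_{11}]$ yields: $q_p^*(\theta)=-u_{2m}$ is optimal on $[0,\theta_p^L)$ iff the effective virtual value $\tilde\varphi^-(\theta)-\eta^*+\lambda^*$ is non-positive; the middle value $q_p^*(\theta)=u_{11}-u_{2m}$ is optimal on $[\theta_p^L,\theta_p^H]$ iff both $\tilde\varphi^-(\theta)-\eta^*+\lambda^*\geq 0$ and $\tilde\varphi^+(\theta)-\eta^*\leq 0$ hold; and $q_p^*(\theta)=u_{11}$ is optimal on $(\theta_p^H,1]$ iff $\tilde\varphi^+(\theta)-\eta^*\geq 0$. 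This is exactly Condition 2, where $\lambda^*$ consolidates the $\lambda_k$'s (at most one can be non-zero at the optimum by complementary slackness, and only an index $k$ with $\ell_k\leq u_{11}-u_{2m}$ can contribute to the low-region virtual value). Condition 3 is required because if $\theta_p^L$ (resp.\ $\theta_p^H$) were in the interior of an ironed interval of $\tilde\varphi^-$ (resp.\ $\tilde\varphi^+$), then the Lagrangian-maximizing $q$ would be forced to be constant throughout that interval, contradicting the jump in $q_p^*$ at that point.

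For sufficiency, given multipliers satisfying the three conditions, one verifies primal feasibility of $q_p^*$ and that $q_p^*$ maximizes the Lagrangian, so $q_p^*$ is primal-optimal by weak duality. For necessity, the program is linear in $q$ and can be approximated arbitrarily well by finite-dimensional LPs (as in the proof of \Cref{lem:opt without sigma-IC has k pieces}), so strong duality and KKT apply, providing multipliers $(\eta^*,\lambda^*)$ from which the three conditions can be read off. The main obstacle will be isolating when $\lambda^*>0$ is admissible: a careful analysis of which IR constraint (4) can simultaneously be tight at $q_p^*$ and consistent with the sign pattern in Condition 2 forces $\ell_k=u_{11}-u_{2m}$ for the unique active index $k$, which then forces $m=3$ (so that $k=2$ is the only admissible index), $u_{12}-u_{22}=u_{11}-1$, and $p=\widehat p=1-u_{22}$, recovering exactly the degenerate case carved out in Condition 1.
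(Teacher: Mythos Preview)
Your overall Lagrangian/KKT plan matches the paper's approach via \Cref{thm:characterization}, and your reading of Conditions 2 and 3 as pointwise-maximizer and pooling conditions is correct. However, there is a genuine gap in how you establish the necessity of $p\le\widehat p$.

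You write that selling full information at price $p$ ``corresponds to'' the three-step function $q_p^*$ and that Condition 1 ``is precisely the statement that $q_p^*$ satisfies the IR constraints (4).'' That identification is only valid \emph{after} one already knows $p\le\widehat p$. For arbitrary $p$, the buyer's utility under $\MM^*$ is $\max\{L^*(\theta),u(\theta)\}$, and its derivative (the actual $q$-function of $\MM^*$ in $\MenuWithQ$) has additional steps at the slopes $\ell_k$ of those IR-curve pieces that $L^*$ fails to dominate. This more complicated $q$ is always feasible (since $\MM^*$ is IR), so feasibility alone cannot yield $p\le\widehat p$. The paper closes this gap with a separate direct argument (\Cref{lem:endpoints of the full info}): if $\theta_p^L>\theta_1$ or $\theta_p^H<\theta_2$, one constructs a strictly better responsive-IC, IR mechanism by adding a semi-informative option along the exposed IR piece, contradicting optimality. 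Only after this lemma pins down the three-step shape does the KKT analysis proceed as you describe.

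A second, smaller issue: your claim that ``at most one $\lambda_k$ can be non-zero by complementary slackness'' is not sufficient. Both boundary constraints $k=2$ and $k=m-1$ can be simultaneously tight at $q_p^*$ when the two expressions in $\widehat p$ coincide. The paper instead shows that any $\lambda_k>0$ with $\ell_k\neq u_{11}-u_{2m}$ creates an open interval of $\theta$'s on which $\ell_k$ is the \emph{unique} pointwise maximizer of the ironed Lagrangian, contradicting that the three-step $q_p^*$ (which never equals $\ell_k$) is a maximizer almost everywhere. Combined with the slackness of interior constraints (which uses the geometric fact $\theta_p^L\le\theta_1$, $\theta_p^H\ge\theta_2$ from \Cref{lem:endpoints of the full info}), this forces $\lambda=\mathbf 0$ for $m\ge 4$ and isolates the degenerate $m=3$ case.
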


Here is a proof sketch of \Cref{thm:characterization-full-info}. We first prove an exact characterization of the optimal {\responsiveIC} and IR mechanism in $\MenuWithQ$, i.e., the optimal solution $\Bq^*=\{q^*(\theta)\}_{\theta\in [0,1]}$ of the program in \Cref{fig:continous-program-modified}, by Lagrangian duality (\Cref{thm:characterization}). It is a generalization of the characterization by Bergemann et al.~\cite{BergemannBS2018design} to $m\geq 3$ actions. We Lagrangify constraints {\IntegralConstr} and {\IRConstr} in the program. $\Bq=\{q(\theta)\}_{\theta\in [0,1]}$ is an optimal solution iff there exist Lagrangian multipliers that satisfy the KKT conditions. 
As a second step, we apply this characterization to $\Bq$ that corresponds to selling the complete information. To simplify our characterization, we show that at most two of the Lagrangian multipliers can be non-zero (in contrast to $\Theta(m)$ non-zero multipliers as in \Cref{thm:characterization}), when the solution $\Bq$ corresponds to a mechanism that only sells complete information. This is enabled by showing that in order to be the optimal menu, the price of the complete information can not be too high (\Cref{lem:endpoints of the full info}).

We notice that if a fully informative menu achieves the maximum revenue among all {\responsiveIC} and IR mechanisms, then it must also achieve the maximum revenue among all IC and IR mechanisms.  
In the rest of this section, we focus on {\responsiveIC} and IR mechanisms. By \Cref{lem:describe the menu with q} and \Cref{obs:q to exp}, we only need to consider mechanisms where every experiment is {\semiinformative} (\Cref{table:two-patterns}). Recall that $\MenuWithQ$ is the set of {\responsiveIC}, IR mechanisms that have this format. Throughout this section, we use $\Bq=\{q(\theta)\}_{\theta\in[0,1]}$ to represent the experiments of the mechanism. Recall that the optimal mechanism in $\MenuWithQ$ is captured in the program in \Cref{fig:continous-program-modified}.

In \Cref{subsec:characterization-responsiveIC}, we come up with an exact characterization of the optimal {\responsiveIC} and IR mechanism $\MM^*\in \MenuWithQ$, in any environment with 2 states and $m$ actions. In \Cref{subsec:characterization-full-info}, we apply this result to achieve a sufficient condition under which the fully informative menu is optimal. 

\subsection{Characterization of the Optimal {\ResponsiveIC} and IR Mechanism}\label{subsec:characterization-responsiveIC}

\notshow{
\mingfeinote{Mingfei: I moved the following part to \Cref{sec:upper_bound_full_info}. Please make changes on \Cref{sec:upper_bound_full_info} and remove things here.}

We notice that in \Cref{fig:continous-program}, there is an IR constraint for every type $\theta$. In order to state the characterization, we propose an equivalent program that only contains $O(m)$ constraints. 
Recall that for every $\theta\in [0,1]$, $u(\theta)=\max_{k\in [m]}\{\theta\cdot u_{1k}+(1-\theta)\cdot u_{2k}\}$. For every $k\in [m]$, let $h_k(\theta)=\theta\cdot u_{1,m+1-k}+(1-\theta)\cdot u_{2,m+1-k}=\theta\cdot \ell_k+u_{2,m+1-k},\forall \theta\in [0,1]$ be the agent's value function when she has no information, and follows action $m+1-k$. Here $\ell_k=u_{1,m+1-k}-u_{2,m+1-k}$. Recall that $\ell_k$ is also the slope for the $i$-th piece of the IR curve. 

Given any feasible solution $\{q(\theta)\}_{\theta\in [0,1]}$ of the program in \Cref{fig:continous-program}, the IR constraint at type $\theta$ is equivalent to: $u_{2m}+\int_{0}^{\theta}q(x)dx\geq h_k(\theta),\forall k\in [m]$. For every $k\in [m]$, let $\theta_k=\sup\{\theta\in [0,1]:q(\theta)\leq \ell_k\}$. Consider the following function $$u_{2m}+\int_{0}^{\theta}q(x)dx- h_k(\theta)=u_{2m}+\int_{0}^{\theta}q(x)dx-\theta\cdot \ell_k-u_{2,m+1-k}$$

By taking the derivative on $\theta$, we can see that the above function is minimized at $\theta_k$. Thus the set of constraints $\left\{u_{2m}+\int_{0}^{\theta}q(x)dx\geq h_k(\theta),\forall \theta\in [0,1]\right\}$ is captured by a single constraint $u_{2m}+\int_{0}^{\theta_k}q(x)dx\geq h_k(\theta_k)$. We notice that $\theta_k=\int_{0}^1\ind[q(x)\leq \ell_k]dx$. By the definition of $h_k$,
the constraint can be rewritten as

\begin{equation}\label{equ:IR_consraint-2}
\int_0^{1}(q(x)-\ell_k)\cdot \ind[q(x)\leq \ell_k]dx\geq u_{2,m+1-k}-u_{2m}    
\end{equation}

A feasible $\Bq$ must satisfy that $q(\theta)\in [-u_{2m},u_{11}]$ and $\int_{0}^1q(x)dx=u_{11}-u_{2m}$. Also $\ell_1=-u_{2m},\ell_m=u_{11}$. Thus Inequality~\eqref{equ:IR_consraint-2} is trivial when $k=1,m$. In our modified program we only include the constraint for $2\leq k\leq m-1$. The modified program is shown in \Cref{fig:continous-program-modified}. 
In \Cref{lem:modified-program-equivalent} we prove that the programs in \Cref{fig:continous-program} and \Cref{fig:continous-program-modified} are equivalent.


\begin{figure}[ht!]
\colorbox{MyGray}{
\begin{minipage}{.98\textwidth}
$$\quad\textbf{max  } \int_0^1 [(\theta f(\theta) + F(\theta))q(\theta) + \min\{(u_{11}-u_{2m} - q(\theta))f(\theta), 0\}]d\theta$$
\vspace{-.3in}
  \begin{align*}
\text{s.t.} \qquad 
 &{\MonoConstr}\quad q(\theta) \text{ is non-decreasing in }\theta\in [0,1]\\
 & {\MarginConstr}\quad q(0)\geq-u_{2m},\quad q(1)\leq u_{11}\\
 &{\IntegralConstr}\quad \int_{0}^1 q(\theta) d\theta = u_{11}-u_{2m} \\
 & {\IRConstr}\quad \int_0^{1}(q(x)-\ell_k)\cdot \ind[q(x)\leq \ell_k]dx\geq u_{2,m+1-k}-u_{2m}, &\forall k\in \{2,3,...,m-1\}
\end{align*}
\end{minipage}}
\caption{The Equivalent Program for the Optimal {\ResponsiveIC}, IR Mechanism}~\label{fig:continous-program-modified-2}
\end{figure}

\begin{lemma}\label{lem:modified-program-equivalent}
A solution $\Bq=\{q(\theta)\}_{\theta\in [0,1]}$ is feasible in the program in \Cref{fig:continous-program} if and only if it's feasible in the program in \Cref{fig:continous-program-modified}.
\end{lemma}

\begin{proof}
We denote $Q$ the program in \Cref{fig:continous-program} and $Q'$ the program in \Cref{fig:continous-program-modified}. Suppose $\Bq$ is feasible in $Q$. For every $k\in [m]$, let $\theta_k=\sup\{\theta\in [0,1]:q(\theta)\leq \ell_k\}$. Then since $\theta_k=\int_{0}^1\ind[q(\theta)\leq \ell_k]dx$, we have
\begin{align*}
\int_0^1(q(x)-\ell_k)\cdot \ind[q(x)\leq \ell_k]dx&=\int_0^{\theta_k}q(x)dx-\theta_k\ell_k\\
&\geq u(\theta_k)-u_{2m}-\theta_k\ell_k \geq u_{2,m+1-k}-u_{2m}
\end{align*}

Here the first inequality follows from the IR constraint of program $Q$ at type $\theta_k$; the second inequality follows from the definition of the IR curve $u(\cdot)$. Thus $\Bq$ is feasible in $Q$.

On the other hand, suppose $\Bq$ is feasible in the program $Q'$. For every $k\in [m]$, consider the following function $H_k:[0,1]\to \mathbb{R}$: $H(\theta)=\int_0^\theta q(x)dx-\theta\cdot \ell_k$. Since $q(\theta)$ is non-decreasing in $\theta$, $H_k(\theta)$ is minimized at $\theta=\theta_k=\sup\{\theta\in [0,1]:q(\theta)\leq \ell_k\}$. We notice that the last constraint of $Q'$ implies that $H_k(\theta_k)\geq u_{2,m+1-k}-u_{2m}$ for $k\in \{2,3,...,m-1\}$. We verify that the inequality is also true for $k=1$ and $k=m$. When $k=1$, since $\ell_1=-u_{2m}$ and $q(\theta)\in [-u_{2m},u_{11}],\forall \theta\in [0,1]$, thus $q(\theta)=-u_{2m},\forall \theta\in [0,\theta_1]$. Thus $H_1(\theta_1)=0=u_{2m}-u_{2m}$. When $k=m$, since $\theta_m=1$ and $\ell_m=u_{11}$, $H_m(\theta_m)=u_{11}-u_{2m}-u_{11}=u_{21}-u_{2m}$.

Thus for every $k\in [m],\theta\in [0,1]$, we have
$$\int_0^\theta q(x)dx-\theta\cdot \ell_k\geq u_{2,m+1-k}-u_{2m}$$

Since $u(\theta)=\max_{k\in [m]}\{\theta\cdot \ell_k+u_{2,m+1-k}\}$, we have
$\int_0^\theta q(x)dx\geq u(\theta)-u_{2m}$. Thus $\Bq$ is feasible in program $Q$. We finish our proof.
\end{proof}

\mingfeinote{Mingfei: This is the end of the part moving to \Cref{sec:upper_bound_full_info}.} \yangnote{Yang: Let's remove Lemma 17 here.}
}

In this section, we are going to present our exact characterization of the optimal mechanism in $\MenuWithQ$, i.e., the optimal solution $\Bq^*=\{q^*(\theta)\}_{\theta\in [0,1]}$ of the program in \Cref{fig:continous-program-modified}. We Lagrangify Constraints {\IntegralConstr} and {\IRConstr} in the program. Denote $\eta$ and $\lambda=\{\lambda_k\}_{2\leq k\leq m-1}$ the Lagrangian multipliers for both sets of constraints accordingly. 

\begin{definition}\label{def:lagrangian}
Given Lagrangian multipliers $\eta$ and $\lambda=\{\lambda_k\}_{k\in\{2,\ldots, m-1\}}\geq \mathbf{0}$, define the virtual surplus function $J^{(\eta,\lambda)}:[-u_{2m},u_{11}]\times [0,1]\to \mathbb{R}$ as follows:
\begin{align*}
J^{(\eta,\lambda)}(q,\theta)=&(\theta f(\theta)+F(\theta))\cdot q+f(\theta)\cdot \min\{u_{11}-u_{2m}-q,0\}\\
&+\eta\cdot (u_{11}-u_{2m}-q)+\sum_{k=2}^{m-1}\lambda_k\cdot \left((q-\ell_k)\cdot \ind[q\leq \ell_k]-u_{2,m+1-k}+u_{2m}\right)    
\end{align*}
Then the Lagrangian $L(\Bq,\eta,\lambda)$ of the program in \Cref{fig:continous-program-modified} can be written as: $L(\Bq,\eta,\lambda)=\int_{0}^1J^{(\eta,\lambda)}(q(\theta),\theta)d\theta$.
\end{definition}


\begin{observation}\label{obs:program-convex}
For any fixed $\theta\in [0,1]$, all of the following functions are continuous, differentiable in $[-u_{2m},u_{11}]$ except at finitely many points, and weakly concave in $q$ (note that $g_2(q)$ is linear)
\begin{enumerate}
    \item $g_1(q,\theta)=(\theta f(\theta)+F(\theta))\cdot q+f(\theta)\cdot \min\{u_{11}-u_{2m}-q,0\}$.
    \item $g_2(q)=u_{11}-u_{2m}-q$. \item $g_3^{(k)}(q)=(q-\ell_k)\cdot \ind[q\leq \ell_k]-u_{2,m+1-k}+u_{2m},\forall k\in \{2,3,...,m-1\}$.
\end{enumerate} 
Thus for any Lagrangian multipliers $(\eta,\lambda\geq \mathbf{0})$, the function $J^{(\eta,\lambda)}(q,\theta)$ is continuous, differentiable in $q\in[-u_{2m},u_{11}]$ except at finitely many points, and weakly concave in $q$, for any fixed $\theta\in [0,1]$. Besides,
$$\frac{\partial g_1}{\partial q}(q,\theta)=
\begin{cases}
\theta f(\theta)+F(\theta), &\forall q\in (-u_{2m},u_{11}-u_{2m})\\
(\theta-1) f(\theta)+F(\theta), &\forall q\in (u_{11}-u_{2m},u_{11})
\end{cases}
$$
are both continuous in $\theta$ on $[0,1]$. Thus except at $q=-u_{2m}, u_{11}-u_{2m}, u_{11}$, $\frac{\partial J^{(\eta,\lambda)}}{\partial q}(q,\theta)$ exists and is continuous in $\theta$ on $[0,1]$.
\end{observation}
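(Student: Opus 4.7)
The plan is to verify the three stated properties of the summands $g_1, g_2, g_3^{(k)}$ separately and then combine them to obtain the properties of $J^{(\eta,\lambda)}$, since the latter is a linear combination of these summands with coefficients $1$ (for $g_1$), $\eta$ (for $g_2$, which is linear so any real multiplier is fine), and $\lambda_k \geq 0$ (for $g_3^{(k)}$, which must be weighted non-negatively to preserve concavity).

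First I would handle $g_1$. Write $\min\{u_{11}-u_{2m}-q, 0\}$ as a piecewise linear function of $q$: it equals $0$ on $[-u_{2m}, u_{11}-u_{2m}]$ and $u_{11}-u_{2m}-q$ on $[u_{11}-u_{2m}, u_{11}]$, with the two pieces meeting continuously at $q = u_{11}-u_{2m}$. The slope decreases from $0$ to $-1$, so this map is concave; multiplying by the non-negative constant $f(\theta)$ and adding the linear term $(\theta f(\theta) + F(\theta))q$ preserves concavity and continuity, and yields a function that is differentiable in $q$ everywhere except possibly at the kink $q = u_{11}-u_{2m}$. $g_2$ is linear, so there is nothing to check. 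For $g_3^{(k)}$, note that $(q-\ell_k)\cdot \ind[q \leq \ell_k]$ is continuous (both pieces vanish at $q = \ell_k$) and piecewise linear with slope $1$ to the left of $\ell_k$ and slope $0$ to the right; so it is concave and differentiable except at $q = \ell_k$. The additive constant $-u_{2,m+1-k}+u_{2m}$ changes nothing.

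Combining, $J^{(\eta,\lambda)}(\cdot,\theta) = g_1(\cdot,\theta) + \eta\, g_2 + \sum_{k=2}^{m-1}\lambda_k\, g_3^{(k)}$ is a sum of finitely many continuous, concave functions (using $\lambda_k \geq 0$ and that $g_2$ is linear, so the sign of $\eta$ does not matter); hence it is itself continuous and weakly concave in $q$. Its non-differentiability points are contained in the finite set $\{u_{11}-u_{2m}\} \cup \{\ell_k : 2 \leq k \leq m-1\}$, so it is differentiable in $q$ on $[-u_{2m}, u_{11}]$ except at finitely many points.

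Finally, for the partial derivative formulas of $g_1$: on the open interval $(-u_{2m}, u_{11}-u_{2m})$ the $\min$ term vanishes identically, so $g_1(q,\theta) = (\theta f(\theta) + F(\theta))\, q$ and $\partial g_1/\partial q = \theta f(\theta)+F(\theta)$; on $(u_{11}-u_{2m}, u_{11})$ the $\min$ term equals $u_{11}-u_{2m}-q$, so $g_1(q,\theta) = ((\theta-1)f(\theta)+F(\theta))\, q + f(\theta)(u_{11}-u_{2m})$ and $\partial g_1/\partial q = (\theta-1)f(\theta)+F(\theta)$. Both expressions are continuous in $\theta$ since $f$ is assumed continuous (indeed differentiable) and $F$ is continuous on $[0,1]$. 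The formula for $\partial J^{(\eta,\lambda)}/\partial q$ then inherits continuity in $\theta$ away from the finitely many non-differentiable $q$-values listed above, which is exactly what the observation asserts. No step here presents a genuine obstacle; the content is essentially a bookkeeping exercise, with the only mild subtlety being to track where each piecewise-linear kink lies and to check that signs of the Lagrange multipliers are compatible with concavity.
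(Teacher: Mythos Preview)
Your verification is correct and complete. The paper states this as an \emph{observation} with no separate proof, so your piecewise analysis of each summand (identifying the kink of $g_1$ at $q=u_{11}-u_{2m}$, the triviality of $g_2$, and the kink of $g_3^{(k)}$ at $q=\ell_k$) and the subsequent combination using $\lambda_k\geq 0$ and linearity of $g_2$ is exactly the intended one-line check.
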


Denote $\cS=\{\Bq: [0,1]\rightarrow [-u_{2m},u_{11}] : q(\cdot) \text{ is non-decreasing}\}$. We notice that $\cS$ is convex, thus by \Cref{obs:program-convex}, the program in \Cref{fig:continous-program-modified} is a general convex programming problem. Strong duality holds according to Theorem 8.3.1 and Theorem 8.4.1 of \cite{luenberger1997optimization}.

\begin{lemma}\cite{luenberger1997optimization}\label{lem:strong_duality}
Let $(\eta^*,\lambda^*)\in\argmin_{\lambda\geq \mathbf{0},\eta}\max_{\Bq\in \cS}L(\Bq,\eta,\lambda)$ be the optimal Lagrangian multiplier. Then a solution $\Bq^*$ is optimal in the program in \Cref{fig:continous-program-modified} if and only if $\Bq^*\in \argmax_{\Bq\in \cS}L(\Bq,\eta^*,\lambda^*)$.
\end{lemma}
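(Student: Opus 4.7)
The plan is to obtain this lemma as a specialization of the general strong-duality / saddle-point theorem for convex programming in a linear topological space (Theorems 8.3.1 and 8.4.1 in Luenberger). The structural conditions have already essentially been recorded in Observation~\ref{obs:program-convex}, so the proof reduces to (i) casting the program in \Cref{fig:continous-program-modified} into the framework of those theorems, (ii) verifying convexity/concavity of all the relevant functionals, and (iii) checking a constraint qualification (Slater-type) so that the dual optimum is attained with no duality gap.

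First, I would view $\cS$ as a convex subset of $L^\infty([0,1])$ (monotone, uniformly bounded functions). The objective $\Bq \mapsto \int_0^1 g_1(q(\theta),\theta)\,d\theta$ is concave on $\cS$ because $g_1(\cdot,\theta)$ is concave in $q$ for every fixed $\theta$ (Observation~\ref{obs:program-convex}) and concavity is preserved under integration. Constraint {\IntegralConstr} is affine in $\Bq$, and each constraint {\IRConstr} is of the form $\int_0^1 g_3^{(k)}(q(\theta))\,d\theta\ge 0$, which defines a convex super-level set since $g_3^{(k)}$ is concave. After Lagrangifying {\IntegralConstr} and {\IRConstr} with multipliers $\eta\in\mathbb{R}$ and $\lambda\in \mathbb{R}_{\geq 0}^{m-2}$, the Lagrangian $L(\Bq,\eta,\lambda)=\int_0^1 J^{(\eta,\lambda)}(q(\theta),\theta)\,d\theta$ is concave in $\Bq$ on the convex set $\cS$, and affine in $(\eta,\lambda)$. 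This puts the problem squarely into Luenberger's convex programming setup where the inequality constraints take values in $\mathbb{R}^{m-2}$ (with the standard positive cone, whose interior is non-empty) and the equality constraint is scalar.

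Second, I would verify Slater's condition: there exists $\Bq \in \cS$ satisfying {\MonoConstr}, {\MarginConstr}, {\IntegralConstr}, and strictly satisfying each {\IRConstr}. A natural candidate is the $\Bq$ corresponding to selling the fully informative experiment, i.e., $q(\theta) = -u_{2m}$ for $\theta \in [0,\theta_0)$ and $q(\theta)=u_{11}$ for $\theta \in [\theta_0,1]$, where $\theta_0$ is chosen so that \IntegralConstr holds. Evaluating the left-hand side of {\IRConstr}, one obtains $\int_0^{\theta_0}(-u_{2m}-\ell_k)\,dx$, which, together with the assumption that no action is redundant (so $u_{2,m+1-k}<u_{2m}$ strictly for every $k\in\{2,\ldots,m-1\}$), gives strict inequality after choosing an $\Bq$ that is a small convex perturbation of the fully informative solution if necessary. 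So Slater's condition holds.

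Finally, with concavity of the objective, convexity of the constraint region, and Slater's condition verified, Luenberger's strong-duality theorem yields $\sup_{\Bq\in\cS}\inf_{\lambda\geq \mathbf{0},\eta} L(\Bq,\eta,\lambda) = \inf_{\lambda\geq \mathbf{0},\eta}\sup_{\Bq\in\cS} L(\Bq,\eta,\lambda)$ with the dual infimum attained at some $(\eta^*,\lambda^*)$. The standard saddle-point characterization then asserts $\Bq^*$ is primal-optimal if and only if $(\Bq^*,\eta^*,\lambda^*)$ forms a saddle point, which is equivalent to $\Bq^* \in \argmax_{\Bq \in \cS} L(\Bq,\eta^*,\lambda^*)$ together with complementary slackness; the latter is automatic once primal feasibility is enforced by the min-operation on the multiplier side. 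I expect the main subtlety to be purely bookkeeping: making sure the $L^\infty$ ambient space and the finite-dimensional constraint cone make Luenberger's general hypotheses (in particular, non-emptiness of the interior of the positive cone for the inequality constraints) trivially true, so that we really are invoking the theorems as stated and not a more delicate infinite-dimensional variant.
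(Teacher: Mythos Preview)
The paper does not actually give a proof of this lemma: it simply records (in the paragraph preceding the lemma) that $\cS$ is convex, invokes \Cref{obs:program-convex}, and then cites Theorems~8.3.1 and~8.4.1 of Luenberger directly. Your proposal is therefore more detailed than the paper's own treatment, and the overall framework---casting the problem as a concave program over $\cS$ with finitely many concave inequality constraints and one affine equality, then invoking Luenberger's saddle-point theorem once Slater's condition is checked---is exactly the right route.

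However, your Slater candidate is wrong, and the error is not a technicality. The two-piece $\Bq$ you describe, $q(\theta)=-u_{2m}$ on $[0,\theta_0)$ and $q(\theta)=u_{11}$ on $[\theta_0,1]$, does \emph{not} correspond to selling the fully informative experiment; by \Cref{obs:q to exp}, $q=-u_{2m}$ means $\pi_{11}=0,\pi_{2m}=1$ (always recommend action $m$) and $q=u_{11}$ means $\pi_{11}=1,\pi_{2m}=0$ (always recommend action $1$), so this $\Bq$ gives \emph{no} information to anyone. Equivalently, the induced utility curve is $\max\{u_{2m}(1-\theta),\,u_{11}\theta\}$, which coincides with the first and last pieces of the IR curve. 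When $m\ge 3$ and no action is redundant, the IR curve lies \emph{strictly above} this at the middle pieces, so the IR constraint---and hence constraint {\IRConstr}---is \emph{violated}, not strictly satisfied. Concretely, for $j=m+1-k$ your left-hand side minus right-hand side equals $\big(u_{11}u_{2m}-u_{11}u_{2j}-u_{2m}u_{1j}\big)/(u_{11}+u_{2m})$, and non-redundancy of action $j$ against actions $1$ and $m$ is precisely $u_{1j}/u_{11}+u_{2j}/u_{2m}>1$, which makes this quantity negative. A small perturbation will not rescue an infeasible point.

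The fix is easy: take instead the constant $q(\theta)\equiv u_{11}-u_{2m}$, which \emph{is} the fully informative experiment at price $0$. Constraints {\MonoConstr}--{\IntegralConstr} are immediate, and for {\IRConstr} one gets either $0>u_{2,m+1-k}-u_{2m}$ (if $u_{11}-u_{2m}>\ell_k$) or $u_{11}-u_{2m}-\ell_k=u_{2,m+1-k}-u_{2m}+(u_{11}-u_{1,m+1-k})>u_{2,m+1-k}-u_{2m}$ (if $u_{11}-u_{2m}\le\ell_k$), so all inequality constraints hold strictly. With this corrected interior point your Slater verification goes through, and the rest of your argument stands.
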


Now we are ready to state the characterization of the optimal solution.

\begin{definition}[Ironing~\cite{Myerson81}]\label{def:myerson_ironing}
Given any differentiable function $\varphi:[0,1]\to \mathbb{R}$. Let $F$ be any continuous distribution on $[0,1]$, with density $f(\cdot)$. Define the ironed function $\tilde{\varphi}(\cdot)$ as follows: For every $r\in [0,1]$, let $H(r):=\int_{0}^{r}\varphi(F^{-1}(x))dx$ and $G(\cdot)$ be the convex hull of $H(\cdot)$.~\footnote{$G(\cdot)$ is the convex hull of $H(\cdot)$ if $G$ is the highest convex function such that $G(\theta)\leq H(\theta),\forall \theta\in [0,1]$.} Let $g(\cdot)$ be the derivative of $G(\cdot)$,~\footnote{\label{footnote:extended_deriv}Since $H(\cdot)$ is differentiable, $G(\cdot)$, as the convex hull of $H(\cdot)$, is continuously differentiable on $(0,1)$. We extend $g$ to $[0,1]$ by continuity.} and $\tilde{\varphi}(\theta):=g(F(\theta)),\forall \theta\in [0,1]$. For any open interval $I\subseteq (0,1)$, we call $I$ an \textbf{ironed interval} of $\tilde{\varphi}(\cdot)$ if $G(F(\theta))<H(F(\theta)),\forall \theta\in I$. 
\end{definition}

\begin{definition}\label{def:ironed-virtual-value}
 Let $F(\cdot)$ be the agent's type distribution on $[0,1]$, with density $f(\cdot)$. For every $\theta\in [0,1]$, let $\varphi^-(\theta)=\theta f(\theta)+F(\theta)$ and $\varphi^+(\theta)=(\theta-1)f(\theta)+F(\theta)$ be the agent's two virtual values at type $\theta$.\footnote{$\varphi^-(\theta)$ is the virtual value when $q(\theta)\leq u_{11}-u_{2m}$, and $\varphi^+(\theta)$ is the virtual value when $q(\theta)>u_{11}-u_{2m}$.} Denote $\tilde{\varphi}^-(\cdot)$ and  $\tilde{\varphi}^+(\cdot)$ the ironed virtual value functions of $\varphi^-(\cdot)$ and  $\varphi^+(\cdot)$ respectively. {We say an open interval $I\subseteq (0,1)$ an \emph{ironed interval} of $\tilde{\varphi}^-$ (or $\tilde{\varphi}^+$) if $\int_0^{F(\theta)} \tilde{\varphi}^-\left(F^{-1}(x)\right) dx<\int_0^{F(\theta)} {\varphi}^-\left(F^{-1}(x)\right) dx$ (or $\int_0^{F(\theta)} \tilde{\varphi}^+\left(F^{-1}(x)\right) dx<\int_0^{F(\theta)} {\varphi}^+\left(F^{-1}(x)\right) dx$) for all $\theta\in I$.}
\end{definition}

\begin{theorem}\label{thm:characterization}

Suppose $\Bq^*=\{q^*(\theta)\}_{\theta\in [0,1]}$ is a feasible solution of the program in \Cref{fig:continous-program-modified}. Then $\Bq^*$ is optimal if and only if, there exist multipliers $\eta^*$ and $\lambda^* =\{\lambda^*_k\}_{k\in\{2,\ldots,m-1\}}\geq \mathbf{0}$, such that all of the following properties are satisfied:
\begin{enumerate}
    \item $q^*(\cdot)$ is non-decreasing, $$q^*(\theta)\in \argmax_{q\in [-u_{2m},u_{11}]}\left\{\tilde{\varphi}^-(\theta)\cdot q^- +\tilde{\varphi}^+(\theta)\cdot q^+ -\eta^*\cdot q+\sum_{k=2}^{m-1}\lambda_k^*(q-\ell_k)\cdot \ind[q\leq \ell_k]\right\},$$
    where $q^-=\min\{q,u_{11}-u_{2m}\}, q^+=\max\{q-u_{11}+u_{2m},0\}$, {almost everywhere}.
    \item $\Bq^*$ and $(\eta^*,\lambda^*)$ satisfy the complementary slackness, i.e., for every $k\in \{2,3,...,m-1\}$, $$\lambda_k^*\cdot \left[\int_0^{1}(q^*(x)-\ell_k)\cdot \ind[q^*(x)\leq \ell_k]dx- (u_{2,m+1-k}-u_{2m})\right]=0$$
    \item $\Bq^*$ satisfies the  \emph{generalized pooling property} (\Cref{def:generalized_pooling_property}):
    For any 
$x\in [-u_{2m},u_{11}]$, define $q^{*^{-1}}(x)=\inf\{\theta\in [0,1]:q^*(\theta)\geq x\}$. Then $\Bq^*$ satisfies the \emph{generalized pooling property} if 
    
    \begin{enumerate}
        \item for every 
       $x\in [-u_{2m},u_{11}-u_{2m}]$ and open interval $I\subseteq (0,1)$:
$$q^{*^{-1}}(x)\in I\text{ and $I$ is an ironed interval for $\tilde{\varphi}^-(\cdot)$}\implies q^*(\cdot) \text{ is constant on } I.$$
     \item for every 
   $x\in (u_{11}-u_{2m},u_{11}]$
     and open interval $I\subseteq (0,1)$:
$$q^{*^{-1}}(x)\in I\text{ and $I$ is an ironed interval for $\tilde{\varphi}^+(\cdot)$}\implies q^*(\cdot) \text{ is constant on } I.$$
    \end{enumerate}
\end{enumerate}
\end{theorem}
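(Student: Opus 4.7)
}
The plan is to apply Lagrangian duality (\Cref{lem:strong_duality}) combined with a Myerson--Toikka style ironing argument. After Lagrangifying Constraint {\IntegralConstr} with multiplier $\eta$ and Constraints {\IRConstr} with multipliers $\lambda=\{\lambda_k\}\geq \mathbf{0}$, \Cref{lem:strong_duality} reduces optimality of $\Bq^*$ to the existence of multipliers $(\eta^*,\lambda^*)$ with $\Bq^* \in \argmax_{\Bq\in\cS} L(\Bq,\eta^*,\lambda^*)$ together with complementary slackness for Constraint {\IRConstr}; the latter is exactly Property 2. The remaining task is to convert ``$\Bq^*$ maximizes the Lagrangian over $\cS$'' into the pointwise conditions in Properties 1 and 3.

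\textbf{Step 1 (splitting the integrand).} First I would split the pointwise integrand $J^{(\eta^*,\lambda^*)}(q,\theta)$ at $q = u_{11}-u_{2m}$: on $\{q\leq u_{11}-u_{2m}\}$ the $q$-linear coefficient coming from the original objective is $\varphi^-(\theta)$, while on $\{q> u_{11}-u_{2m}\}$ it is $\varphi^+(\theta)$. Writing $q = q^-+q^+$ with $q^-=\min\{q,u_{11}-u_{2m}\}$ and $q^+=\max\{q-u_{11}+u_{2m},0\}$,
\[
L(\Bq,\eta^*,\lambda^*) = \int_0^1 \bigl[\varphi^-(\theta) q^-(\theta) + \varphi^+(\theta) q^+(\theta)\bigr] d\theta + \cR(\Bq;\eta^*,\lambda^*),
\]
where $\cR$ collects the $\eta^*$- and $\lambda^*$-terms and depends on $q$ only (not on $\theta$ through $f$ or $F$), so it is inert under ironing. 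Moreover, when $\Bq\in\cS$ is non-decreasing, both $q^-(\cdot)$ and $q^+(\cdot)$ are non-decreasing.

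\textbf{Step 2 (ironing).} Next I would apply the generalized ironing procedure of \Cref{def:myerson_ironing} separately to $\varphi^-$ and $\varphi^+$. The standard identity obtained by integration by parts against $H(r)=\int_0^r \varphi(F^{-1}(x))dx$ and its convex hull $G$ gives, for every non-decreasing $\tilde q:[0,1]\to[-u_{2m},u_{11}-u_{2m}]$,
\[
\int_0^1 \varphi^-(\theta)\tilde q(\theta) d\theta \leq \int_0^1 \tilde\varphi^-(\theta)\tilde q(\theta) d\theta,
\]
with equality iff $\tilde q$ is constant on every ironed interval of $\tilde\varphi^-$; and analogously for $\varphi^+$. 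Hence $L(\Bq,\eta^*,\lambda^*)\leq \tilde L(\Bq,\eta^*,\lambda^*)$, where $\tilde L$ replaces $\varphi^\pm$ by $\tilde\varphi^\pm$. Because $\tilde\varphi^-$ and $\tilde\varphi^+$ are non-decreasing by construction, the pointwise maximizer in $q$ of the ironed integrand is non-decreasing in $\theta$, hence automatically lies in $\cS$ and solves $\max_{\Bq\in\cS}\tilde L$. Property 1 is exactly the statement that $\Bq^*(\theta)$ is such a pointwise maximizer almost everywhere, and Property 3 (generalized pooling) is exactly the equality condition above.

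\textbf{Step 3 (putting it together).} Combining Properties 1 and 3 yields $L(\Bq^*,\eta^*,\lambda^*)=\tilde L(\Bq^*,\eta^*,\lambda^*)=\max_{\cS}\tilde L\geq \max_{\cS} L$, so $\Bq^*\in\argmax_{\cS} L(\cdot,\eta^*,\lambda^*)$, and with Property 2 strong duality gives optimality. Conversely, if $\Bq^*$ is optimal then \Cref{lem:strong_duality} supplies $(\eta^*,\lambda^*)$ satisfying complementary slackness (Property 2), and the ironing analysis forces Properties 1 and 3: any maximizer that failed Property 3 would have $L(\Bq^*,\eta^*,\lambda^*)<\tilde L(\Bq^*,\eta^*,\lambda^*)$, contradicting $\max_\cS L=\max_\cS \tilde L$, and any failure of Property 1 on a positive-measure set would strictly improve $\tilde L$.

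The main obstacle is formalizing the ironing identity of Step 2 in the presence of the $\lambda^*$-terms $(q-\ell_k)\ind[q\leq \ell_k]$ in the Lagrangian, which add kinks that the pointwise maximizer may lock onto; the saving grace is that these terms depend on $q$ alone and thus pass through the ironing rearrangement unchanged. A related subtlety is selecting a non-decreasing measurable argmax across the countably many $\theta$ at which $\tilde\varphi^\pm(\theta)$ hits a kink of $q\mapsto J^{(\eta^*,\lambda^*)}(q,\theta)$, which can be handled by the usual left-continuous selection.
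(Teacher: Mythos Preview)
Your high-level plan coincides with the paper's: Lagrangify Constraints {\IntegralConstr} and {\IRConstr}, invoke strong duality (\Cref{lem:strong_duality}), and then characterize Lagrangian maximizers by ironing. The paper, however, does not split the integrand and iron $\varphi^-$, $\varphi^+$ separately; it applies Toikka's generalized ironing theorem (\Cref{thm:toikka-ironing}) directly to the full concave virtual surplus $J^{(\eta,\lambda)}$ and then computes the ironed $\tilde J^{(\eta,\lambda)}$ explicitly. The key step in that computation is that, after passing to the quantile variable $r=F(\theta)$, the $\eta$- and $\lambda_k$-contributions to $H(q,r)$ are \emph{linear} in $r$, so convexification leaves them untouched and the ironing acts only on $\widehat H^\pm(r)=\int_0^r\varphi^\pm(F^{-1}(x))\,dx$; this is what produces the $\tilde\varphi^\pm$ in Property~1 and makes Toikka's generalized pooling condition collapse to Property~3.

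Your Step~2, as written, has a gap. Integration by parts against $H(r)=\int_0^r\varphi(F^{-1}(x))\,dx$ and its hull $G$ yields, for non-decreasing $\tilde q$,
\[
\int_0^1 \varphi^-(\theta)\,\tilde q(\theta)\,f(\theta)\,d\theta \ \le\ \int_0^1 \tilde\varphi^-(\theta)\,\tilde q(\theta)\,f(\theta)\,d\theta,
\]
i.e., the inequality carries the density $f$. But $L(\Bq,\eta^*,\lambda^*)=\int_0^1 J^{(\eta^*,\lambda^*)}(q(\theta),\theta)\,d\theta$ is a Lebesgue integral, so what you need is the same inequality \emph{without} the factor $f$, and that does not follow. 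Ironing in $F$-quantile space guarantees $G(1)=H(1)$, i.e., $\int_0^1(\tilde\varphi^- - \varphi^-)\,f\,d\theta=0$, but says nothing about $\int_0^1(\tilde\varphi^- - \varphi^-)\,d\theta$; since $q^-$ ranges over an interval containing both positive and negative constants, your displayed inequality already fails for constant $\tilde q$ whenever these two integrals differ. Contrary to what you flag in your last paragraph, the $\lambda_k$-kinks are \emph{not} the obstacle---they survive convexification precisely because their $r$-dependence is linear; the actual delicacy is this measure mismatch, and it is why the paper routes the argument through Toikka's machinery on the whole $J^{(\eta,\lambda)}$ rather than a bare Myerson-style integration by parts on each piece.
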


\Cref{thm:characterization} follows from the generalized ironing procedure by Toikka~\cite{toikka2011ironing}. We first introduce the necessary background.
\paragraph{General Virtual Surplus Function:} Given any function $J(\cdot,\cdot):[a,b]\times [0,1]\to \mathbb{R}$. We assume that $J$ satisfies: (i) For every fixed $q\in [a,b]$, $J(q,\cdot)$ is continuously differentiable on $(0,1)$, 
and (ii) For every fixed $\theta\in [0,1]$, $J(\cdot,\theta)$ is continuous on $[a,b]$ and weakly concave. (iii) Except at finitely many points $q\in [a,b]$, $\frac{\partial J}{\partial q}(q,\theta)$ exists and is continuous in $\theta$ on $[0,1]$.
We define the ironed virtual surplus in \Cref{def:ironed_virtual_surplus}.

\begin{definition}[Generalized Ironing Procedure~\cite{toikka2011ironing}]
\label{def:ironed_virtual_surplus}
Given any virtual surplus function $J$, for every $r\in [0,1]$, let $h(q,r):=\frac{\partial J}{\partial q}\left(q,F^{-1}(r)\right)$. By assumption (iii), it is well-defined on $[a,b]$ except at finitely many points. We extend $h$ to the whole interval $[a,b]$ by {left-continuity}. For every $q\in [a,b],r\in [0,1]$, let $H(q,r):=\int_{0}^{r}h(q,x)dx$. \footnote{By assumption (iii) of $J(\cdot,\cdot)$ and the continuity of $F^{-1}(\cdot)$, $h(q,x)$ continuous in $x$ and hence is integrable in $x$.} For every $q\in [a,b]$, define $G(q,\cdot):=\conv H(q,\cdot)$ as the convex hull of function $H(q,\cdot)$. Let $g(q,r):=\frac{\partial G}{\partial r}(q,r)$, which is non-decreasing in $r$ for every $q\in[a,b]$.~\footnote{
{Since $H(q,\cdot)$ is differentiable on $(0,1)$ for every $q\in [a,b]$, $G(q,\cdot)$, as the convex hull of $H(q,\cdot)$, is continuously differentiable on $(0,1)$ for every $q\in[a,b]$. We extend $g(q,\cdot)$ to $[0,1]$ by continuity.}
} Define the \emph{ironed virtual surplus} function $\tilde{J}:[a,b]\times [0,1]\to \mathbb{R}$ as follows:
$$\tilde{J}(q,\theta)=J(0,\theta)+\int_0^qg(s,F(\theta))ds$$
\end{definition}

\begin{definition}\label{def:generalized_pooling_property}\cite{toikka2011ironing}
Given $\Bq=\{q(\theta)\}_{\theta\in [0,1]}$. For any $x\in [a,b]$, define $q^{-1}(x)=\inf\{\theta\in [0,1]:q(\theta)\geq x\}$. Then $\Bq$ satisfies the \emph{generalized pooling property} if for every $x\in [a,b]$ and open interval $I\subset [0,1]$:
$$q^{-1}(x)\in I \text{ and }G(x,\theta)<H(x,\theta),\forall \theta\in I\implies q(\cdot) \text{ is constant on } I.$$
\end{definition}

\begin{theorem}\cite{toikka2011ironing}\label{thm:toikka-ironing}
Let $\cS=\{\Bq: [0,1]\rightarrow [a,b] :q(\cdot) \text{ is non-decreasing}\}$ and $\cS'=\{\Bq: [0,1]\rightarrow [a,b]\}$. Then $$\sup_{\Bq\in \cS}\left\{\int_0^1 J(q(\theta),\theta)d\theta\right\}=\sup_{\Bq\in \cS'}\left\{\int_0^1 \tilde{J}(q(\theta),\theta)d\theta\right\}.$$
Moreover, $\Bq\in \cS$ achieves the supremum of $\int_0^1 J(q(\theta),\theta)d\theta$ if and only if: $\tilde{J}(q(\theta),\theta)=\sup_{q\in [a,b]}\tilde{J}(q,\theta)$ {almost everywhere}, and $\Bq$ satisfies the generalized pooling property.
\end{theorem}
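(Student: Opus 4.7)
}

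My plan is to prove two inequalities that combine into the desired equality of suprema, with the equality case giving the characterization. The central identity driving the argument is
\[
\tilde{J}(q,\theta)-J(q,\theta)=\int_{0}^{q}\bigl[g(s,F(\theta))-h(s,F(\theta))\bigr]ds,
\]
which follows directly from the definition $\tilde J(q,\theta)=J(0,\theta)+\int_0^q g(s,F(\theta))ds$ and the fact that $J(q,\theta)-J(0,\theta)=\int_0^q h(s,F(\theta))ds$.

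\emph{Step 1 (Monotone side $\leq$ ironed side).} For $\Bq\in\cS$, I would apply Fubini to swap the order of integration and then change variable $r=F(\theta)$:
\[
\int_{0}^{1}\bigl[\tilde J(q(\theta),\theta)-J(q(\theta),\theta)\bigr]d\theta
=\int_{a}^{b}\int_{q^{-1}(s)}^{1}\bigl[g(s,F(\theta))-h(s,F(\theta))\bigr]d\theta\,ds.
\]
For each fixed $s$, the inner integral is controlled by the identity $H(s,\cdot)\ge G(s,\cdot)$ (since $G$ is the convex hull of $H$) together with integration by parts in $r$. Concretely, write $H(s,1)-H(s,F(q^{-1}(s)))$ and $G(s,1)-G(s,F(q^{-1}(s)))$ using the derivatives $h,g$; because $G$ is convex and touches $H$ at the endpoints $0$ and $1$ (so that integration over the full interval matches), the non-negativity of $H-G$ yields that the iterated integral is non-negative. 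Hence $\int J(q(\theta),\theta)d\theta\leq\int\tilde J(q(\theta),\theta)d\theta$. The obstacle here is to justify the boundary terms and verify that $G$ and $H$ agree at the endpoints, which I would do by arguing that the natural $r$-ranges of monotone $q$ are $[q^{-1}(s),1]$ and using that $G(s,1)=H(s,1)$ (a standard property of the convex hull on a closed interval when $H$ is absolutely continuous). Equality holds iff $H(s,F(\theta))=G(s,F(\theta))$ whenever the integrand $g(s,\cdot)-h(s,\cdot)$ is non-zero within the support of integration, which is exactly the generalized pooling property: on ironed intervals, $q$ must be constant so that $s$ does not contribute on that interval.

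\emph{Step 2 (Ironed side attained by a monotone pooling selector).} Because $J(\cdot,\theta)$ is weakly concave, $\tilde J(\cdot,\theta)$ is weakly concave as well (its $q$-derivative $g(q,F(\theta))$ is non-increasing in $q$ since it arises from the convex hull of $H$ in $r$, and standard arguments show monotonicity transfers). The pointwise maximizer correspondence $\theta\mapsto\arg\max_q\tilde J(q,\theta)$ therefore admits a selection $q^\star(\theta)$, and I claim $q^\star$ can be chosen in $\cS$: since $g(q,r)$ is non-decreasing in $r$ (convexity of $G(q,\cdot)$), the optimality condition $g(q^\star(\theta),F(\theta))=0$ (or a first-order / KKT condition at the boundary) forces $q^\star$ to be monotone in $\theta$. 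Moreover, on any ironed interval $I$ for some $q$, $g(q,\cdot)$ is constant in $r$ (because $G(q,\cdot)$ is affine there), so the pointwise optimizer $q^\star(\theta)$ is insensitive to $\theta$ within $I$, delivering the generalized pooling property for $q^\star$.

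\emph{Step 3 (Concluding the equality and the characterization).} Assembling the pieces,
\[
\sup_{\Bq\in\cS}\int_0^1 J(q(\theta),\theta)d\theta
\;\leq\;\sup_{\Bq\in\cS}\int_0^1\tilde J(q(\theta),\theta)d\theta
\;\leq\;\sup_{\Bq\in\cS'}\int_0^1\tilde J(q(\theta),\theta)d\theta
\;=\;\int_0^1\tilde J(q^\star(\theta),\theta)d\theta,
\]
and by Step~1 applied to $q^\star\in\cS$ with equality (pooling holds), the last quantity equals $\int_0^1 J(q^\star(\theta),\theta)d\theta\leq\sup_{\cS}\int J$, closing the loop. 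The characterization of optimality in $\cS$ reads off from the equality case of Step~1: a monotone $\Bq$ is optimal iff (i) it satisfies pooling, which zeroes the gap $\int(\tilde J-J)(q(\theta),\theta)d\theta$, and (ii) it maximizes $\tilde J(\cdot,\theta)$ pointwise a.e., which is necessary because otherwise one could strictly improve $\int\tilde J$ and contradict the sup equality. The main technical obstacle is the integration-by-parts manipulation in Step~1 on the ironed intervals, where one must carefully account for the affine segments of $G(q,\cdot)$ and justify the sign of boundary contributions; this is where the proof specializes the classical Myerson argument to the general concave surplus setting and relies crucially on assumption (iii) on $J$, which guarantees $h(q,\cdot)$ is integrable in $r$ so that $H,G,g$ are well-defined.
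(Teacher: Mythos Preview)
This theorem is not proved in the paper: it carries the citation \cite{toikka2011ironing} on its statement and is imported wholesale from Toikka's ironing paper, then invoked as a black box inside the proof of Theorem~\ref{thm:characterization}. There is therefore no in-paper proof to compare your proposal against.

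As a standalone sketch of Toikka's argument, your outline captures the right architecture---the identity $\tilde J-J=\int_0^q(g-h)\,ds$, the Fubini swap, the sign from $H\ge G$ together with $G=H$ at the endpoints of $[0,1]$, and the monotone-comparative-statics selection for the ironed problem are exactly the main moves. One technical point in Step~1 needs care: after Fubini the inner integral is $\int_{q^{-1}(s)}^{1}[g(s,F(\theta))-h(s,F(\theta))]\,d\theta$, and the substitution $r=F(\theta)$ introduces the Jacobian $1/f(F^{-1}(r))$, so the inner integral does not collapse directly to $[G(s,1)-H(s,1)]-[G(s,r_0)-H(s,r_0)]$ as you suggest. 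In Toikka's original formulation the objective is the expectation $\int J(q(\theta),\theta)\,dF(\theta)$; with that measure the Jacobian is absent and your computation goes through verbatim. To make Step~1 rigorous you should either work in that formulation or, equivalently, specialize to $F$ equal to the identity before changing variables.
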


\begin{prevproof}{Theorem}{thm:characterization}
By \Cref{obs:program-convex}, for any collection of Lagrangian 
multipliers $\eta,\lambda\geq \mathbf{0}$, the virtual surplus function $J^{(\eta,\lambda)}(\cdot,\cdot)$ satisfies our assumptions. We apply the ironing procedure in \Cref{def:ironed_virtual_surplus} to $J^{(\eta,\lambda)}(\cdot,\cdot)$:

$$h(q,r)=\frac{\partial J^{(\eta,\lambda)}}{\partial q}\left(q,F^{-1}(r)\right)=
\begin{cases}
\varphi^{-}(F^{-1}(r))-\eta+\sum_{j=k(q)+1}^{m-1} \lambda_j, & q\leq u_{11}-u_{2m}\\
\varphi^{+}(F^{-1}(r))-\eta+\sum_{j=k(q)+1}^{m-1} \lambda_j, & q>u_{11}-u_{2m}\\
\end{cases}
$$
Recall that $\ell_k=u_{1,m+1-k}-u_{2,m+1-k},\forall k\in [m]$. Here $k(q)$ is the unique number $j\in \{1,...,m-1\}$ such that $\ell_{j}<q\leq \ell_{j+1}$ (let $k(q)=1$ when $q=\ell_1$). 
For every $q\in [-u_{2m},u_{11}]$ and $r\in [0,1]$, 
$$H(q,r)=\int_{0}^{r}h(q,x)dx=
\begin{cases}
\int_0^r\varphi^{-}(F^{-1}(x))dx+r\cdot \left(\sum_{j=k(q)+1}^{m-1} \lambda_j-\eta\right), & q\leq u_{11}-u_{2m}\\
\int_0^r\varphi^{+}(F^{-1}(x))dx+r\cdot \left(\sum_{j=k(q)+1}^{m-1} \lambda_j-\eta\right), & q>u_{11}-u_{2m}\\
\end{cases}
$$

For every fixed $q\in [-u_{2m},u_{11}]$, to obtain $G(q,\cdot)$, we take the convex hull of $H(q,\cdot)$. Note that $r\cdot \left(\sum_{j=k(q)+1}^{m-1} \lambda_j-\eta\right)$ is a linear function on $r$, thus we are effectively only taking the convex hull of the functions $\widehat{H}^-(r):=\int_0^r\varphi^{-}(F^{-1}(x))dx$ and $\widehat{H}^+(r):=\int_0^r\varphi^{+}(F^{-1}(x))dx$. Denote $\widehat{G}^-(\cdot)$ and $\widehat{G}^+(\cdot)$ the convex hull of $\widehat{H}^-(\cdot)$ and $\widehat{H}^+(\cdot)$ accordingly. Then

$$G(q,r)=
\begin{cases}
\widehat{G}^-(r)+r\cdot \left(\sum_{j=k(q)+1}^{m-1} \lambda_j-\eta\right), & q\leq u_{11}-u_{2m}\\
\widehat{G}^+(r)+r\cdot \left(\sum_{j=k(q)+1}^{m-1} \lambda_j-\eta\right), & q>u_{11}-u_{2m}\\
\end{cases}
$$


By definition of $\tilde{\varphi}^-$, $\tilde{\varphi}^+$ (\Cref{def:ironed-virtual-value}), we have  

$$g(q,r)=
\begin{cases}
\tilde{\varphi}^{-}(F^{-1}(r))+\sum_{j=k(q)+1}^{m-1} \lambda_j-\eta, & q\leq u_{11}-u_{2m}\\
\tilde{\varphi}^{+}(F^{-1}(r))+\sum_{j=k(q)+1}^{m-1} \lambda_j-\eta, & q>u_{11}-u_{2m}\\
\end{cases}
$$

The ironed virtual surplus $\tilde{J}^{(\eta,\lambda)}(q,\theta)= J^{(\eta,\lambda)}(0,\theta)+\int_0^qg(s,F(\theta))ds$. We notice that $J^{(\eta,\lambda)}(0,\theta)=-\sum_{k=2}^{m-1}\lambda_k\ell_k$, and by definition of $k(q)$,
$$\int_{0}^q\sum_{j=k(s)+1}^{m-1}\lambda_jds=\sum_{j=2}^{m-1}\lambda_j\cdot \int_{0}^{q}\ind[s\leq \ell_j]ds=\sum_{j=2}^{m-1}\lambda_j\cdot \min\{\ell_j,q\}$$
Thus we have
\begin{align*}
&\tilde{J}^{(\eta,\lambda)}(q,\theta)= J^{(\eta,\lambda)}(0,\theta)+\int_0^q g(s,F(\theta))ds\\
=&\begin{cases}
\tilde{\varphi}^{-}(\theta)\cdot q-\eta\cdot q+\sum_{k=2}^{m-1}\lambda_k(q-\ell_k)\cdot \ind[q\leq \ell_k], & q\leq u_{11}-u_{2m}\\
\tilde{\varphi}^{-}(\theta)\cdot (u_{11}-u_{2m})+\tilde{\varphi}^{+}(\theta)(q-u_{11}+u_{2m})-\eta\cdot q+\sum_{k=2}^{m-1}\lambda_k(q-\ell_k)\cdot \ind[q\leq \ell_k], & q>u_{11}-u_{2m}
\end{cases}\\
=&\tilde{\varphi}^-(\theta)\cdot q^- +\tilde{\varphi}^+(\theta)\cdot q^+ -\eta\cdot q+\sum_{k=2}^{m-1}\lambda_k(q-\ell_k)\cdot \ind[q\leq \ell_k],
\end{align*}
where $q^-=\min\{q,u_{11}-u_{2m}\}, q^+=\max\{q-u_{11}+u_{2m},0\}$.

We first prove that the properties are necessary for $\Bq^*$ to be optimal. Suppose $\Bq^*$ is an optimal solution. Let $(\eta^*,\lambda^*)$ be the optimal Lagrangian multipliers. By \Cref{lem:strong_duality}, $\Bq^*$ maximizes $L(\Bq,\eta^*,\lambda^*)=\int_0^1J^{(\lambda^*,\eta^*)}(q(\theta),\theta)d\theta$ over $\Bq\in \cS$. By \Cref{thm:toikka-ironing}, $\tilde{J}^{(\eta^*,\lambda^*)}(q^*(\theta),\theta)=\sup_{q\in [-u_{2m},u_{11}]}\tilde{J}^{(\eta^*,\lambda^*)}(q,\theta)$ almost everywhere, which is exactly the first property in the statement of~\Cref{thm:characterization}, and $\Bq^*$ satisfies the generalized pooling property.  
Since $\Bq^*$ is the optimal solution and $(\eta^*,\lambda^*)$ are the optimal Lagrangian multipliers, the second property in the statement of \Cref{thm:characterization} directly follows from the KKT condition. Finally, we prove the last property. {An important property of $H(\cdot,\cdot)$ and $G(\cdot,\cdot)$ is that for any $q\leq u_{11}-u_{2m}$, $H(q,r)>G(q,r)$ iff $\widehat{H}^-(r)>\widehat{G}^-(r)$, and for any $q>u_{11}-u_{2m}$, iff $\widehat{H}^+(r)>\widehat{G}^+(r)$. Therefore, $\Bq^*$ satisfies the generalized pooling property is equivalent to the last property in the statement.}

Now we prove sufficiency. For any feasible solution $\Bq^*$, suppose there exist multipliers $(\eta^*,\lambda^*\geq \textbf{0})$ that satisfy all three properties in the statement. We argue that $\Bq^*$ is the optimal solution (and $(\eta^*,\lambda^*)$ are the optimal Lagrangian multipliers) by verifying the KKT conditions. Firstly, both $\Bq^*$ and $(\eta^*,\lambda^*)$ are feasible primal and dual solutions. Secondly, by the first property of the statement, $\tilde{J}^{(\eta^*,\lambda^*)}(q^*(\theta),\theta)=\sup_{q\in [-u_{2m},u_{11}]}\tilde{J}^{(\eta^*,\lambda^*)}(q,\theta)$ almost everywhere. As argued above, the fact that $\Bq^*$ satisfies the third property implies that $\Bq^*$ satisfies the generalized pooling property. Thus by \Cref{thm:toikka-ironing}, $\Bq^*$ maximizes $L(\Bq,\eta^*,\lambda^*)=\int_0^1J^{(\lambda^*,\eta^*)}(q(\theta),\theta)d\theta$ over $\Bq\in \cS$. The stationary condition is satisfied. Thirdly, the second property of the statement implies that the complementary slackness is satisfied. Thus by the KKT conditions, $\Bq^*$ is the optimal solution.  
\end{prevproof}

\subsection{Optimality Conditions for Selling Complete Information}\label{subsec:characterization-full-info}

In this section, we characterize the conditions under which selling only the fully informative experiment is optimal using \Cref{thm:characterization}. Clearly, any {\responsiveIC} and IR mechanism that only contains the fully informative experiment $E^*$ is determined by the price $p>0$ of $E^*$. And for every type $\theta\in [0,1]$, the agent selects $E^*$ if and only if $V_{\theta}(E^*)\geq u(\theta)$. Since $V_{\theta}(E^*)=\theta u_{11}+ (1-\theta)u_{2m}-p$ 
is linear in $\theta$ and the IR curve is convex in $\theta$, the agent selects $E^*$ when her type $\theta$ is in some closed interval $[\theta_p^{L},\theta_p^{H}]$ (see \Cref{fig:full_info_1} or \Cref{fig:full_info_2} for an illustration of $\theta_p^{L},\theta_p^{H}$).
\footnote{We assume that $\theta_p^{L}$ and $\theta_p^{H}$ exist, otherwise the price is too high, and the mechanism has revenue $0$.} Clearly, the mechanism is also IC, since the agent will always follow the recommendation when receiving either no information or full information.   

We first prove the following corollary of \Cref{lem:ic in q space}. It states that any $k$-piecewise-linear ``utility curve'', which always stays above the IR curve, can be implemented by a {\responsiveIC} and IR mechanism with $k$ experiments. 

\begin{corollary}\label{cor:convex-utility}

Given any finite integer $k\geq 2$ and any continuous, piecewise linear function $h:[0,1]\to \mathbb{R}_+$ with $k$ pieces. Suppose $h(\cdot)$ satisfies all of the following: (i) $h(0)=u_{2m},h(1)=u_{11}$; (ii) $-u_{2m}\leq\ell_1< ...<\ell_k\leq u_{11}$, where for each $i\in [k]$, $\ell_i$ is the slope of the $i$-th piece of $h$; (iii) $h(\theta)\geq u(\theta),\forall \theta\in [0,1].$ 
Then there exists a {\responsiveIC} and IR mechanism $\MM\in \MenuWithQ$ with $k$ experiments, such that for every $\theta\in [0,1]$, the agent's utility at type $\theta$, when she follows the recommendation, is $h(\theta)$. 
\end{corollary}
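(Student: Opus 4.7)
The plan is to apply \Cref{lem:ic in q space} directly by choosing $q(\theta)$ to be the derivative of $h$. Concretely, let $0=\theta^{(0)}<\theta^{(1)}<\cdots<\theta^{(k)}=1$ be the breakpoints of $h$, so that the $i$-th piece is the linear segment on $[\theta^{(i-1)},\theta^{(i)}]$ with slope $\ell_i$. I define $q(\theta)=\ell_i$ for $\theta\in (\theta^{(i-1)},\theta^{(i)}]$ (and $q(0)=\ell_1$). Since the slopes satisfy $\ell_1<\ell_2<\cdots<\ell_k$, $q(\cdot)$ is non-decreasing, and by (ii) we have $q(\theta)\in [-u_{2m},u_{11}]$, giving conditions~1 of \Cref{lem:ic in q space}. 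By the fundamental theorem of calculus applied to the piecewise linear function $h$, together with (i),
\[
\int_0^1 q(x)\,dx \;=\; \sum_{i=1}^k \ell_i(\theta^{(i)}-\theta^{(i-1)}) \;=\; h(1)-h(0) \;=\; u_{11}-u_{2m},
\]
which verifies condition~2. Analogously, for every $\theta\in[0,1]$, $u_{2m}+\int_0^\theta q(x)\,dx = h(\theta)\geq u(\theta)$ by (iii), giving condition~3.

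Hence \Cref{lem:ic in q space} produces a non-negative payment rule $\Bt$ such that $\MM=(\Bq,\Bt)$ is a {\responsiveIC} and IR mechanism, and the proof of that lemma shows that the agent's utility when reporting truthfully and following the recommendation equals $u_{2m}+\int_0^\theta q(x)\,dx$, which is exactly $h(\theta)$ by construction.

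It remains to argue that $\MM\in\MenuWithQ$, i.e., every experiment offered is semi-informative, and that $\MM$ uses only $k$ distinct experiments. The map $q\mapsto E(q)$ in \Cref{obs:q to exp} associates to each value of $q$ a unique semi-informative experiment (Pattern~1 when $q\leq u_{11}-u_{2m}$ and Pattern~2 otherwise). Since our $q(\cdot)$ takes only the $k$ distinct values $\ell_1,\ldots,\ell_k$, $\MM$ offers exactly $k$ semi-informative experiments, one per slope, with prices given by the payment identity~\eqref{equ:payment}. No step here is conceptually hard; the only bookkeeping item is to confirm that two types assigned the same $q$-value also receive the same payment under~\eqref{equ:payment}, which follows immediately because~\eqref{equ:payment} evaluates the same expression $\theta q(\theta)+\min\{u_{11}-u_{2m}-q(\theta),0\}-\int_0^\theta q(x)dx$, and the agent's utility $h(\theta)=V^*_\theta(E(\theta))-t(\theta)$ is determined by $\theta$ and $q(\theta)$ alone. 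Thus $\MM$ is a menu of exactly $k$ semi-informative options implementing the utility curve $h$, completing the proof.
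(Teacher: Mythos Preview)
Your proposal is correct and follows essentially the same approach as the paper's proof: define $q(\cdot)$ as the (piecewise constant) derivative of $h$, verify the three hypotheses of \Cref{lem:ic in q space}, and read off the utility identity $u_{2m}+\int_0^\theta q(x)\,dx=h(\theta)$; the paper likewise concludes from $q$ taking $k$ values that the mechanism has $k$ experiments. Your extra remark that types on the same piece receive the same payment (which the paper omits) is a welcome detail, and it is easily justified by noting that on each piece $t(\theta)=\theta\ell_i+\min\{u_{11}-u_{2m}-\ell_i,0\}-(h(\theta)-u_{2m})$, where $\theta\ell_i-h(\theta)$ is constant since $h$ has slope $\ell_i$ there.
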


\begin{proof}

We define the mechanism $\MM=(\Bq,\Bt)$ as follows. Let $0<\theta_1<\theta_2<...<\theta_{k-1}<1$ be the $k-1$ kinks of $h$. Denote $\theta_0=0,\theta_k=1$. For every $\theta\in [0,1)$, define $q(\theta)=\ell_i$, where $i\in [k]$ is the unique number such that $\theta\in [\theta_{i-1},\theta_i)$. Also define $q(1)=\ell_k$.

We will verify that $\Bq$ satisfies all the requirements of \Cref{lem:ic in q space}. By property (ii), we have that $q(\theta)\in [-u_{2m},u_{11}]$ is non-decreasing in $\theta$. Moreover, we notice that $q(\theta)$ is the right derivative of $h(\theta)$ at every $\theta\in [0,1)$. Since $h(0)=u_{2m}$ and $h$ is continuous at $\theta=1$, we have $h(\theta)=u_{2m}+\int_{0}^\theta q(x)dx,\forall \theta\in [0,1]$. Taking $\theta=1$, we have $\int_{0}^1q(x)dx=h(1)-u_{2m}=u_{11}-u_{2m}$. At last, by property (iii) we have $h(\theta)=u_{2m}+\int_{0}^\theta q(x)dx\geq u(\theta),\forall \theta\in [0,1]$.   

Thus by \Cref{lem:ic in q space}, there exists a payment rule $\Bt$ such that $\MM$ is a {\responsiveIC} and IR mechanism, and $\Bt$ satisfies \Cref{equ:payment}. By \Cref{obs:value-function}, with such payment, the agent's utility at every type $\theta$, when he follows the recommendation, is $u_{2m}+\int_{0}^\theta q(x)dx=h(\theta)$. Since $q(\theta)$ takes $k$ different values, the mechanism contains $k$ experiments. 
\end{proof}

To obtain a simplified characterization, an important step is the following observation: If some fully informative menu with price $p$ is the optimal menu, then both $\theta_p^L$ and $\theta_p^H$ have to stay in the first and last piece of the IR curve accordingly (see \Cref{fig:full_info_1}). To see the reason, consider a curve $h(\cdot)$ (a piecewise linear function) that is the maximum of the utility function of buying the fully informative experiment at price $p$ and the IR curve. In other words, $h(\cdot)$ coincides with the utility function of buying the fully informative experiment on interval $[\theta_p^L,\theta_p^H]$, and coincides with the IR curve everywhere else. If $\theta_p^L$ and $\theta_p^H$ do not lie in the first and last piece of the IR curve respectively, there are pieces of the curve $h(\cdot)$ that belong to the $i$-th piece of the IR curve for some $2\leq i\leq m-1$ (the blue line in \Cref{fig:full_info_2}). We argue that we can change the mechanism to offer another experiment (based on this piece of the IR curve), so that (i) the mechanism is still \responsiveIC~and IR, and (ii) the revenue of the mechanism strictly increases.
This contradicts with the optimality of selling only the fully informative experiment.
See \Cref{fig:full_info_2} for an illustration and \Cref{lem:endpoints of the full info} for a formal statement.

\begin{lemma}\label{lem:endpoints of the full info}
Consider the fully informative experiment $E^*$ with any price $p>0$. Assume there is some $\theta$ such that $V_{\theta}(E^*)>u(\theta)$. Let $\theta_p^L,\theta_p^H$ ($0<\theta_p^L<\theta_p^H<1$) be the two $\theta$'s such that $V_{\theta}(E^*)=u(\theta)$. Let $\theta_1$ and $\theta_2$ be the first and last kink of the IR curve.\footnote{\label{footnote:kink-formal}Formally, $\theta_1$ satisfies $u_{2m}(1-\theta_1)=u_{1,m-1}\theta_1+(1-\theta_1)u_{2,m-1}$, and $\theta_2$ satisfies $u_{11}\theta_2=u_{12}\theta_2+(1-\theta_2)u_{22}$.} Suppose selling the full information at price $p$ achieves the optimal revenue among all {\responsiveIC} and IR mechanisms. Then $\theta_p^L\leq \theta_1$ and $\theta_p^H\geq \theta_2$ (see \Cref{fig:full_info_1}). 
\end{lemma}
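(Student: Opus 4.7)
The plan is to argue by contradiction. Assume $\theta_p^L>\theta_1$; the case $\theta_p^H<\theta_2$ is handled by an entirely symmetric argument. The key device is to exhibit a different responsive-IC and IR mechanism $\MM''$ that induces \emph{exactly the same} utility curve at each type as the single-option menu $\MM$, but extracts strictly more revenue, which contradicts the assumed optimality of $\MM$. That utility curve is $h(\theta)=\max\{u(\theta),V_\theta(E^*)-p\}$, and I will build $\MM''$ by feeding $h$ into \Cref{cor:convex-utility}.

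First I would verify the hypotheses of \Cref{cor:convex-utility} for $h$: continuity, convexity, piecewise linearity, $h(0)=u_{2m}$, $h(1)=u_{11}$, and $h\geq u$ are all immediate from $h=\max\{u,\text{line}\}$ and the convexity of $u$. The slope profile of $h$ is the sequence $\ell_1,\ell_2,\ldots,\ell_i$ of IR-piece slopes on the successive pieces lying inside $[0,\theta_p^L]$ (where $\theta_p^L$ is in IR-piece $i$), then $u_{11}-u_{2m}$ on $[\theta_p^L,\theta_p^H]$, then $\ell_j,\ldots,\ell_m$ on $[\theta_p^H,1]$. Strict increase inside each block follows from convexity of $u$; the transition inequality $\ell_i<u_{11}-u_{2m}$ follows from transversality at $\theta_p^L$, because the full-info line lies strictly below $u(\theta)$ for $\theta<\theta_p^L$ close to $\theta_p^L$ and equals $u$ at $\theta_p^L$, forcing its slope to exceed the left derivative of $u$ at $\theta_p^L$; a symmetric argument yields $u_{11}-u_{2m}<\ell_j$. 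Thus \Cref{cor:convex-utility} produces a responsive-IC and IR mechanism $\MM''$ whose induced utility is exactly $h(\theta)$ at every $\theta$.

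Next I would compare revenues. Per-type utilities agree under $\MM$ and $\MM''$, so any revenue gap comes purely from the payment structure. On a piece of $h$ with slope $\ell\leq u_{11}-u_{2m}$ and intercept $c$, the payment in $\MM''$ is $V_\theta^*(E)-h(\theta)=\theta\ell+u_{2m}-(\theta\ell+c)=u_{2m}-c$; plugging in the intercept $c_k=u_{2,m+1-k}$ of IR-piece $k$ gives payment $u_{2m}-u_{2,m+1-k}$, and a symmetric formula $u_{11}-u_{1,m+1-k}$ governs pieces on the right half. The corner pieces $k=1,m$ yield payment $0$, and the full-info piece yields $p$, matching $\MM$ on those intervals exactly. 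However, every intermediate piece $k\in\{2,\ldots,i\}$ lies inside $[\theta_1,\theta_p^L]$ by the assumption $\theta_p^L>\theta_1$, and yields strictly positive payment $u_{2m}-u_{2,m+1-k}>0$, whereas $\MM$ collects nothing on those types (the agent abstains under $\MM$). Because $f>0$ on $[0,1]$ by the standing hypothesis of \Cref{sec:characterization} and $[\theta_1,\theta_p^L]$ has positive length, the revenue of $\MM''$ is strictly greater than that of $\MM$, contradicting the optimality of $\MM$.

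The main obstacle I anticipate is the handling of degenerate slope configurations: if $\theta_p^L$ sits exactly at a kink of $u$, or if $V_\theta(E^*)-p$ is tangent to (rather than transversal with) some IR piece, the strict inequality $\ell_i<u_{11}-u_{2m}$ needs care. In such cases one either rules out the configuration using $\theta_p^L>\theta_1$ together with $\theta_p^L<\theta_p^H$ (a tangential coincidence on an arc would force $\theta_p^L$ to be the left endpoint of that arc, i.e., $\theta_p^L\leq\theta_1$), or one merges collinear adjacent pieces of $h$ into a single piece before applying \Cref{cor:convex-utility} and checks that at least one intermediate piece with strictly positive payment still remains. A secondary subtlety is tie-breaking at types where $h(\theta)=u(\theta)$: the agent is formally indifferent between $\MM''$'s recommended experiment and abstaining, but IR is satisfied with equality and the standard convention lets her participate; a vanishing perturbation of the prices also makes the participation strictly preferred while leaving the revenue gain arbitrarily close to the computed strict surplus.
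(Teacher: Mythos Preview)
Your proposal is correct and follows essentially the same route as the paper's proof: argue by contradiction, set $h(\theta)=\max\{u(\theta),V_\theta(E^*)-p\}$, feed $h$ into \Cref{cor:convex-utility} to obtain a responsive-IC and IR mechanism with the same utility curve, and then observe that on the IR-coincident pieces inside $[\theta_1,\theta_p^L]$ the new mechanism collects the strictly positive payment $u_{2m}-u_{2,m+1-k}$ while the single-option menu collects nothing. Your treatment is in fact a bit more careful than the paper's in verifying the slope-monotonicity hypotheses of \Cref{cor:convex-utility} and in flagging the degenerate tangential cases; the paper simply asserts that $h$ ``satisfies the requirement'' of the corollary and computes the payment on one such piece.
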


\begin{proof}
Let $\MM$ be the menu that only contains the fully informative experiment $E^*$ with price $p$. We prove by contradiction. Without loss of generality, assume that $\theta_p^L>\theta_1$. Let $L:[0,1]\to \mathbb{R}$ be the agent's utility function for the experiment $E^*$, i.e., $L(\theta)=V_{\theta}(E^*)-p=\theta\cdot u_{11}+(1-\theta)u_{2m}-p$. Consider the function $h:[0,1]\to \mathbb{R}$ such that $h(\theta)=\max\{u(\theta),L(\theta)\},\forall \theta\in[0,1]$. One can easily verify that $h$ is a piecewise linear function that satisfies the requirement of \Cref{cor:convex-utility}. By \Cref{cor:convex-utility}, there exists a {\responsiveIC} and IR mechanism $\MM'=(\Bq',\Bt')\in \MenuWithQ$, such that for every $\theta\in [0,1]$, the agent's utility at type $\theta$, when she follows the recommendation, is $h(\theta)$. When the agent's type $\theta\in[\theta_p^L,\theta_p^H]$, she purchases experiment $E^*$ at price $p$ in $\MM'$, and thus the mechanism collects total revenue $\rev(\MM)$ at interval $[\theta_p^L,\theta_p^H]$. It suffices to show that $\MM'$ collects strictly positive revenue from types in $[0,\theta_p^L)\cup (\theta_p^H,1]$, which contradicts with the optimality of $\MM$. 

Since $\theta_1<\theta_p^L$, there is at least one (linear) piece of the piecewise linear function $h(\cdot)$ is in $[\theta_1,\theta_p^L]$. Consider any such piece. The whole piece is also on the IR curve. Let $(\theta',u(\theta'))$ and $(\theta'',u(\theta''))$ be the two endpoints of this piece ($\theta'<\theta''$) and $\ell$ be the slope. Then $-u_{2m}<\ell<u_{11}-u_{2m}$. We design an experiment $E$ offered by $\MM'$, so that all types  on interval $(\theta',\theta'')$ purchase this experiment. Denote $q$ the variable corresponds to the experiment $E$. Then by \Cref{obs:value-function}, $q=\ell\leq u_{11}-u_{2m}$. Thus when $\theta\in (\theta',\theta'')$, $V_{\theta}(E)=q\cdot \theta+u_{2m}$. Moreover, notice that (i) $u(\theta')\geq u_{2m}-u_{2m}\theta'$, the RHS is agent's payoff for always choosing action $m$ without receiving any experiment, and (ii) $h(\theta)=\ell\cdot (\theta-\theta')+u(\theta'),\forall \theta\in [\theta',\theta'']$. Hence
$$t'(\theta)=V_{\theta}(E)-h(\theta)=u_{2m}+\ell\cdot \theta'-u(\theta')=(u_{2m}+\ell)\theta'>0,\forall \theta\in (\theta',\theta'')$$

Therefore, $\rev(\MM')\geq \rev(\MM)+{\left(F(\theta'')-F(\theta')\right)}\cdot (u_{2m}+\ell)\theta'>\rev(\MM)$, as $F(\cdot)$ is strictly increasing, contradicting with the optimality of $\MM$. Hence, we must have $\theta_p^L\leq \theta_1$ and $\theta_p^H\geq \theta_2$. 
\end{proof}

\begin{figure}[ht]
	\centering
	\begin{subfigure}[b]{0.32\textwidth}
	    \includegraphics[width=\textwidth]{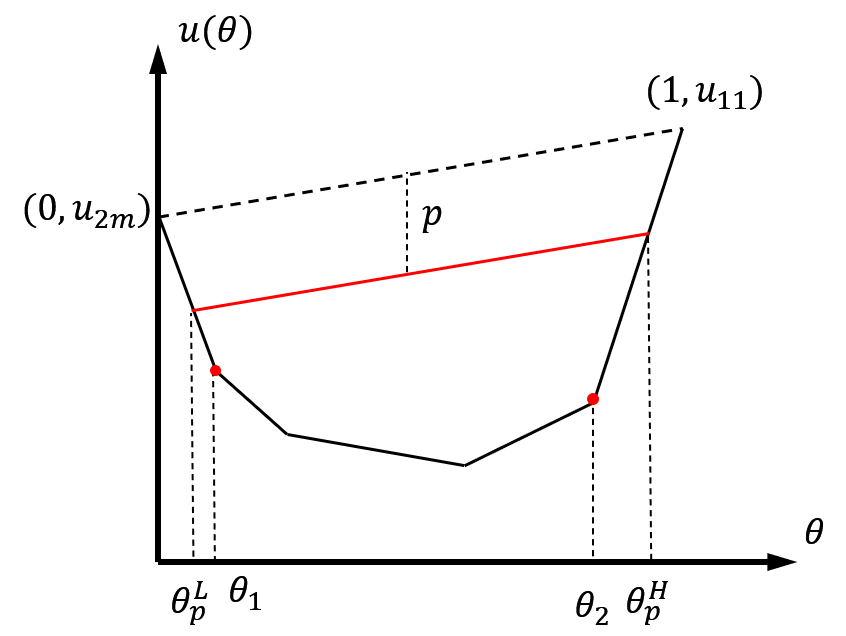}
	    \caption{The fully informative experiment where $\theta_p^L\leq \theta_1$ and $\theta_p^H\geq \theta_2$}
	    \label{fig:full_info_1}
	\end{subfigure}
	\hfill
	\begin{subfigure}[b]{0.32\textwidth}
	    \includegraphics[width=\textwidth]{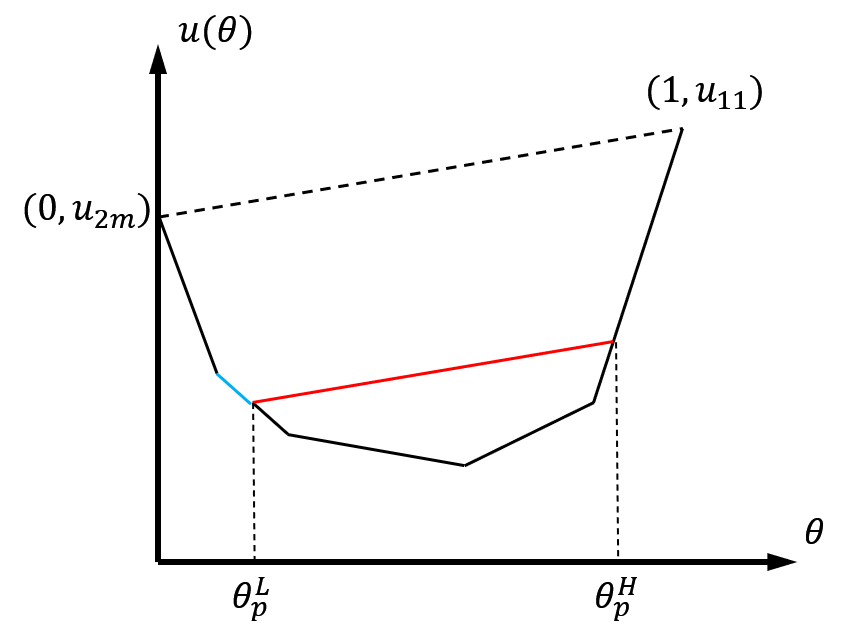}
	    \caption{The scenario when the fully informative experiment is not optimal}
	    \label{fig:full_info_2}
	\end{subfigure}
	\hfill
	\begin{subfigure}[b]{0.32\textwidth}
	    \includegraphics[width=\textwidth]{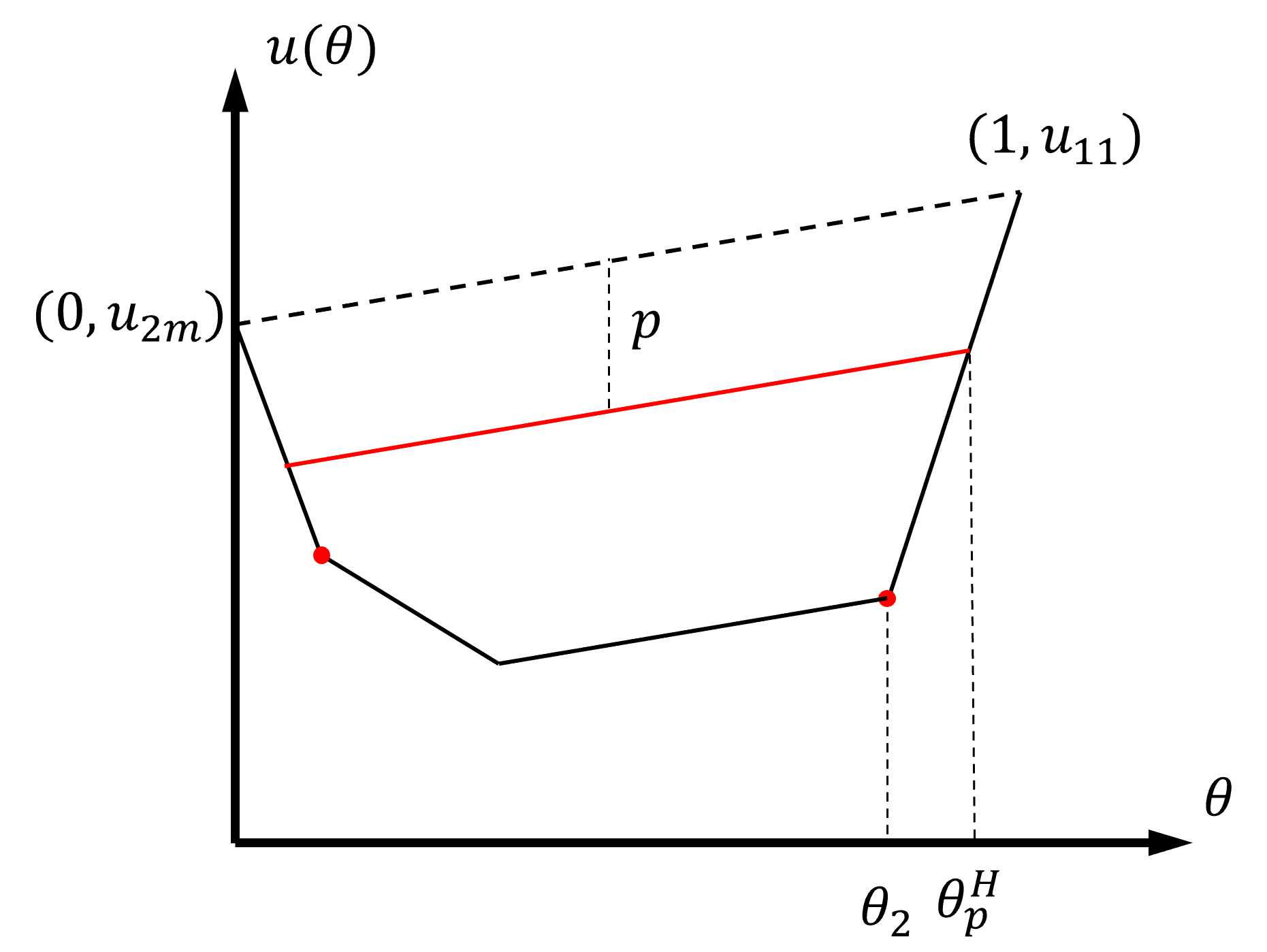}
	    \caption{In the proof of \Cref{thm:characterization-full-info}: $\theta_p^H>\theta_2$ when $\ell_{m-1}=u_{11}-u_{2m}$}
	    \label{fig:full_info_3}
	\end{subfigure}
\caption{Illustrations of notations and the utility curve of full information}\label{fig:full_info}
\end{figure}


\notshow{
Equipped with \Cref{lem:endpoints of the full info}, it is sufficient to consider the following restrictive IR curve $u'(\cdot)$: Let $L_1:[0,1]\to \mathbb{R}$ be the affine function that goes through the first kink $(\theta_1,u_{2m}-u_{2m}\theta_1)$ of the original IR curve, and is parallel to the line that connects $(0,u_{2m})$ and $(1,u_{11})$. It has slope $u_{11}-u_{2m}$. Similarly, let $L_2$ be the affline function that goes through the last kink $(\theta_2,u_{11}\theta_2)$ and has slope $u_{11}-u_{2m}$. Then for every $\theta\in [0,1]$, define $u'(\theta)=\max\{u(\theta),L_1(\theta),L_2(\theta)\}$. \yangnote{Yang: Shouldn't $u'(\cdot)=$ either $L_1(\cdot)$ or $L_2(\cdot)$?} See \Cref{fig:full_info_3} for an illustration.  

Since the utility function for any fully informative experiment is linear and has slope $u_{11}-u_{2m}$, thus a fully informative experiment is optimal w.r.t the IR curve $u(\cdot)$ if and only if it's optimal w.r.t. the restrictive IR curve $u'(\cdot)$. The following observation follows from a simple calculation of the payoff matrix $\cU'$ (with 2 states and 3 actions) that implements $u'(\cdot)$ as the IR curve.   

\begin{observation}\label{obs:relaxed_IR_curve}
The restrictive IR curve $u'(\cdot)$ corresponds to an environment with 2 states and 3 actions with the payoff matrix:
$$\cU'=
\begin{pmatrix}
u_{11}' & u_{12}' & u_{13}'\\
u_{21}' & u_{22}' & u_{23}'
\end{pmatrix}
=\begin{pmatrix}
u_{11} & u^*+(u_{11}-u_{2m}) & 0\\
0 & u^* & u_{2m}
\end{pmatrix}
$$
where 
$$u^*=\max\left\{\frac{u_{22}u_{2m}}{u_{11}-u_{12}+u_{22}},u_{2m}-\frac{(u_{2m}-u_{2,m-1})u_{11}}{u_{2m}+u_{1,m-1}-u_{2,m-1}}\right\}.$$ Moreover, a fully informative experiment with price $p$ is optimal w.r.t. the original payoff matrix $\cU$ if and only if it's optimal w.r.t. the payoff matrix $\cU'$.  
\end{observation}

}


To prove \Cref{thm:characterization-full-info}, we also need the following lemma for the ironed virtual value, which may be of independent interest.

\begin{lemma}\label{lem:compare_ironed_myerson}
(Adapted from Lemma 4.11 in \cite{toikka2011ironing}) Let $\varphi_1$, $\varphi_2$ be any pair of differentiable real-valued functions on $[0,1]$. Let $F$ be any continuous distribution on $[0,1]$. Denote $\tilde{\varphi}_1$, $\tilde{\varphi}_2$ the ironed functions of $\varphi_1$, $\varphi_2$ respectively, with respect to the distribution $F$ (\Cref{def:myerson_ironing}). Suppose $\varphi_1(\theta)\geq\varphi_2(\theta),\forall\theta\in [0,1]$. Then $\tilde{\varphi}_1(\theta)\geq\tilde{\varphi_2}(\theta),\forall\theta\in [0,1]$.  
\end{lemma}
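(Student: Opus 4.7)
My plan is to reduce the claim to a sup--inf variational characterization of the ironed virtual value, from which the monotonicity follows almost immediately. Recall from Definition~\ref{def:myerson_ironing} that $\tilde{\varphi}_i(\theta) = g_i(F(\theta))$, where $g_i$ is the derivative of the greatest convex minorant $G_i$ of $H_i(r) := \int_0^r \varphi_i(F^{-1}(x))\,dx$. Writing $r = F(\theta)$, I will show that for each $i\in\{1,2\}$,
\[
g_i(r) \;=\; \sup_{0\le t<r}\ \inf_{r<s\le 1}\ \frac{H_i(s)-H_i(t)}{s-t}.
\]

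Once this variational formula is established, the lemma is immediate. Indeed, $\varphi_1 \geq \varphi_2$ pointwise implies
\[
H_1(s)-H_1(t) \;=\; \int_t^s \varphi_1(F^{-1}(x))\,dx \;\ge\; \int_t^s \varphi_2(F^{-1}(x))\,dx \;=\; H_2(s)-H_2(t)
\]
for every $0 \le t < s \le 1$. Dividing by $s-t>0$ and applying the sup--inf operator preserves the pointwise domination, so $g_1(r) \ge g_2(r)$ for all $r \in [0,1]$, which yields $\tilde{\varphi}_1(\theta) \ge \tilde{\varphi}_2(\theta)$ for all $\theta\in[0,1]$.

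To establish the characterization, I fix $r\in(0,1)$ and let $[a^*,b^*]\subseteq[0,1]$ be the maximal interval containing $r$ on which $G_i$ is linear (with $a^*=b^*=r$ when $r$ lies outside every ironed interval). On $[a^*,b^*]$ the slope of $G_i$ equals $g_i(r)$, and a standard property of the greatest convex minorant gives $G_i(a^*)=H_i(a^*)$ and $G_i(b^*)=H_i(b^*)$. For the ``$\ge$'' direction I choose $t=a^*$: convexity of $G_i$ together with the fact that $g_i(r)$ is its right-slope at $a^*$ gives $G_i(s)\ge G_i(a^*)+g_i(r)(s-a^*)$ for every $s\ge a^*$, and combining with $G_i\le H_i$ yields $\frac{H_i(s)-H_i(a^*)}{s-a^*}\ge g_i(r)$ for every $s>r$, so $\sup_t\inf_s\ge g_i(r)$. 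For the ``$\le$'' direction, I symmetrically choose $s=b^*$: convexity and the fact that $g_i(r)$ is the left-slope of $G_i$ at $b^*$ give $G_i(t)\ge H_i(b^*)-g_i(r)(b^*-t)$ for $t\le b^*$, so $\frac{H_i(b^*)-H_i(t)}{b^*-t}\le g_i(r)$ for every $t<r$, and hence $\sup_t\inf_s\le g_i(r)$.

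The main obstacle is verifying the structural facts about $G_i$ invoked above, namely that $a^*$ and $b^*$ actually lie on the graph of $H_i$ and that the one-sided slopes of $G_i$ at those points both equal $g_i(r)$. Both of these are standard consequences of $[a^*,b^*]$ being a \emph{maximal} interval of linearity of a convex function lying below $H_i$: if $G_i(a^*) < H_i(a^*)$, one could push the chord up, contradicting maximality, and similarly for $b^*$; the one-sided-slope claim follows from convexity and continuous differentiability of $G_i$ on $(0,1)$ stated in the paper. Boundary cases $r\in\{0,1\}$ are handled by the continuity extension of $g_i$ stipulated in Definition~\ref{def:myerson_ironing}. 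Modulo these standard facts, the argument is essentially two lines of convexity manipulations.
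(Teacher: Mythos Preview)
Your approach is correct and genuinely different from the paper's. The paper argues by contradiction: assuming $g_1(r_0)<g_2(r_0)$, it builds the tangent lines $L_1,L_2$ to $G_1,G_2$ at $r_0$ and, via a case split on whether $r_0$ lies inside an ironed interval of $\tilde\varphi_1$ or $\tilde\varphi_2$, locates points $r_1\le r_0<r_2$ with $\mathcal H(r_1)\ge \rho>\mathcal H(r_2)$ (where $\mathcal H=H_1-H_2$ and $\rho=G_1(r_0)-G_2(r_0)$), contradicting that $\mathcal H$ is nondecreasing. You instead prove a sup--inf identity for $g_i(r)$ in terms of secant slopes of $H_i$, after which the conclusion is immediate from $H_1(s)-H_1(t)\ge H_2(s)-H_2(t)$. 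Your route is more conceptual and avoids the case analysis entirely; the paper's argument is more elementary in that it needs no auxiliary characterization, but it pays for that with four sub-cases.

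One small gap worth patching: in your ``$\ge$'' step you set $t=a^*$, but when $a^*=r$ (e.g.\ your flagged case $a^*=b^*=r$, or when $r$ is the left endpoint of its linearity interval) this value is excluded by $t<r$. The fix is routine: maximality of $[a^*,b^*]$ and $G_i(a^*)=H_i(a^*)$ force the existence of points $t_n\uparrow r$ with $G_i(t_n)=H_i(t_n)$, and then convexity gives $\inf_{s>r}\tfrac{H_i(s)-H_i(t_n)}{s-t_n}\ge g_i(t_n)\to g_i(r)$ by the continuous differentiability of $G_i$ you already invoke. The ``$\le$'' direction with $b^*=r$ has the analogous issue with $s=b^*$, but there you can simply send $s\to r^+$ and use continuity of $H_i$ together with $G_i(r)=H_i(r)$.
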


\begin{proof}
Let $f$ be the pdf of $F$. As shown in \Cref{def:myerson_ironing}, define $H_1(r)=\int_{0}^r\varphi_1(F^{-1}(x))dx,\forall r\in [0,1]$, $G_1(\cdot)$ as the convex hull of $H_1(\cdot)$, and $g_1(\cdot)$ as the (extended) derivative of $G_1(\cdot)$ (see \Cref{footnote:extended_deriv}). Define $H_2,G_2,g_2$ similarly for $\varphi_2$. 

For every $\theta\in [0,1]$, $\tilde{\varphi}_1(\theta)=g_1(F(\theta)), \tilde{\varphi}_2(\theta)=g_2(F(\theta))$. Thus to show $\tilde{\varphi}_1(\theta)\geq\tilde{\varphi}_2(\theta)$, it suffices to prove that $g_1(r)\geq g_2(r),\forall r\in [0,1]$. We prove the claim by contradiction.  Suppose there exists $r_0\in [0,1]$ such that $g_1(r_0)< g_2(r_0)$. We notice that
$$\cH(r)\triangleq H_1(r)-H_2(r)=\int_{0}^r \left[\varphi_1(F^{-1}(x))-\varphi_2(F^{-1}(x))\right]dx,$$
which is non-decreasing in $r$, since $\varphi_1(F^{-1}(x))\geq\varphi_2(F^{-1}(x)),\forall x\in [0,1]$. Let $\rho=G_1(r_0)-G_2(r_0)$. To reach a contradiction, we show that there exists $r_1\leq r_0$ such that $\cH(r_1)\geq \rho$ and there exists $r_2>r_0$ such that $\cH(r_2)<\rho$, which contradicts with the fact that $\cH$ is non-decreasing.

Let $L_1:[0,1]\to \mathbb{R}$ (or $L_2$) be the unique linear function tangent
to $G_1$ (or $G_2$) at $r_0$. Then since $G_1$ (or $G_2$) is the convex hull of $H_1$ (or $H_2$), we have for every $r\in [0,1]$, $L_1(r)\leq G_1(r)\leq H_1(r)$ and $L_2(r)\leq G_2(r)\leq H_2(r)$. 

\vspace{-.1in}
\paragraph{The existence of $r_1$:} If $G_2(r_0)=H_2(r_0)$, then $H_1(r_0)-H_2(r_0)\geq G_1(r_0)-G_2(r_0)=\rho$. By choosing $r_1=r_0$ we have $\cH(r_1)\geq \rho$. Now assume $G_2(r_0)<H_2(r_0)$. Then $r_0$ is in the interior of some ironed interval $I$ of $\varphi_2$. Let $r_1<r_0$ be the left endpoint of the interval $I$. Then $H_2(r_1)=L_2(r_1)$. 
We have $$\cH(r_1)=H_1(r_1)-H_2(r_1)\geq L_1(r_1)-L_2(r_1)> L_1(r_0)-L_2(r_0)=G_1(r_0)-G_2(r_0)=\rho$$
Here the first inequality follows from $H_1(r_1)\geq L_1(r_1)$ and $H_2(r_1)=L_2(r_1)$. The second inequality follows from the fact that $d(L_1(r)-L_2(r))/dr=G_1'(r_0)-G_2'(r_0)=g_1(r_0)-g_2(r_0)<0$. The second-last equality follows from the fact that $L_1$ (or $L_2$) meets $G_1$ (or $G_2$) at $r_0$. 
\vspace{-.1in}
\paragraph{The existence of $r_2$:} The proof follows from a similar argument as $r_1$. If $G_1(r_0)=H_1(r_0)$, then $G_1'(r_0)=H_1'(r_0)=g_1(r_0)$. We notice that 

$$\left.\frac{d(H_1(r)-L_2(r))}{dr}\right\vert_{r=r_0}=g_1(r_0)-g_2(r_0)<0.$$

Thus there exists some $r_2>r_0$ such that $H_1'(r)-g_2(r_0)<0$ for all $r\in [r_0,r_2]$ due to the continuity of $H_1'(\cdot)$, which implies that
$$\rho=G_1(r_0)-G_2(r_0)=H_1(r_0)-L_2(r_0)>H_1(r_2)-L_2(r_2)\geq H_1(r_2)-H_2(r_2)=\cH(r_2)$$

Now assume $G_1(r_0)<H_1(r_0)$. Then $r_0$ is in the interior of some ironed interval $I$ of $\varphi_1$. Let $r_2>r_0$ be the right endpoint of the interval $I$. Then $H_1(r_2)=L_1(r_2)$. We have $$\cH(r_2)=H_1(r_2)-H_2(r_2)\leq L_1(r_2)-L_2(r_2)<L_1(r_0)-L_2(r_0)=G_1(r_0)-G_2(r_0)=\rho$$
Here the first inequality follows from $H_1(r_2)=L_1(r_2)$ and $H_2(r_2)\geq L_2(r_2)$. The second inequality follows from the fact that $d(L_1(r)-L_2(r))/dr<0$ and $r_2>r_0$. The second-last equality follows from the fact that $L_1$ (or $L_2$) meets $G_1$ (or $G_2$) at $r_0$. 

The existence of $r_1$ and $r_2$ contradicts with the fact that $\cH$ is non-decreasing. Thus $\tilde{\varphi}_1(\theta)\geq\tilde{\varphi}_2(\theta),\forall \theta\in[0,1]$.
\end{proof}


\notshow{
\begin{lemma}\label{lem:compare_ironed_myerson}
(Adapted from Lemma 4.11 in \cite{toikka2011ironing}) Let $\varphi_1$, $\varphi_2$ be any pair of differentiable, real-valued functions on $[0,1]$. Let $D$ be any continuous distribution on $[0,1]$. Denote $\tilde{\varphi}_1$, $\tilde{\varphi}_2$ the ironed functions of $\varphi_1$, $\varphi_2$ respectively, with respect to the distribution $D$ (\Cref{def:myerson_ironing}). Suppose $\varphi_1(\theta)\geq\varphi_2(\theta),\forall\theta\in [0,1]$. Then $\tilde{\varphi}_1(\theta)\geq\tilde{\varphi_2}(\theta),\forall\theta\in [0,1]$.  
\end{lemma}

\begin{proof}
Let $f$ and $F$ be the pdf and cdf of $D$ respectively. As shown in \Cref{def:myerson_ironing}, define $H_1(r)=\int_{0}^r\varphi_1(F^{-1}(x))dx,\forall r\in [0,1]$, $G_1(\cdot)$ as the convex hull of $H_1(\cdot)$, and $g_1(\cdot)$ as the (extended) derivative of $G_1(\cdot)$ (see \Cref{footnote:extended_deriv}). Define $H_2,G_2,g_2$ similarly for $\varphi_2$. 

For every $\theta\in [0,1]$, $\tilde{\varphi}_1(\theta)=g_1(F(\theta)), \tilde{\varphi}_2(\theta)=g_2(F(\theta))$. Thus to show $\tilde{\varphi}_1(\theta)\geq \tilde{\varphi}_2(\theta)$, it suffice to prove that $g_1(r)\geq g_2(r),\forall r\in [0,1]$. Suppose, for the sake of contradiction, that there exists $r_0\in [0,1]$ such that $g_1(r_0)<g_2(r_0)$. We notice that
$$\cH(r)\triangleq H_1(r)-H_2(r)=\int_{0}^r \left[\varphi_1(F^{-1}(x))-\varphi_2(F^{-1}(x))\right]dx,$$
which is non-decreasing in $r$, since $\varphi_1(F^{-1}(x))\geq \varphi_2(F^{-1}(x)),\forall x\in [0,1]$. Let $\rho=G_1(r_0)-G_2(r_0)$. To reach a contradiction, we will show that there exists $r_1\leq r_0$ such that $\cH(r_1)\geq \rho$ and there exists $r_2>r_0$ such that $\cH(r_2)<\rho$. This will contradict with the fact that $\cH$ is non-decreasing.

Without loss of generality, assume that both $G_1'(r_0)$ and $G_2'(r_0)$ exist. Let $L_1:[0,1]\to \mathbb{R}$ (or $L_2$) be the unique linear function tangent
to $G_1$ (or $G_2$) at $r_0$. Then since $G_1$ (or $G_2$) is the convex hull of $H_1$ (or $H_2$), we have for every $r\in [0,1]$, $L_1(r)\leq G_1(r)\leq H_1(r)$ and $L_2(r)\leq G_2(r)\leq H_2(r)$. 
\vspace{-.1in}
\paragraph{The existence of $r_1$:} If $G_2(r_0)=H_2(r_0)$, then $H_1(r_0)-H_2(r_0)\geq G_1(r_0)-G_2(r_0)=\rho$. By choosing $r_1=r_0$ we have $\cH(r_1)\geq \rho$. Now assume $G_2(r_0)<H_2(r_0)$. Then $r_0$ is in the interior of some ironed interval $I$ of $\varphi_2$. Let $r_1<r_0$ be the left endpoint of the interval $I$. Then $H_2(r_1)=L_2(r_1)$. 
We have $$\cH(r_1)=H_1(r_1)-H_2(r_1)\geq L_1(r_1)-L_2(r_1)\geq L_1(r_0)-L_2(r_0)=G_1(r_0)-G_2(r_0)=\rho$$
Here the first inequality follows from $H_1(r_1)\geq L_1(r_1)$ and $H_2(r_1)=L_2(r_1)$. The second inequality follows from the fact that $d(L_1(r)-L_2(r))/dr=G_1'(r_0)-G_2'(r_0)=g_1(r_0)-g_2(r_0)<0$. The second-last equality follows from the fact that $L_1$ (or $L_2$) meets $G_1$ (or $G_2$) at $r_0$. 
\vspace{-.1in}
\paragraph{The existence of $r_2$:} The proof follows from a similar argument as $r_1$. If $G_1(r_0)=H_1(r_0)$, then since $G_1'(r_0)$ exists, $r_0$ is not in any ironed interval of $\varphi_1$. Thus there exists a neighborhood of $r_0$ (an open interval $I$ that contains $r_0$) such that $G_1(r)=H_1(r),\forall r\in I$. $H_1$ is differentiable at $r_0$ with derivative $G_1'(r_0)=g_1(r_0)$. We notice that
$$\left.\frac{d(H_1(r)-L_2(r))}{dr}\right\vert_{r=r_0}=g_1(r_0)-g_2(r_0)<0.$$
Thus there exists some $r_2>r_0$ such that
$$\rho=G_1(r_0)-G_2(r_0)=H_1(r_0)-L_2(r_0)>H_1(r_2)-L_2(r_2)\geq H_1(r_2)-H_2(r_2)=\cH(r_2)$$

Now assume $G_1(r_0)<H_1(r_0)$. Then $r_0$ is in the interior of some ironed interval $I$ of $\varphi_1$. Let $r_2>r_0$ be the right endpoint of the interval $I$. Then $H_1(r_2)=L_1(r_2)$. We have $$\cH(r_2)=H_1(r_2)-H_2(r_2)\leq L_1(r_2)-L_2(r_2)< L_1(r_0)-L_2(r_0)=G_1(r_0)-G_2(r_0)=\rho$$
Here the first inequality follows from $H_1(r_2)=L_1(r_2)$ and $H_2(r_2)\geq L_2(r_2)$. The second inequality follows from the fact that $d(L_1(r)-L_2(r))/dr<0$ and $r_2>r_0$. The second-last equality follows from the fact that $L_1$ (or $L_2$) meets $G_1$ (or $G_2$) at $r_0$. 

The existence of $r_1$ and $r_2$ contradicts with the fact that $\cH$ is non-decreasing. Thus $\tilde{\varphi}_1(\theta)\geq \tilde{\varphi}_2(\theta),\forall \theta\in[0,1]$.
\end{proof}
}

\begin{prevproof}{Theorem}{thm:characterization-full-info}
Let $L^*$ be the affine utility function for purchasing the fully informative experiment at price $p$, i.e. $L^*(\theta)=u_{11}\theta+u_{2m}(1-\theta)-p=(u_{11}-u_{2m})\theta+u_{2m}-p$. Denote $\theta_p^L$ and $\theta_p^H$ the two points where $L^*$ intersects with the IR curve. For every $k\in [m]$, let $L_k$ be the affine function for the $k$-th piece of the IR curve, i.e., $L_k(\theta)=\theta\cdot u_{1,m+1-k}+(1-\theta)\cdot u_{2,m+1-k}=\theta\cdot \ell_k+u_{2,m+1-k},\forall \theta\in [0,1]$.

Let $\Bq^*$ be the experiments purchased by the agent in $\MM^*$, i.e., 
\begin{equation}\label{equ:q^*}
q^*(\theta)=
\begin{cases}
-u_{2m}, & \theta\in [0,\theta_p^L)\\
u_{11}-u_{2m}, & \theta\in [\theta_p^L,\theta_p^H)\\
u_{11}, & \theta\in [\theta_p^H,1]
\end{cases}
\end{equation}


\paragraph{Necessary condition.} We first prove that the properties in the statement are necessary. Suppose $\MM^*$ is the optimal {\responsiveIC} and IR mechanism, then $\Bq^*$ is the optimal solution to the program in \Cref{fig:continous-program-modified}. We are going to verify each property of the statement by applying \Cref{thm:characterization} to the solution $\Bq^*$.

\paragraph{Property 1.} Recall that $\theta_1$, $\theta_2$ are the first and last kink of the IR curve. By \Cref{footnote:kink-formal}, $$\theta_1=\frac{u_{2m}-u_{2,m-1}}{u_{1,m-1}+u_{2m}-u_{2,m-1}},\quad \theta_2=\frac{u_{22}}{u_{11}+u_{22}-u_{12}}$$

By \Cref{lem:endpoints of the full info}, $\theta_p^L\leq \theta_1$ and $\theta_p^H\geq \theta_2$. Thus $\theta_p^L$, $\theta_p^H$ are the points where $L^*$ intersects with the first and last piece of the IR curve, respectively. We have $L^*(\theta_p^L)=u_{1m}\theta_p^L+u_{2m}(1-\theta_p^L)\Leftrightarrow \theta_p^L=\frac{p}{u_{11}}$. Similarly, $L^*(\theta_p^H)=u_{11}\theta_p^H+u_{21}(1-\theta_p^H)\Leftrightarrow \theta_p^H=1-\frac{p}{u_{2m}}$. Thus $\theta_p^L\leq \theta_1$ and $\theta_p^H\geq \theta_2$ are equivalent to
$$p\leq\widehat{p}=\min\left\{\frac{(u_{11}-u_{12})u_{2m}}{u_{11}-u_{12}+u_{22}},\frac{(u_{2m}-u_{2,m-1})u_{11}}{u_{2m}+u_{1,m-1}-u_{2,m-1}}\right\}.$$ 

To prove the second half of property 1, by \Cref{thm:characterization}, there exist multipliers $\lambda =\{\lambda_k\}_{k\in\{2,\ldots,m-1\}}\geq \mathbf{0}$ and $\eta$ that satisfy all properties of \Cref{thm:characterization}. We choose $\eta^*=\eta$ and $\lambda^*=\lambda_2$. For every $k\in \{2,3,\ldots,m-1\}$, let $\theta_k^*=\sup\{\theta\in [0,1]:q^*(\theta)\leq \ell_k\}$. Then since $-u_{2m}<\ell_k=u_{1,m+1-k}-u_{2,m+1-k}<u_{11}$, we have $\theta_k^*=\theta_p^L$ iff $\ell_k<u_{11}-u_{2m}$, and $\theta_k^*=\theta_p^H$ otherwise. Notice that
\begin{equation}\label{equ:delta_k}
\begin{aligned}
\Delta_k&:=\int_0^{1}(q^*(x)-\ell_k)\cdot \ind[q^*(x)\leq \ell_k]dx- (u_{2,m+1-k}-u_{2m})\\
&=\int_0^{\theta_k^*}(q^*(x)-\ell_k)dx-u_{2,m+1-k}+u_{2m}
=L^*(\theta_k^*)-L_k(\theta_k^*)
\end{aligned}
\end{equation}

Here the last inequality is because: $u_{2m}+\int_0^{\theta_p^L}q^*(x)dx=u_{2m}(1-\theta_p^L)=u_{2m}(1-\frac{p}{u_{11}})=L^*(\theta_p^L)$ and $u_{2m}+\int_0^{\theta_p^H}q^*(x)dx=L^*(\theta_p^L)+(\theta_p^H-\theta_p^L)\cdot (u_{11}-u_{2m})=L^*(\theta_p^H)$. 

First consider the case when $m=3$. We argue that if $\ell_2\not=u_{11}-u_{2m}$, then $\lambda_2=0$. We only prove the case when $\ell_2<u_{11}-u_{2m}$. The other case is similar. Suppose $\lambda_2>0$, the ironed virtual surplus function $\tilde{J}^{(\eta,\lambda)}$ can be written as follows (denote $\alpha=u_{11}-u_{2m}$):
$$\tilde{J}^{(\eta,\lambda)}(q,\theta)=\tilde{\varphi}^-(\theta)\cdot \min\{q,\alpha\} +\tilde{\varphi}^+(\theta)\cdot \max\{q-\alpha,0\} -\eta q+\lambda_2\cdot\min\{q-\ell_2,0\}$$

For every fixed $\theta$, the function $\tilde{J}^{(\eta,\lambda)}(\cdot,\theta)$ is continuous and 3-piecewise-linear. Thus at least one of the four values $-u_{2m}$, $\ell_2$, $u_{11}-u_{2m}$, and $u_{11}$ must be in $\argmax_{q\in [-u_{2m},u_{11}]}\tilde{J}^{(\eta,\lambda)}(q,\theta)$. We also have
$$\frac{\partial \tilde{J}^{(\eta,\lambda)}}{\partial q}(q,\theta)=
\begin{cases}
\tilde{\varphi}^-(\theta)-\eta+\lambda_2, &q<\ell_2\\
\tilde{\varphi}^-(\theta)-\eta, &q\in (\ell_2,\alpha)\\
\tilde{\varphi}^+(\theta)-\eta, &q>\alpha
\end{cases}
$$

Define $\phi_1(\theta)=\tilde{\varphi}^-(\theta)-\eta+\lambda_2$, $\phi_2(\theta)=\tilde{\varphi}^-(\theta)-\eta$ and $\phi_3(\theta)=\tilde{\varphi}^+(\theta)-\eta$. By \Cref{def:ironed-virtual-value}, $\varphi^-(\theta)-\varphi^{+}(\theta)=f(\theta)>0,\forall \theta\in [0,1]$. Thus by \Cref{lem:compare_ironed_myerson}, $\tilde{\varphi}^-(\theta)\geq\tilde{\varphi}^{+}(\theta),\forall \theta\in [0,1]$. Thus $\phi_1(\theta)>\phi_2(\theta)\geq\phi_3(\theta),\forall \theta\in [0,1]$. We notice that $-u_{2m}\in\argmax_{q}\tilde{J}^{(\eta,\lambda)}(q,\theta)$ iff $\phi_1(\theta)\leq 0$. Since $q^*(\theta)=-u_{2m},\forall \theta\in [0,\theta_p^L)$ and $\Bq^*$ is a pointwise maximizer of $\tilde{J}^{(\eta,\lambda)}(q,\theta)$ almost everywhere (the first property of \Cref{thm:characterization}), we have $\phi_1(0)\leq 0$. Since $\lambda_2>0$, $\phi_2(0)<0$. Similarly, since $u_{11}\in\argmax_{q}\tilde{J}^{(\eta,\lambda)}(q,\theta)$ iff $\phi_3(\theta)\geq 0$, we have $\phi_3(1)\geq 0$. Thus $\phi_2(1)\geq 0$. 
By \Cref{footnote:extended_deriv}, $\phi_2$ is a continuous function on $[0,1]$. Let $\eps=\lambda_2/2>0$, then there exists an opened interval $I$ such that $\phi_2(\theta)\in (-\eps,0),\forall \theta\in I$. At interval $I$, $\phi_1(\theta)=\phi_2(\theta)+\lambda_2>0$, and $\phi_3(\theta)\leq\phi_2(\theta)<0$. This implies that for every $\theta\in I$, $\argmax_{q}\tilde{J}^{(\eta,\lambda)}(q,\theta)$ contains a single value $q=\ell_2$, which contradicts with the fact that $\Bq^*$ is a pointwise maximizer of $\tilde{J}^{(\eta,\lambda)}(q,\theta)$ almost everywhere.  

Thus, if $\lambda^*>0$, then $\ell_2=u_{11}-u_{2m}\implies u_{12}-u_{22}=u_{11}-1$. Moreover, by the second property of \Cref{thm:characterization}, we have $$0=\Delta_2=L^*(\theta_p^H)-L_2(\theta_p^H)=u_{2m}-p-u_{2,m-1}\implies p=1-u_{22}$$

Now consider the case when $m\geq 4$. We prove that $\lambda=\textbf{0}$ 
and thus $\lambda^*=\lambda_2=0$. For every $k\in\{3,\ldots,m-2\}$, since $\theta_p^L\leq \theta_1$ and $\theta_p^H\geq \theta_2$, we have $L^*(\theta_p^L)\geq L_2(\theta_p^L)>L_k(\theta_p^L)$ and $L^*(\theta_p^H)\geq L_{m-1}(\theta_p^H)>L_k(\theta_p^H)$. For every $k\in \{3,\ldots,m-2\}$, $\Delta_k>0$. By the second property of \Cref{thm:characterization}, we have $\lambda_k=0$. For $k=2$, if $\ell_2\geq u_{11}-u_{2m}$, then $\theta_2^*=\theta_p^H$. We thus have $L^*(\theta_p^H)\geq L_{m-1}(\theta_p^H)>L_2(\theta_p^H)$, since $m\geq 4$. By the second property of \Cref{thm:characterization}, $\lambda_2=0$. Similarly, if $\ell_{m-1}<u_{11}-u_{2m}$, $\lambda_{m-1}=0$. If $\ell_{m-1}=u_{11}-u_{2m}$, we have $\theta_p^H>\theta_2$ (see \Cref{fig:full_info_3} for a proof by graph). We have $L^*(\theta_p^H)= L_m(\theta_p^H)>L_{m-1}(\theta_p^H)$, which implies that $\lambda_{m-1}=0$. 
Since $\lambda_k=0,\forall k\in \{3,\ldots,m-2\}$, then $\tilde{J}^{(\eta,\lambda)}(q,\theta)$ can be written as
$$\tilde{J}^{(\eta,\lambda)}(q,\theta)=\tilde{\varphi}^-(\theta)\cdot \min\{q,\alpha\} +\tilde{\varphi}^+(\theta)\cdot \max\{q-\alpha,0\} -\eta q+\lambda_2\cdot\min\{q-\ell_2,0\}+\lambda_{m-1}\cdot\min\{q-\ell_{m-1},0\}$$

Now suppose $\lambda_2>0$ and $\lambda_{k-1}>0$. The case when $\lambda_2>0=\lambda_{k-1}$ and $\lambda_{k-1}>0=\lambda_2$ follows from a similar argument where either the term $\lambda_2\cdot\min\{q-\ell_2,0\}$ or $\lambda_{m-1}\cdot\min\{q-\ell_{m-1},0\}$ is redundant. Since $\lambda_2>0$ and $\lambda_{k-1}>0$, we have $\ell_2<u_{11}-u_{2m}<\ell_{m-1}$. Thus    
$$\frac{\partial \tilde{J}^{(\eta,\lambda)}}{\partial q}(q,\theta)=
\begin{cases}
\tilde{\varphi}^-(\theta)-\eta+\lambda_2+\lambda_{m-1}, &q<\ell_2\\
\tilde{\varphi}^-(\theta)-\eta+\lambda_{m-1}, &q\in (\ell_2,\alpha)\\
\tilde{\varphi}^+(\theta)-\eta+\lambda_{m-1}, &q\in (\alpha,\ell_{m-1})\\
\tilde{\varphi}^+(\theta)-\eta, &q>\ell_{m-1}
\end{cases}
$$

Similar to the proof for the $m=3$ case, since $\lambda_2>0$, we can find an opened interval $I$ such that for every $\theta\in I$, $\argmax_{q}\tilde{J}^{(\eta,\lambda)}(q,\theta)$ contains a single value $\theta=\ell_2$, which contradicts with the fact that $\Bq^*$ is a pointwise maximizer of $\tilde{J}^{(\eta,\lambda)}(q,\theta)$ almost everywhere.\footnote{If $\lambda_2=0$ and $\lambda_{m-1}>0$, we reach a contradiction by finding an opened interval $I$ such that for every $\theta\in I$, $\argmax_{q}\tilde{J}^{(\eta,\lambda)}(q,\theta)$ contains a single value $\theta=\ell_{m-1}$.} Thus we have $\lambda_k=0,\forall k\in \{2,\ldots,m-1\}$.    





\paragraph{Property 2.} 
In property 1, we show that when $m=3$, either $\lambda_2=0$ or $\ell_2=u_{11}-u_{2m}$. When $m\geq 4$, $\lambda=\textbf{0}$. Recall that $\eta^*=\eta$ and $\lambda^*=\lambda_2$. In any of the scenarios, the ironed virtual surplus function $\tilde{J}^{(\eta,\lambda)}$ can be expressed as follows (recall that $\alpha=u_{11}-u_{2m}$):
\begin{equation}\label{equ:ironed_virtual_surplus}
\begin{aligned}
\tilde{J}^{(\eta,\lambda)}(q,\theta)&=\tilde{\varphi}^-(\theta)\cdot \min\{q,\alpha\} +\tilde{\varphi}^+(\theta)\cdot \max\{q-\alpha,0\} -\eta^*\cdot q+\lambda^*(q-\alpha)\cdot \ind[q\leq \alpha]\\
&=
\begin{cases}
(\tilde{\varphi}^-(\theta)-\eta^*+\lambda^*)\cdot q-\lambda^*\alpha, &q\leq \alpha\\
(\tilde{\varphi}^+(\theta)-\eta^*)\cdot q+(\tilde{\varphi}^-(\theta)-\tilde{\varphi}^+(\theta))\cdot \alpha, &q>\alpha
\end{cases}
\end{aligned}
\end{equation}
For every fixed $\theta$, the function $\tilde{J}^{(\eta,\lambda)}(\cdot,\theta)$ is continuous and 2-piecewise-linear. Thus at least one of the three values $-u_{2m}$, $u_{11}-u_{2m}$, and $u_{11}$ must be in $\argmax_{q\in [-u_{2m},u_{11}]}\tilde{J}^{(\eta,\lambda)}(q,\theta)$. 
We study the set $\argmax_{q}\tilde{J}^{(\eta,\lambda)}(q,\theta)$ by a case analysis. 

\begin{itemize}
    \item $\tilde{\varphi}^-(\theta)-\eta^*+\lambda^*\leq 0$: Since $\lambda^*\leq 0$, $\tilde{\varphi}^+(\theta)-\eta^*\leq\tilde{\varphi}^-(\theta)-\eta^*\leq 0$. Thus $-u_{2m}\in\argmax_{q}\tilde{J}^{(\eta,\lambda)}(q,\theta)$. On the other hand, $-u_{2m}\in\argmax_{q}\tilde{J}^{(\eta,\lambda)}(q,\theta)$ clearly implies that $\tilde{\varphi}^-(\theta)-\eta^*+\lambda^*\leq 0$.
    \item $\tilde{\varphi}^-(\theta)-\eta^*+\lambda^*\geq 0$ and $\tilde{\varphi}^+(\theta)-\eta^*\leq 0$: Now $q=u_{11}-u_{2m}$ maximizes $\tilde{J}^{(\eta,\lambda)}(q,\theta)$, and vice versa.
    \item $\tilde{\varphi}^+(\theta)-\eta^*\geq 0$: Then $\tilde{\varphi}^-(\theta)-\eta^*+\lambda^*\geq\tilde{\varphi}^+(\theta)-\eta^*\geq 0$. Thus $u_{11}\in\argmax_{q}\tilde{J}^{(\eta,\lambda)}(q,\theta)$. On the other hand, $u_{11}\in\argmax_{q}\tilde{J}^{(\eta,\lambda)}(q,\theta)$ clearly implies that $\tilde{\varphi}^+(\theta)-\eta^*\geq 0$.
\end{itemize}

Thus by the first property of \Cref{thm:characterization}, $\Bq^*$ is a pointwise maximizer of the ironed virtual surplus function almost everywhere. Thus property 2 holds almost everywhere. Moreover, if there exists some $\theta_0\in [0,1]$ such that $q^*(\theta_0)\not\in \argmax_{q}\tilde{J}^{(\eta,\lambda)}(q,\theta)$. Suppose $q^*(\theta_0)=-u_{2m}$. Then according to the above case analysis, $\tilde{\varphi}^-(\theta_0)-\eta^*+\lambda^*>0$. Since $\tilde{\varphi}^-(\cdot)$ is continuous at $\theta_0$, there exists a neighborhood $I$ of $\theta_0$ such that $\tilde{\varphi}^-(\theta_0)-\eta^*+\lambda^*>0,\forall \theta\in I$.\footnote{When $\theta_0=0$ (or $1$), $\tilde{\varphi}^-(\cdot)$ is right-continuous (or left-continuous) at $\theta_0$. We choose $I$ to be $[0,\delta)$ or $(1-\delta,1]$ for some sufficiently small $\delta>0$.} Thus $-u_{2m}\not\in \argmax_{q}\tilde{J}^{(\eta,\lambda)}(q,\theta),\forall \theta\in I$, contradicting with the fact that $\Bq^*$ is a pointwise maximizer of the ironed virtual surplus function almost everywhere. Similarly, we reach a contradition when $q^*(\theta_0)=u_{11}-u_{2m}$ or $u_{11}$. Hence, $\Bq^*$ is a pointwise maximizer of $\tilde{J}^{(\eta,\lambda)}$ everywhere on $[0,1]$, which implies that property 2 holds for every $\theta\in [0,1]$.          

\paragraph{Property 3.} $\Bq^*$ only takes three different values $-u_{2m}, u_{11}-u_{2m}, u_{11}$. Moreover, $q^{*^{-1}}(-u_{2m})=0$, $q^{*^{-1}}(u_{11}-u_{2m})=\theta_p^L$, $q^{*^{-1}}(u_{11})=\theta_p^H$.   
Thus by the third property of \Cref{thm:characterization}, $\Bq^*$ satisfies the generalized pooling property iff $\theta_p^{L}$ is not in the interior of any ironed interval of $\tilde{\varphi}^-(\cdot)$, and $\theta_p^{H}(\cdot)$ is not in the interior of any ironed interval of $\tilde{\varphi}^+$. 

\paragraph{Sufficient Condition.} Next, we prove that the properties in the statement are sufficient. Given a menu $\MM$ that contains only the fully informative experiment with with price $p>0$. Suppose there exist multiplier $\eta^*$ and $\lambda^*\geq 0$ that satisfies all properties in the statement. We prove that $\Bq^*$ (defined in \Cref{equ:q^*}) is an optimal solution to the program in \Cref{fig:continous-program-modified}. 

Firstly, since $p\leq \widehat{p}$, due to the analysis for Property 1, we know that $\theta_p^L\leq \theta_1$ and $\theta_p^H\geq \theta_2$. Thus $\theta_p^L<\theta_p^H$ and $\Bq^*$ is a feasible solution of the program, which induces strictly positive revenue. 

To apply \Cref{thm:characterization}, we choose multipliers $\eta'$ and $\lambda'=\{\lambda_k'\}_{k\in \{2,\ldots,m-1\}}\geq \textbf{0}$, and prove that all three properties in the statement of \Cref{thm:characterization} are satisfied for $\Bq^*$ and $(\eta',\lambda')$, which implies that $\Bq^*$ is an optimal solution. Let $\eta'=\eta^*$. When $m=3$, let $\lambda_2'=\lambda^*$. When $m\geq 4$, let $\lambda'_k={0}$ for all ${k\in \{2,\ldots,m-1\}}$.

For the first property of \Cref{thm:characterization}, we notice that the ironed virtual surplus function $\tilde{J}^{(\eta',\lambda')}(q,\theta)$ satisfies \Cref{equ:ironed_virtual_surplus}. As argued earlier in the case analysis in property 2, when property 2 of~\Cref{thm:characterization-full-info} is satisfied, $\Bq^*$ is a pointwise maximizer of $\tilde{J}^{(\eta',\lambda')}(q,\theta)$. The first property of \Cref{thm:characterization} is satisfied.

For the second property of \Cref{thm:characterization}, according to the choice of $\lambda'$, we only have to argue that the equality holds when $m=3$ and $\lambda^*>0$. For other scenarios, the property trivially holds since the corresponding multiplier $\lambda_k'$ is 0. Thus by property 1 (in the statement of \Cref{thm:characterization-full-info}), $u_{12}-u_{22}=u_{11}-1$ and $p=1-u_{22}$. We follow the notation $\theta_k^*$ and $\Delta_k$ as used for proving property 1 is necessary. By \Cref{equ:delta_k},
$$\Delta_2=L^*(\theta_2^*)-L_2(\theta_2^*)=L^*(\theta_p^H)-L_2(\theta_p^H)=u_{23}-p-u_{22}=0,~~\text{($u_{23}=1$ as $m=3$)}.$$
Thus the second property of \Cref{thm:characterization} is satisfied. 
As argued above, $\Bq^*$ satisfies the generalized pooling property iff $\theta_p^{L}$ is not in the interior of any ironed interval of $\tilde{\varphi}^-$, and $\theta_p^{H}$ is not in the interior of any ironed interval of $\tilde{\varphi}^+$. Thus the third property of \Cref{thm:characterization} is also satisfied. Hence, $\Bq^*$ and $(\eta',\lambda')$ satisfy all properties in \Cref{thm:characterization}, which implies that $\Bq^*$ is an optimal solution.    
\end{prevproof}

\begin{prevproof}{Theorem}{cor:special_distributions}
We are going to show that there exist $p>0$ and multipliers $\eta^*$, $\lambda^*\geq 0$ that satisfy all properties of \Cref{thm:characterization-full-info}. Then by \Cref{thm:characterization-full-info}, selling the complete information at price $p$ is the optimal menu.

We notice that in \Cref{def:myerson_ironing}, if $\varphi(\cdot)$ is non-decreasing, $H(\cdot)$ is a convex function. Thus $G(r)=H(r),\forall r\in [0,1]$ and $\tilde{\varphi}(\theta)=\varphi(\theta),\forall \theta\in [0,1]$. Since $\varphi^-(\cdot)$ and $\varphi^+(\cdot)$ are both monotonically non-decreasing, we have $\varphi^-(\theta)=\tilde{\varphi}^-(\theta)$ and $\varphi^+(\theta)=\tilde{\varphi}^+(\theta),\forall \theta\in [0,1]$.

Define function $W:[0,1]\to \mathbb{R}$ as: $W(p)=\varphi^-(\frac{p}{u_{11}})-\varphi^+(1-\frac{p}{u_{2m}})$. Then by \Cref{def:ironed-virtual-value}, we have $W(0)=\varphi^-(0)-\varphi^+(1)=-1$. If $\varphi^-(\frac{\widehat{p}}{u_{11}})\geq \varphi^+(1-\frac{\widehat{p}}{u_{2m}})$, $W(\widehat{p})\geq 0$. Since both $\varphi^-$ and $\varphi^+$ are continuous, $W$ is also continuous. Thus there exists $p_0\in (0,\widehat{p}]$ such that $W(p_0)=0$. Consider $p=p_0$ and multipliers $\eta^*=\varphi^-(\frac{p_0}{u_{11}})= \varphi^+(1-\frac{p_0}{u_{2m}}), \lambda^*=0$. Property 1 of \Cref{thm:characterization-full-info} holds since $p_0\leq \widehat{p}$ and $\lambda^*=0$. Property 3 of \Cref{thm:characterization-full-info} holds since neither $\varphi^-(\cdot)$ nor $\varphi^+(\cdot)$ requires ironing. For property 2 of \Cref{thm:characterization-full-info}, since $\theta_{p_0}^L=\frac{p_0}{u_{11}}$ and $\theta_{p_0}^H=1-\frac{p_0}{u_{2m}}$, by our choice of $\eta^*$, $\varphi^-(\theta_{p_0}^L)-\eta^*=0$ and    $\varphi^+(\theta_{p_0}^H)-\eta^*=0$. Property 2 then follows from the fact that $\varphi^-(\cdot)$ and $\varphi^+(\cdot)$ are both monotonically non-decreasing. 

\end{prevproof}

For several standard distributions, $\varphi^-(\cdot)$ and $\varphi^+(\cdot)$ are both monotonically non-decreasing. The following examples are some applications of \Cref{cor:special_distributions}.    

\begin{example}[Uniform Distribution]\label{ex:uniform}
Consider the uniform distribution $U[0,1]$. Then $\varphi^-(\theta)=2\theta$ and $\varphi^+(\theta)=2\theta-1$ are both increasing functions. Thus when $m=3$ and $u_{12}-u_{22}=u_{11}-1$, selling the complete information is optimal. When $u_{11}=u_{2m}=1$, $\varphi^-(\frac{\widehat{p}}{u_{11}})\geq \varphi^+(1-\frac{\widehat{p}}{u_{2m}})$ if and only if $\widehat{p}\geq \frac{1}{4}$. In other words, if $u_{22}\leq 3(1-u_{12})$ and $u_{1,m-1}\leq 3(1-u_{2,m-1})$, selling the complete information is optimal. 
\end{example}

\begin{example}[{Exponential Distributions Restricted on $[0,1]$}]\label{ex:exponential}
$f(\theta)=c\cdot \lambda e^{-\lambda \theta}$, where $c=1-e^{-\lambda}$. $f'(\theta)=-c\lambda^2\cdot e^{-\lambda \theta}$. When $\lambda\leq 2$, $(\varphi^-)'(\theta)=2f(\theta)+\theta f'(\theta)=c\lambda(2-\theta\lambda)\cdot e^{-\lambda \theta}\geq 0$. And $(\varphi^+)'(\theta)=2f(\theta)+(\theta-1) f'(\theta)>0$, since $f'(\theta)<0$. Selling the complete information is optimal if either conditions in \Cref{cor:special_distributions} is satisfied.  
\end{example}

\begin{example}[{Normal Distributions Restricted on $[0,1]$}]\label{ex:normal}
Consider the normal distribution $\mathcal{N}(0,\sigma^2)$ restricted on $[0,1]$. $f(\theta)=c\cdot \exp(-\frac{\theta^2}{2\sigma^2})$, where $c=1/\int_0^1\frac{1}{\sigma\sqrt{2\pi}}\cdot  \exp(-\frac{\theta^2}{2\sigma^2}) d\theta$. $f'(\theta)=-c\cdot \frac{\theta}{\sigma^2} \exp(-\frac{\theta^2}{2\sigma^2})<0$. When $\sigma^2\geq \frac{1}{2}$, $(\varphi^-)'(\theta)=2f(\theta)+\theta f'(\theta)=c(2-\frac{\theta}{\sigma^2})\cdot \exp(-\frac{\theta^2}{2\sigma^2})\geq 0$. And $(\varphi^+)'(\theta)=2f(\theta)+(\theta-1) f'(\theta)>0$, since $f'(\theta)<0$. Selling the complete information is optimal if either conditions in \Cref{cor:special_distributions} is satisfied.  
\end{example}
\section{Missing Details from Section~\ref{sec:multi_state}}\label{sec:proof_multi_state}

\subsection{Missing Details from Section~\ref{sec:general_setting}}\label{sec:appx_multi_lower_bound}

\begin{prevproof}{Theorem}{thm:ratio_characterization}
Let $\beta=\int_0^\infty \frac{1}{r(x)}dx$. We first show that for any distribution $D$, $\frac{\rev(\MM,D)}{\optfi(D)}\leq \beta$. In fact, we have
\begin{align*}\rev(\MM,D)=\E_{\theta\sim D}[t(\theta)]&=\int_0^\infty\Pr_{\theta\sim D}[t(\theta)\geq x]dx\leq \int_0^\infty \Pr_{\theta\sim D}[U(\theta)\geq r(x)]dx\\
&\leq \int_0^\infty \frac{\optfi(D)}{r(x)}dx=\beta\cdot \optfi(D)
\end{align*}

Here the first inequality follows from the definition of $r(x)$: $t(\theta)\geq x$ implies that $U(\theta)\geq r(x)$. For the second inequality, we notice that by the definition of $U(\theta)$, the buyer will purchase the fully informative experiment at price $p$ if and only if $U(\theta)\geq p$. Thus $\optfi(D)\geq r(x)\cdot \Pr[U(\theta)\geq r(x)]$. 

Now we prove that $\ratio(\MM)\geq \beta$. In particular, we show that for any $\beta'<\beta$ there exists a distribution $D$ such that $\rev(\MM,D)>\beta'\cdot \optfi(D)$.

Since $1/r(x)$ is weakly decreasing, non-negative, and $\int_0^\infty \frac{1}{r(x)}dx=\beta$, there exist $0=t_0<t_1<\cdots<t_N<\infty$ with $0=r(t_0)<r(t_1)<\cdots<r(t_N)<\infty$ such that~\footnote{We notice that $U(\bm{0})=0$ as the payment $t(\cdot)$ is non-negative. Thus $r(t_0)=r(0)=0$.}
$$\beta''=\sum_{k=1}^N\frac{t_k-t_{k-1}}{r(t_k)}>\beta'.$$

Let $\eps>0$ be small enough so that $\beta''>(1+\eps)\beta'$ and $r(t_{k+1})>(1+\eps)r(t_{k})$ for all $1\leq k<N$. By the definition of $r(\cdot)$, we can choose a sequence of types $\{\theta^{(k)}\}_{k\in [N]}$ in $\Theta$ such that $t(\theta^{(k)})\geq t_k$ and $r(t_k)\leq U(\theta^{(k)})<(1+\eps)r(t_k)$. Then
$$\sum_{k=1}^N\frac{t_k-t_{k-1}}{U(\theta^{(k)})}>\sum_{k=1}^N\frac{t_k-t_{k-1}}{(1+\eps)r(t_k)}=\frac{\beta''}{1+\eps}>\beta'$$

For every $k\in [N]$, let $\xi_k=U(\theta^{(k)})$. We notice that $U(\theta^{(k)})<(1+\eps)r(t_k)<r(t_{k+1})\leq U(\theta^{(k+1)})$ for $0\leq k<N$. Consider the distribution $D$ with support $\{\theta^{(1)},\ldots, \theta^{(N)}\}$ and $\Pr[\theta=\theta^{(k)}]=\xi_1\cdot(\frac{1}{\xi_k}-\frac{1}{\xi_{k+1}})$ for all $k\in [N]$, where $\xi_{N+1}=\infty$.  

Recall that the buyer will purchase the fully informative experiment at price $p$ if and only if $U(\theta)\geq p$. Thus under distribution $D$, it's sufficient to consider the price $p=U(\theta^{(k)})=\xi_k$ for some $k\in [N]$. Since $\{U(\theta^{(k)})\}_{k\in [N]}$ is a strictly increasing sequence, then the buyer will purchase at price $p=\xi_k$ whenever $\theta\in \{\theta^{(k)},\ldots,\theta^{(N)}\}$, which happens with probability $\sum_{j=k}^N\xi_1\cdot(\frac{1}{\xi_k}-\frac{1}{\xi_{k+1}})=\frac{\xi_1}{\xi_k}$. Hence we have $\optfi(D)=\xi_1$. 

\begin{align*}
\rev(\MM,D)=\sum_{k=1}^Nt(\theta^{(k)})\cdot \Pr[\theta=\theta^{(k)}]&=\sum_{k=1}^Nt_k\cdot \xi_1\cdot(\frac{1}{\xi_k}-\frac{1}{\xi_{k+1}})\\
&=\xi_1\cdot \sum_{k=1}^N \frac{t_k-t_{k-1}}{\xi_k}>\beta'\cdot \optfi(D) 
\end{align*}
\end{prevproof}

\begin{prevproof}{Lemma}{lem:ratio-opt and fi-special}
Recall that under this environment $U(\theta)=\sum_{i=1}^3\theta_i\cdot 1-\max\{\theta_1,\theta_2,\theta_3\}=1-\max\{\theta_1,\theta_2,\theta_3\}$ (here $\theta_3=1-\theta_1-\theta_2$). We first construct the distribution $D$ based on the given sequence $\{y_k\}_{k=1}^N$. Let $\{t_k\}_{k=1}^N$ be a sequence of positive numbers that increases fast enough so that: (i) $(t_k/\gap_k)\cdot U(y_k)$ is increasing, and (ii) $t_{k+1}/t_k\geq 1/\eps$ for all $1\leq k<N$. Such a sequence must exist since after $t_1,\ldots,t_{k-1}$ are decided, we can choose a large enough $t_k$ to satisfy both properties. Let $\delta>0$ be any number such that $1/\delta>\max_{k\in [N]}\{t_k/\gap_k\}$. For every $k\in [N]$, define $x_k\in [0,1]^2$ as $x_k=(\delta t_k/\gap_k)\cdot y_k$. For every $k\in [N]$, since $\delta t_k/\gap_k<1$ and $y_k\in \Theta$, we have $x_k\in \Theta$. 

By Property 2 and 3 of the statement, $0.4\geq \sqrt{y_{k,1}^2+y_{k,2}^2}\geq \sqrt{y_{k,1}^2+(0.9y_{k,1})^2}$, which implies that $y_{k,1}\leq \sqrt{0.4^2/1.81}<\frac{1}{3}$. Similarly, $y_{k,2}<\frac{1}{3}$. Let $y_{k,3}=1-y_{k,1}-y_{k,2}$ and $x_{k,3}=1-x_{k,1}-x_{k,2}$. Then $U(y_k)=1-\max\{y_{k,1},y_{k,2}, y_{k,3}\}=1-y_{k,3}=y_{k,1}+y_{k,2}$. Moreover, since $\delta t_k/\gap_k<1$, we have $x_{k,i}<y_{k,i}<\frac{1}{3}$ for $i=1,2$. Thus $U(x_k)=1-x_{k,3}=x_{k,1}+x_{k,2}=(\delta t_k/\gap_k)\cdot U(y_k)$. For every $k\in [N]$, let $\xi_k=U(x_k)$. Then $\{\xi_k\}_{k=1}^N$ is an increasing sequence due to (i). 

Consider the distribution $D$ with support $\{x_1,\ldots, x_N\}$. $\Pr[\theta=x_k]=\xi_1\cdot(\frac{1}{\xi_k}-\frac{1}{\xi_{k+1}})$ for all $k\in [N]$, where $\xi_{N+1}=\infty$. In the next step, we construct a sequence of experiments $\{\Pi^{(k)}\}_{k\in[N]}$ and a mechanism $\MM$. 
For every $k\in [N]$, define experiment $\Pi^{(k)}$ as follows:
\begin{table}[h]
\centering
\begin{tabular}{ c| c c c} \label{tab:payoff matrix}
$\Pi^{(k)}$ & $1$ & $2$ & $3$\\ 
  \hline
$\omega_1$ & $y_{k,1}$ & 0 & $1-y_{k,1}$\\ 
 $\omega_2$ & 0 &$y_{k,2}$ & $1-y_{k,2}$\\
 $\omega_3$ & 0 & 0 & 1
\end{tabular}
\end{table}

Now consider the following mechanism $\MM$. For every buyer's type $\theta$ in the support, the buyer chooses the experiment $\Pi^{(k^*)}$ where $k^*=\argmax_k\left\{ V_{\theta}^*(\Pi^{(k)})-\delta\cdot t_k\right\}$, and pays $\delta\cdot t_{k^*}$. In other words, he chooses the option (with experiment $\Pi^{(k^*)}$ and price $\delta\cdot t_{k^*}$) that obtains the highest utility, when he follows the recommendation. 

\begin{claim}\label{claim:M-IC and IR}
$\MM$ is IC and IR.
\end{claim}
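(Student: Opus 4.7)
The plan is to verify three things in turn: $\MM$ is \responsiveIC{} (truthful reporting combined with following the recommendation dominates misreporting combined with following the recommendation), $\MM$ is IR, and $\MM$ is $\sigma$-IC (no joint deviation of report and action choice is profitable). The \responsiveIC{} property is immediate from the construction of $\MM$: each type $x_k$ is assigned the option $(\Pi^{(k^*)}, \delta t_{k^*})$ that maximizes $V^*_\theta(\Pi^{(q)}) - \delta t_q$ over $q \in [N]$.

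For IR, the key inequality is $V^*_{x_k}(\Pi^{(k)}) - \delta t_k \geq u(x_k)$; \responsiveIC{} then lifts this to the option actually chosen at type $x_k$. From the definition of $\Pi^{(k)}$ and matching utility, $V^*_{x_k}(\Pi^{(k)}) = x_{k,1} y_{k,1} + x_{k,2} y_{k,2} + x_{k,3}$. Properties 2 and 3 of the statement force $y_{k,1}, y_{k,2} < 1/3$, and since $x_{k,i} < y_{k,i}$ for $i=1,2$, we get $x_{k,3} > 1/3 > x_{k,1}, x_{k,2}$, so $u(x_k) = x_{k,3}$. The IR inequality thus reduces to $x_{k,1} y_{k,1} + x_{k,2} y_{k,2} \geq \delta t_k$. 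Substituting $x_k = (\delta t_k/\gap_k)\, y_k$, this collapses to $y_{k,1}^2 + y_{k,2}^2 \geq \gap_k$, which is exactly Property 1 applied at $j = 0$ (since $y_0 = (0,0,1)$, so the $j=0$ term in the $\min$ equals $y_{k,1}^2 + y_{k,2}^2$).

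For $\sigma$-IC, the plan is to reduce to \responsiveIC{} via a pointwise responsiveness claim: for every $q, k \in [N]$ and every mapping $\sigma$,
\[
V^{(\sigma)}_{x_k}(\Pi^{(q)}) \leq V^*_{x_k}(\Pi^{(q)}).
\]
Given this, $\sigma$-IC follows since $V^*_{x_k}(\Pi^{(k^*)}) - \delta t_{k^*} \geq V^*_{x_k}(\Pi^{(q)}) - \delta t_q \geq V^{(\sigma)}_{x_k}(\Pi^{(q)}) - \delta t_q$ by the definition of $k^*$. To establish the claim, I would inspect the matrix $\Pi^{(q)}$ signal by signal: signal $1$ is sent only in state $1$ (with probability $y_{q,1}$), so any action other than $1$ yields contribution $0$ and the optimal choice is action $1$; symmetrically for signal $2$; and for signal $3$, the three candidate deviation values are $x_{k,1}(1 - y_{q,1})$, $x_{k,2}(1 - y_{q,2})$, and $x_{k,3}$, of which the third dominates because $x_{k,3} > 1/3 > x_{k,1} \geq x_{k,1}(1-y_{q,1})$ and likewise for the middle term. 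Thus following the recommendation is pointwise optimal, proving the claim.

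None of the steps should pose a real obstacle; the only subtlety is using Properties 2 and 3 to guarantee that $x_{k,3}$ strictly dominates the other coordinates (so the ``default'' signal $3$ is truly interpreted as state $3$ by every type in the support), and Property 1 (with the reference point $y_0 = (0,0,1)$) to squeeze out the IR slack.
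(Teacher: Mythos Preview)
Your proposal is correct and follows essentially the same approach as the paper: \responsiveIC{} by construction, IR via the inequality $y_{k,1}^2+y_{k,2}^2\geq \gap_k$, and full IC by checking signal-by-signal that following the recommendation is optimal for every type in the support. The only cosmetic difference is that you derive $y_{k,1}^2+y_{k,2}^2\geq \gap_k$ directly from the $j=0$ term in the definition of $\gap_k$, whereas the paper bounds both sides numerically using Properties~1 and~3 (i.e., $\|y_k\|_2^2\geq 0.09\geq \gap_k$); your route is arguably cleaner.
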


\begin{proof}
We first prove that $\MM$ is IR. For every buyer's type $x_k$, the buyer's utility after purchasing $\Pi^{(k^*)}$ is
\begin{align*}
V_{x_k}(\Pi^{(k^*)})-\delta\cdot t_{k^*}&\geq V^*_{x_k}(\Pi^{(k^*)})-\delta\cdot t_{k^*}\\
&\geq V_{x_k}^*(\Pi^{(k)})-\delta\cdot t_k\\
&=x_{k,1}y_{k,1}+x_{k,2}y_{k,2}+x_{k,3}-\delta\cdot t_k\\
&=\delta t_k\cdot \left(\frac{y_{k,1}^2+y_{k,2}^2}{\gap_k}-1\right)+x_{k,3}\\
&\geq x_{k,3}
\end{align*}
Here the second inequality follows from the definition of $k^*$; The first equality follows from the fact that: under the matching utility environment, $V_{\theta}^*(\Pi^{(k)})=\theta_1 y_{k,1}+\theta_2 y_{k,2}+\theta_3$ for every type $\theta\in \Theta$; The second equality follows from the definition of $x_k$; The last inequality follows from Property 1 and 3 of the statement. We notice that at type $x_k$, the buyer has value $\max_j\{\sum_i x_{k,i}\cdot u_{ij}\}=\max\{x_{k,1},x_{k,2},x_{k,3}\}=x_{k,3}$ before purchasing any experiment. Thus $\MM$ is IR.

By the definition of $\MM$, it's clear that $\MM$ is {\responsiveIC}. To prove that $\MM$ is also IC, it suffices to show that for any type $\theta$ from the distribution $D$ and any experiment $\Pi^{(k)}$, following the recommendation always maximizes the buyer's utility, i.e. $V_{\theta}(\Pi^{(k)})=V_{\theta}^*(\Pi^{(k)})$. In fact, according to the construction of $\Pi^{(k)}$, when action 1 or 2 is recommended, the corresponding state (1 or 2, respectively) is fully revealed and thus following the recommendation clearly maximizes the buyer's expected payoff. When action 3 is recommended, the buyer with type $x_q$ (for some $q\in [N]$) has expected payoff $x_{q,1}(1-y_{k,1})$, $x_{q,2}(1-y_{k,2})$ and $x_{q,3}$, by choosing action 1, 2 and 3 respectively. Since $x_{q,1}<\frac{1}{3}$ and $x_{q,2}<\frac{1}{3}$, we have $x_{q,3}>\frac{1}{3}\geq \max\{x_{q,1}(1-y_{k,1}), x_{q,2}(1-y_{k,2})\}$. Thus the buyer will always follow the recommendation for every experiment $\Pi^{(k)}$. Since $\MM$ is {\responsiveIC}, $\MM$ is also IC.
\end{proof}

Back to the proof of \Cref{lem:ratio-opt and fi-special}. Now we compute $\rev(\MM, D)$. We notice that for every $k\in [N]$ and every type $\theta\in \Theta$, $$V_{\theta}^*(\Pi^{(k)})=\theta_1\cdot y_{k,1}+\theta_2\cdot y_{k,2}+\theta_3.$$
Consider any buyer's type $x_k$. For any $1\leq q<k$, by the definition of $x_k$ and $\gap_k$, we have
\begin{align*}
V_{x_k}^*(\Pi^{(k)})-V_{x_k}^*(\Pi^{(q)})&=x_{k,1}\cdot (y_{k,1}-y_{q,1})+x_{k,2}\cdot (y_{k,2}-y_{q,2})\\
&=\frac{\delta t_k}{\gap_k}\cdot \{y_{k,1}\cdot (y_{k,1}-y_{q,1})+y_{k,2}\cdot (y_{k,2}-y_{q,2})\}\\
&\geq \delta t_k>\delta t_k-\delta t_q    
\end{align*}

Recall that in mechanism $\MM$, the buyer chooses the experiment $\Pi^{(q)}$ that maximizes $\left\{ V_{\theta}^*(\Pi^{(q)})-\delta\cdot t_q\right\}$. Thus the above inequality implies that in $\MM$, the buyer must purchase an experiment $\Pi^{(q)}$ where $q\geq k$. Since $\{t_k\}_{k=1}^N$ is an increasing sequence, the buyer's payment is at least $\delta\cdot t_k$. By \Cref{thm:ratio_characterization},

\begin{align*}
\rev(\MM,D)&\geq \sum_{k=1}^N\delta\cdot t_k\cdot \xi_1\cdot(\frac{1}{\xi_k}-\frac{1}{\xi_{k+1}})=\delta\xi_1\cdot \sum_{k=1}^N \frac{t_k-t_{k-1}}{\xi_k}\\
&\geq (1-\eps)\xi_1\cdot \sum_{k=1}^N \frac{\delta t_k}{\xi_k}=(1-\eps)\cdot \sum_{k=1}^N\frac{\gap_k}{y_{k,1}+y_{k,2}}\cdot \optfi(D)\\
&\geq \frac{3}{2}(1-\eps)\cdot \sum_{k=1}^N\gap_k \cdot \optfi(D)
\end{align*}

Here the second inequality follows from $t_{k+1}/t_k\geq 1/\eps$. The third inequality follows from the fact that both $y_{k,1}$ and $y_{k,2}$ are at most $\frac{1}{3}$, as shown in the beginning of the proof. The second equality follows from $\xi_k=U(x_k)=(\delta t_k/\gap_k)\cdot U(y_k)=(\delta t_k/\gap_k)\cdot (y_{k,1}+y_{k,2})$ and the fact that $\optfi(D)=\xi_1$, which is proved in \Cref{thm:ratio_characterization}. We include the proof here for completeness. Recall that a buyer with type $x$ will purchase the fully informative experiment at price $p$ if and only if  $U(x)\geq p$. Thus under distribution $D$, it's sufficient to consider the price $p=U(x_{k})=\xi_k$ for some $k\in [N]$. Since $\{U(x_{k})\}_{k\in [N]}$ is a strictly increasing sequence, then the buyer will purchase at price $p=\xi_k$ whenever $x\in \{x_{k},\ldots,x_N\}$, which happens with probability $\sum_{j=k}^N\xi_1\cdot(\frac{1}{\xi_k}-\frac{1}{\xi_{k+1}})=\frac{\xi_1}{\xi_k}$. Hence, $\optfi(D)=\xi_1$.

The proof is done by noticing that $\rev(\MM, D)\leq \opt(D)$, since $\MM$ is IC and IR by \Cref{claim:M-IC and IR}.
\end{prevproof}

\begin{prevproof}{Theorem}{thm:general-main}
For any integer $N$, we construct a sequence $\{y_k\}_{k=1}^N$ that satisfies all three properties in the statement of \Cref{lem:ratio-opt and fi-special}, and $\gap_k=\Theta(k^{-6/7})$. Then by \Cref{lem:ratio-opt and fi-special}, there exists a distribution $D$ with support size $N$ such that (by choosing $\eps=\frac{1}{3}$) 
$$\frac{\opt(D)}{\optfi(D)}\geq \sum_{k=1}^N\gap_k=\Omega(\sum_{k=1}^Nk^{-6/7})=\Omega(N^{1/7})$$
All the points $\{y_k\}_{k=1}^N$ are placed in a sequence of ``shells''. As $k$ goes larger, $y_k$ stays in a shell with a larger and larger radius. For every $i=1,\ldots,M$, the $i$-th shell has radius $0.3+\sum_{\ell=1}^i\ell^{-3/2}/\alpha$, where $\alpha=0.1\cdot \sum_{\ell=1}^M\ell^{-3/2}$. For every shell $i$, let $r_i$ be the arc from angle $\arctan(\frac{9}{10})$ to angle $\arctan(\frac{10}{9})$.
 The $i$-th shell contains $i^{3/4}$ points. All points are evenly spread in the arc $r_i$, so that angle between any two of them is at least $c\cdot i^{-3/4}$ for some absolute constant $c$. 

We notice that the radius of every shell is in the range $[0.3, 0.4]$. Thus any point $y_k$ on each shell satisfies $y_{k,1}+y_{k,2}\leq 1$ and $||y_k||_2=\sqrt{y_{k,1}^2+y_{k,2}^2}\in [0.3, 0.4]$. Moreover, according to the definition of $r_i$, all points $y_k$ on this arc must satisfy $\frac{y_{k,1}}{y_{k,2}}\in [\frac{9}{10},\frac{10}{9}]$. It remains to analyze $\gap_k=\min_{0\leq j<k}\{(y_{k,1}-y_{j,1})\cdot y_{k,1}+(y_{k,2}-y_{j,2})\cdot y_{k,2}\}=\min_{0\leq j<k}\{y_k\cdot (y_k-y_j)\}$. We have that $y_j\cdot y_k=||y_j||_2\cdot ||y_k||_2\cdot \cos(\alpha)$ where $\alpha$ is the angle between $y_k$ and $y_j$. Let $i$ and $i'$ be the shell that $y_k$ and $y_j$ are placed in respectively. Then $i'\leq i$. Suppose $i'=i$. Since $\alpha=\Omega(i^{-3/4})$, we have $\cos(\alpha)=1-\Omega(i^{-3/2})$ (because $\cos(\alpha)=1-\alpha^2/2+\alpha^4/24-\ldots$).
$(y_k-y_j)\cdot y_k=||y_k||_2^2\cdot \Omega(i^{-3/2})=\Omega(i^{-3/2})$. 
Now suppose $i'<i$. $||y_k||_2-||y_j||_2=\sum_{\ell=i'+1}^i\ell^{-3/2}/\alpha\geq i^{-3/2}/\alpha$. Since $\sum_{\ell=1}^\infty \ell^{-3/2}$ converges, $\alpha=0.1\cdot \sum_{\ell=1}^M\ell^{-3/2}$ is bounded by some absolute constant. Thus 
$$(y_k-y_j)\cdot y_k\geq ||y_k||_2^2-||y_j||_2 ||y_k||_2 \geq 0.3(||y_k||_2-||y_j||_2)=\Omega(i^{-3/2}),$$
where the second inequality follows from $||y_j||_2\geq 0.3$. Combining both cases, we have proved that $\gap_k=\Theta(i^{-3/2})$. Since the first $i$ shells together contain $\sum_{\ell=1}^i i^{3/4}=\Theta(i^{7/4})$ points, we have $k=\Theta(i^{7/4})$ and $\gap_k=\Theta(k^{-6/7})$.     
\end{prevproof}

\subsection{Missing Details from Section~\ref{sec:matching_utilities}}\label{sec:appx_multi_uniform}

\notshow{
We give a formal definition of the simplex of dimension $n-1$.
\begin{definition}\label{def:simplex}
A $(n-1)$-simplex $C$ determined by $n$ affine independent vectors $\bm{x}_1,\ldots,\bm{x}_n\in \mathbb{R}^{n-1}$ is a $(n-1)$ dimensional polytope defined as $C=\{\sum_{i=1}^{n}a_i\bm{x}_i\mid \sum_{i=1}^na_i=1 \text{ and }a_i\geq 0, \forall i\}$. $C$ is called a unit $(n-1)$-simplex if for every $i\in [n-1]$, $\bm{x}_i=\bm{e}_i$ is the $i$-th standard basis vector, and that $\bm{x}_i=\textbf{0}$. For any $p>0$, $C$ is called a standard $(n-1)$-simplex with side length $p$ if for every $i\in [n-1]$, $\bm{x}_{i+1}=\bm{x}_1+p\cdot \bm{e}_i$. 
\end{definition}

\begin{remark}\label{remark:simplex}
$\Theta$ is a unit $(n-1)$-simplex, which has volume $\vol(\Theta)=\frac{1}{(n-1)!}$. When $n=3$, $\Theta$ is a triangle. When $n=4$, $\Theta$ becomes a tetrahedron. Moreover, any standard $(n-1)$-simplex with side length $p$ has volume $\frac{p^{n-1}}{(n-1)!}$. 
\end{remark}
}

The following observation directly follows from the definition of responsive mechanisms. 

\begin{observation}\label{obs:responsive}
In the matching utilities environment, a mechanism $\MM$ is responsive if and only if $\theta_i\pi_{ii}(\theta)\geq \theta_j\pi_{ji}(\theta),\forall \theta\in T, i,j\in [n]$. Here $\theta_n=1-\sum_{i=1}^{n-1}\theta_i$.
\end{observation}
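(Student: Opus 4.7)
The plan is to directly unwind the definition of a responsive mechanism (\Cref{def:responsive experiment}) under the matching utility structure, where $u_{ij}=\ind[i=j]$. Since the signal space of each experiment $E(\theta)$ is identified with the action space $[n]$, responsiveness is equivalent to saying: for every $\theta\in \Theta$ and every signal $i\in[n]$ that $E(\theta)$ can send, the recommended action $i$ is a best response to the buyer's posterior, i.e., action $i$ maximizes expected payoff among all actions $j\in[n]$.

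First I will write down the buyer's posterior upon receiving signal $i$: $\Pr[\omega_\ell\mid i,\theta]=\theta_\ell\pi_{\ell i}(\theta)/Z_i$, where $Z_i=\sum_\ell\theta_\ell\pi_{\ell i}(\theta)$ is the probability of receiving signal $i$ under $\theta$. Then for any action $j$, the posterior expected payoff is
\[
\sum_{\ell\in[n]}\Pr[\omega_\ell\mid i,\theta]\cdot u_{\ell j}=\frac{1}{Z_i}\sum_{\ell\in[n]}\theta_\ell\pi_{\ell i}(\theta)\ind[\ell=j]=\frac{\theta_j\pi_{ji}(\theta)}{Z_i},
\]
where the first equality uses the matching utility assumption. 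Hence, action $i$ is a best response to signal $i$ iff $\theta_i\pi_{ii}(\theta)\geq \theta_j\pi_{ji}(\theta)$ for all $j\in[n]$ (the common denominator $Z_i$ cancels out; if $Z_i=0$ the signal is never sent and the constraint is vacuously satisfied).

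Taking the conjunction over all $i\in[n]$ yields exactly the stated inequality system $\theta_i\pi_{ii}(\theta)\geq \theta_j\pi_{ji}(\theta)$ for all $i,j\in[n]$, and quantifying over $\theta\in \Theta$ gives the characterization of a responsive mechanism. No step appears to be substantive here—the only subtlety to be careful about is handling signals with $Z_i=0$, which I will address by noting that such signals are irrelevant to the responsive condition and so can be assumed to carry the identity recommendation without loss. This completes the equivalence.
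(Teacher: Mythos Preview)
Your proposal is correct and takes essentially the same approach as the paper, which simply states that the observation ``directly follows from the definition of responsive mechanisms'' without writing out any details. Your unpacking of the posterior expected payoff under matching utilities, showing that optimality of action $i$ after signal $i$ reduces to $\theta_i\pi_{ii}(\theta)\ge\theta_j\pi_{ji}(\theta)$, is exactly the computation the paper is alluding to.
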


\begin{prevproof}{Lemma}{lem:one-column-has-1}
Let $\MM$ be any optimal responsive mechanism. Then the buyer's value for experiment $E(\theta)$ is ($\theta_n=1-\sum_{i=1}^{n-1}\theta_i$): 
\begin{equation}\label{equ:V*}
V_{\theta}^*(E(\theta))=\sum_{i=1}^n\theta_i\pi_i(\theta)=\sum_{i=1}^{n-1}\theta_i\cdot(\pi_i(\theta)-\pi_n(\theta))+\pi_n(\theta)    
\end{equation}

Now for every type $\theta$, let $c(\theta)=1-\max_i\{\pi_i(\theta)\}\geq 0$. For every $i\in [n]$, we arbitrarily move a total mass of $c(\theta)$ from $\{\pi_{ij}(\theta)\}_{j\not=i}$ to $\pi_{ii}(\theta)$. We also increase the payment of type $\theta$ by $c(\theta)$. Let $\MM'=\{E'(\theta),t'(\theta)\}_{\theta\in \Theta}$ be the induced mechanism. By \Cref{obs:responsive}, $\MM'$ is responsive since $\MM$ is responsive and $\pi_{ii}(\theta)$ is (weakly) increased while $\pi_{ij}(\theta)$ is (weakly) decreased for $i\not=j$. 

Moreover, for every $\theta\in \Theta$, the buyer's utility after purchasing $E'(\theta)$, $V_\theta^*(E'(\theta))-t'(\theta)$, is the same as $V_\theta^*(E(\theta))-t(\theta)$. Since $\MM$ is IC, IR, $\MM'$ is also IC, IR. And $\rev(\MM')\geq \rev(\MM)$. Thus $\MM'$ is also the optimal responsive mechanism. The proof is finished by noticing that $\MM'$ satisfies the property in the statement due to the choice of $c(\theta)$. 
\end{prevproof}

\begin{prevproof}{Observation}{obs:partial G}
For any $\theta, \theta'\in \Theta$, since the mechanism is responsive, IC and IR, we have
\begin{align*}
&G(\theta')-G(\theta)\geq [V_{\theta'}^*(E(\theta))-t(\theta)]-[V_{\theta}^*(E(\theta))-t(\theta)]\\
=&V_{\theta'}^*(E(\theta))-V_{\theta}^*(E(\theta))=\sum_{i=1}^{n-1}(\pi_i(\theta)-\pi_n(\theta))(\theta_i'-\theta_i)
\end{align*}
Taking derivative on both sides at $\theta'=\theta$ finishes the proof.
\end{prevproof}

\begin{prevproof}{Lemma}{obs:c-function}
For any $\theta$, the buyer's utility function $G(\theta)$ is the maximum of a collection of linear functions over $\theta$. Thus $G(\cdot)$ is convex. Moreover, for any $\theta, \theta'\in \Theta$,
\begin{align*}
G(\theta)-G(\theta')\leq [V_{\theta}(E(\theta))-t(\theta)]-[V_{\theta'}(E(\theta))-t(\theta)]
\leq c(\theta,\theta')  
\end{align*}
Here the first inequality follows from the fact that $G(\theta')=V_{\theta'}(E(\theta'))-t(\theta')\geq V_{\theta'}(E(\theta))-t(\theta)$ since $\MM$ is IC. 
The second inequality follows from the definition of $c(\theta,\theta')$. 

It remains to show that $G(\textbf{0})=1$. Since $\MM$ is IR, $G(\textbf{0})\geq u(\textbf{0})=1$. On the other hand, the buyer's value for any experiment is at most 1. Thus $G(\textbf{0})\leq 1$. 
\end{prevproof}

\begin{prevproof}{Lemma}{lem:divergence}
We use the notations from \Cref{def:measure}. For every $\theta\in \Theta$, by the product rule, we have~\footnote{For any real-valued function $F(\cdot)$ over $\Theta$, $\divg(F)$ is the divergence of $F$, defined as $\divg(F)=\sum_{i=1}^{n-1}\frac{\partial F}{\partial \theta_i}.$}
\begin{align*}
\divg(G(\theta)f(\theta)\cdot \theta)&=\theta\cdot \nabla(G(\theta)\cdot f(\theta))-G(\theta)f(\theta)\cdot \nabla \theta\\
&=\theta\cdot \left[G(\theta)\cdot\nabla f(\theta)+f(\theta)\cdot\nabla G(\theta)\right]+(n-1)\cdot G(\theta)f(\theta)    
\end{align*}

Integrating over $\Theta$ for both sides, by the divergence theorem, we have
\begin{equation}\label{equ:divergence_1}
\int_{\Theta}(\nabla G(\theta)\cdot \theta) f(\theta)d\theta=\int_{\partial \Theta}G(\theta)f(\theta)(\theta\cdot \bm{n})d\theta-\int_{\Theta}G(\theta)\cdot \left[\nabla f(\theta)\cdot \theta+(n-1)f(\theta)\right]d\theta    
\end{equation}

Similarly, for every $i\in [n-1]$,
\begin{align*}
\divg(G(\theta)f(\theta)\cdot \bm{e}_i)
=\bm{e}_i\cdot \left[G(\theta)\cdot\nabla f(\theta)+f(\theta)\cdot\nabla G(\theta)\right]
\end{align*}

Thus
\begin{equation}\label{equ:divergence_2}
\begin{aligned}
\int_{\Theta_i}\frac{\partial G(\theta)}{\partial \theta_i}f(\theta)d\theta&=
\int_{\Theta_i}(\nabla G(\theta)\cdot \bm{e}_i) f(\theta)d\theta\\
&=\int_{\partial \Theta_i}G(\theta)f(\theta)(\bm{e}_i\cdot \bm{n}_i)d\theta-\int_{\Theta_i}G(\theta)\cdot(\bm{e}_i\cdot \nabla f(\theta))d\theta    
\end{aligned}
\end{equation}

Combining \Cref{equ:divergence_1}, \Cref{equ:divergence_2} and the fact that $\int_{\Theta}f(\theta)d\theta=1$ completes the proof.
\end{prevproof}

\begin{prevproof}{Lemma}{lem:weak-duality}
By \Cref{lem:divergence},
\begin{align*}
\int_{\Theta}G(\theta)d\mu^P &= \int_{\Theta}\left[-G(\theta)+\nabla G(\theta)\cdot \theta+G(\textbf{0})\right]f(\theta)d\theta-\sum_{i=1}^{n-1}\int_{\Theta_i} \frac{\partial G(\theta)}{\partial \theta_i}f(\theta)d\theta\\
&\geq \int_\Theta \left[-G(\theta)+\nabla G(\theta)\cdot \theta-\max\left\{\frac{\partial G(\theta)}{\partial \theta_1},\ldots, \frac{\partial G(\theta)}{\partial \theta_{n-1}}, 0\right\}+G(\textbf{0})\right]f(\theta)d\theta 
\end{align*}
The last inequality achieves equality if and only if Condition (1) is satisfied. It suffices to prove that $\int_{\Theta}G(\theta)d\mu^P\leq \int_{\Theta\times \Theta} c(\theta,\theta')d\gamma(\theta,\theta')$. In fact, since $G(\theta)-G(\theta')\leq c(\theta,\theta'),\forall \theta,\theta'\in \Theta$, we have
\begin{align*}
\int_{\Theta\times \Theta} c(\theta,\theta')d\gamma(\theta,\theta')&\geq \int_{\Theta\times \Theta}(G(\theta)-G(\theta'))d\gamma(\theta,\theta')\\
&=\int_{\Theta}G(\theta)d\gamma_1-\int_{\Theta}G(\theta')d\gamma_2\\
&\geq \int_{\Theta}G(\theta)d\mu^P
\end{align*}
The first inequality achieves equality if and only if Condition (2) is satisfied. The last inequality follows from the fact that $\gamma_1-\gamma_2\cuxd \mu^P$ and $G$ is convex. The last inequality achieves equality if and only if Condition (3) is satisfied.
\end{prevproof}

\begin{prevproof}{Observation}{obs:opt-mu-P}
Firstly, $\nabla f(\theta)=\textbf{0}$ for every $\theta$ since $f(\theta)$ is a constant. Thus the last two terms in \Cref{def:measure} is always 0. Moreover, the term $\ind_A(\textbf{0})$ contributes a point mass of $+1$ at $\textbf{0}$. The term $-n\cdot\int_{\Theta}\ind_A(\theta)f(\theta)d\theta$ contributes a total mass of $-3$ uniformly distributed through out $\Theta$. By \Cref{lem:divergence}, the other two terms contribute a total mass of $2$ on the boundary of triangle $\Theta$ and line segments $ad, bd, cd$. We notice that when $\theta$ is on the x-axis (or y-axis), both $\theta\cdot \bm{n}$ and $\bm{e}_1\cdot \bm{n}_1$ (or $\bm{e}_2\cdot \bm{n}_2$) are 0. Moreover, for any point $\theta$ on the line segment $\theta_1+\theta_2=1$, the outward pointing unit vector is $(1,1)$. Thus $\theta\cdot \bm{n}=1=\bm{e}_i\cdot \bm{n}_i$ for any $i\in \{1,2\}$. Thus there is no mass on the boundary of the big triangle $\Theta$ and a total mass of $2$ is distributed among line segments $ad, bd, cd$. The mass on each segment is then obtained by calculating the line integral.
\end{prevproof}

\begin{prevproof}{Lemma}{lem:verify-optimality}
We verify feasibility and all three conditions in the statement of \Cref{lem:weak-duality}:
\paragraph{Primal Solution Feasibility.} Since $G^*$ is induced by a responsive, IC and IR mechanism $\MM^*$, by \Cref{obs:c-function}, $G^*$ is a feasible solution to the primal.

\paragraph{Condition 1.} By \Cref{obs:partial G}, for every $i\in \{1,2\}$, $\frac{\partial G^*(\theta)}{\partial \theta_i}=\pi_i^*(\theta)-\pi_3^*(\theta), \forall \theta\in \Theta$. Now for any $i\in \{1, 2, 3\}$ and $\theta\in \Theta_i$, if $\theta\in \Omega_1$, $\pi_j^*(\theta)=1,\forall j\in \{1, 2, 3\}$. Thus $\max\{\frac{\partial G^*(\theta)}{\partial \theta_1}, \frac{\partial G^*(\theta)}{\partial \theta_2}\}=0$; If $\theta\in \Omega_2$, the buyer does not purchase any experiment. Since $\MM^*$ is responsive, by \Cref{obs:responsive} and the fact that $\theta_i\geq \theta_j, \forall j\not=i$, we have that $\pi_{ji}^*(\theta)=1,\forall j\in \{1,2,3\}$. Thus $\max\{\frac{\partial G^*(\theta)}{\partial \theta_1}, \frac{\partial G^*(\theta)}{\partial \theta_2}\}$ equals to 1 if $i\in \{1,2\}$, and -1 otherwise. Thus Condition (1) is satisfied.

\paragraph{Condition 2.} $\gamma^*(\theta,\theta')>0$ if and only if $\theta=\textbf{0}$ and $\theta'\in \Omega_2$. Thus $G^*(\theta)-G^*(\theta')=1-\max\{\theta_1',\theta_2',1-\theta_1'-\theta_2'\}$, since the buyer with any type $\theta''$ has utility $u(\theta'')=\max\{\theta_1'',\theta_2'',1-\theta_1''-\theta_2''\}$ if she does not purchase any experiment. Consider the experiment $E$ such that $\pi_{j1}(E)=1,\forall j\in \{1,2,3\}$ and 0 elsewhere, which is one of the no information experiments. 
Then $V_{\textbf{0}}(E)=u(\textbf{0})=1$ and $V_{\theta'}(E)=u(\theta')=\max\{\theta_1',\theta_2',1-\theta_1'-\theta_2'\}$. $G^*(\theta)-G^*(\theta')=V_{\textbf{0}}(E)-V_{\theta'}(E)$. Moreover, $G^*(\theta)-G^*(\theta')\leq c(\theta,\theta')$ since $G^*$ is feasible. Thus the inequality achieves equality for any $(\theta,\theta')$ such that $\gamma^*(\theta,\theta')>0$. 

\paragraph{Condition 3 and Dual Solution Feasibility.} Denote $\gamma_1^*, \gamma_2^*$ the marginals of $\gamma$. We prove that we can transform $\mu^P$ to $\gamma_1^*-\gamma_2^*$ through a sequence of mean-preserving spreads in the region where $G^*$ is linear. By the definition of $\gamma^*$, $\gamma_1^*$ has a point mass of $+1$ at $\textbf{0}$. 
$-\gamma_2^*$ has a mass of $-1$ uniformly distributed on $\Omega_2$. Recall that $\Omega_2=\{\theta\in \Theta:\max\{\theta_1,\theta_2\}\geq \frac{2}{3}\}\cup\{\theta\in \Theta: \theta_1+\theta_2\leq \frac{1}{3}\}$, which consists of 3 right triangles with side length $\frac{1}{3}$. Thus $\vol(\Omega_2)=\frac{1}{6}=\frac{1}{3}\cdot \vol(\Theta)$.
Thus by \Cref{obs:opt-mu-P}, $\mu^{P^*}(\Omega_2)=(\gamma_1^*-\gamma_2^*)(\Omega_2)$. Hence $\eta=\mu^{P^*}-(\gamma_1^*-\gamma_2^*)$ can be written as $\eta_1-\eta_2$, where
\begin{enumerate}
    \item $\eta_1$: A mass of $\frac{2}{3}$ uniformly distributed on each line segment $ad, bd, cd$, having a total mass of $2$.
    \item $\eta_2$: A mass of $2$ uniformly distributed through out $\Omega_1$.
\end{enumerate}

We notice that for every $\theta\in \Omega_1$, the buyer purchases the fully informative experiment and thus $G^*(\theta)=\frac{2}{3}$ is constant throughout the region $\Omega_1$. 
Hence, to prove $\gamma_1^*-\gamma_2^*\cuxd \mu^{P^*}$ and that Condition 3 is satisfied, it suffices to show that we can spread the positive mass on 
$ad, bd, cd$ to the whole region $\Omega_1$ via mean-preserving spreads, to ``zero out'' the negative mass (in $-\eta_2$). 

To visualize the proof, we map each point $(\theta_1,\theta_2)\in \Theta$ to the point $(\theta_1,\theta_2,\theta_3=1-\theta_1-\theta_2)\in [0,1]^3$, so that the type space becomes a regular triangle (\Cref{fig:condition3}). Now $\Omega_1$ is separated into three pentagons by $ad, bd$ and $cd$. 

\begin{figure}[ht]
	\centering
	\includegraphics[scale=0.6]{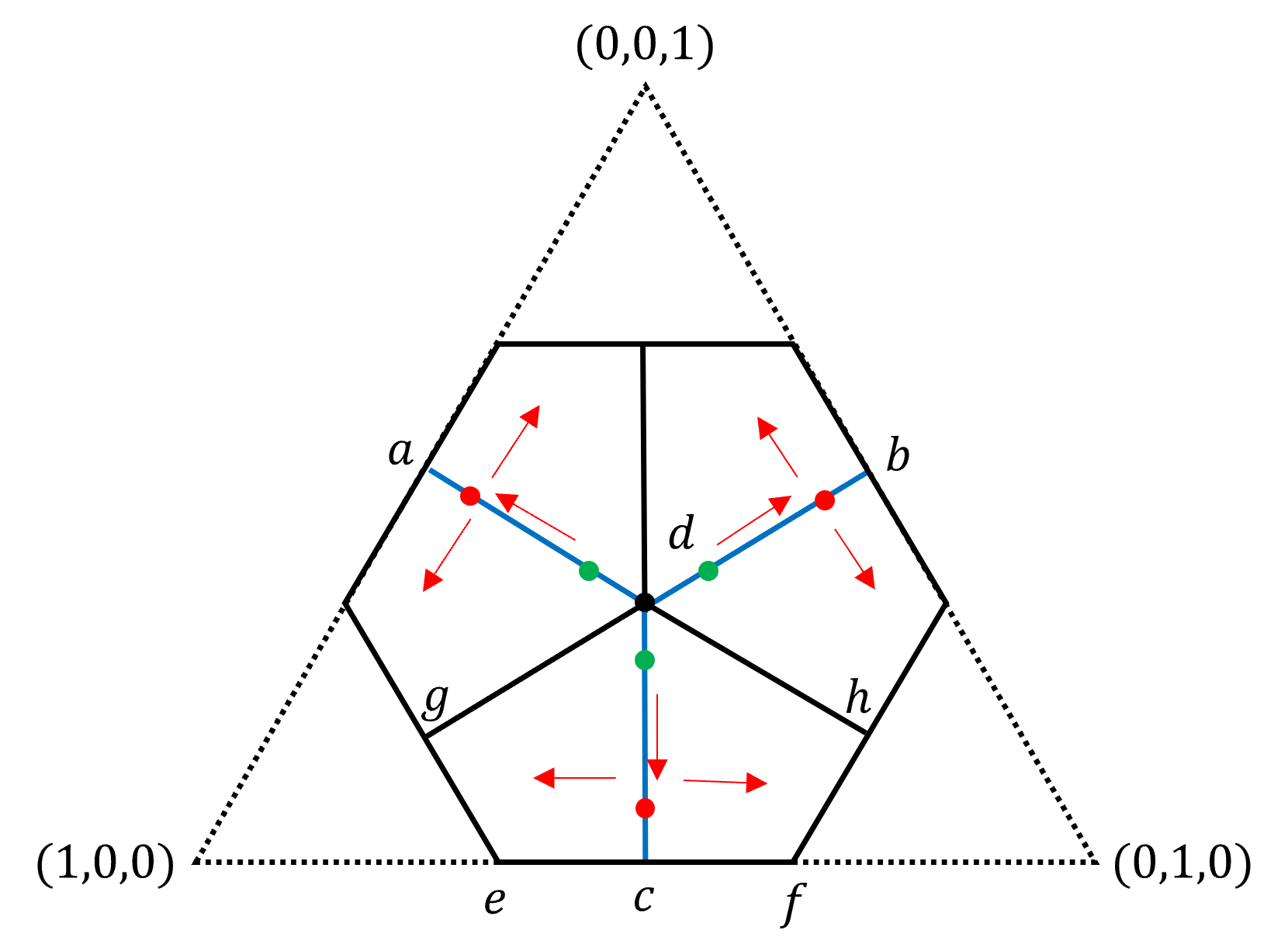}
\caption{Proof of Condition 3}\label{fig:condition3}
\end{figure}

For every $\theta$ on the line segment $cd$, denote $w(\theta)$ the width of the pentagon efhdg at $\theta$ (\Cref{fig:condition3-notation}). Similarly, define $w(\theta)$ for $\theta$ on the line segment $ad, bd$ to be the width of the corresponding pentagon at $\theta$. Let $\eta_2'$ be a measure that has a total mass of 2 on the line segments $ad, bd, cd$, such that the density of each point $\theta$ is proportional to $w(\theta)$. We claim that $\eta_2\cuxd \eta_2'$ by transforming $\eta_2'$ to $\eta_2$ by mean-preserving spreads. For each point $\theta$ on the line segment $cd$, we spread the mass of $\theta$ uniformly to all points $\theta'$ in the pentagon efhdg such that $\theta_3=\theta_3'$. This is a mean-preserving spread since by symmetry, the mean of those points is $\theta$. Moreover, since the density of $\theta$ in $\eta_2'$ is proportional to the width of the pentagon at $\theta$, all the mass are uniformly distributed in the pentagon. We perform similar operations for the other two pentagons. Thus we can transform $\eta_2'$ to $\eta_2$ by mean-preserving spreads.

\begin{figure}[ht]
	\centering
	\includegraphics[scale=0.6]{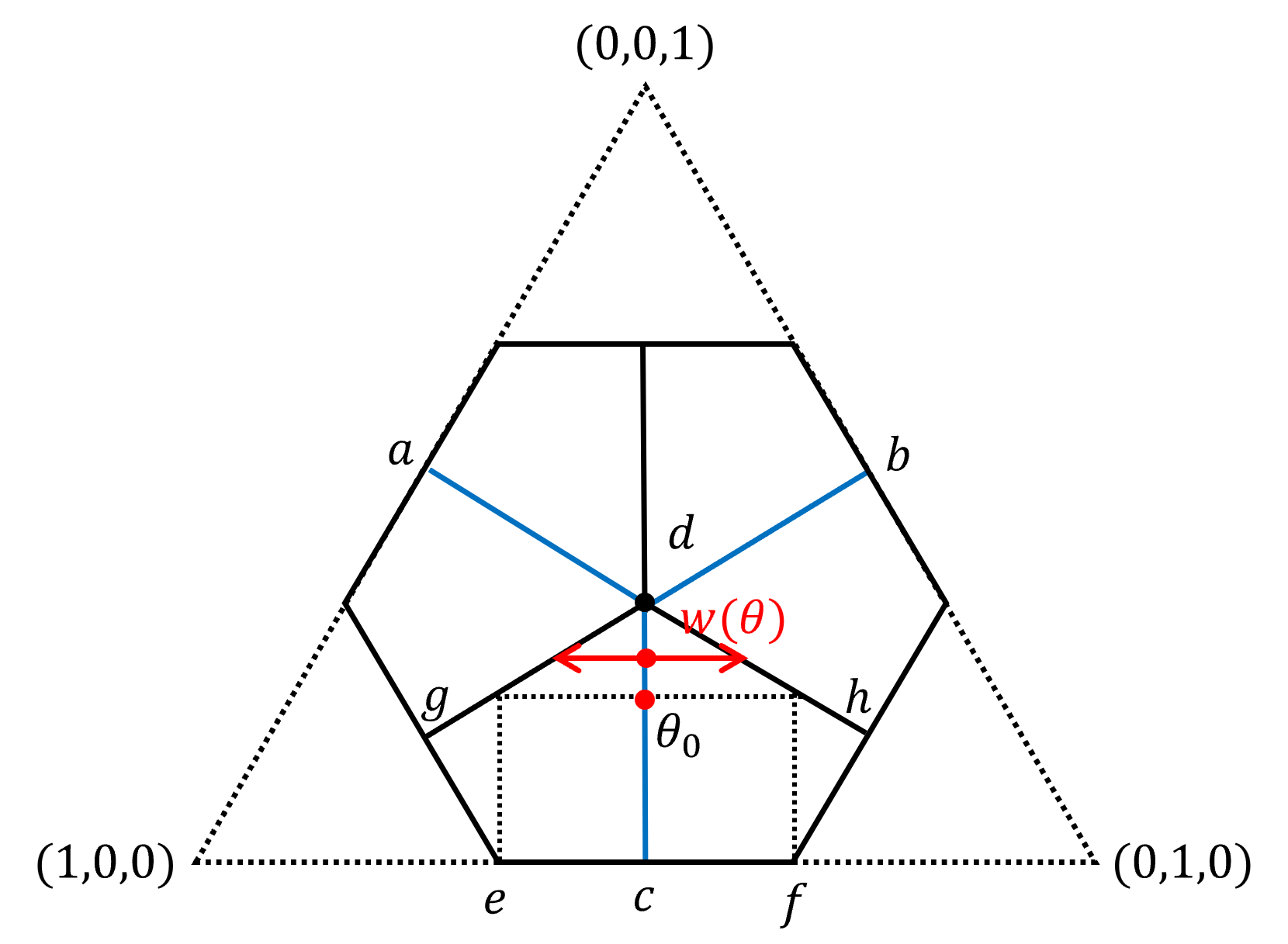}
\caption{Illustration of $w(\theta)$ and $\theta_0$ in the proof}\label{fig:condition3-notation}
\end{figure}

Now it suffices to prove that we can transform $\eta_1$ to $\eta_2'$ by mean-preserving spreads. We prove the following claim.

\begin{claim}\label{claim:uniform-condition-3}
For any $x,y\in \mathbb{R}_+$ such that $x<y$, let $\bm{\theta}_1,\bm{\theta}_2, \bm{\theta}_3$ be the three points on the line segments $ad, bd, cd$ respectively, such that the distance between each $\bm{\theta}_i$ and $d=(\frac{1}{3},\frac{1}{3},\frac{1}{3})$ is $x$ (green points in \Cref{fig:condition3}). Similarly, let $\bm{\theta}_1',\bm{\theta}_2', \bm{\theta}_3'$ be the points that have distance $y$ from point $d$ (red points in \Cref{fig:condition3}). Then any positive mass uniformly distributed in $\bm{\theta}_1,\bm{\theta}_2, \bm{\theta}_3$ can be transformed to the same amount of mass uniformly distributed in $\bm{\theta}_1',\bm{\theta}_2', \bm{\theta}_3'$, via a mean-preserving spread.
\end{claim}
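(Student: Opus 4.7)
The plan is to exploit the $3$-fold rotational symmetry of the configuration about $d$ and exhibit a single explicit mean-preserving spread. Writing $v_i$ for the unit vector from $d$ along the segment containing $\bm{\theta}_i$ and $\bm{\theta}_i'$, so that $\bm{\theta}_i = d + x v_i$ and $\bm{\theta}_i' = d + y v_i$, the three unit vectors satisfy $v_1 + v_2 + v_3 = \mathbf{0}$ (this is immediate from the coordinates of $a,b,c,d$, since $a-d$, $b-d$, $c-d$ are symmetrically placed about $d$). In particular, both triples have centroid $d$, which matches the necessary mean condition for a mean-preserving spread.

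The main step is to express each inner point as a convex combination of the three outer points. I would solve $\bm{\theta}_1 = \lambda_1 \bm{\theta}_1' + \lambda_2 \bm{\theta}_2' + \lambda_3 \bm{\theta}_3'$ subject to $\sum_j \lambda_j = 1$ and $\lambda_j \geq 0$. Substituting $\bm{\theta}_j' - d = y v_j$ and using $v_3 = -v_1 - v_2$, the vector equation reduces to $\lambda_1 - \lambda_3 = x/y$ and $\lambda_2 - \lambda_3 = 0$, which together with the normalization constraint yield
\[
\lambda_1 = \frac{1 + 2x/y}{3}, \qquad \lambda_2 = \lambda_3 = \frac{1 - x/y}{3}.
\]
Since $0 < x \leq y$, all three weights lie in $[0,1]$. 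By the cyclic symmetry permuting $(v_1,v_2,v_3)$, the analogous decomposition holds for $\bm{\theta}_2$ and $\bm{\theta}_3$, with weight $(1+2x/y)/3$ on the outer point of matching index and $(1-x/y)/3$ on each of the other two.

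The proof would then conclude by splitting the mass at each inner $\bm{\theta}_i$ into three pieces according to these weights and sending them to $\bm{\theta}_1', \bm{\theta}_2', \bm{\theta}_3'$; each such split preserves the center of mass by construction. Applying cyclic symmetry once more, the total mass received at each outer point equals $[(1+2x/y) + 2(1-x/y)]/3 = 1$ times the per-point inner mass, so a uniform inner distribution is transported to a uniform outer distribution of the same total mass. It only remains to note that all six points lie on the segments $ad$, $bd$, $cd$, which are contained in the convex hexagon $\Omega_1$ on which $G^*$ is constant; hence the operation qualifies as a mean-preserving spread in a region where $G^*$ is linear, as required by Condition 3 of Lemma~\ref{lem:weak-duality}. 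I do not anticipate a real obstacle here: the geometric content of the argument is essentially the single identity $v_1 + v_2 + v_3 = \mathbf{0}$, and the rest is a short calculation together with containment in the convex region $\Omega_1$.
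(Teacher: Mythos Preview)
Your proof is correct, but it takes a different route from the paper's. The paper argues in two steps: first, since each $\bm{\theta}_i$ lies on the segment from $d$ to $\bm{\theta}_i'$, it is a convex combination $(1-x/y)\,d + (x/y)\,\bm{\theta}_i'$, so one spreads the mass at $\bm{\theta}_i$ to $d$ and $\bm{\theta}_i'$; second, since $d$ is the centroid of $\bm{\theta}_1',\bm{\theta}_2',\bm{\theta}_3'$, the accumulated mass at $d$ is spread uniformly to the three outer points. A quick count ($x/y + (1-x/y)=1$ at each $\bm{\theta}_j'$) confirms uniformity. Your approach instead writes each inner point directly as a convex combination of all three outer points and does the spread in a single step. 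Both arguments rest on the same symmetry $v_1+v_2+v_3=\mathbf{0}$; the paper's version avoids solving a linear system by passing through $d$, while yours is more explicit and makes the final uniformity visible from the coefficient sum $(1+2x/y)+2(1-x/y)=3$. Either way the claim follows immediately. Your closing remark about containment in $\Omega_1$ is correct context but is not part of the claim itself; the paper handles that separately when invoking the claim.
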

\begin{proof}
For every $i\in \{1,2,3\}$, we notice that $\bm{\theta}_i$ is a convex combination of $d$ and $\bm{\theta}_i'$. Thus we can spread the mass at $\bm{\theta}_i$ to $d$ and $\bm{\theta}_i'$ for every $i\in \{1,2,3\}$. Next, since $d$ is a convex combination of $\bm{\theta}_1',\bm{\theta}_2', \bm{\theta}_3'$, we can then spread of mass at $d$ to $\bm{\theta}_1',\bm{\theta}_2', \bm{\theta}_3'$. By symmetry, the mass is uniformly distributed in those three points. 
\end{proof}

Back to the proof of \Cref{lem:verify-optimality}. The density of each point in $\eta_1$ is $\frac{2/3}{cd}=\frac{2\sqrt{6}}{3}$. For $\eta_2'$, by symmetry, the density of each point $\theta$ is $$\frac{2w(\theta)}{3\int w(\theta')d\theta'}=\frac{2/3\cdot w(\theta)}{\text{Area of pentagon efhdg}}=2\sqrt{3}\cdot w(\theta)$$

Let $\theta_0$ be the (unique) point in the interior of $cd$, such that $w(\theta_0)=\frac{\sqrt{2}}{3}$ (\Cref{fig:condition3-notation}).~\footnote{The point is unique since $\frac{\sqrt{2}}{3}=ef<gh$.} Then for any point $\theta$ in the line segment $d\theta_0$, the density of $\theta$ in $\eta_1$ is at least the density of $\theta$ in $\eta_2'$; For any point $\theta$ in the line segment $\theta_0c$, the density of $\theta$ in $\eta_1$ is at most the density of $\theta$ in $\eta_2'$. By \Cref{claim:uniform-condition-3}, we can transform $\eta_1$ to $\eta_2'$ via mean-preserving spreads, by keeping spreading the mass of points in $d\theta_0$ to points in $\theta_0c$. Thus we have $\eta_2'\cuxd \eta_1$. Combining with the fact that $\eta_2\cuxd \eta_2'$, we have $\eta_2\cuxd \eta_1$, which implies that $\gamma_1^*-\gamma_2^*\cuxd \mu^{P^*}$.

\notshow{
To simplify the proof, we map each point $(\theta_1,\ldots,\theta_{n-1})\in \Theta$ to $\theta\in \Delta_n=\{\theta\in [0,1]^n: \sum_{i=1}^n\theta_i=1\}$ with $\theta_n=1-\sum_{i=1}^{n-1}\theta_i$. Now fix any $i,j\in [n]$, $i<j$. Consider the region $R_{ij}=\Omega_1\cap \{\theta\in \Delta_n \mid \max\{\theta_i,\theta_j\}\geq \theta_k,\forall k\not=i,j\}$. Clearly $\cup_{i<j}R_{ij}=\Omega_1$ and each pair of $R_{ij}$s only overlap at boundaries. Thus $\eta_2$ has a total mass of $\frac{2}{n}$ uniformly distributed on $R_{ij}$. By definition of $R_{ij}$ and $S_{ij}$, $S_{ij}\subseteq R_{ij}$. We notice that there is an 1-1 mapping $h:R_{ij}\to R_{ij}$ by flipping $\theta_i$ and $\theta_j$ for every $\theta\in R_{ij}$. Thus $\frac{\theta+h(\theta)}{2}\in S_{ij}$. 

We perform the following operation on $\eta_2\big|_{R_{ij}}$: For every $\theta\in R_{ij}$, transform the mass from $\theta$ and $h(\theta)$ to their midpoint $\frac{\theta+h(\theta)}{2}$. Let $\eta^{(i,j)}$ be the obtained measure. Then $\eta_2\big|_{R_{ij}}$ is a mean-preserving spread of $\eta^{(i,j)}$, which implies that $\eta^{(i,j)}\cuxd \eta_2\big|_{R_{ij}}$. 

Denote $\theta_{-ij}=\{\theta_k\}_{k\not=i,j}$. Denote $|\theta_{-ij}|_{\infty}$ (or $|\theta_{-ij}|_{1}$) the infinity norm (or 1-norm) of $\theta_{-ij}$, i.e. $|\theta_{-ij}|_{\infty}=\max_{k\not=i,j}\theta_k$ and $|\theta_{-ij}|_1=\sum_{k\in [n], k\not=i,j}\theta_k$. We prove the following claim about measure $\eta^{(i,j)}$.

\begin{claim}
$\eta^{(i,j)}$ is a measure on $S_{ij}$ with $\eta^{(i,j)}(S_{ij})=\eta_2(R_{ij})=\frac{2}{n}$. For any $\theta_{-ij}$ such that $|\theta_{-ij}|_{\infty}\leq \frac{1-|\theta_{-ij}|_1}{2}$.~\footnote{This is exactly the region $\{\theta_{-ij}\in [0,1]^{n-2}\mid \exists \theta_i,\theta_j \text{ s.t. } (\theta_i,\theta_j,\theta_{-ij})\in S_{ij}\}$. }, let $w(\theta_{-ij})=\min\{\frac{1-|\theta_{-ij}|_1}{2}-|\theta_{-ij}|_{\infty}, \frac{1+|\theta_{-ij}|_1}{2}-p\}$. Then for every $\theta\in S_{ij}$, the density of $\eta^{(i,j)}$ at point $\theta$ is proportional to $w(\theta_{-ij})$. Formally, for every measurable set $A\subseteq S_{ij}$,
$$\eta^{(i,j)}(A)=\frac{2}{n}\cdot\frac{\int_{A}w(\theta_{-ij})d\theta}{\int_{S_{ij}}w(\theta_{-ij})d\theta}$$.
\end{claim}
\begin{proof}
Firstly, since $\frac{\theta+h(\theta)}{2}\in S_{ij}$ for every $\theta\in R_{ij}$, $\eta^{(i,j)}$ is a measure on $S_{ij}$. It remains to analyze the density of $\eta^{(i,j)}$ for every point.
Given any $\theta_{-ij}$ such that $|\theta_{-ij}|_{\infty}\leq \frac{1-|\theta_{-ij}|_1}{2}$. It corresponds to a unique point $\theta'=(\frac{1-|\theta_{-ij}|_1}{2},\frac{1-|\theta_{-ij}|_1}{2}, \theta_{-ij})\in S_{ij}$. 
Consider the set of $\theta_i$ such that $\theta_i\geq \theta_j$ and $(\theta_i,\theta_j, \theta_{-ij})\in R_{ij}$ (here $\theta_j=1-\theta_i-|\theta_{-ij}|_1$). This is the set where the mass at point $(\theta_i,\theta_j, \theta_{-ij})$ is transformed to $\theta'$ during the operation. By the definition of $R_{ij}$, we have $|\theta_{-ij}|_{\infty}\leq \theta_j\leq\theta_i\leq 1-p$. Thus
$\theta_i\in [\frac{1-|\theta_{-ij}|_1}{2}, \min\{1-p, 1-|\theta_{-ij}|_1-|\theta_{-ij}|_{\infty}\}]$. By the definition of $\eta^{(i,j)}$, the density at any point $\theta'\in S_{ij}$ is proportional to the length of the above line segment, which is equal to $w(\theta_{-ij})$. 
\end{proof}
}

\end{prevproof}

\end{document}